\begin{document}

\title{\textbf{Relating Quantum Tamper-Evident Encryption\\to Other Cryptographic Notions}}

\date{}
\newlength{\authornamewidth}
\setlength{\authornamewidth}{\textwidth}
\author{
	\makebox[\authornamewidth]{S\'ebastien Lord}\\
	University of Ottawa\\
	\texttt{sebastien.lord@uottawa.ca}
}

\maketitle

	\begin{abstract}
    A quantum tamper-evident encryption scheme is a non-interactive
    symmetric-key encryption scheme mapping classical messages to
    quantum ciphertexts such that an honest recipient of a ciphertext
    can detect with high probability any meaningful eavesdropping.
	This quantum cryptographic primitive was first introduced by
    Gottesman in 2003. Beyond formally defining this security notion,
    Gottesman's work had three main contributions: showing that any
    quantum authentication scheme is also a tamper-evident scheme,
    noting that a quantum key distribution scheme can be constructed
    from any tamper-evident scheme, and constructing a
    prepare-and-measure tamper-evident scheme using only Wiesner states
    inspired by Shor and Preskill's proof of security for the BB84
    quantum key distribution scheme.

	In this work, we further our understanding of tamper-evident
	encryption by formally relating it to other quantum cryptographic
	primitives in an information-theoretic setting. In particular, we
    show that tamper evidence implies encryption, answering a question
    left open by Gottesman, we show that it can be constructed from any
    encryption scheme with revocation and vice-versa, and we formalize an
	existing sketch of a construction of quantum money from any
	tamper-evident encryption scheme.
	These results also yield as a corollary that any scheme allowing the
	revocation of a message must be an encryption scheme.
    We also show separations between tamper evidence and other
    primitives, notably showing that tamper evidence does not imply
	authentication and does not imply uncloneable encryption.
	\end{abstract}

\newpage
\tableofcontents
\newpage

\section{Introduction}

Consider the following scenario: Alice wants to send a classical message
to Bob over a communication channel which is accessible to a malicious
and eavesdropping Eve. Alice and Bob wish for Eve to never learn any
information on this message. To help them, they have a shared secret key
which, at the time of transmission, is unknown to Eve.

Alice and Bob can achieve their goal by using any
information-theoretically secure classical symmetric-key encryption
scheme, such as the one-time pad. However, if they use a scheme which
produces a classical ciphertext, Eve can always make and keep a copy of
it. Moreover, it is impossible for Alice and Bob to detect this
eavesdropping. The secrecy of the message is then maintained if and only
if Eve never learns the secret key at a later time. This general
scenario is illustrated in \cref{fg:standard-scenario}.

\begin{figure}[H]
	\begin{center}
		\begin{tikzpicture}[yscale=0.7]
		\node        (a) at (1,0) {Alice};
		\draw[thick] ($(a) + (1,0.75)$) rectangle ($(a) - (1,0.75)$);
		\node        (b) at (9,0) {Bob};
		\draw[thick] ($(b) + (1,0.75)$) rectangle ($(b) - (1,0.75)$);
		\node        (e) at (5,1.5) {Eve};
		\draw[thick] ($(e) + (1,0.75)$) rectangle ($(e) - (1,0.75)$);

		\draw[thick,->] ($(a) + (1.05,0)$) to ($(b) - (1.05,0)$);
		\draw[thick,->] ($(e) - (0,1.5)$) to ($(e) - (0,0.8)$);

		\node[below] (c) at (5,0) {$c$};
		\node (m0) at ($(a) - (2,0)$) {$m$};
		\node (m1) at ($(b) + (2,0)$) {$m$};
		\node (k0) at ($(a) + (0,1.75)$) {$k$};
		\node (k1) at ($(b) + (0,1.75)$) {$k$};

		\draw[thick,->] (m0) to ($(a) - (1.05,0)$);
		\draw[thick,->] ($(b) + (1.05,0)$) to (m1);

		\draw[thick,->] (k0) to ($(a) + (0,0.8)$);
		\draw[thick,->] (k1) to ($(b) + (0,0.8)$);
	\end{tikzpicture}
	\end{center}
	\caption{\label{fg:standard-scenario}
		A classical symmetric-key encryption scenario. Alice
		and Bob share a classical key $k$ which Alice uses to encrypt
		a classical plaintext $m$ as a classical ciphertext $c$ from
		which Bob recovers $m$. An eavesdropping Eve obtains,
		undetected, a copy of $c$.}
\end{figure}
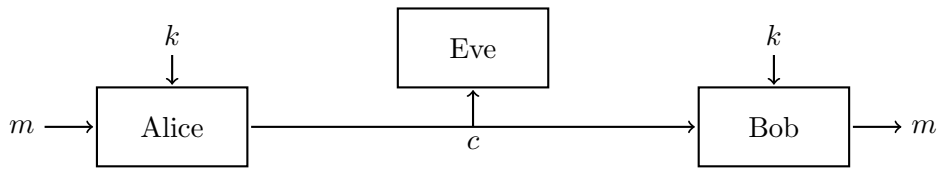

In the theoretical study of a cryptographic system, it is often asserted
and assumed that an eavesdropper never learns the secret key
$k$ and a proof of security in then derived in this context. In practice,
however, ensuring that this assumption holds over a long period of time
is a non-trivial task which falls under the umbrella of
\emph{key management}. We do not discuss this topic any further in
this work, but we direct the interested reader to guidelines provided by
the Unites States' \emph{National Institute of Standards and Technology}
(NIST) on this topic \cite{Ela20}.

However, due to the
\emph{information-disturbance principle} \cite{Fuc96arxiv},\footnote{%
	The information-disturbance principle is closely related to the
	\emph{no-cloning principle} \cite{Par70, Die82, WZ82, Ort18} which
	states that it is, in general, impossible to make a perfect copy of
	an unknown quantum state. Notably, if one could make such perfect
	copies, then information on a state could be obtained without
	disturbing it by making and measuring such a copy.}
also known as the  \emph{information-disturbance trade-off}, it may be
impossible for Eve to make, undetected, a copy of the
ciphertext and wait to learn
the secret key at a later point in time, if this ciphertext is a quantum state.
Indeed, this principle implies that extracting previously unknown
information from a quantum state must, in general, change it.

By leveraging this property of quantum mechanics, Alice and Bob can
detect attempts by Eve to eavesdrop on their communication.  More
precisely, we can devise schemes where Bob, while decrypting a
ciphertext, also determines whether he \emph{accepts} the transmission
--- indicating that he did not detect any eavesdropping --- or
\emph{rejects} it. This is illustrated in \cref{fg:te-scenario}. If Bob
accepts, then he is assured that Eve cannot learn anything about the
message \emph{even if she later learns the key}. Of course, such a
method to \emph{detect} eavesdropping does not \emph{prevent} it.
However, we can suppose that Alice and Bob may go above-and-beyond in
protecting their key or implement other mitigating measures if they
detect any eavesdropping.

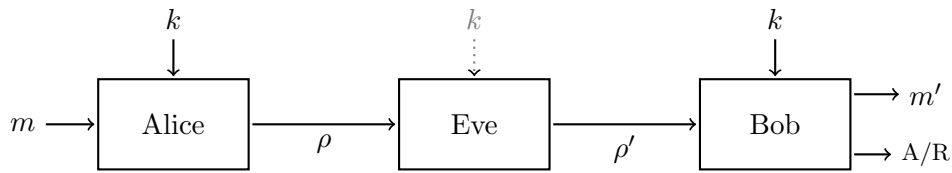
\begin{figure}[H]
	\begin{center}
		\begin{tikzpicture}[yscale=0.8]
		\node        (a) at (1,0) {Alice};
		\draw[thick] ($(a) + (1,0.75)$) rectangle ($(a) - (1,0.75)$);
		\node        (b) at (9,0) {Bob};
		\draw[thick] ($(b) + (1,0.75)$) rectangle ($(b) - (1,0.75)$);
		\node        (e) at (5,0) {Eve};
		\draw[thick] ($(e) + (1,0.75)$) rectangle ($(e) - (1,0.75)$);

		\draw[thick,->]
			($(a) + (1.05,0)$) -- ($(e) - (1.05,0)$)
			node[midway,below] () {$\rho$};
		\draw[thick,->]
			($(e) + (1.05,0)$) -- ($(b) - (1.05,0)$)
			node[midway,below] () {$\rho'$};

		\node (m0) at ($(a) - (2,0)$)    {$m$};
		\node (m1) at ($(b) + (2,0.5)$)  {$m'$};
		\node (k0) at ($(a) + (0,1.75)$) {$k$};
		\node (k1) at ($(b) + (0,1.75)$) {$k$};

		\node (ar) at ($(b) + (2,-0.5)$) {\footnotesize A/R};

		\draw[thick,->] (m0) to ($(a) - (1.05,0)$);
		\draw[thick,->] (k0) to ($(a) + (0,0.8)$);

		\draw[thick,->] (k1) to ($(b) + (0,0.8)$);
		\draw[thick,->] ($(b) + (1.05,0.5)$) to (m1);
		\draw[thick,->] ($(b) + (1.05,-0.5)$) to (ar);

		\node[text=gray] (k2) at ($(e) + (0,1.75)$) {$k$};
		\draw[thick,dotted,->,draw=gray] (k2) to ($(e) + (0,0.8)$);

	\end{tikzpicture}
	\end{center}
	\caption{\label{fg:te-scenario}
		The communication scenario considered for tamper-evident
		schemes. Alice and Bob share a classical key $k$ which Alice
		uses to encrypt a classical message $m$ as a \emph{quantum}
		ciphertext $\rho$. Eve attempts to eavesdrop on this ciphertext,
		possibly introducing some changes. Bob then decrypts, possibly
		obtaining different a message, and either accepts or rejects the
		transmission to indicate if he detected 
		Eve eavesdropping or not.
		Eve only learns the key $k$ \emph{after} her eavesdropping
		attempt.}
\end{figure}

In 2003, Gottesman formalized this security notion and gave two
constructions achieving it~\cite{Got03}. Following prior
work~\cite{BL20}, we refer to schemes achieving this notion as
\emph{tamper-evident schemes}.\footnote{%
	Gottesman originally called tamper-evident schemes ``uncloneable
	encryption'' schemes, but conceded that this may not be the proper
	term. Indeed, nothing \emph{prevents} Eve from
	cloning the ciphertext, or at least the information it encodes. She
	simply cannot do without leaving evidence of her tampering.}
Gottesman's first construction consisted in showing that any quantum
authentication scheme\footnote{%
	Quantum authentication schemes are essentially message
	authentication codes for quantum states.}
\cite{BCG+02} was already, without any modification, a tamper-evident
scheme. The second was a specific construction obtained by
adapting Bennet and Brassard's quantum key distribution scheme
\cite{BB84} and the Shor-Preskill proof of its security \cite{SP00}.

While tamper evidence and quantum key distribution are similar in the
sense that they both attempt to detect an eavesdropper, they do exhibit
important differences. The latter is typically interactive, does not
assume a pre-shared key, and produces a shared random string between
Alive and Bob, while the former is non-interactive, assumes a pre-shared
key, and allows Alice to send a \emph{specific} string to Bob.

The specifics of Gottesman's work on tamper evidence has attracted
little attention in the twenty years since its publication.
Tamper evidence is often cited as an example of a security notion
offered by quantum cryptography which cannot be achieved with only
classical tools, but little work has been done to formally situate
tamper evidence with respect to other quantum cryptographic primitives.
This leaves the following as an open avenue of research:
\begin{quote}
	\textit{How does tamper evidence fit in the landscape of
	cryptographic security notions? Which
	notions imply it? Which does it imply? Which are independent?}
\end{quote}
We offer in this work multiple answers to these questions.

\paragraph{Organization.}
We complete this section by reviewing prior work on tamper evidence
in \cref{te:sc:sota}, providing an overview of our contributions and
results in \cref{te:sc:contributions}, and highlighting a few open
questions in \cref{te:sc:questions}.

As for the remainder, we review in detail the necessary
mathematical background and notions from quantum information theory in
\cref{tee:sc:prelims}.
This includes an in-depth discussion of the trace distance, a
crucial notion in this work, as well as a lemma pertaining to gentle
measurements. We then state
the basic definitions surrounding tamper evidence in \cref{te:sc:aqecm}.
Finally, we conclude with the full details of our results in
\cref{sc:te=>enc,te:sc:te=>qm,te:sc:te<=>rev,te:sc:separations}.

\paragraph{Acknowledgements.}
The author thanks Anne Broadbent for many fruitful discussions over the
course of this work, in particular concerning the connections between
tamper evidence, revocation, and certified deletion. The author also
acknowledges support from the Natural Sciences and Engineering Research
Council of Canada (NSERC) during his time at the University of Ottawa.

\subsection{The State of the Art}
\label{te:sc:sota}

We review the works on tamper-evident schemes and quickly summarize
their results.

Gottesman's original work \cite{Got03} showed that every
quantum authentication scheme is also a tamper-evident scheme. It also
highlighted a connection with quantum key distribution (QKD) by
sketching a construction of a QKD scheme from any tamper-evident scheme
and by obtaining a tamper-evident scheme based on Bennet and Brassard's
QKD scheme and its analysis by Shor and Preskill. Finally, a few years
later, Gottesman sketched a construction of a quantum money scheme from
any tamper-evident scheme \cite{Got11se}.

Leermaker and \v{S}kori\'c established the existence of
a tamper-evident scheme which also achieved the notion of
\emph{key recycling} \cite{LS21}, essentially meaning that a
portion of a key can be reused with little security degradation
\cite{FS17}.

Finally, van der Vecht, Coiteux-Roy, and \v{S}kori\'c have also given a
tamper-evident scheme with short keys \cite{vdVCRS22}.

\subsection{Our Contributions}
\label{te:sc:contributions}

Before giving an overview of our contributions, we emphasize one place
where we do \emph{not} make a contribution: we do not define any new
tamper-evident schemes, nor any other type of cryptographic schemes,
``from scratch'' or from first principles. Instead, we define various
constructions mapping arbitrary tamper-evident schemes to other
cryptographic schemes or generically constructing tamper-evident schemes
from other cryptographic primitives.

Due to this level of abstraction
and the wide variety of notions under consideration, we place a
particular level of care in stating our various definitions and formally
proving our results. We emphasize that Gottesman has already
shown the unconditional existence of arbitrarily good
tamper-evident schemes~\cite{Got03}. This is recorded here in
\cref{te:th:te-exist}. Thus,
the constructions we define and study can be instantiated and the
conclusions we derive hold unconditionally.

Our contributions can be divided into four categories. We have three
``positive'' categories: we show that tamper evidence implies
encryption, we formalize how it it yields quantum money schemes, and we
relate it to encryption schemes with revocation. These are developed in
\cref{sc:te=>enc,te:sc:te=>qm,te:sc:te<=>rev}, respectively.
We conclude in \cref{te:sc:separations} with our final
category where we collect various ``negative'' results. There, we
provide counterexamples by constructing various schemes which are
tamper evident but fail to achieve other properties of interest.

In what follows, we review the three positive categories and we indicate
the relevant counterexamples from the fourth one as they
arise.

\subsubsection{Tamper Evidence, Encryption, and Authentication}
\label{te:sc:contributions-enc}

Gottesman defines a tamper-evident scheme as an \emph{encryption scheme}
which \emph{permits the honest receiver to detect meaningful
eavesdropping}. We refer to this second property
as the ``tamper-evident requirement''. Gottesman then states the
following \cite{Got03}:
\begin{quote}
	Note that [a tamper-evident scheme] by definition must also encrypt
	the message, so that Eve, even if she gets caught, cannot read the
	message until she learns the key. It is unclear whether this is an
	independent condition or whether it would follow from the
	[tamper-evident] requirement alone. (A classical message sent
	completely unencrypted can always be copied, but it may be possible
	to create partially encrypted messages which are [tamper-evident].)
\end{quote}

\paragraph{Contribution.} We answer this question in \cref{sc:te=>enc}
by showing that encryption is \emph{not} an independent property and
that \emph{it does} follow from the tamper-evident requirement alone.

There was already some evidence pointing towards this result. Gottesman
had shown that quantum authentication schemes were also, without any
modification, tamper-evident schemes \cite{Got03} and nearly concurrent
work by Barnum, Cr\'epau, Gottesman, Smith, and Tapp had shown that any
quantum authentication scheme is, without any modification, a quantum
encryption scheme \cite{BCG+02}. Thus, it was already known that
tamper evidence and encryption were somewhat related inasmuch that they
are both implied by a common cryptographic property:\footnote{%
	These implications hold with at most a polynomial loss of security,
	and no loss of correctness.
	}
\begin{equation}
\label{eq:tee:got-impl}
	\text{\small Quantum Authentication}
	\implies
	\left(
		\text{\small Tamper Evidence}
		\land
		\text{\small Encryption of Quantum States}
	\right).
\end{equation}
We show that a finer-grained version of
the previous implication holds, namely that
\begin{equation}
\label{eq:tee:our-impl}
	\text{\small Quantum Authentication}
	\implies
	\text{\small Tamper Evidence}
	\implies
	\text{\small Encryption of Classical Messages}
\end{equation}
and that the converses do not hold. Specifically, we show that the
second implication holds in \cref{th:te=>enc} and we show that the
converse of the first does not hold in \cref{te:sc:te-not-auth}. The
first implication was shown by Gottesman and the second converse
trivially does not hold. We note that Gottesman did \emph{conjecture}
the existence of non-authenticating tamper-evident schemes, but did not
prove this fact \cite{Got03}.

From \cref{eq:tee:got-impl} to \cref{eq:tee:our-impl}, we have shifted
from the encryption of \emph{quantum states} to the more restrictive ---
and hence less powerful --- encryption of \emph{classical messages},
seen as a subset of quantum states. This is a necessary specification as
we show in \cref{te:sc:te-not-qenc} how to construct a tamper-evident
scheme which does not encrypt all quantum states. Thus, a more
appropriate representation of the relations between encryption,
tamper-evidence, and authentication exhibited in this work is
given by the Venn diagram in \cref{fg:tee:venn}.

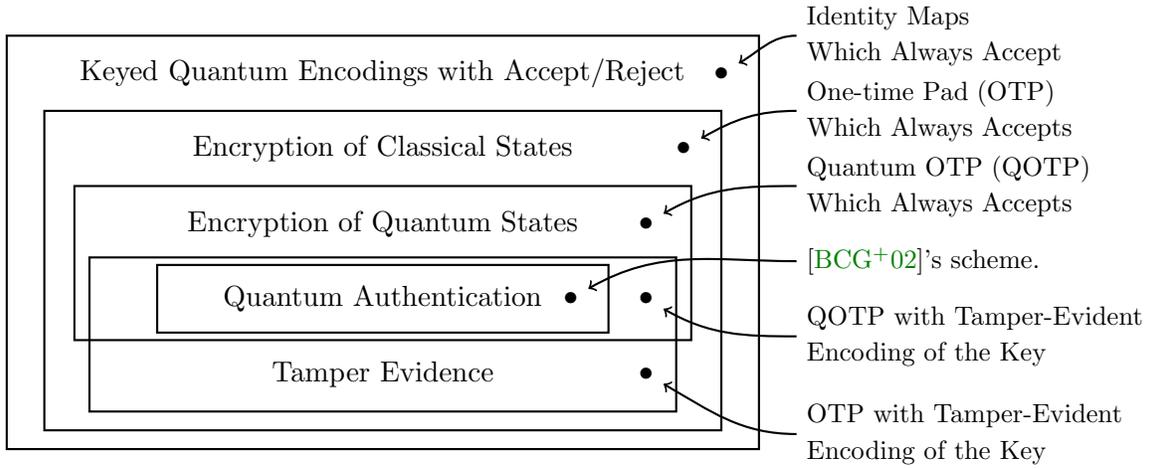
\begin{figure}
	\begin{center}
	\begin{tikzpicture}

		\node (qa) at (0,0) {Quantum Authentication};
		\node (qe) at ($(qa) + (0,1)$) {Encryption of Quantum States};
		\node (te) at ($(qa) - (0,1)$) {Tamper Evidence};
		\node (ce) at ($(qe) + (0,1)$) {Encryption of Classical States};
		\node (al) at ($(ce) + (0,1)$)
			{Keyed Quantum Encodings with Accept/Reject};

		\draw[thick] ($(qa)+(3,0.45)$) rectangle ($(qa)-(3, 0.45)$);
		\draw[thick] ($(te)+(3.9,-0.5)$) rectangle ($(qa)-(3.9,-0.55)$);
		\draw[thick] ($(qe)+(4.1,0.5)$) rectangle ($(qa)-(4.1,0.55)$);
		\draw[thick] ($(ce)+(4.5,0.5)$) rectangle ($(te) - (4.5,0.75)$);
		\draw[thick] ($(al)+(5,0.5)$) rectangle ($(te)-(5,1)$);

		\node (qap) at ($(qa) + (2.5,0)$) {$\bullet$};
		\node (alp) at ($(al) + (4.5,0)$) {$\bullet$};
		\node (cep) at ($(ce) + (4,0)$)   {$\bullet$};
		\node (qep) at ($(qe) + (3.5,0)$) {$\bullet$};
		\node (tep) at ($(te) + (3.5,0)$) {$\bullet$};
		\node (nap) at ($(qa) + (3.5,0)$) {$\bullet$};

		\node[anchor=west, align=left] (alx) at ($(al) + (5.5,0.5)$)
			{\small Identity Maps\\\small Which Always Accept};
		\draw[thick, ->] (alx) to[out=-180,in=30] (alp);

		\node[anchor=west, align=left] (cex) at ($(ce) + (5.5,0.5)$)
			{\small One-time Pad (OTP)\\\small Which Always Accepts};
		\draw[thick, ->] (cex)  to[out=-180,in=30](cep);

		\node[anchor=west, align=left] (qex) at ($(qe) + (5.5,0.5)$)
			{\small Quantum OTP (QOTP)\\\small Which Always Accepts};
		\draw[thick, ->] (qex) to[out=-180,in=30] (qep);

		\node[anchor=west, align=left] (qax) at ($(qa) + (5.5,0.5)$)
			{\small \cite{BCG+02}'s scheme.};
		\draw[thick, ->] (qax) to[out=-180,in=30] (qap);

		\node[anchor=west, align=left] (nax) at ($(qa) + (5.5,-0.5)$)
			{\small QOTP with Tamper-Evident\\\small Encoding of the Key};
		\draw[thick, ->] (nax) to[out=-180,in=-30] (nap);

		\node[anchor=west, align=left] (tex) at ($(te) + (5.5,-0.8)$)
			{\small OTP with Tamper-Evident\\\small Encoding of the Key};
		\draw[thick, ->] (tex) to[out=-180,in=-30] (tep);

	\end{tikzpicture}
	\end{center}
	\caption{\label{fg:tee:venn}%
		A Venn diagram showing the relations between
		quantum authentication schemes, tamper-evident schemes, and
		encryption schemes within all keyed quantum
		encoding schemes where the decoding map must also produce an
		accept/reject flag. Some schemes are pinpointed to
		illustrate that each discernable region shown here is distinct.
		The identity, OTP, and QOTP schemes have decoding maps augmented
		to always accept.}
\end{figure}

\paragraph{Proving that tamper evidence implies encryption.}
Our proof method for showing that tamper evidence implies encryption is
similar to Barnum \emph{et al.}'s proof that quantum authentication
implies the encryption of quantum states \cite{BCG+02}. We proceed, as
they did, in three steps.

\emph{Step 1: Show that bad encryption implies bad tamper
evidence.}
The main novel observation underpinning our proof is that a bad
encryption scheme must be a bad tamper-evident scheme
(\cref{th:benc=>bte}). Indeed, consider an encoding scheme which admits
two messages whose encodings are near-perfectly distinguishable, even
without knowledge of the key. By definition, such a scheme cannot be a
good encryption scheme. Moreover, via a gentle measurement, the encoded
states can be distinguished essentially without disturbing them. This
allows an eavesdropper to obtain information on the ciphertext, and
hence on the underlying message, in a way which will not be detected by
the honest receiver with high probability. Such a scheme thus fails to
be sufficiently tamper-evident.

Unfortunately, we cannot simply take the contrapositive of this result
and be satisfied. The precise quantification we obtain is not enough to
conclude via this method that good tamper-evident schemes are good
encryption schemes. A bit more work is needed.

\emph{Step 2: Analyse the parallel composition of tamper-evident
schemes.} Next, we show that the parallel composition of tamper-evident
schemes only yields a \emph{linear} loss of security
(\cref{th:te-parallel}). This is in contrast to the fact made explicit
by Barnum \textit{et al.} that the parallel composition of encryption
schemes yields an \emph{exponential} loss in security \cite{BCG+02}.

\emph{Step 3: Leverage these different rates of security
degradation.}
We play this linear loss of security as a
tamper-evident scheme against the exponential loss of security as an
encryption scheme to obtain our result.

Indeed, if a good tamper-evident scheme was not a sufficiently good
encryption scheme, then the repeated parallel composition of this scheme
would yield a contradiction: It would pass the threshold of being a bad
encryption scheme before passing the threshold of being a bad
tamper-evident scheme, contradicting our prior result.

\paragraph{Proving that tamper evidence does not imply authentication.}
To show in \cref{te:sc:te-not-auth} that tamper evidence does not imply
authentication, we construct a tamper-evident scheme which makes it easy
for an adversary to change the encoded message without detection. The
idea is to encrypt a message $m$ with a classical one-time pad $p$, and
then encrypt $p$ with a tamper-evident scheme. The overall ciphertext is
then composed of two parts: the classical string $m \xor p$ and the
tamper-evident encryption of $p$.

Upon reception of a ciphertext from this scheme, the honest recipient
can check if any eavesdropping occurred on the tamper-evident encoding
of $p$. If none is detected, then they can be assured that an adversary
is unlikely to be able to learn $m$ even if they kept a copy of
$m \xor p$ and later learn the key used by the tamper-evident scheme.
Indeed, the tamper-evident guarantee ensures that they are unlikely to
learn $p$ in this scenario.

On the other hand, the only part of the ciphertext which depends on the
message --- the classical string $m \xor p$ --- is sent without any
additional layers of protection. This means that an adversary can
pick any string $e$ and modify this part of the ciphertext to become
$(m \xor e) \xor p$ without being detected by the honest recipient.
Indeed, the recipient will simply believe that $m \xor e$ was the
original message sent. In other words, this construction yields a
tamper-evident scheme which is \emph{malleable} because the one-time pad
is itself malleable.

\paragraph{Other separations.}
As for the remaining separating examples illustrated in
\cref{fg:tee:venn}, we claim that the always-accepting identity maps,
the always-accepting one-time pad, and the always-accepting quantum
one-time pad are trivial to situate in this diagram. As for the quantum
one-time pad with a tamper-evident encoded key, we claim without
further proof that it can be situated with an argument analogous to the
one we have just made for the classical one-time pad with a
tamper-evident encoded key.

\subsubsection{Formalizing Quantum-Money from Tamper Evidence}
\label{sc:contributions-qm}

Quantum money was perhaps the first widely known quantum cryptographic
primitive, having been published by Wiesner in 1983 \cite{Wie83} about a
decade after it was initially conceived.\footnote{%
	Those interested should read Brassard's account of the early days of
	quantum cryptography \cite{Bra05}.}
In a quantum money scheme, a bank issues ``quantum banknotes'' which, in
general, consist of a quantum state $\rho_s$ and a classical piece of
information $s$ often understood to be the serial number
of the banknote. Any user may return a banknote to the bank to have it
validated to ensure that it is not a forgery. The relevant security
guarantee here is that it should be infeasible for any adversary to turn
one honestly produced banknote $(\rho_s, s)$ into \emph{two} banknotes
$(\sigma', s')$ and $(\sigma'',s'')$ which would \emph{both} be accepted
by the bank. The basic template for the bank's verification process is
for it to maintain a database associating each public serial number $s$
to a secret key $k_s$ which the bank can use to validate the quantum
state $\rho_s$.

In this foundational work, Wiesner presented an explicit
quantum money scheme and sketched an argument that it should
achieve the desired security notion. A formal proof of
security was presented about thirty years later \cite{MVW13}, prompted
by discussions on an internet forum. During these same discussions,
Gottesman sketched a way to transform any tamper-evident scheme into a
quantum money scheme \cite{Got11se}.

Gottesman's idea was fairly straightforward: The bank samples a random
key $k$, message $m$, and serial number $s$. It keeps $(s, k, m)$ in its
database and encodes the message $m$ with the tamper-evident scheme
using the key $k$ to obtain a quantum state $\rho$. The bank then issues
$(\rho, s)$ as a banknote. To verify the authenticity of a supposed
banknote $(\rho', s')$, the bank recovers the corresponding key $k'$ and
message $m'$ from its database and decodes the state $\rho'$ with the
tamper-evident scheme using the key $k'$. The bank accepts the banknote
if and only if the message $m'$ is recovered \emph{and} if it does not
detect any eavesdropping. At a high-level, the security of this quantum
money scheme follows from the security of the tamper-evident scheme.
If the bank does not detect any eavesdropping, then a counterfeiter is
highly unlikely to have obtained any information on the message, making
it difficult for the counterfeiter to have produced a second valid
encoding of this same message.

\paragraph{Contribution.} We formalize Gottesman's construction
and formally prove and quantify the security of the resulting quantum
money scheme in \cref{te:sc:te=>qm}. Notably, we account for necessary
technical details which were not directly addressed in the initial sketch.

\paragraph{Related counterexamples.} Interestingly, we later show in
\cref{te:sc:separations} that it is
\emph{necessary} in this construction for the bank to check both that no
eavesdropping has occurred \emph{and} that the correct message is
recovered. We do so by constructing two explicit counterexamples.

First, we construct in
\cref{te:sc:double} a tamper-evident scheme which allows an adversary
given one honestly produced ciphertext to produce two states which will
both decode to the original plaintext, but neither of which will pass
the eavesdropping test. Our construction for this counterexample is a scheme which produces two tamper-evident
encodings of the same message $m$ with independently sampled keys
and that this overall ciphertext passes the eavesdropping check
if and only if \emph{both} encryptions of $m$ pass their individual
eavesdropping check. In particular, this implies that tamper-evident
schemes need not be uncloneable encryption schemes \cite{BL20}.

For our second class of counterexamples, we construct in
\cref{te:sc:ast} secure tamper-evident schemes allowing an adversary
given one honestly produced ciphertext to
create two states which will both pass the eavesdropping check, but
neither of which are likely to decode to the correct message. 
Our counterexample construction splits the plaintext with a simple group-based
$2$-out-of-$2$ classical secret sharing scheme (\eg: \cite{Sha79}) and
then encode both of these shares with a tamper-evident scheme using
independent keys. The overall ciphertext passes the eavesdropping
check if at least one, but not necessarily both, encrypted
shares pass their individual eavesdropping check. By splitting the
shares, an adversary can create two states passing verification,
but that cannot individually be used to recover the plaintext.

\subsubsection{Revocation From Tamper Evidence, and Vice-Versa}
\label{te:sc:contributions-rev}

An encryption scheme with revocation, or simply a \emph{revocation
scheme}, is a symmetric-key encryption scheme equipped with an
additional revocation procedure. This procedure allows the recipient of
a ciphertext, Bob, to return it to the sender, Alice. If Alice accepts
the return \emph{and} Bob did not yet know the key, then Alice is
assured that Bob has learned no information on the plaintext and that
this will remain the case \emph{even if he later learns the key}.
Evidently, it is impossible to achieve this if the ciphertext is
classical, but it does become possible by using quantum ciphertexts due
to the information-disturbance principle. In general, a revocation
procedure could consist in multiple rounds of classical or quantum
communication between Alice and Bob. In this work however, we will only
consider revocation procedures which consists of a single classical or
quantum message from Bob to Alice.

Revocation schemes were first studied by Unruh \cite{Unr15b}. There,
the revocation procedure consisted of Bob simply returning the quantum
ciphertext followed by Alice verifying that the ciphertext state
was unchanged. A few years later, Broadbent and Islam identified an
important subclass of revocation schemes, still with quantum ciphertexts, which they called
\emph{encryption with certified deletion} \cite{BI20}, or simply
a certified deletion scheme. In a nutshell, a certified deletion scheme
is a revocation scheme where the revocation procedure only requires a
single \emph{classical} message. In other words, Bob produces a
classical bit string, \textit{i.e.}~a \emph{certificate}, proving to
Alice that he has effectively deleted the ciphertext. Broadbent and
Islam also constructed a certified deletion scheme and provided a
complete proof of security.\footnote{%
	Contemporary and independent work by Coiteux-Roy and Wolf
	\cite{CW19} also examined a notion and protocol similar to the ones
    presented by Broadbent and Islam but gave no satisfactory proof of
    security.}

Certified deletion has since become an important, promising, and
fruitful area of research in quantum cryptography. Research on this
topic has focused on expanding the scope of primitives
for which a notion of certified deletion can be defined and achieved
--- \textit{e.g.:} the works of Hiroka, Morimae, Nishimaki, and Yamakawa
\cite{HMNY21} and of Bartusek and Khurana \cite{BK23} which consider
public-key, attribute-based, and fully-homomorphic encryption schemes
with certified deletion --- or on reducing the assumptions needed to
achieve this notion, \textit{e.g.}: Kundu and Tan's device-independent
protocol for certified deletion \cite{KT23}.

\paragraph{Contribution.}
We demonstrate in \cref{te:sc:te<=>rev} a close relationship between
tamper evidence and the notions of revocation and certified deletion.
We do so in two ways.

First, we provide in \cref{sc:revocation-security-definition},
specifically \cref{df:revocation}, an information-theoretic
definition of security for revocation schemes which is similar to
Gottesman's definition of security for tamper-evident schemes. We then
demonstrate that, up to a linear loss in the security quantification,
this new definition is equivalent to the existing game-based definition
of revocation found in the literature.

Second, we give in \cref{sc:te<=>rev} two generic constructions relating
tamper-evidence and revocation. We describe in
\cref{te:sc:te=>rev} a construction which transforms any tamper-evident
scheme into a revocation scheme and in \cref{te:sc:rev=>te} a
construction which transforms any revocation scheme --- including any
certified deletion scheme --- into a tamper-evident scheme. In both of
these constructions, we show that there is at most a polynomial loss of
security and correctness.

Conceptually, the relationship between tamper evidence and revocation
follows from the fact that we can, at a high-level, identify the roles of parties in a revocation scheme with roles of parties in a tamper-evident scheme. First, we note that we can identify
the role of ``revocation Bob'' with ``tamper-evident Eve'' as they both attempt to extract undetected information from the quantum ciphertext. To achieve
either security notion, the remaining honest parties must ensure that
these adversaries remain ignorant of the plaintext even if they later
learn the key. To complete the comparison, we see
``revocation Alice'' as playing the part of both ``tamper-evident
Alice'' while generating the ciphertext and of ``tamper-evident Bob''
when determining if she accepts the state she received. This is
shown in \cref{te:fg:revocation-scenario}.

\begin{figure}
\begin{center}
	\begin{tikzpicture}[yscale=0.75]

		\node        (te-a) at (0,0) {Tamper-Evident Alice};
		\draw[thick] ($(te-a)+(2,0.75)$) rectangle ($(te-a)-(2,0.75)$);

		\node[align=center] (te-b) at ($(te-a)-(0,2)$) {Tamper-Evident Bob};
		\draw[thick] ($(te-b)+(2,0.75)$) rectangle ($(te-b)-(2,0.75)$);

		\node (rev-a) at ($(te-a) + (0,1.25)$) {Revocation Alice};
		\draw[thick] ($(te-a)+(2.5,1.75)$) rectangle ($(te-b)-(2.5,1)$);

		\node[align=center] (adv) at ($0.5*(te-a) + 0.5*(te-b) + (7,0)$)
		{Revocation Bob\\\emph{or}\\Tamper-Evident Eve};
		\draw[thick] ($(adv)+(2,1.75)$) rectangle ($(adv)-(2,1.75)$);

        \draw[thick,->] ($(te-a) + (2.1,0)$) -- ($(te-a) + (4.9,0)$)
            node[midway,above] () {$\rho$};
        \draw[thick,<-] ($(te-b) + (2.1,0)$) -- ($(te-b) + (4.9,0)$)
            node[midway,below] () {$\rho'$};

        \node (m) at ($(te-a) - (3.5,0)$) {$m$};
        \draw[thick,->] (m) to ($(te-a) - (2.1,0)$);

        \node (mp) at ($(te-b) - (3.5,-0.5)$) {$m'$};
        \draw[thick,->] ($(te-b) - (2.1,-0.5)$) to (mp);
        \node (ar) at ($(te-b) - (3.5,0.5)$) {$A/R$};
        \draw[thick,->] ($(te-b) - (2.1,0.5)$) to (ar);

        \node (k) at ($(te-a) + (0,2.5)$) {$k$};
        \draw[thick,->] (k) to ($(te-a) + (0,1.85)$);

        \node[text=gray] (kp) at ($(k) + (7,0)$) {$k$};
        \draw[thick,dotted,->,draw=gray] (kp) to ($(adv) + (0,1.85)$);

	\end{tikzpicture}
\end{center}
\caption{\label{te:fg:revocation-scenario}%
    Conceptual identification between the parties in a tamper-evident
    scheme and revocation scheme. Contrast with \cref{fg:te-scenario}.
    Note that in the revocation scenario Alice does not output a
    decoded message $m'$ and $\rho'$ may be on a different space than
    $\rho$.}
\end{figure}
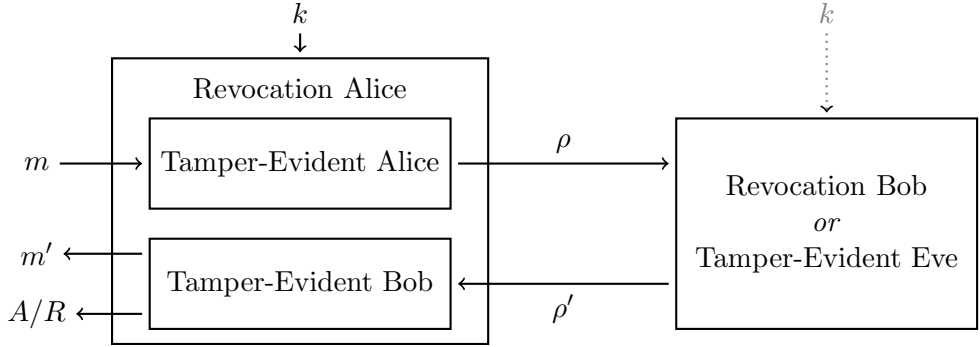

Our construction of a revocation scheme from a tamper-evident scheme is
also illustrated by \cref{te:fg:revocation-scenario}. We show that
Bob returning the ciphertext unmodified and Alice checking the flag
produced by the tamper-evident decryption procedure yields a secure
revocation procedure.

The construction of a tamper-evident scheme from a revocation scheme is
a bit more involved.
The main observation we use here is that being able to prove that you
have deleted the only instance of some information also implicitly
proves that you were the only party in possession of this
information.\footnote{%
	We thank Anne Broadbent for first making and sharing this
	observation.}
Hence, Bob in a tamper-evident scenario can convince himself that there
was no meaningful eavesdropping by successfully executing and verifying
a revocation procedure on the ciphertext by himself. At first glance,
this is not quite sufficient as Bob, in a tamper-evident scheme, must
also recover the message. However, if Bob executes a revocation
procedure, perhaps even one from a certified deletion scheme, how can he
hope to also recover the message? Has he not deleted the ciphertext? To
overcome this issue, we have Bob instead execute a gentle-measurement
version of the complete revocation procedure and subsequent verification
where only the final accept/reject output, and not the intermediate
state that would be returned, is measured.

Finally, we obtain as a direct corollary that every revocation
scheme must also be an encryption scheme. This follows from the fact
that tamper evidence implies encryption, that we can construct a
tamper-evident scheme from any revocation scheme, and that this
construction leaves both the encryption and key procedures unmodified.
This is recorded in \cref{sc:revocation=>encryption}.

\subsection{Open Questions}
\label{te:sc:questions}

While our work advances the understanding of the relations between
tamper evidence and other cryptographic security notions, it leaves much
to be done. We highlight here a few questions which remain largely open.

\paragraph{Do our results on tamper evidence hold in a computational
setting?}
The whole of this work is completed in the information-theoretic setting
where we do not formally consider the computational costs to implement
or attack cryptographic schemes. However, tamper evidence can also be
studied in a setting where such computational costs are considered. A
natural question then is to ask which of our results hold in such a
setting where schemes and attacks must be efficiently implementable. We
expect most of them to hold, and we believe most of our constructions
remain efficient when applied to efficient schemes.

One notable exception is our formalization of Gottesman's construction
of a quantum money scheme in \cref{sc:qm-got} when the underlying
tamper-evident scheme is not perfectly correct. There, we require the
bank to sample key-message pairs which will  lead to a correct
decryption with sufficiently high probability. It is unclear if such a
sampling can always be done efficiently.
However, as we discuss at the end of \cref{sc:qm-got}, this problem can
be side-stepped by weakening the correctness notion for quantum money
schemes from one where we require \emph{all} banknotes produced by a
bank to be accepted with high probability, to one where we only require
\emph{the large majority} of banknotes to be accepted with high
probability.

\paragraph{Can we formalize the connection between tamper evidence and
quantum key distribution?}
Gottesman noted that a quantum key distribution scheme can be easily
obtained from a tamper-evident scheme \cite{Got03}. Indeed, have Alice
encode a random string $s$ with a random key $k$ using a tamper-evident
scheme and send the resulting quantum state to Bob. Bob then
acknowledges reception of the state over an authenticated classical
channel, after which Alice discloses $k$ over this same channel. Bob can
now decode and verify the quantum state with $k$. If Bob accepts, they
can be sure that an eavesdropping Eve has essentially no information on
the string $s$, even if Eve also learned $k$ by listening to the
authenticated channel. Moreover, Gottesman sketched how minimal changes
to Bennet and Brassard's QKD scheme \cite{BB84} yields a tamper-evident
scheme. At first glance, this construction is specific to the BB84
scheme and may not be applicable to all QKD schemes.

This state of affairs yields two open questions which we have not
broached in this work:
\begin{enumerate}
	\item
		Can one formally prove and quantify the security of a QKD scheme
		obtained from Gottesman's generic construction based on any
		tamper-evident scheme?
	\item
		Can one formulate a minimal set of conditions on QKD schemes
		such that satisfying these conditions ensures that it is
		possible to generically transform it into a tamper-evident
		scheme? And, if so, what is the level of security achieved by
		the resulting tamper-evident scheme?
\end{enumerate}
We expect answering the first question to be a straightforward exercise.
However, the second appears more complex to tackle.

\paragraph{Is there a device-independent tamper-evident scheme?}
A device-independent quantum cryptographic scheme is one in which the
honest users need not have complete control of their quantum devices to
still achieve their desired security notions. Such schemes exist
for both QKD (\emph{e.g.}: \cite{VV14b}) and certified deletion
(\emph{e.g.} \cite{KT23}). As both of these security notions are related
to tamper evidence, it would appear likely that the methods used to
attain them in a device-independent manner should also allow us to
construct a device-independent tamper-evident scheme. Is this indeed the
case? In particular, it is unclear if our construction of a
tamper-evident scheme from any certified deletion scheme would preserve
device-independence, and we are inclined to believe it would not.

\section{Preliminaries}
\label{tee:sc:prelims}

We review in \cref{tee:sc:notation} the basic mathematical concepts used
in this work, how they relate to quantum mechanics, and the notation we
use. Next, in \cref{tee:sc:td}, we examine in some detail the notion of
the trace distance between two linear operators, a concept central to
this work. In particular, we highlight how some well known properties of
the trace distance between density operators generalize to
larger classes of linear operators, something which will be necessary
for us.
Finally, in \cref{tee:sc:cgm}, we examine the concept of a gentle
measurement.

\subsection{Notation and Basic Concepts}                               %
\label{tee:sc:notation}                                                %

Except for some differences in notation, we generally follow Watrous
\cite{Wat18}. We assume the reader is familiar with the notion of
tensor products on Hilbert spaces.

\paragraph{Sets, alphabets, and Hilbert spaces.}
Sets will generally be denoted with ``calligraphic'' uppercase characters
such as $\mc{X}$ or $\mc{M}$. An \emph{alphabet} is a non-empty finite
set. All Hilbert spaces in this work are complex and finite dimensional
and hence isomorphic to $\C^n$ for a suitable value of $n \in \N$. We
often use the Dirac, or ``bra-ket'' notation \cite{Dir39}, to denote the
elements of a vector space and their duals, \emph{i.e.:} the dual of a
vector $\ket{\psi}$ is $\ket{\psi}^\dag = \bra{\psi}$. For any alphabet
$\mc{A}$, we let $\C^{\mc{A}}$ be a Hilbert space which admits the set
$\{\ket{a} \,:\, a \in \mc{A}\}$ as an orthonormal
basis and we call this distinguished basis the \emph{computational
basis} of the space.\footnote{%
	More formally, we take $\C^\mc{A}$ to be the set of all maps
	$f : \mc{A} \to \C$ equipped with the usual point-wise addition,
	i.e.: $(f+g)(a) = f(a) + g(a)$, the usual scalar multiplication,
	i.e. $(c\cdot f)(a) = c \cdot f(a)$, and the inner product
	$(f,g) \mapsto \sum_{a \in \mc{A}} f(a)^\dag \cdot g(a)$ where
	$c^\dag$ denotes the complex conjugate of $c$.
	For all $a \in \mc{A}$, we identify the element $\ket{a}$ with the
	map taking $a$ to $1$ and all other elements of $\mc{A}$ to $0$. See
	\cite[Sec.~1.1.1]{Wat18} for more details on this construction.
	The details of this construction are never explicity used in this
	work.}
We will typically denote Hilbert spaces using a sans-serif font, such as
$\tsf{H}$, and moreover aim to systematically match symbols between
alphabets and Hilbert spaces constructed from them, for example
$\tsf{X} = \C^\mc{X}$.

Unless otherwise specified, we take the norm on a Hilbert space to be
the square root of the inner-product of an element with itself:
$\norm{\ket{\psi}} = \sqrt{\braket{\psi}}$.

We identify $\C$ with $\C^\mc{S}$ for any set $\mc{S}$ composed of a
single element. A common example is $\C = \C^{\{\varepsilon\}}$ where
$\varepsilon \in \{\bz,\bo\}^{0}$ is the unique bit string of length
$0$. Note that we will distinguish between the numbers $0,1 \in \N$ and
the symbols $\bz$ and $\bo$ used to denote the states of a classical
bit.\footnote{%
    In some edge cases not explicitly used in this work, this
    distinction can avoid some notational collisions.}

\paragraph{Operators on Hilbert spaces, their adjoints, and the trace.}
Let $\tsf{H}$ and $\tsf{H}'$ be Hilbert spaces. We denote the set of
linear operators whose domain and codomain are $\tsf{H}$ and $\tsf{H}'$,
respectively, by $\mc{L}(\tsf{H}, \tsf{H}')$ and we let $\mc{U}(\tsf{H},
\tsf{H}') \subseteq \mc{L}(\tsf{H}, \tsf{H}')$ denote the subset of
these operators which are \emph{isometries}, i.e.: those which preserve
the norm. If $\tsf{H}'=\tsf{H}$, we abbreviate these notations to
$\mc{L}(\tsf{H})$ and $\mc{U}(\tsf{H})$, respectively.
Thus, $\mc{U}(\tsf{H})$ is the set of \emph{unitary} operators on
$\tsf{H}$.

The identity map on a Hilbert space $\tsf{H}$ is denoted $I_\tsf{H}$ and
is unitary. For any linear operator $L \in \mc{L}(\tsf{H},
\tsf{H}')$, we define its \emph{adjoint}, denoted $L^\dag$, as the
unique element of $\mc{L}(\tsf{H}',\tsf{H})$ satisfying
$\left(L^\dag \ket{a}\right)^\dag\ket{b} = \bra{a}L\ket{b}$ for all
$\ket{a} \in \tsf{H}'$ and $\ket{b} \in \tsf{H}$.
For any Hilbert space $\tsf{H}$, the \emph{trace}
$\Tr_\tsf{H} : \mc{L}(\tsf{H}) \to \C$ is the unique linear map such
that $\Tr_\tsf{H}(\ketbra{\psi}{\phi}) = \braket{\phi}{\psi}$ for all
elements $\ket{\psi},\ket{\phi} \in \tsf{H}$. Recall that the trace is
\emph{cyclical}: $\Tr_\tsf{C}(ABC)=\Tr_\tsf{A}(BCA)=\Tr_\tsf{B}(CAB)$
for any linear operators
$A \in \mc{L}(\tsf{A}, \tsf{C})$, $B \in \mc{L}(\tsf{B}, \tsf{A})$, and
$C \in \mc{L}(\tsf{C}, \tsf{B})$ and that it also satisfies
$\Tr_\tsf{H}(L) = \sum_{i \in \mc{J}} \bra{b_j} L \ket{b_j}$ for any
$L \in \mc{L}(\tsf{H})$ and any orthonormal basis $\{\ket{b_j}\}_{j \in
\mc{J}}$ for $\tsf{H}$.

A linear operator $X \in \mc{L}(\tsf{H})$ is said to be \emph{positive
semidefinite} if there exists a linear operator $Y \in \mc{L}(\tsf{H})$
such that $X = Y^\dag Y$. This implies that $\bra{\psi} X \ket{\psi}$ is
a real non-negative number for any element $\ket{\psi} \in \tsf{H}$. We
let $\Pos(\tsf{H})$ denote the set of all positive operators
on~$\tsf{H}$. We let $\mc{D}(\tsf{H})$ denote the set of \emph{density}
operators, which is to say positive semidefinite operators with a trace
of one. We also define $\mc{D}_\bullet(\tsf{H})$ be the set of
\emph{subnormalized density operators}, which is to say
positive semidefinite operators whose trace is an element of the real
interval $[0,1]$.
Finally, an operator $X \in \mc{L}(\tsf{H})$ is \emph{Hermitian} if
it is self-adjoint, which is to say that $X^\dag = X$. We denote the set
of Hermitian operators on $\tsf{H}$ by $\text{Herm}(\tsf{H})$. Every
Hermitian operator $X$ admits a \emph{square root} $\sqrt{X}$ which is
the unique positive semidefinite operator satisfying
$\sqrt{X}\sqrt{X} = X$.

Thus, for any Hilbert space $\tsf{H}$ we have the inclusions
\begin{equation}
	\mc{D}(\tsf{H})
	\subseteq
	\mc{D}_\bullet(\tsf{H})
	\subseteq
	\Pos(\tsf{H})
	\subseteq
	\text{Herm}(\tsf{H})
	\subseteq
	\mc{L}(\tsf{H})
	\qq{and}
	\mc{U}(\tsf{H})
	\subseteq
	\mc{L}(\tsf{H}).
\end{equation}

\paragraph{Channels.}
We will be interested in a class of linear maps taking linear
operators to linear operators, namely \emph{channels}.
These are completely positive and trace-preserving
linear maps $\Phi : \mc{L}(\tsf{H}) \to \mc{L}(\tsf{H}')$ for
Hilbert spaces $\tsf{H}$ and $\tsf{H}'$. By \emph{completely positive},
we mean that for any Hilbert space $\tsf{Z}$ the map $\Phi \tensor
\Id_\tsf{Z}$ --- where $\Id_\tsf{Z}$ is the identity on
$\mc{L}(\tsf{Z})$ --- maps positive operators to positive operators,
\emph{i.e.}:
$\left(\Phi \tensor \Id_\tsf{Z}\right)(\Pos(\tsf{H} \tensor \tsf{Z}))
\subseteq  \Pos(\tsf{H'} \tensor \tsf{Z})$.
By \emph{trace preserving}, we mean that
$\Tr_{\tsf{H}'}\left(\Phi(L)\right) = \Tr_\tsf{H}(L)$ for all $L \in
\mc{L}(\tsf{H})$, \emph{i.e.}: $\Tr_{\tsf{H}'} \circ \Phi =
\Tr_{\tsf{H}}$.
The set of all channels with domain $\mc{L}(\tsf{H})$
and codomain $\mc{L}(\tsf{H}')$ is denoted $\text{Ch}(\tsf{H},
\tsf{H'})$. Note that channels map density operators to density
operators. Channels are also called \emph{CPTP maps}.

The trace $\Tr_{\tsf{H}} : \mc{L}(\tsf{H}) \to \C$ and
conjugation $L \mapsto V L V^\dag$ by any isometry
$V \in \mc{L}(\tsf{H}, \tsf{H}')$ are channels, as well as any
composition or tensor product of two channels. These observations
are sufficient to characterize and construct \emph{all} channels
\cite[Cor.~2.27]{Wat18}.

\paragraph{Measurements and swap channels.}
Let $\mc{A}$ be an alphabet. A \emph{measurement} on a Hilbert space
$\tsf{H}$ with outcomes in $\mc{A}$ is a
map~$\mu : \mc{A} \to \Pos(\tsf{H})$ such that
$\sum_{a \in \mc{A}} \mu(a) = I_\tsf{H}$. This yields
a channel~$\Ms(\mu) : \mc{L}(\tsf{H}) \to \mc{L}(\C^\mc{A})$
defined by $
	L
	\mapsto
	\sum_{a \in \mc{A}} \Tr\left(\mu(A) L\right)\cdot\ketbra{a}
	$.\footnote{%
		From our prior characterization, the fact that this is a channel
		follows by composing conjugation by the isometry $
			\sum_{a \in \mc{A}}
			\sqrt{\mu(a)} \tensor \ket{a}
			\in \mc{L}(\tsf{H}, \tsf{H} \tensor \C^\mc{A})
		$ with the channel $
			(\Tr_\tsf{H} \tensor \Id_{\C^{\mc{A}}}):
			\mc{L}(\tsf{H} \tensor \C^{\mc{A}})
			\to
			\mc{L}(\C^{\mc{A}})$.
	}

For any alphabet $\mc{X}$, we define the channel
$\Delta_\mc{X} : \mc{L}(\C^\mc{X}) \to \mc{L}(\C^\mc{X})$ by
\begin{equation}
	\rho
	\mapsto
	\sum_{x \in \mc{X}} \ketbra{x} \rho \ketbra{x}
	=
	\sum_{x \in \mc{X}} \bra{x}\rho\ket{x} \cdot \ketbra{x}
	=
	\sum_{x \in \mc{X}} \Tr(\ketbra{x} \rho) \cdot \ketbra{x}
\end{equation}
often called the \emph{completely dephasing channel}. It coincides
with a measurement in the computational basis,
\emph{i.e.}: $\Delta_\mc{X} = \Ms(\mu_\text{c.b.})$ where
$\mu_\text{c.b.} : \mc{X} \to \Pos(\tsf{X})$ is defined by $x \mapsto
\ketbra{x}$.

Finally, we define ``swap'' channels which reorder the
Hilbert spaces of their domains. Let
$\tsf{H}_1, \ldots \tsf{H}_n$ be a sequence of $n$ Hilbert spaces and
let $\pi : \{1, \ldots n\} \to \{1, \ldots n\}$ be a permutation. Then,
$\Swap_{\tsf{H}_1, \ldots, \tsf{H}_n}^\pi : \mc{L}(\tsf{H}_1 \tensor
\ldots \tensor \tsf{H}_n) \to \mc{L}(\tsf{H}_{\pi^{-1}(1)} \tensor \ldots
\tensor \tsf{H}_{\pi^{-1}(n)})$ is the channel which maps any simple tensor
$\ketbra{h_1} \tensor \ldots \tensor \ketbra{h_n}$ to
$\ketbra{h_{\pi^{-1}(1)}} \tensor \ldots \ketbra{h_{\pi^{-1}(n)}}$ and extends
linearly to all other elements. In practice, we will denote the
permutation $\pi$ with the ``one-line'' notation. For example,
$\Swap_{\tsf{H}, \tsf{H}', \tsf{H}''}^{(2,3,1)}$ maps
$\mc{L}(\tsf{H} \tensor \tsf{H}' \tensor \tsf{H}'')$
to $\mc{L}(\tsf{H}'' \tensor \tsf{H} \tensor \tsf{H}')$.

\paragraph{Quantum mechanics.}
How is this prior mathematical machinery relevant to quantum mechanics?
Quantum mechanics asserts that every quantum system is associated to a
Hilbert space of non-zero dimension called its \emph{state space}. At
any given moment, the state of a quantum system whose state space is
$\tsf{H}$ is described by a density operator $\rho \in \mc{D}(\tsf{H})$
on this space. Any action on a quantum system is described by a channel
$\Phi$ and the state of the system after the action is given by
$\Phi(\rho)$ if the state prior to the action was $\rho$. Finally,
the joint state space of two quantum systems when the first system has
state space $\tsf{H}$ and the second $\tsf{H}'$ is the tensor product
$\tsf{H} \tensor \tsf{H}'$.

A measurement $\mu : \mc{A} \to \Pos(\tsf{H})$, and the channel
$\Ms(\mu)$ it yields, models an observation of a quantum system which
provides classical information. More specifically, this measurement
is applied to a state $\rho \in \mc{D}(\C^{\mc{A}})$, the outcome $a \in
\mc{A}$ is obtained with probability $\Tr(\mu(a)\rho)$, destroying the
state $\rho$ and leaving a new system in the state $\ketbra{a}$ to
record this outcome. In particular, the cyclicity of the
trace implies that measuring a state $\rho \in \mc{D}(\C^\mc{A})$ in the
computational basis yields outcome $a \in \mc{A}$ with probability
$\Tr(\ketbra{a} \rho) = \bra{a} \rho \ket{a}$.

For more details, the reader is directed to
the textbooks of Nielsen and Chuang \cite{NC10} and Watrous
\cite{Wat18} and the references therein.

\paragraph{Markov's inequality.}
We recall Markov's inequality (\emph{e.g.}: \cite{Bil95}). This
is a concentration inequality which gives an upper bound on the
probability that a non-negative random variable exceeds a given
threshold in terms of the expectation of this random variable.
\begin{theorem}
\label{th:markov}
	Let $X$ be a random variable distributed on the non-negative reals
	$\R^+_0$. Then, for any strictly positive $\alpha \in \R^+$, we have
	that
	\begin{equation}
		\Pr\left[X \geq \alpha\right]
		\leq
		\frac{\E(X)}{\alpha}
		\qq{and}
		\Pr\left[X < \alpha\right]
		\geq
		1 - \frac{\E\left(X\right)}{\alpha}.
	\end{equation}
\end{theorem}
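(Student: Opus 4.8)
The plan is to prove the first inequality directly from the definition of expectation via a pointwise bound, and then derive the second inequality as an immediate consequence by passing to the complementary event. Fix a strictly positive $\alpha \in \R^+$ and consider the indicator random variable $Y$ which takes the value $1$ on the event $\{X \geq \alpha\}$ and $0$ otherwise. The key observation is the pointwise inequality $X \geq \alpha \cdot Y$: on the event $\{X \geq \alpha\}$ this reads $X \geq \alpha$, which holds by definition of the event, and on its complement it reads $X \geq 0$, which holds by the assumption that $X$ is supported on $\R^+_0$.

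First I would take expectations of both sides of $X \geq \alpha Y$ and use monotonicity and linearity of expectation to obtain $\E(X) \geq \alpha \E(Y)$. Since $\E(Y) = \Pr[X \geq \alpha]$ for an indicator, and since $\alpha > 0$ allows division, this rearranges to $\Pr[X \geq \alpha] \leq \E(X)/\alpha$, which is the first claim. For the second claim, I would simply write $\Pr[X < \alpha] = 1 - \Pr[X \geq \alpha]$ and substitute the bound just obtained to get $\Pr[X < \alpha] \geq 1 - \E(X)/\alpha$.

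There is no real obstacle here: the statement is entirely standard and the proof is a two-line argument. The only points requiring the smallest amount of care are ensuring the pointwise bound $X \geq \alpha Y$ genuinely uses the non-negativity hypothesis (it fails without it, since on $\{X < \alpha\}$ one could have $X$ negative), and noting that the division step is valid precisely because $\alpha$ is assumed strictly positive. If one prefers to avoid indicator-variable notation, the same argument can be phrased measure-theoretically by splitting $\E(X) = \int_{\{X \geq \alpha\}} X \, d\Pr + \int_{\{X < \alpha\}} X \, d\Pr \geq \int_{\{X \geq \alpha\}} \alpha \, d\Pr = \alpha \Pr[X \geq \alpha]$, using non-negativity to drop the second integral; I would likely just cite a standard reference such as \cite{Bil95} and include the indicator-variable version for completeness.
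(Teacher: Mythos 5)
Your proof is correct: the pointwise bound $X \geq \alpha Y$ for the indicator $Y$ of $\{X \geq \alpha\}$, followed by monotonicity of expectation and division by $\alpha > 0$, is the standard argument, and the second inequality does follow immediately by complementation. The paper itself gives no proof of this theorem, simply citing a standard reference (\cite{Bil95}), so your write-up supplies exactly the expected textbook argument and there is nothing to reconcile.
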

The following can be seen as an analogue to Markov's inequality if we
can also assume that $X$ almost surely falls in an interval $[0,\beta]$.
\begin{lemma}
\label{th:concentration}
	Let $X$ be a random variable distributed on $\R^+_0$ and let $\beta
	\in \R^+_0$ be such that $\Pr[0 \leq X \leq \beta] = 1$. Then, for
	any strictly positive $\alpha \in \R^+$ satisfying $\alpha < \beta$
	we have that
	\begin{equation}
		\Pr\left[X > \alpha\right]
		\geq
		\frac{\E(X) - \alpha}{\beta - \alpha}
		\qq{and}
		\Pr\left[X  \leq \alpha\right]
		\leq
		1
		-
		\frac{\E(X) - \alpha}{\beta - \alpha}
		.
	\end{equation}
\end{lemma}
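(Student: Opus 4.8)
The plan is to obtain this as a corollary of Markov's inequality (\cref{th:markov}) by the change of variable $Y = \beta - X$. Since $\Pr[0 \leq X \leq \beta] = 1$ by hypothesis, the random variable $Y$ takes values in $[0, \beta] \subseteq \R^+_0$ (almost surely), so \cref{th:markov} applies to it. I would invoke it with the threshold $\beta - \alpha$, which lies in $\R^+$ precisely because we assumed $\alpha < \beta$. Note also that the boundedness of $X$ guarantees $\E(X)$ is finite, so $\E(Y) = \beta - \E(X)$ by linearity of expectation.

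The second step is to translate the resulting bounds back to statements about $X$, using the event identities $\{Y \geq \beta - \alpha\} = \{X \leq \alpha\}$ and $\{Y < \beta - \alpha\} = \{X > \alpha\}$. Plugging these together with $\E(Y) = \beta - \E(X)$ into the two inequalities furnished by \cref{th:markov} for $Y$ and the threshold $\beta - \alpha$ gives
\begin{equation}
	\Pr[X \leq \alpha]
	= \Pr[Y \geq \beta - \alpha]
	\leq \frac{\E(Y)}{\beta - \alpha}
	= \frac{\beta - \E(X)}{\beta - \alpha}
	= 1 - \frac{\E(X) - \alpha}{\beta - \alpha},
\end{equation}
and the companion bound $\Pr[X > \alpha] \geq \frac{\E(X) - \alpha}{\beta - \alpha}$ then follows either directly from the ``$\Pr[\,\cdot < \alpha\,]$'' form of \cref{th:markov} applied to $Y$, or simply by taking complements of the inequality above. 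These are exactly the two claimed bounds.

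There is essentially no serious obstacle here; the only points that deserve a moment's care are verifying that the hypotheses of \cref{th:markov} are met for $Y$ (non-negativity of $Y$ uses $X \leq \beta$ almost surely, and strict positivity of the threshold uses $\alpha < \beta$), and that $\E(X)$ is finite so that $\E(Y) = \beta - \E(X)$ is meaningful. If one preferred to avoid the change of variable, an equally short direct argument by conditioning works: writing $p = \Pr[X > \alpha]$ and using $\E(X \mid X \leq \alpha) \leq \alpha$ together with $\E(X \mid X > \alpha) \leq \beta$ yields $\E(X) \leq \alpha(1-p) + \beta p = \alpha + p(\beta - \alpha)$, and dividing by $\beta - \alpha > 0$ recovers the first inequality.
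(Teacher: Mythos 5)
Your proposal is correct. Your primary route differs slightly in presentation from the paper's: you reduce the claim to Markov's inequality (\cref{th:markov}) via the reflection $Y = \beta - X$ and the threshold $\beta - \alpha$, using the event identity $\{Y \geq \beta - \alpha\} = \{X \leq \alpha\}$, whereas the paper never invokes Markov at all and instead directly bounds $\E(X) \leq \beta\cdot\Pr[X>\alpha] + \alpha\cdot\Pr[X\leq\alpha]$ and rearranges. The two arguments are morally the same computation; your reduction makes the lemma's role as a ``reverse Markov'' explicit and reuses the already-stated \cref{th:markov}, while the paper's version is self-contained and avoids the bookkeeping of the change of variable. The alternative you sketch at the end (splitting the expectation on the event $\{X > \alpha\}$ and using $\E(X)\leq \alpha(1-p)+\beta p$) is precisely the paper's proof. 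All hypotheses are checked where they matter: non-negativity of $Y$ uses $X\leq\beta$ almost surely, and positivity of the threshold uses $\alpha<\beta$.
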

\begin{proof}
	Since $\Pr\left[X > \beta\right] = 0$, we have that
	\begin{equation}
	\begin{aligned}
		\E(X)
		&\leq
		\beta \cdot \Pr\left[X > \alpha\right]
		+
		\alpha \cdot \Pr\left[X \leq \alpha\right]
		\\&=
		\beta\cdot\Pr\left[X > \alpha\right]
		+
		\alpha\cdot\left(1 - \Pr\left[X > \alpha\right]\right)
		\\&=
		\alpha+\Pr\left[X>\alpha\right]\cdot\left(\beta-\alpha\right)
	\end{aligned}
	\end{equation}
	which yields the first equation. Noting
	that $\Pr\left[X\leq\alpha\right] = 1-\Pr\left[X<\alpha\right]$ then
	yields the second.
\end{proof}

\paragraph{Final miscellaneous conventions.}
We freely identify $\C$ with $\mc{L}(\C)$ in this work, as well as
$\tsf{H}$, $\C \tensor \tsf{H}$, and $\tsf{H} \tensor \C$ for any
Hilbert space $\tsf{H}$.

To lighten notation when there is no risk of
confusion, we may identify an element of an alphabet $a \in \mc{A}$ with
the corresponding density operator $\ketbra{a} \in \mc{D}(\C^\mc{A})$.
For example, if $\Phi : \mc{L}(\C^\mc{A}) \to \tsf{H}$ is a channel, we
may write $\Phi(a-a')$ instead of $\Phi(\ketbra{a} - \ketbra{a'})$.

We emphasize that an empty quantum system has $\C$ as its state
space and that there is a unique density operator on this space, namely
$1$. By identifying $\C$ with $\C^{\{\varepsilon\}}$ --- as previously
discussed --- and using the previous convention, we can also denote the
state of an empty quantum system as $\varepsilon$.

When convenient, we may treat a density operator $\rho \in
\mc{D}(\C^\mc{A})$ as a random variable on the set $\mc{A}$ where
the probability of sampling $a \in \mc{A}$ from $\rho$ is given by
$\Tr(\ketbra{a} \rho)$. In other words, $a \gets \rho$ means that $a$
is sampled by measuring $\rho$ in the computational basis. In
particular, if $A : \C \to \mc{L}(\C^\mc{A})$ is a channel, then
$a \gets A(\varepsilon)$ means that $a$ is sampled from the set
$\mc{A}$ by measuring in the computational basis the output of $A$ on
the empty input.

On occasion and to add clarity, we will add a subscript to a ket, bra,
or linear operator from a Hilbert space to itself to make explicit the
Hilbert space related to this object. For example, we may write
$\ket{\psi}_\tsf{F} \in \tsf{F}$ instead of simply
$\ket{\psi} \in \tsf{F}$. In this case, $\bra{\psi}_\tsf{F}$ is the
associated dual. Similarly, we may write $L_\tsf{H}$ if
$L \in \mc{L}(\tsf{H})$. Context will make clear when a subscript
denotes a Hilbert space in this manner, or when it is used to convey
other information.

Finally, we occasionally refer to a quantum system as a \emph{register},
and to any quantum system with states space $\C^{\{\bz,\bo\}}$ as a
\emph{qubit}, the quantum analogue of a bit.

\subsection{Trace Distance}
\label{tee:sc:td}

This work will make extensive use of the trace distance between linear
operators and so we review here its definition, some related basic
facts, as well as a few simple --- but less well known --- technical
properties.

First, recall that the trace distance between two linear operators
$A, B \in \mc{L}(\tsf{H})$ is defined to be half the Schatten 1-norm of
their difference:
\begin{equation}
	\frac{1}{2}\norm{A - B}_1 = \frac{1}{2}\Tr\sqrt{(A-B)^\dag(A-B)}.
\end{equation}
It is a metric on the space of linear operators. In particular it
satisfies the triangle inequality. Moreover, for pairs of density
operators, its value is between zero and one.

The importance of the trace distance in quantum information theory comes
from the fact that it is an appropriate ``operational'' measure of the
distance between quantum states. Indeed, suppose a party is presented
uniformly at random with one of two quantum systems whose states are
described by two density operators $\rho_0, \rho_1 \in \mc{D}(\tsf{H})$.
Then, it can be shown that the party can correctly determine with which
system it was presented with a probability of at most
$\frac{1}{2}\left(1 + \frac{1}{2}\norm{\rho_0 - \rho_1}_1\right)$
\cite[Prop.~1]{FG99}\footnote{Note that Fuchs and van de Graaf consider
in their proposition the probability of error when trying to distinguish
between $\rho_0$ and $\rho_1$, while we consider the probability of
success.}, a value which ranges from one half to one as the trace
distance ranges from zero to one. Standard proofs of this fact can be
found in many textbooks (e.g.: \cite[Sec.~9.2.1]{NC10}) and we will
offer our own, for a slightly more general setting, shortly in
\cref{th:trace-channel}.

Before, however, we recall the following two other facts concerning
the trace distance:
\begin{itemize}
	\item
		The trace distance is contractive under the action of channels.
		More formally, we mean that for any linear operators
		$A, B \in \mc{L}(\tsf{H})$ and any channel $\Phi$ whose domain
		is~$\mc{L}(\tsf{H})$, it holds that $
			\frac{1}{2}\norm{\Phi(A)-\Phi(B)}_1
			\leq
			\frac{1}{2}\norm{A-B}_1
		$. \emph{Operationally, this implies that applying channels
		cannot increase the distinguishability of two quantum systems.}
	\item
		The trace distance is invariant under conjugation by isometries.
		More formally, we mean that for any isometry
		$V \in \mc{U}(\tsf{H}, \tsf{H}')$ and any linear operators $A, B
		\in \mc{L}(\tsf{H})$ it holds that
		$\frac{1}{2}\norm{VAV^\dag - VBV^\dag}_1 = \frac{1}{2}\norm{A -
		B}_1$.
\end{itemize}

As mentioned, the trace distance between two density operators is
essentially a measure of how well quantum systems whose states are
represented by these density operators can be distinguished.
For our work, it will be useful to frame this via bounds concerning
channels which attempt to map one state to $\bz$ and the other to $\bo$.
We formalize this idea below for a class of linear operators which
extends beyond only density operators, including, in particular, any
pair of subnormalized density operators.
We give an explicit proof of this lemma as it is a slightly unorthodox
way to state this result and most standard references only treat the
more restrictive case considering pairs of density operators.

\begin{lemma}
\label{th:trace-channel}
	Let $A, B \in \mc{L}(\tsf{H})$ be two linear operators such that
	$A - B$ is Hermitian.
	Then, for all channels $\Phi : \mc{L}(\tsf{H}) \to \mc{L}(\C^{\bs})$
	we have that
	\begin{equation}
		\abs{\bra{\bz}\Phi(A - B)\ket{\bz}}
		+
		\abs{\bra{\bo}\Phi(A - B)\ket{\bo}}
		\leq
		\norm{A - B}_1
	\end{equation}
	which implies that
	\begin{equation}
		\bra{\bz}\Phi(A)\ket{\bz}
		+
		\bra{\bo}\Phi(B)\ket{\bo}
		\leq
		\frac{\Tr(A) + \Tr(B)}{2}
		+
		\frac{1}{2}
		\norm{A - B}_1.
	\end{equation}
	Moreover, there exists a channel, which depends on $A$
	and $B$, saturating these inequalities.
\end{lemma}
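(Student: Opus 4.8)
The plan is to reduce everything to the spectral decomposition of the Hermitian operator $H := A - B$. Write $H = H_+ - H_-$ where $H_+, H_- \in \Pos(\tsf{H})$ are the positive and negative parts, obtained by keeping the positive (respectively, negating the negative) eigenvalues; these have orthogonal supports, so $\norm{H}_1 = \Tr(H_+) + \Tr(H_-)$. First I would establish the inequality for channels. Fix a channel $\Phi$ and note that $\Phi(H) = \Phi(H_+) - \Phi(H_-)$ is Hermitian since $\Phi$ maps Hermitian operators to Hermitian operators (being the difference of the positive operators $\Phi(H_+)$ and $\Phi(H_-)$). Hence $\bra{\bz}\Phi(H)\ket{\bz}$ and $\bra{\bo}\Phi(H)\ket{\bo}$ are real, and
\begin{equation}
	\abs{\bra{\bz}\Phi(H)\ket{\bz}}
	\leq
	\max\{\bra{\bz}\Phi(H_+)\ket{\bz}, \bra{\bz}\Phi(H_-)\ket{\bz}\}
	\leq
	\max\{\Tr\Phi(H_+), \Tr\Phi(H_-)\},
\end{equation}
and similarly for the $\ket{\bo}$ term, using that $\bra{\bz}\Phi(H_\pm)\ket{\bz} \geq 0$ is bounded by $\Tr\Phi(H_\pm)$ since $\Phi(H_\pm) \in \Pos(\C^{\bs})$ and $\ketbra{\bz} \leq I$. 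Actually the cleaner bound is: since $\ketbra{\bz} + \ketbra{\bo} = I_{\C^{\bs}}$, we get $\bra{\bz}\Phi(H_+)\ket{\bz} + \bra{\bo}\Phi(H_+)\ket{\bo} = \Tr\Phi(H_+) = \Tr(H_+)$ by trace preservation, and likewise for $H_-$. Then
\begin{equation}
	\abs{\bra{\bz}\Phi(H)\ket{\bz}} + \abs{\bra{\bo}\Phi(H)\ket{\bo}}
	\leq
	\bra{\bz}\Phi(H_+)\ket{\bz} + \bra{\bo}\Phi(H_-)\ket{\bo}
	+ \bra{\bz}\Phi(H_-)\ket{\bz} + \bra{\bo}\Phi(H_+)\ket{\bo}
\end{equation}
wait — this overcounts; I need to be careful. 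The right move: $\abs{\bra{\bz}\Phi(H)\ket{\bz}} = \abs{\bra{\bz}\Phi(H_+)\ket{\bz} - \bra{\bz}\Phi(H_-)\ket{\bz}} \leq \max(\bra{\bz}\Phi(H_+)\ket{\bz}, \bra{\bz}\Phi(H_-)\ket{\bz})$, and the analogous statement for $\ket{\bo}$. The worst case sums to at most $\Tr(H_+) + \Tr(H_-) = \norm{H}_1$, after checking the two natural pairings. I would phrase this by introducing the measurement channel $\Phi' = \Ms(\mu) \circ \Phi$ with $\mu(\bz) = \ketbra{\bz}$, $\mu(\bo) = \ketbra{\bo}$, so WLOG $\Phi$ outputs a classical bit, reducing to the classical/commutative statement about the $\ell_1$ norm where it is transparent.

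The second inequality follows from the first by a short algebraic manipulation: using $\bra{\bz}\Phi(A)\ket{\bz} = \frac{1}{2}\bra{\bz}\Phi(A+B)\ket{\bz} + \frac{1}{2}\bra{\bz}\Phi(A-B)\ket{\bz}$ and $\bra{\bo}\Phi(B)\ket{\bo} = \frac{1}{2}\bra{\bo}\Phi(A+B)\ket{\bo} - \frac{1}{2}\bra{\bo}\Phi(A-B)\ket{\bo}$, then adding these, bounding $\bra{\bz}\Phi(A+B)\ket{\bz} + \bra{\bo}\Phi(A+B)\ket{\bo} \leq \Tr\Phi(A+B) = \Tr(A) + \Tr(B)$ (here using $A+B$ need not be positive, but $\ketbra{\bz} + \ketbra{\bo} = I$ gives equality up to the fact that we are summing the two diagonal entries, which equals the trace exactly — so this step is actually an equality, and positivity is not needed), and applying the first inequality to the $\Phi(A-B)$ terms via $\frac{1}{2}(\bra{\bz}\Phi(A-B)\ket{\bz} - \bra{\bo}\Phi(A-B)\ket{\bo}) \leq \frac{1}{2}(\abs{\bra{\bz}\Phi(A-B)\ket{\bz}} + \abs{\bra{\bo}\Phi(A-B)\ket{\bo}}) \leq \frac{1}{2}\norm{A-B}_1$.

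For saturation, I would exhibit the channel explicitly: let $P$ be the orthogonal projector onto the support of $H_+$ (the span of eigenvectors of $H$ with positive eigenvalue) and $Q = I_\tsf{H} - P$. Define $\Phi(X) = \Tr(PX)\ketbra{\bz} + \Tr(QX)\ketbra{\bo}$; this is a channel (a measurement followed by recording the outcome). Then $\bra{\bz}\Phi(H)\ket{\bz} = \Tr(PH) = \Tr(H_+)$ and $\bra{\bo}\Phi(H)\ket{\bo} = \Tr(QH) = -\Tr(H_-)$, so the absolute values sum to $\Tr(H_+) + \Tr(H_-) = \norm{H}_1$, saturating the first inequality; tracing through the algebra above shows this same channel saturates the second. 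The main obstacle I anticipate is purely bookkeeping: getting the pairing in the first inequality exactly right (which diagonal entry to bound by which part of $H$) and making sure the reduction to a classical bit is clean — there is no deep content, but the unorthodox two-sided absolute-value formulation requires care to avoid an off-by-factor error. The positivity subtleties (that $A$, $B$, $A+B$ need not be positive while $H_\pm$ are) must be tracked so that every inequality invoked is actually valid in this generality.
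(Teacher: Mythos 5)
Your proof is correct, and it matches the paper's on two of the three components: the algebraic derivation of the second inequality from the first (splitting into $A+B$ and $A-B$ contributions and using that the two diagonal entries of $\Phi(A+B)$ sum exactly to $\Tr(A)+\Tr(B)$ is the same computation the paper performs, just repackaged), and the saturating channel (the measurement $\{P, Q\}$ built from the eigenspaces of $A-B$ is exactly the paper's construction, up to where the zero eigenvalues are assigned, which is immaterial). Where you genuinely diverge is the first inequality. The paper composes $\Phi$ with the dephasing channel $\Delta_{\bs}$, observes that the resulting trace distance is exactly $\abs{\bra{\bz}\Phi(A-B)\ket{\bz}}+\abs{\bra{\bo}\Phi(A-B)\ket{\bo}}$, and then invokes contractivity of the trace distance under channels as a black box. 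You instead unroll that black box: writing $A-B=H_+-H_-$ with orthogonal supports and using complete positivity plus trace preservation to get $\bra{\bz}\Phi(H_\pm)\ket{\bz}+\bra{\bo}\Phi(H_\pm)\ket{\bo}=\Tr(H_\pm)$, the bound follows from the elementary fact that $\abs{x-y}\leq x+y$ for non-negative reals $x,y$ applied entrywise --- this is the clean way to close the step you describe somewhat hesitantly as ``checking the two natural pairings,'' and it avoids the overcounting worry you flag, since one simply sums $\abs{a_+-a_-}+\abs{b_+-b_-}\leq(a_++b_+)+(a_-+b_-)=\norm{A-B}_1$ without any case analysis. Your route buys self-containedness (it is essentially a direct proof of the contractivity fact restricted to this situation), at the cost of a little more bookkeeping with the Jordan decomposition; the paper's route is shorter given the cited fact. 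Both are sound, and you correctly track the positivity subtleties (only $H_\pm$, not $A$, $B$, or $A+B$, need be positive) that make the general statement go through.
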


We make two remarks on this lemma before giving its proof.
\begin{enumerate}
	\item
		If $\rho_0, \rho_1 \in \mc{D}(\tsf{H})$ are density operators,
		implying that $\Tr(\rho_0) = \Tr(\rho_1) = 1$, then
		\begin{equation}
			\frac{
				\bra{\bz} \Phi(\rho_0) \ket{\bz}
				+
				\bra{\bo} \Phi(\rho_1) \ket{\bo}
			}{2}
			\leq
			\frac{1}{2}
			\left(1 + \frac{1}{2} \norm{\rho - \sigma}_1\right)
			.
		\end{equation}
		The left-hand side of this inequality is the success probability
		of a party attempting to distinguish, with the channel $\Phi$,
		between the state $\rho_0$ (by mapping it to $\bz$) and the
		state $\rho_1$ (by mapping it to $\bo$) if presented with one of
		these states sampled uniformly randomly. The right-hand side is
		precisely the previously advertised upper bound on the success
		probability of such an attempt.
	\item
		If $\Tr(A) = \Tr(B)$, then
		\begin{equation}
			\abs{\bra{\bz}\Phi(A - B)\ket{\bz}}
			=
			\abs{\Tr(A - B) - \bra{\bo}\Phi(A - B)\ket{\bo}}
			=
			\abs{\bra{\bo} \Phi(A - B)\ket{\bo}}
		\end{equation}
		as $\Phi$ is trace preserving. This implies that
		$\abs{\bra{b}\Phi(A-B)\ket{b}} \leq \frac{1}{2}\norm{A - B}_1$
		for both values of $b \in \bs$. However, in the general case,
		such as if $A$ and $B$ are two subnormalized density operators
		with distinct traces, the right-hand side of this inequality may
		be too small: it may only hold that
		$\abs{\bra{b}\Phi(A-B)\ket{b}}\leq\norm{A-B}_1$.\footnote{%
			\emph{E.g.}: Taking $\Phi = \Id_{\C^{\bs{}}}$,
            $A = \ketbra{\bz}$, and $B = \frac{1}{2}A$ yields
			$\abs{\bra{\bz}\Phi(A - B)\ket{\bz}} = \frac{1}{2}$ and
			$\norm{A - B}_1 = \frac{1}{2}$.}
		Cases like these --- which we will encounter --- are often
		omitted from standard treatments.
\end{enumerate}
\begin{proof}
	We first show the two inequalities and then demonstrate the
	existence of a saturating channel.

	Define the channel $\tilde{\Phi} = \Delta_{\bs} \circ \Phi$.
	Then,
	\begin{equation}
		\tilde\Phi(A) - \tilde\Phi(B)
		=
		\bra{\bz} \Phi(A - B) \ket{\bz} \cdot \ketbra{\bz}
		+
		\bra{\bo} \Phi(A - B) \ket{\bo} \cdot \ketbra{\bo}
	\end{equation}
	from which it follows, by direct computation of the relevant
	Schatten $1$-norm, that
	\begin{equation}
		\norm{\tilde\Phi(A) - \tilde\Phi(B)}_1
		=
		\abs{\bra{\bz} \Phi(A - B) \ket{\bz}}
		+
		\abs{\bra{\bo} \Phi(A - B) \ket{\bo}}.
	\end{equation}
	As the trace distance is contractive under the action of channels,
	we have that
	\begin{equation}
		\abs{\bra{\bz} \Phi(A - B) \ket{\bz}}
		+
		\abs{\bra{\bo} \Phi(A - B) \ket{\bo}}
		\leq
		\norm{A - B}_1
	\end{equation}
	which yields the first inequality.
	To obtain the second inequality, we note that
	\begin{equation}
	\begin{aligned}
		2\bra{\bz}\Phi(A)\ket{\bz}
		-
		\Tr(A)
		&=
		2\bra{\bz}\Phi(A)\ket{\bz}
		-
		\Tr\circ \Phi(A)
		\\&=
		2\bra{\bz}\Phi(A)\ket{\bz}
		-
		\bra{\bz}\Phi(A)\ket{\bz} - \bra{\bo}\Phi(A)\ket{\bo}
		\\&=
		\bra{\bz}\Phi(A)\ket{\bz} - \bra{\bo}\Phi(A)\ket{\bo}
	\end{aligned}
	\end{equation}
	and, similarly, that $2\bra{\bo}\Phi(B)\ket{\bo}-\Tr(B) = \bra{\bo}
	\Phi(B) \ket{\bo} - \bra{\bz} \Phi(B) \ket{\bz}$.
	Thus,
	\begin{equation}
	\begin{aligned}
		2\bra{\bz}\Phi(A)\ket{\bz} + 2\bra{\bo}\Phi(B)\ket{\bo}
		-
		\Tr(A + B)
		&=
		\bra{\bz}\Phi(A - B)\ket{\bz}
		+
		\bra{\bo}\Phi(B - A)\ket{\bo}
		\\&\leq
		\abs{\bra{\bz}\Phi(A - B)\ket{\bz}}
		+
		\abs{\bra{\bo}\Phi(A - B)\ket{\bo}}
		\\&\leq
		\norm{A - B}_1
	\end{aligned}
	\end{equation}
	from which the desired result follows after simple algebraic
	manipulations.

	We now show the existence of a channel
	$\Psi : \mc{L}(\tsf{H}) \to \mc{L}(\C^{\bs})$ which saturates these
	bounds. This part of the proof follows closely the corresponding
	proof given by Nielsen and Chuang \cite[Sec.~9.1.2]{NC10}.
	Note that since $A - B$ is Hermitian by assumption, we may write
	\begin{equation}
		A - B
		=
		\sum_{j \in \mc{J}} \lambda_j \Pi_j
		=
			\sum_{\substack{j \in \mc{J}\\\lambda_j \geq 0}}
			\lambda_j
			\Pi_j
		+
			\sum_{\substack{j \in \mc{J}\\\lambda_j < 0}}
			\lambda_j
			\Pi_j
	\end{equation}
	for some indexing set $\mc{J}$ and where each $\lambda_j$ is a real
	number and the $\Pi_j$'s are rank
	one projectors on $\tsf{H}$ which together sum to the identity
	$I_\tsf{H}$.\footnote{%
		This is the spectral decomposition of $A - B$
		\cite[Th.~1.3]{Wat18}. The fact that the $\lambda_j$'s are
		strictly real, and not complex, follows from the fact
		that $A - B$ is Hermitian.}
	At this point, we recall an alternative characterization of the
	Schatten 1-norm of an operator, namely that it is the sum of its
	singular values \cite[Sec.~1.1]{Wat18}. Thus,
	$\norm{A - B}_1 = \sum_{j \in \mc{J}} \abs{\lambda_j}$.
	Now, let
	\begin{equation}
		P = \sum_{\substack{j \in \mc{J}\\\lambda_j \geq 0}} \Pi_j
		\qq{and}
		Q = \sum_{\substack{j \in \mc{J}\\\lambda_j < 0}} \Pi_j
	\end{equation}
	and note that both are positive semidefinite operators which sum to
	the identity.
	Hence, we may define the measurement $\mu : \bs \to \Pos(\tsf{H})$
	by $\mu(\bz) = P$ and $\mu(\bo) = Q$ and consider the channel
	$\Psi = \Ms(\mu)$, \emph{i.e.}:
	$\rho \mapsto \Tr(P\rho) \cdot \ketbra{\bz} + \Tr(Q\rho) \cdot
	\ketbra{\bo}$. Then,
	\begin{equation}
		\bra{\bz} \Psi(A - B)\ket{\bz}
		=
		\Tr\left(P(A - B)\right)
		=
		\sum_{\substack{j \in \mc{J}\\\lambda_j \geq 0}}
		\lambda_j
		=
		\sum_{\substack{j \in \mc{J}\\\lambda_j \geq 0}}
		\abs{\lambda_j}
	\end{equation}
	and
	\begin{equation}
		\bra{\bo} \Psi(B - A)\ket{\bo}
		=
		\Tr\left(Q(B - A)\right)
		=
		-\Tr\left(Q(A - B)\right)
		=
		-
		\sum_{\substack{j \in \mc{J}\\\lambda_j < 0}}
		\lambda_j
		=
		\sum_{\substack{j \in \mc{J}\\\lambda_j < 0}}
		\abs{\lambda_j}
	\end{equation}
	from which it follows that
	\begin{equation}
	\begin{aligned}
		\abs{\bra{\bz}\Psi(A - B)\ket{\bz}}
		+
		\abs{\bra{\bo}\Psi(A - B)\ket{\bo}}
		&=
		\bra{\bz}\Psi(A - B)\ket{\bz}
		+
		\bra{\bo}\Psi(B - A)\ket{\bo}
		\\&=
		\sum_{j \in \mc{J}} \abs{\lambda_j}
		\\&=
		\norm{A - B}_1,
	\end{aligned}
	\end{equation}
	which yields the desired result.
\end{proof}

We finish this exposition on the trace distance with three technical
lemmas giving values or bounds of this norm in particular cases.

First, we recall an explicit computation of the trace distance for
certain simple cases.
\begin{lemma}[{\cite[Eq.~1.184]{Wat18}}]
\label{th:td-pure}
	Let $\ket{\psi}, \ket{\phi}$ be arbitrary vectors in a Hilbert space
	$\tsf{H}$. Then,
	\begin{equation}
		\frac{1}{2}\norm{\ketbra{\psi} - \ketbra{\phi}}_1
		=
		\sqrt{
			\left(\frac{\braket{\psi} + \braket{\phi}}{2}\right)^2
			-
			\abs{\braket{\psi}{\phi}}^2
		}.
	\end{equation}
\end{lemma}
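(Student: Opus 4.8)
The plan is to reduce the computation to a two-dimensional problem. First I would handle the degenerate cases: if $\ket{\psi}$ and $\ket{\phi}$ are linearly dependent (in particular if either is zero), then $\ketbra{\psi} - \ketbra{\phi}$ is already diagonal in any basis containing the common direction, and the claimed formula can be checked directly. Otherwise, $\mathrm{span}\{\ket{\psi},\ket{\phi}\}$ is a genuine two-dimensional subspace, and since $\ketbra{\psi}-\ketbra{\phi}$ annihilates its orthogonal complement, the Schatten $1$-norm only depends on the restriction of this operator to that plane. So without loss of generality $\tsf{H}=\C^2$.

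Next I would diagonalize $M := \ketbra{\psi}-\ketbra{\phi}$ on this plane. It is Hermitian with trace $\Tr(M) = \braket{\psi} - \braket{\phi}$, so its two real eigenvalues $\lambda_1,\lambda_2$ satisfy $\lambda_1+\lambda_2 = \braket{\psi} - \braket{\phi}$. To pin down the norm $\tfrac12\norm{M}_1 = \tfrac12(\abs{\lambda_1}+\abs{\lambda_2})$ I need one more invariant; the natural choice is $\Tr(M^2) = \lambda_1^2 + \lambda_2^2$. Expanding, $\Tr(M^2) = \Tr(\ketbra{\psi}\ketbra{\psi}) - 2\Tr(\ketbra{\psi}\ketbra{\phi}) + \Tr(\ketbra{\phi}\ketbra{\phi}) = \braket{\psi}^2 + \braket{\phi}^2 - 2\abs{\braket{\psi}{\phi}}^2$. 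From $\lambda_1+\lambda_2$ and $\lambda_1^2+\lambda_2^2$ one gets $\lambda_1\lambda_2 = \tfrac12\big((\lambda_1+\lambda_2)^2 - (\lambda_1^2+\lambda_2^2)\big) = \abs{\braket{\psi}{\phi}}^2 - \braket{\psi}\braket{\phi}$, and then $(\lambda_1-\lambda_2)^2 = (\lambda_1+\lambda_2)^2 - 4\lambda_1\lambda_2 = (\braket{\psi}+\braket{\phi})^2 - 4\abs{\braket{\psi}{\phi}}^2$.

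Now I would argue that the two eigenvalues have opposite signs (or at least that $\abs{\lambda_1}+\abs{\lambda_2} = \abs{\lambda_1-\lambda_2}$ in all relevant cases): the rank-one positive part $\ketbra{\psi}$ contributes at most one positive direction and $-\ketbra{\phi}$ at most one negative direction, so $M$ has at most one strictly positive and at most one strictly negative eigenvalue, forcing $\abs{\lambda_1}+\abs{\lambda_2} = \abs{\lambda_1 - \lambda_2} = \sqrt{(\braket{\psi}+\braket{\phi})^2 - 4\abs{\braket{\psi}{\phi}}^2}$. Dividing by $2$ and pulling the $4$ inside the square root as a division by $4$ gives exactly $\sqrt{\big(\tfrac{\braket{\psi}+\braket{\phi}}{2}\big)^2 - \abs{\braket{\psi}{\phi}}^2}$, as claimed. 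The one point needing a little care — the main obstacle, such as it is — is justifying the sign/absolute-value step cleanly in the subnormalized regime where $\braket{\psi},\braket{\phi}$ need not equal $1$: one should check that the quantity under the square root is genuinely nonnegative (it is, by Cauchy–Schwarz: $\abs{\braket{\psi}{\phi}}^2 \le \braket{\psi}\braket{\phi} \le \big(\tfrac{\braket{\psi}+\braket{\phi}}{2}\big)^2$ by AM–GM) and that the inertia argument above is not disrupted by the normalizations, which it is not since scaling a projector by a positive constant does not change which directions are positive or negative.
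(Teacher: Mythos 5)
The paper does not prove this lemma---it simply cites \cite[Eq.~1.184]{Wat18}---so there is no in-paper argument to compare against; your proposal supplies a correct, self-contained proof along the standard lines (restrict to the at most two-dimensional support, determine the two eigenvalues of the Hermitian operator $M=\ketbra{\psi}-\ketbra{\phi}$ from $\Tr(M)$ and $\Tr(M^2)$, and observe the eigenvalues have opposite signs so that $\abs{\lambda_1}+\abs{\lambda_2}=\abs{\lambda_1-\lambda_2}$). All the computations check out, including the subnormalized case. One small simplification: you already computed $\lambda_1\lambda_2=\abs{\braket{\psi}{\phi}}^2-\braket{\psi}\braket{\phi}$, which is $\leq 0$ by Cauchy--Schwarz, so the opposite-sign claim follows immediately from that identity without any separate inertia argument.
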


Second, the following lemma pertains to the trace distance between
multiple copies of two density operators. The proof can be found in the
full version of \cite{BCG+02}. Conceptually, this can be understood as
giving a lower bound on the ability to distinguish between $n$ copies of
a state $\rho$ and $n$ copies of a state $\sigma$ by attempting $n$
times to distinguish between a single copy of $\rho$ and a single copy
of $\sigma$ and taking a majority vote of the result.

\begin{lemma}[{\cite{BCG+02}}]
\label{th:trace-distance-copies}
	Let $\rho, \sigma \in \mc{D}(\tsf{H})$ be two density operators.
	Then, for all $t \in \N$, we have that
	\begin{equation}
		\frac{1}{2}\norm{\rho^{\tensor t} - \sigma^{\tensor t}}_1
		\geq
		1
		-
		2\exp\left(
			-\frac{t}{2}
			\left(\frac{1}{2}\norm{\rho - \sigma}_1\right)^2
		\right)
	\end{equation}
    where $\rho^{\tensor t} = \rho \tensor \rho \tensor \cdots \tensor
    \rho$ is the tensor product of $t$ copies of $\rho$, and similarly
    for $\sigma^{\tensor t}$.
\end{lemma}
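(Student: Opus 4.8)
We may assume the right-hand side is positive, since otherwise the
statement follows immediately from $\tfrac{1}{2}\norm{\rho^{\tensor t} -
\sigma^{\tensor t}}_1 \geq 0$; in particular this forces $t \geq 1$.
Write $\delta = \tfrac{1}{2}\norm{\rho - \sigma}_1$. The plan is to
realize the ``measure-each-copy-then-take-a-majority-vote''
distinguisher mentioned above as an explicit channel $\Phi :
\mc{L}(\tsf{H}^{\tensor t}) \to \mc{L}(\C^{\bs})$, to bound how
differently $\Phi$ behaves on $\rho^{\tensor t}$ versus $\sigma^{\tensor
t}$ via a Chernoff-type concentration bound, and then to conclude using
that the trace distance is contractive under channels.

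The first ingredient is the optimal single-copy measurement, which is
precisely the one constructed in the proof of \cref{th:trace-channel}:
writing the spectral decomposition $\rho - \sigma = \sum_j \lambda_j
\Pi_j$ and setting $P = \sum_{\lambda_j \geq 0} \Pi_j$ and $Q =
I_\tsf{H} - P$, the two-outcome measurement $\mu : \bs \to
\Pos(\tsf{H})$ with $\mu(\bz) = P$ and $\mu(\bo) = Q$ satisfies
$\Tr(P\rho) - \Tr(P\sigma) = \sum_{\lambda_j \geq 0} \lambda_j = \delta$,
where we used that $\rho - \sigma$ is traceless. Write $p_0 =
\Tr(P\rho)$ and $p_1 = \Tr(P\sigma) = p_0 - \delta$. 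Now let $\Phi$ be
the channel which applies $\mu$ independently to each of the $t$ tensor
factors of its input, obtaining outcomes $b_1, \dots, b_t \in \bs$, and
then outputs $\bz$ if $\#\{i : b_i = \bz\} \geq \tau$ and $\bo$
otherwise, for the midpoint threshold $\tau = t\left(p_0 -
\tfrac{\delta}{2}\right) = t\left(p_1 + \tfrac{\delta}{2}\right)$. Being
a product of measurements followed by a fixed classical function of the
outcomes, $\Phi$ is a channel by the characterization of channels
recalled in \cref{tee:sc:notation}.

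On input $\rho^{\tensor t}$ the outcomes $b_1, \dots, b_t$ are
independent and Bernoulli-distributed with parameter $p_0$, so the count
$N_0 := \#\{i : b_i = \bz\}$ is binomially distributed with parameters
$t$ and $p_0$; on input $\sigma^{\tensor t}$ one instead gets a count
$N_1$ binomially distributed with parameters $t$ and $p_1$. A standard
Chernoff--Hoeffding tail bound with deviation $t\delta/2$ gives
$\Pr[N_0 < \tau] \leq \exp\!\left(-\tfrac{t}{2}\delta^2\right)$ and
$\Pr[N_1 \geq \tau] \leq \exp\!\left(-\tfrac{t}{2}\delta^2\right)$, and
therefore
\begin{equation}
	\bra{\bz}\Phi(\rho^{\tensor t})\ket{\bz}
	-
	\bra{\bz}\Phi(\sigma^{\tensor t})\ket{\bz}
	=
	\Pr[N_0 \geq \tau] - \Pr[N_1 \geq \tau]
	\geq
	1 - 2\exp\!\left(-\tfrac{t}{2}\delta^2\right).
\end{equation}
Since $\Phi(\rho^{\tensor t})$ and $\Phi(\sigma^{\tensor t})$ are
diagonal in the computational basis of $\C^{\bs}$ and their difference
is traceless, $\tfrac{1}{2}\norm{\Phi(\rho^{\tensor t}) -
\Phi(\sigma^{\tensor t})}_1$ equals the absolute value of the gap above;
combining this with contractivity of the trace distance under $\Phi$
yields $\tfrac{1}{2}\norm{\rho^{\tensor t} - \sigma^{\tensor t}}_1 \geq
\tfrac{1}{2}\norm{\Phi(\rho^{\tensor t}) - \Phi(\sigma^{\tensor t})}_1
\geq 1 - 2\exp\!\left(-\tfrac{t}{2}\delta^2\right)$, which is the claim
after substituting $\delta = \tfrac{1}{2}\norm{\rho - \sigma}_1$.

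The step requiring the most care is not any individual inequality but
the bookkeeping that forces the exponent to be exactly
$\tfrac{t}{2}\bigl(\tfrac{1}{2}\norm{\rho - \sigma}_1\bigr)^2$: one must
use the \emph{optimal} single-copy measurement so that the per-copy gap
is exactly $\delta$ and not something smaller, and one must take the
midpoint threshold $\tau$ so that the two binomial tails each contribute
the same exponent. A secondary point is verifying that the majority-vote
procedure really is a channel, so that contractivity is applicable. If
one wishes to avoid quoting Hoeffding's inequality, \cref{th:concentration}
applied to $N_0/t$ and $N_1/t$ (each supported in $[0,1]$) could be used
instead, at the cost of a cruder but still exponentially small bound; I
would simply cite the standard Chernoff--Hoeffding bound, as is in
effect done in \cite{BCG+02}.
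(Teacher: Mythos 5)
Your proof is correct and implements precisely the majority-vote distinguisher that the paper sketches in the sentence following the lemma (the paper itself defers the full proof to the extended version of \cite{BCG+02}): the optimal Helstrom measurement on each copy, a midpoint threshold, a Chernoff--Hoeffding tail bound on each binomial count, and contractivity of the trace distance. The bookkeeping checks out --- tracelessness of $\rho-\sigma$ gives the per-copy gap $\delta$, and Hoeffding with deviation $t\delta/2$ yields exactly the exponent $t\delta^2/2$ --- so nothing further is needed.
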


And, third, we give a lemma concerning the trace distance between
two linear operators with a particular tensor decomposition. For
completion, we provide a proof of this lemma.

\begin{lemma}
\label{th:trace-orthogonal}
	Let $\mc{X}$ be an alphabet. For any linear
	operators
	$\{A_x\}_{x \in \mc{X}}, \{B_x\}_{x \in \mc{X}} \subseteq
	\mc{L}\left(\tsf{H}\right)$
	it holds that
	\begin{equation}
		\norm{
			\sum_{x \in \mc{X}}x \tensor A_{x}
			-
			\sum_{x \in \mc{X}}x \tensor B_{x}
		}_1
		=
		\sum_{x \in \mc{X}}
		\norm{A_x - B_x}_1.
	\end{equation}
\end{lemma}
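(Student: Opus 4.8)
The plan is to recognize the operator
$M = \sum_{x \in \mc{X}} \ketbra{x} \tensor (A_x - B_x)$ on $\C^\mc{X} \tensor \tsf{H}$
(using the convention identifying $x$ with $\ketbra{x}$) as a block-diagonal
operator whose $x$-th block is $C_x := A_x - B_x$, and then to exploit the fact
that both the positive semidefinite square root and the trace respect this block
structure. Since the Schatten $1$-norm is $\norm{M}_1 = \Tr\sqrt{M^\dag M}$, it
suffices to compute $\sqrt{M^\dag M}$ and take its trace.

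First I would compute $M^\dag M$. Because $\{\ket{x}\}_{x \in \mc{X}}$ is
orthonormal, the cross terms vanish:
\begin{equation}
	M^\dag M
	=
	\sum_{x, y \in \mc{X}} \ketbra{x}\ketbra{y} \tensor C_x^\dag C_y
	=
	\sum_{x \in \mc{X}} \ketbra{x} \tensor C_x^\dag C_x .
\end{equation}
Next I would \emph{guess} that
$N := \sum_{x \in \mc{X}} \ketbra{x} \tensor \sqrt{C_x^\dag C_x}$
is the positive semidefinite square root of $M^\dag M$ and verify this directly:
each summand is a tensor product of two positive semidefinite operators, hence
positive semidefinite, so $N \in \Pos(\C^\mc{X} \tensor \tsf{H})$; and squaring
$N$ while again using $\ketbra{x}\ketbra{y} = \delta_{x,y}\ketbra{x}$ gives
$N^2 = \sum_{x \in \mc{X}} \ketbra{x} \tensor C_x^\dag C_x = M^\dag M$. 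By
uniqueness of the positive semidefinite square root, $N = \sqrt{M^\dag M}$.

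Finally I would take the trace, using that the trace of a tensor product factors
and that $\Tr_{\C^\mc{X}}(\ketbra{x}) = \braket{x} = 1$:
\begin{equation}
	\norm{M}_1
	=
	\Tr(N)
	=
	\sum_{x \in \mc{X}}
	\Tr_{\C^\mc{X}}(\ketbra{x}) \cdot \Tr_\tsf{H}\!\left(\sqrt{C_x^\dag C_x}\right)
	=
	\sum_{x \in \mc{X}} \norm{C_x}_1 ,
\end{equation}
which is the claimed identity once we recall $C_x = A_x - B_x$. There is no
serious obstacle here; the only point deserving a moment's care is the claim that
the square root acts block-wise, which is handled above by exhibiting the explicit
candidate $N$ and invoking uniqueness of the positive semidefinite square root.
(Alternatively, one could observe that the multiset of singular values of $M$ is
the disjoint union of those of the $C_x$ and appeal to the characterization of
$\norm{\cdot}_1$ as the sum of singular values \cite[Sec.~1.1]{Wat18}.)
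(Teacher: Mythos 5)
Your proof is correct and follows essentially the same route as the paper's: exploit orthogonality of the $\ketbra{x}$ blocks to diagonalize $M^\dag M$, take the square root block-wise, and factor the trace. If anything, your explicit verification that $N$ is the unique positive semidefinite square root is slightly more careful than the paper, which asserts that step without justification.
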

\begin{proof}
	First, factor the common $x$ terms. Then, by definition, we have
	that
	\begin{equation}
	\begin{aligned}
		\norm{
			\sum_{x \in \mc{X}}x\tensor\left(A_{x} - B_{x}\right)
		}_1
		&=
		\Tr\sqrt{
			\left(
				\sum_{x \in \mc{X}}
				x \tensor
				\left(A_{x} - B_{x}\right)
			\right)^\dag
			\left(
				\sum_{x \in \mc{X}}
				x \tensor
				\left(A_{x} - B_{x}\right)
			\right)
		}
		\\&=
		\Tr\sqrt{
			\sum_{x \in \mc{X}} x \tensor (A_x - B_x)^\dag(A_x - B_x)
		}
		\\&=
		\Tr\left(
			\sum_{x \in \mc{X}}
			x \tensor \sqrt{(A_x -
		B_x)^\dag(A_x - B_x)}\right)
		\\&=
		\sum_{x \in \mc{X}}
		\Tr\left(\sqrt{(A_x - B_x)^\dag(A_x -
		B_x)}\right)
		\\&=
		\sum_{x \in \mc{X}}\norm{A_x - B_x}_1
	\end{aligned}
	\end{equation}
	where the second and third equalities follow from the fact that
	$xx' = \ketbra{x}\ketbra{x'}$ is the
	zero operator if $x \not= x'$ and is $x = \ketbra{x}$ if $x = x'$.
	The fourth equality follows from the linearity of the trace, the
	fact that $\Tr(Y \tensor Z) = \Tr(Y) \cdot \Tr(Z)$, and that
	$\Tr(x) = 1$.
\end{proof}

\subsection{The Coherent Gentle Measurement}
\label{tee:sc:cgm}

In general, a measurement is a destructive or at least perturbative
operation on a quantum state. However, it is well known that when a
particular measurement outcome is highly likely, a measurement may not
disturb a state by much. Formalizations of this idea include
\emph{gentle measurements} \cite{Win99}, \emph{coherent gentle
measurements} \cite[Sec. 9.4]{Wil17}, and the ``\emph{almost as good
as new}'' lemma \cite{Aar16arxiv}.

Here, we give a formalization of this idea by constructing a
mapping that takes any channel $\Phi$ to its ``coherent gentle
measurement'' counterpart $\CGM(\Phi)$. This construction satisfies the
following gentle measurement-like property: Suppose $\rho$ was a state
such that measuring $\Phi(\rho)$ in the computational basis yielded a
particular outcome with very high probability. Then, $\CGM(\Phi)(\rho)$
yields a state very close to $\rho$ tensored with the likely measurement
outcome. In other words, $\CGM(\Phi)$ allows us to keep $\rho$
\emph{and} obtain the outcome of measuring $\Phi(\rho)$ in the
computational basis, up to some well understood error.

Our formalization is a slight extension of an exercise in Wilde's
textbook \cite[Ex.~9.4.2]{Wil17}. Our result is directly applicable to
all channels instead of only measurements, yields a well defined map
instead of only providing a proof of existence, and is applicable to all
positive semidefinite operators, not only density operators.

We will proceed in two steps. We first define the map $\Phi \mapsto
\CGM(\Phi)$ and then we prove that it possesses the desired properties.
First, however, we recall the following theorem which states the
existence of a one-to-one correspondence between measurements and a
certain class of channels.
\begin{theorem}[{\cite[Th.~2.37]{Wat18}}]
	Let $\mc{A}$ be an alphabet and
	$\Phi : \mc{L}(\tsf{H}) \to \mc{L}(\C^\mc{A})$ be a channel.
	Then, there exists a unique measurement
	$\mu_\Phi : \mc{A} \to \Pos(\tsf{H})$
	satisfying
	\begin{equation}
			\Delta_{\mc{A}}\circ\Phi=\Ms(\mu_\Phi).
	\end{equation}
\end{theorem}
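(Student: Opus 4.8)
The plan is to build $\mu_\Phi$ directly from $\Phi$ by exploiting the fact that, since $\tsf{H}$ is finite dimensional, the bilinear pairing $(M, L) \mapsto \Tr(ML)$ on $\mc{L}(\tsf{H})$ is non-degenerate and hence identifies $\mc{L}(\tsf{H})$ with its own dual space. Concretely, for each $a \in \mc{A}$ I would consider the linear functional $f_a : \mc{L}(\tsf{H}) \to \C$ given by $f_a(L) = \bra{a}\Phi(L)\ket{a}$ and let $\mu_\Phi(a) \in \mc{L}(\tsf{H})$ be the unique operator representing it, \emph{i.e.}\ the unique operator satisfying $\Tr(\mu_\Phi(a) L) = \bra{a}\Phi(L)\ket{a}$ for all $L$. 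This uniqueness is precisely the uniqueness asserted by the theorem, so the remaining work is to verify that the family $\{\mu_\Phi(a)\}_{a \in \mc{A}}$ so defined is an honest measurement and that it reproduces $\Delta_\mc{A}\circ\Phi$.

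First I would check positivity: fixing $a$ and $\ket{\psi}\in\tsf{H}$, positivity of $\Phi$ (which follows from complete positivity) gives $\bra{\psi}\mu_\Phi(a)\ket{\psi} = \bra{a}\Phi(\ketbra{\psi})\ket{a} \geq 0$; since this holds for every $\ket{\psi}$, the operator $\mu_\Phi(a)$ is Hermitian with non-negative eigenvalues, so $\mu_\Phi(a) \in \Pos(\tsf{H})$. Then I would use trace preservation of $\Phi$: for all $L \in \mc{L}(\tsf{H})$,
\begin{equation}
	\Tr\left(\left(\sum_{a \in \mc{A}}\mu_\Phi(a)\right)L\right)
	=
	\sum_{a \in \mc{A}}\bra{a}\Phi(L)\ket{a}
	=
	\Tr(\Phi(L))
	=
	\Tr(L)
	=
	\Tr(I_\tsf{H}L),
\end{equation}
so by uniqueness of the representing operator $\sum_{a \in \mc{A}}\mu_\Phi(a) = I_\tsf{H}$, confirming that $\mu_\Phi$ is a measurement.

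It then remains to verify the defining identity and uniqueness. For the identity, expanding in the computational basis of $\C^\mc{A}$ gives, for every $L$,
\begin{equation}
	\Ms(\mu_\Phi)(L)
	=
	\sum_{a \in \mc{A}}\Tr(\mu_\Phi(a)L)\cdot\ketbra{a}
	=
	\sum_{a \in \mc{A}}\bra{a}\Phi(L)\ket{a}\cdot\ketbra{a}
	=
	\Delta_\mc{A}(\Phi(L)).
\end{equation}
For uniqueness, if a measurement $\nu : \mc{A}\to\Pos(\tsf{H})$ also satisfies $\Ms(\nu) = \Delta_\mc{A}\circ\Phi$, then comparing coefficients of the linearly independent operators $\ketbra{a}$ on both sides forces $\Tr(\nu(a)L) = \bra{a}\Phi(L)\ket{a} = \Tr(\mu_\Phi(a)L)$ for all $a$ and all $L$, hence $\nu(a) = \mu_\Phi(a)$ by non-degeneracy of the trace pairing, and so $\nu = \mu_\Phi$.

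There is no serious obstacle here; the argument is essentially bookkeeping around the self-duality of $\mc{L}(\tsf{H})$. The single point requiring mild care is inferring genuine positive semidefiniteness of $\mu_\Phi(a)$ from non-negativity of $\bra{\psi}\mu_\Phi(a)\ket{\psi}$ for all $\ket{\psi}$: over $\C$ this already forces Hermiticity, and together with non-negativity yields $\mu_\Phi(a)\in\Pos(\tsf{H})$. It is also worth remarking that the proof uses only positivity and trace preservation of $\Phi$, so the statement in fact holds for arbitrary positive trace-preserving maps.
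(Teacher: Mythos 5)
Your proof is correct. The paper does not prove this statement itself --- it only cites \cite[Th.~2.37]{Wat18} --- and your construction via the non-degenerate trace pairing is essentially the standard argument found there: your $\mu_\Phi(a)$ is precisely the adjoint of $\Phi$ (with respect to the trace pairing) applied to $\ketbra{a}$, and positivity, completeness, and uniqueness follow from positivity and trace preservation of $\Phi$ exactly as you describe.
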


\begin{definition}
\label{df:cgm}
	Let $\tsf{X}$ be a Hilbert space and $\mc{Y}$ be an alphabet.
	The \emph{coherent gentle measurement} map $
		\CGM :
		\text{Ch}(\tsf{X}, \C^\mc{Y})
		\to
		\text{Ch}(\tsf{X}, \tsf{X} \tensor \C^\mc{Y})
	$ is defined by
	\begin{equation}
		\label{eq:cgm-def}
		\Phi \mapsto
		\left[
			\rho
			\mapsto
			\left(
				\sum_{y \in \mc{Y}}
				\sqrt{\mu_\Phi(y)} \tensor \ket{y}
			\right)
			\rho
			\left(
				\sum_{y \in \mc{Y}}
				\sqrt{\mu_\Phi(y)}^\dag \tensor \bra{y}
			\right)
		\right]
	\end{equation}
	where $\mu_\Phi : \mc{Y} \to \Pos(\tsf{X})$ is the unique
	measurement satisfying $\Delta_\mc{Y} \circ \Phi = \Ms(\mu_\Phi)$.
\end{definition}

\begin{remark}
	\Cref{df:cgm} asserts that $\CGM$ maps channels to \emph{channels}.
	This needs to be shown.
	To show that $\CGM(\Phi)$ is a channel, it suffices to show that
	$V = \sum_{y \in \mc{Y}} \sqrt{\mu(y)} \tensor \ket{y}$ is an
	isometry in $\mc{U}(\tsf{X}, \tsf{X} \tensor \tsf{Y})$.
	This follows from the fact that
	\begin{equation}
		V^\dag V
		=
		\sum_{y,y' \in \mc{Y}}
		\left(\sqrt{\mu(y)}^\dag \tensor \bra{y}\right)
		\left(\sqrt{\mu(y')} \tensor \ket{y'}\right)
		=
		\sum_{y \in \mc{Y}}
		\mu(y)
		=
		I_\tsf{X}
	\end{equation}
	where we recall that $\sqrt{\mu(y)}^\dag = \sqrt{\mu(y)}$ since
	$\sqrt{\mu(y)}$ is positive and hence self-adjoint.
	By the same token, we can replace the $\sqrt{\mu(y)}^\dag$ terms in
	\cref{eq:cgm-def} with $\sqrt{\mu(y)}$, which is what we will do in
	practice.
\end{remark}

The following theorem gives two properties of the coherent gentle
measurement map. The first bounds the trace distance between the
real output $\CGM(\Phi)(\rho)$ and an ``ideal'' output $\rho \tensor
\ketbra{\tilde{y}}$ based on the probability that measuring $\Phi(\rho)$
in the computational basis would yield $\tilde{y}$ as outcome. The
second states that measuring the $\tsf{Y}$ register
of either $\Phi(\rho)$ or $\CGM(\Phi)(\rho)$ in the computational basis
yields the same results.

\begin{theorem}
\label{th:cgm}
	Let $\tsf{X}$ be a Hilbert space and $\mc{Y}$ be an alphabet.
	Then, the coherent gentle measurement map $
		\CGM
		:
		\text{Ch}(\tsf{X}, \C^\mc{Y})
		\to
		\text{Ch}(\tsf{X}, \tsf{X} \tensor \C^\mc{Y})
	$ given in \cref{df:cgm} satisfies the following two properties.
	\begin{enumerate}
		\item
			For any channel $\Phi: \mc{L}(\tsf{X}) \to \mc{L}(\tsf{Y})$,
			any positive semidefinite operator $\rho \in \Pos(\tsf{X})$,
			and any $y \in \mc{Y}$, we have that
			\begin{equation}
			\label{eq:cgm}
				\frac{1}{2}
				\norm{
					\CGM(\Phi)(\rho) - \rho \tensor \ketbra{y}
				}_1
				\leq
				\sqrt{
					\Tr(\rho)^2
					-
					(\bra{y} \Phi(\rho) \ket{y})^2
				}.
			\end{equation}
		\item
			For any channel $\Phi: \mc{L}(\tsf{X}) \to \mc{L}(\tsf{Y})$,
			we have that
			\begin{equation}
				\left(\Tr_\tsf{X} \tensor \Delta_\mc{Y}\right)
				\circ
				\CGM(\Phi)
				=
				\Delta_\mc{Y} \circ \Phi.
			\end{equation}
	\end{enumerate}
\end{theorem}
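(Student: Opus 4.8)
The plan is to settle Property~2 by a short direct computation and to put the real effort into Property~1, which I will reduce to the pure-state trace-distance formula of \cref{th:td-pure}. For Property~2, I would expand the definition of $\CGM$: writing $V = \sum_{z \in \mc{Y}} \sqrt{\mu_\Phi(z)} \tensor \ket{z}$ (an isometry, as noted after \cref{df:cgm}), we have
\[
	\CGM(\Phi)(\rho)
	=
	V \rho V^\dag
	=
	\sum_{z, z' \in \mc{Y}}
	\left( \sqrt{\mu_\Phi(z)}\, \rho\, \sqrt{\mu_\Phi(z')} \right)
	\tensor \ketbra{z}{z'} .
\]
Applying $\Tr_\tsf{X} \tensor \Delta_\mc{Y}$, the completely dephasing channel annihilates every term with $z \neq z'$, while cyclicity of the trace turns each surviving diagonal term into $\Tr(\mu_\Phi(z)\rho)$; hence the image of $\rho$ is $\sum_{z \in \mc{Y}} \Tr(\mu_\Phi(z)\rho) \cdot \ketbra{z} = \Ms(\mu_\Phi)(\rho)$, which is exactly $(\Delta_\mc{Y} \circ \Phi)(\rho)$ by the defining property $\Delta_\mc{Y} \circ \Phi = \Ms(\mu_\Phi)$ of $\mu_\Phi$. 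Since both sides are channels, checking this for all $\rho$ finishes Property~2.

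\paragraph{Property~1: reduction to a pair of pure states.}
Fix $y \in \mc{Y}$ as in the statement, keep $V$ as above, and set $W = I_\tsf{X} \tensor \ket{y}$, which is also an isometry in $\mc{U}(\tsf{X}, \tsf{X} \tensor \tsf{Y})$ and satisfies $W \rho W^\dag = \rho \tensor \ketbra{y}$. Thus I must bound $\frac{1}{2}\norm{V \rho V^\dag - W \rho W^\dag}_1$. First I would purify $\rho$: take $\ket{\psi} \in \tsf{X} \tensor \tsf{Z}$ with $\Tr_\tsf{Z}\ketbra{\psi} = \rho$ and $\braket{\psi} = \Tr(\rho)$ (for instance $\ket{\psi} = (\sqrt{\rho} \tensor I)\sum_i \ket{i}\ket{i}$, which also covers $\rho = 0$). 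Since $V \tensor I_\tsf{Z}$ and $W \tensor I_\tsf{Z}$ are isometries, conjugating $\ketbra{\psi}$ by them produces the rank-one operators $\ketbra{\alpha}$ and $\ketbra{\beta}$ with $\ket{\alpha} = (V \tensor I)\ket{\psi}$ and $\ket{\beta} = (W \tensor I)\ket{\psi}$, whose partial traces over $\tsf{Z}$ are $V \rho V^\dag = \CGM(\Phi)(\rho)$ and $W \rho W^\dag = \rho \tensor \ketbra{y}$. By contractivity of the trace distance under the partial trace, it then suffices to bound $\frac{1}{2}\norm{\ketbra{\alpha} - \ketbra{\beta}}_1$.

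\paragraph{Property~1: the pure-state formula and the key operator inequality.}
Because $V$ and $W$ are isometries, $\braket{\alpha} = \braket{\beta} = \braket{\psi} = \Tr(\rho)$, so the first term under the square root in \cref{th:td-pure} equals $\Tr(\rho)^2$. For the overlap, $\braket{\alpha}{\beta} = \bra{\psi}\left((V^\dag W) \tensor I\right)\ket{\psi}$ with $V^\dag W = \sum_{z \in \mc{Y}} \sqrt{\mu_\Phi(z)} \braket{z}{y} = \sqrt{\mu_\Phi(y)}$, so $\braket{\alpha}{\beta} = \bra{\psi}\left(\sqrt{\mu_\Phi(y)} \tensor I\right)\ket{\psi}$, a non-negative real number. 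Hence \cref{th:td-pure} and contractivity give
\[
	\frac{1}{2}\norm{\CGM(\Phi)(\rho) - \rho \tensor \ketbra{y}}_1
	\leq
	\sqrt{ \Tr(\rho)^2 - \left( \bra{\psi}\left(\sqrt{\mu_\Phi(y)} \tensor I\right)\ket{\psi} \right)^2 } .
\]
The key remaining point is that $I_\tsf{X} - \mu_\Phi(y) = \sum_{z \neq y} \mu_\Phi(z)$ is positive semidefinite, so every eigenvalue of $\mu_\Phi(y)$ lies in $[0,1]$; since $\sqrt{t} \geq t$ on $[0,1]$, the operator $\sqrt{\mu_\Phi(y)} - \mu_\Phi(y)$, and therefore $\left(\sqrt{\mu_\Phi(y)} - \mu_\Phi(y)\right) \tensor I$, is positive semidefinite. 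Consequently
\[
	\bra{\psi}\left(\sqrt{\mu_\Phi(y)} \tensor I\right)\ket{\psi}
	\geq
	\bra{\psi}\left(\mu_\Phi(y) \tensor I\right)\ket{\psi}
	=
	\Tr(\mu_\Phi(y)\rho)
	=
	\bra{y}\Phi(\rho)\ket{y}
	\geq
	0 ,
\]
the last equality being $\Delta_\mc{Y} \circ \Phi = \Ms(\mu_\Phi)$ once more. Squaring (legitimate since both sides are non-negative) and applying monotonicity of the square root then yields the claimed bound $\sqrt{\Tr(\rho)^2 - \left(\bra{y}\Phi(\rho)\ket{y}\right)^2}$.

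\paragraph{Main obstacle.}
Property~2 and most of Property~1 are routine bookkeeping. The one genuinely delicate point is that the estimate \cref{th:td-pure} hands us naturally carries $\sqrt{\mu_\Phi(y)}$ rather than $\mu_\Phi(y)$, and closing that gap requires spotting the effect inequality ``$\sqrt{\mu_\Phi(y)} - \mu_\Phi(y)$ is positive semidefinite'', which holds precisely because $I_\tsf{X} - \mu_\Phi(y)$ is positive semidefinite. A secondary care point is running the argument for an arbitrary $\rho \in \Pos(\tsf{X})$ (not just a density operator) through a purification with $\braket{\psi} = \Tr(\rho)$, and verifying that the overlaps involved are real and non-negative so that squaring is harmless.
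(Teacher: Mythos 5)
Your proposal is correct and follows essentially the same route as the paper's proof: a direct computation for Property~2, and for Property~1 a purification of $\rho$, contractivity of the trace distance, the pure-state formula of \cref{th:td-pure}, and the same key inequality $\Tr(\sqrt{\mu_\Phi(y)}\rho) \geq \Tr(\mu_\Phi(y)\rho)$ justified by the eigenvalues of $\mu_\Phi(y)$ lying in $[0,1]$. The only cosmetic difference is that you package $\rho \tensor \ketbra{y}$ via the isometry $W = I_\tsf{X} \tensor \ket{y}$ and compute the overlap as $V^\dag W = \sqrt{\mu_\Phi(y)}$, where the paper writes the second pure state out directly.
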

\begin{proof}
	We begin by showing the second point.
	Let $\mu_\Phi : \mc{Y} \to \Pos(\tsf{X})$ be the unique measurement
	satisfying $\Delta_\mc{Y} \circ \Phi = \Ms(\mu_\Phi)$.
	Then, for any linear operator $\rho \in \mc{L}(\tsf{X})$ we have
	that
	\begin{equation}
	\begin{aligned}
		&
		\left(\Tr_\tsf{X} \tensor \Delta_\mc{Y}\right)
		\circ
		\CGM(\Phi)(\rho)
		\\&=
		\left(\Tr_\tsf{X} \tensor \Delta_\mc{Y}\right)
		\left(
			\sum_{y,y' \in \mc{Y}}
			\sqrt{\mu_\Phi(y)} \rho \sqrt{\mu_\Phi(y')}
			\tensor
			\ketbra{y}{y'}
		\right)
		\\&=
		\left(\Tr_\tsf{X} \tensor \Id_\tsf{Y}\right)
		\left(
			\sum_{\tilde{y} \in \mc{Y}}
			\sum_{y,y' \in \mc{Y}}
			\sqrt{\mu_\Phi(y)} \rho \sqrt{\mu_\Phi(y')}
			\tensor
			\bra{\tilde{y}}\ket{y}\braket{y'}{\tilde{y}}
			\cdot
			\ketbra{\tilde{y}}
		\right)
		\\&=
		\left(\Tr_\tsf{X} \tensor \Id_\tsf{Y}\right)
		\left(
			\sum_{\tilde{y} \in \mc{Y}}
			\sqrt{\mu_\Phi(\tilde{y})} \rho \sqrt{\mu_\Phi(\tilde{y})}
			\tensor
			\ketbra{\tilde{y}}
		\right)
		\\&=
		\sum_{\tilde{y} \in \mc{Y}}
		\Tr\left(\mu_\Phi(\tilde{y})\rho\right)\cdot\ketbra{\tilde{y}}
		\\&=
		\Ms(\mu_\Phi)(\rho)
		\\&=
		\Delta_\mc{Y} \circ \Phi(\rho)
	\end{aligned}
	\end{equation}
	which yields the desired result.

	We now show the first point.
	Once again, let $\mu_\Phi : \mc{Y} \to \Pos(\tsf{X})$ be the unique
	measurement satisfying $\Delta_\mc{Y} \circ \Phi = \Ms(\mu_\Phi)$
	and let $
		V
		=
		\sum_{\tilde{y} \in \mc{Y}}
		\sqrt{\mu_\Phi(\tilde{y})} \tensor \ket{\tilde{y}}$.
	As $\rho \in \Pos(\tsf{X})$ is a positive operator, there is
	a vector $\ket{\psi} \in \tsf{X} \tensor \tsf{X}$ such that
	$
		\rho =
		\left(
			\Tr_\tsf{X}\tensor\Id_\tsf{X}
		\right)(\ketbra{\psi})
	$.\footnote{%
		Such a vector is known as a \emph{purification} of
		$\rho$ and is guaranteed to exist \cite[Th.~2.20]{Wat18}.}
	Note that $\ket{\psi}$ need not have norm one as we are not assuming
	that $\rho$ is a density operator, only that it is positive
	semidefinite. As the trace distance is contractive under the action
	of channels, we have that
	\begin{equation}
	\begin{aligned}
		&
		\frac{1}{2}\norm{
			\CGM(\Phi)(\rho)
			-
			\rho \tensor \ketbra{y}
		}_1
		\\&=
		\frac{1}{2}\norm{
			\left(\Tr_\tsf{X} \tensor \CGM(\Phi)\right)(\ketbra{\psi})
			-
			\left(\Tr_\tsf{X} \tensor \Id_{\tsf{X}\tensor\tsf{Y}}\right)
			\left(\ketbra{\psi} \tensor \ketbra{y}\right)
		}_1
		\\&\leq
		\frac{1}{2}\norm{
			\left(\Id_\tsf{X} \tensor \CGM(\Phi)\right)(\ketbra{\psi})
			-
			\left(\ketbra{\psi} \tensor \ketbra{y}\right)
		}_1.
		\\&=
		\frac{1}{2}\norm{
			(I_\tsf{X} \tensor V)
			\ketbra{\psi}
			(I_\tsf{X} \tensor V^\dag)
			-
			\ketbra{\psi} \tensor \ketbra{y}
		}_1.
	\end{aligned}
	\end{equation}
	We can then use \cref{th:td-pure} to compute this final trace
	distance. As $V$ is an isometry, we have that
	\begin{equation}
		\Tr\left(
			(I_\tsf{X} \tensor V)
			\ketbra{\psi}
			(I_\tsf{X} \tensor V^\dag)
		\right)
		=
		\bra{\psi}(I_\tsf{X} \tensor V^\dag V)\ket{\psi}
		=
		\braket{\psi}.
	\end{equation}
	We also have that
	\begin{equation}
		\Tr(\ketbra{\psi} \tensor \ketbra{y})
		=
		\braket{\psi} \cdot \braket{y}
		=
		\braket{\psi}.
	\end{equation}
	Further, both these quantities are also equal to
	$\Tr(\rho)$. Thus, by \cref{th:td-pure}, it follows that
	\begin{equation}
	\begin{aligned}
		\frac{1}{2}\norm{
			(I_\tsf{X} \tensor V)
			\ketbra{\psi}
			(I_\tsf{X} \tensor V^\dag)
			-
			\ketbra{\psi} \tensor \ketbra{y}
		}_1
		&=
		\sqrt{
			\Tr(\rho)^2
			-
			\abs{
				\left(\bra{\psi} \tensor \bra{y}\right)
				(I_\tsf{X} \tensor V)
				\ket{\psi}
			}^2
		}
		\\&=
		\sqrt{
			\Tr(\rho)^2
			-
			\abs{
				\bra{\psi}
				(I_\tsf{X} \tensor \sqrt{\mu(y)})
				\ket{\psi}
			}^2
		}
	\end{aligned}
	\end{equation}
	Now, note that
	\begin{equation}
		\bra{\psi} (I_\tsf{X} \tensor \sqrt{\mu(y)}) \ket{\psi}
		=
		\Tr_{\tsf{X} \tensor \tsf{X}}
		\left(
			\left(I_\tsf{X} \tensor \sqrt{\mu(y)}\right)
			\ketbra{\psi}
		\right)
		=
		\Tr_{\tsf{X}}\left(
			\sqrt{\mu(y)} \rho
		\right)
	\end{equation}
	and that this quantity must be real and non-negative by virtue
	of $I_\tsf{X} \tensor \sqrt{\mu(y)}$ being a positive
	semidefinite operator.
	We now claim that $\Tr_\tsf{X}(\sqrt{\mu(y)} \rho) \geq
	\Tr_\tsf{X}(\mu(y) \rho)$.
	This is obtained by considering the spectral decomposition
	$\mu(\tilde{y}) = \sum_{j \in [\dim\tsf{X}]} \lambda_j \Pi_j$ where
	each $\Pi_j$ is a projector on orthogonal subspaces and
	noting that that every $\lambda_j$ is real, non-negative, and no
	greater than one. The first two properties of these $\lambda_j$'s
	follow from the fact that
	$\mu(y) \in \Pos(\tsf{X})$ and the last from the fact that
	$\sum_{\tilde{y} \in \mc{Y}} \mu(\tilde{y}) = I_\tsf{X}$. Thus, we
	find that
	$\sqrt{\mu(y)} = \sum_{j \in [\dim\tsf{X}]} \sqrt{\lambda_j} \Pi_j$
	and so
	\begin{equation}
		\Tr_\tsf{X}\left(\sqrt{\mu(y)} \rho\right)
		=
		\sum_{j \in [\dim\tsf{X}]}
			\sqrt{\lambda_j}
			\Tr\left(\Pi_j \rho\right)
		\geq
		\sum_{j \in [\dim\tsf{X}]} \lambda_j
		\Tr\left(\Pi_j \rho\right)
		=
		\Tr\left(\mu(y) \rho\right).
	\end{equation}
	Finally, $
		\Tr(\mu(y) \rho)
		=
		\bra{y} \Ms(\mu)(\rho) \ket{y}
		=
		\bra{y} \Delta_\mc{Y} \circ \Phi(\rho) \ket{y}
		=
		\bra{y} \Phi(\rho) \ket{y}
	$. Collecting the above facts immediately yields that
	\begin{equation}
		\frac{1}{2}
		\norm{
			\CGM(\Phi)(\rho) - \rho \tensor \ketbra{y}
		}_1
		\leq
		\sqrt{\Tr(\rho)^2 -
		\left(\bra{y}\Phi(\rho)\ket{y}\right)^2}
	\end{equation}
	which is the desired result.
\end{proof}

\section{Defining Tamper Evidence}                                  %
\label{sc:qecm}                                                        %
\label{sc:aqecm}                                                       %
\label{te:sc:aqecm}

The goal of this section is twofold. First, we formally define the main
objects which we will study in this work, namely \emph{quantum
encryptions of classical messages} (QECM) schemes and their
\emph{augmented} variants, AQECM schemes. Second, we formalize the
notion of \emph{tamper evidence}, a property which can be possessed by
augmented quantum encryption of classical messages schemes.

Recall that QECMs and AQECMs are keyed encoding schemes for classical
messages which produce quantum states. While such schemes were implicit
in the literature since at least the early 2000s, such as in Gottesman's
work \cite{Got03}, this terminology and precise formalization was first
introduced in our prior work on uncloneable
encryption \cite{BL20} for QECMs, and subsequently expanded by Broadbent
and Islam to AQECMs in the course of their work on certfied deletion
\cite{BI20}.

\subsection{Quantum Encryption of Classical Messages Schemes}
\label{sc:te-correctness}

A quantum encryption of classical messages (QECM) scheme is a triplet of
channels $(K, E, D)$ called, respectively, the \emph{key generation},
\emph{encoding}, and \emph{decoding} channels.
These channels are required to satisfy certain constraints on their
domains and codomains to ensure that they are compatible with one
another. Implicitly, these requirements on the domains and codomains
also define two alphabets $\mc{K}$ and $\mc{M}$ which represent,
respectively, the set of classical keys and messages for this scheme.

\begin{definition}
\label{df:qecm}
	A \emph{quantum encryption of classical messages} (QECM) scheme is a
	triplet of channels $(K, E, D)$ of the form
	\begin{align}
		K :
		\mc{L}(\C)
		\to
		\mc{L}\left(\tsf{K}\right)
		\qq*{,}
		E :
		\mc{L}\left(\tsf{K} \tensor \tsf{M}\right)
		\to
		\mc{L}\left(\tsf{C}\right)
		\text{,}\qq{and}
		D :
		\mc{L}\left(\tsf{K} \tensor \tsf{C} \right)
		\to
		\mc{L}\left(\tsf{M}\right)
	\end{align}
	for Hilbert spaces $\tsf{K}$, $\tsf{M}$, and $\tsf{C}$ where,
	$\tsf{K} = \C^\mc{K}$ and $\tsf{M} = \C^\mc{M}$ for alphabets
	$\mc{K}$ and $\mc{M}$.
	We call the elements of these alphabets, respectively, the
	\emph{keys} and \emph{messages} of the scheme.
	
	To lighten notation, for every key $k \in \mc{K}$ we define the
	channels
	$
		E_k :
		\mc{L}\left(\tsf{M}\right)
		\to
		\mc{L}\left(\tsf{C}\right)
	$ and~$
		D_k :
		\mc{L}\left(\tsf{C}\right)
		\to
		\mc{L}\left(\tsf{M}\right)
	$ by
	$\rho\mapsto E(k\tensor\rho)$ and $\rho\mapsto D(k\tensor\rho)$,
	respectively.
\end{definition}

From this point on, to ease notation, whenever we talk of a
``QECM scheme $(K, E, D)$ as given in \cref{df:qecm}'', we are assuming
that the domains and codomains are defined and denoted exactly as in
\cref{df:qecm}.

We now define a quantitative notion of \emph{correctness} for QECM
schemes. In short, a QECM scheme is $\epsilon$-correct if for every
message $m \in \mc{M}$ the probability that it is correctly recovered by
the decoding map $D_k$ after being processed by the encoding map $E_k$
is at least~$1 - \epsilon$. Here, the probability is taken over the
action of the maps $E_k$ and $D_k$ as well as the randomness of
sampling the key $k \gets K(\varepsilon)$ from the key generation
procedure.

\begin{definition}
	Let $\epsilon \in \R$.
	A QECM scheme $(K, E, D)$ as given in \cref{df:qecm} is
	\emph{$\epsilon$-correct} if for all messages $m \in \mc{M}$ we have
	that
	\begin{equation}
		\E_{k \gets K(\emptystring)}
			\bra{m}
				D_k \circ E_k
				\left(
					m
				\right)
			\ket{m}
			\geq
			1 - \epsilon.
	\end{equation}
	We say that a QECM scheme is \emph{perfectly correct}, or simply
	\emph{correct}, if it is $0$-correct.
\end{definition}

We emphasize that correctness for QECMs only considers
\emph{classical} messages $m \in \mc{M}$. A QECM need not be able to
recover any arbitrary quantum state to be correct.
As an example, consider the QECM with $\mc{M} = \bs{}$ and
$\mc{K} = \{\emptystring\}$ where the channel $E_\emptystring$ is given
by $\rho \mapsto Z \rho Z^\dag$ for the Pauli operator
$Z = \ketbra{\bz} - \ketbra{\bo}$ and $D_\emptystring = \Id$.
A direct calculation yields $
	D_\emptystring \circ E_\emptystring(\ketbra{+})
	\not=
	\ketbra{+}
$ where $\ket{+} = \frac{1}{\sqrt{2}}\left(\ket{\bz}+\ket{\bo}\right)$.
However, for both $b \in \bs{}$, it holds that
$D_\emptystring \circ E_\emptystring(\ketbra{b}) = \ketbra{b}$ and so
this QECM is perfectly correct.

We now proceed to define \emph{augmented encryption of classical
messages} (AQECM) schemes.
In short, an AQECM scheme is identical to a QECM scheme, with the
exception that the decoding channel $D$ produces an additional ``flag''
qubit as part of its output.
We will denote the state space of this qubit by $\tsf{F}$.

\begin{definition}
\label{df:aqecm}
	An \emph{augmented encryption of classical messages} (AQECM) scheme
	is a triplet of channels $(K, E, D)$ of the form
	\begin{align}
		K :
		\mc{L}(\C)
		\to
		\mc{L}\left(\tsf{K}\right)
		\qq*{,}
		E :
		\mc{L}\left(\tsf{K} \tensor \tsf{M}\right)
		\to
		\mc{L}\left(\tsf{C}\right)
		\text{,}\qq{and}
		D :
		\mc{L}\left(\tsf{K} \tensor \tsf{C}\right)
		\to
		\mc{L}\left(\tsf{M} \tensor \tsf{F}\right)
	\end{align}
	for Hilbert space $\tsf{K}$, $\tsf{M}$, $\tsf{C}$, and $\tsf{F}$
	where, in particular, $\tsf{K} = \C^\mc{K}$, $\tsf{M} = \C^\mc{M}$,
	and $\tsf{F} = \C^{\bs{}}$ for alphabets $\mc{K}$ and $\mc{M}$.
	As in \cref{df:qecm}, the elements of $\mc{K}$ and $\mc{M}$ are
	called the keys and messages of the scheme.
	From $D : \mc{L}(\tsf{K} \tensor \tsf{M}) \to \mc{L}(\tsf{M} \tensor \tsf{F})$, we also define the map $\overline{D} :
	\mc{L}(\tsf{K} \tensor \tsf{C}) \to \mc{L}(\tsf{M})$ by
	\begin{equation}
		\rho
		\mapsto
		\left(I_\tsf{M} \tensor \bra{\bo}_\tsf{F}\right) D(\rho)
	\left(I_\tsf{M} \tensor \ket{\bo}_\tsf{F}\right),
	\end{equation}
	which is to say that $\overline{D}$ first applies the channel $D$
	and then ``reduces'' to the case where the qubit is measured to be
	in the $\bo$ state, followed by discarding this qubit.

	For every key $k \in \mc{K}$, we also define the maps $
		E_k : \mc{L}(\tsf{M}) \to \mc{L}(\tsf{C})
	$, $
		D_k : \mc{L}(\tsf{C}) \to \mc{L}(\tsf{M})
	$, and $
		\overline{D}_k : \mc{L}(\tsf{C}) \to \mc{L}(\tsf{M})
	$ by $
		\rho \mapsto E(k \tensor \rho)
	$, $
		\rho \mapsto D(k \tensor \rho)
	$, and $
		\rho \mapsto \overline{D}(k \tensor \rho)
	$, respectively.
\end{definition}

As for QECM schemes, from this point on whenever we mention an
``AQECM scheme $(K, E, D)$ as given in \cref{df:aqecm}''  we are
assuming that the domains and codomains are defined and denoted exactly
as in \cref{df:aqecm}.

We emphasize that the \emph{only} difference between the syntaxes
of a QECM and an AQECM scheme is the additional ``flag'' qubit in the
codomain of the decoding map $D$.

Note that the map $\overline{D}$, and by extension every map
$\overline{D}_k$, may not be a channel.
Indeed, the map $\rho \mapsto \bra{\bo} \rho \ket{\bo}$ applied
to the $\tsf{F}$ register produced by $D$ to obtain $\overline{D}$ is
not, in general, trace preserving.
It may in fact be trace \emph{decreasing},
\ie~$\Tr(\overline{D}(\sigma)) < \Tr(\sigma)$.
Nonetheless, these maps will be useful mathematical tools to define and
study properties of AQECM schemes.
We take a moment to look at them from a more operational perspective.

For any density operator $\rho$ and key $k \in \mc{K}$, we may
express $D_k(\rho)$ as
\begin{equation}
	D_k(\rho)
	=
	\rho_{\bz\bz} \tensor \ketbra{\bz}{\bz}
	+
	\rho_{\bz\bo} \tensor \ketbra{\bz}{\bo}
	+
	\rho_{\bo\bz} \tensor \ketbra{\bo}{\bz}
	+
	\rho_{\bo\bo} \tensor \ketbra{\bo}{\bo}
\end{equation}
where both $\rho_{\bz\bz}$ and $\rho_{\bo\bo}$ are subnormalized density
operators whose traces sum to one, which is to say that
$\rho_{\bz\bz}, \rho_{\bo\bo} \in \mc{D}_\bullet(\tsf{M})$ and
$\Tr(\rho_{\bz\bz}) + \Tr(\rho_{\bo\bo}) =1$.
If a computational basis measurement is then applied to the $\tsf{F}$
register, the resulting state is given by
\begin{equation}
	\left(\Id_\tsf{M} \tensor \Delta_\tsf{F}\right)
	D_k(\rho)
	=
	\rho_{\bz\bz} \tensor \ketbra{\bz}{\bz}
	+
	\rho_{\bo\bo} \tensor \ketbra{\bo}{\bo}.
\end{equation}
Then, the map $\overline{D}_k$ simply ``picks out'' the $\rho_{\bo\bo}$
operator from this expression: $\overline{D}_k(\rho) = \rho_{\bo\bo}$.
Note that $\Tr(\rho_{\bo\bo})$ is precisely the probability that the
measurement outputs $\bo$.
From this perspective, it is easy to see that $\overline{D}_k(\rho)$ is
simply the state on the $\tsf{M}$ register after application of $D_k$ to
$\rho$, conditioned on the $\tsf{F}$ register being measured to be in
the $\bo$ state and scaled by the probability that this particular
measurement outcome is produced.

The precise meaning of the flag qubit produced by the decryption map of
an AQECM scheme can and will vary by context.
In general, the flag qubit is set to the $\bz$ state if the
decryption channel $D_k$ ``rejects'' the input state and is set to
$\bo$ if it ``accepts'' it, where the precise meaning of these
words depends of the context.
For example, Broadbent and Islam \cite{BI20} use this flag to study
AQECM schemes as authentication schemes for classical messages which
produce quantum outputs.
They call these \emph{robust AQECM} schemes.
There, a flag qubit in the state $\bz$ is meant to indicate that
noise or adversarial action changed the encoded state sufficiently that
the wrong message will be recovered.
In the context of tamper evidence, the flag qubit will be set to the
state $\bz$ if non-trivial eavesdropping is detected.\footnote{
	As previously discussed, we sketch later in \cref{te:sc:te-not-auth}
	a separation robustness and tamper evidence.
}
Nonetheless, in all contexts we will expect an AQECM to accept any
unmodified encodings of its messages when the same key is used.

This discussion leads us to the following definition of correctness for
AQECM schemes.

\begin{definition}
\label{df:aqecm-correct}
	Let $\epsilon \in \R$.
	An AQECM scheme $(K, E, D)$ as given in \cref{df:aqecm} is
	\emph{$\epsilon$-correct} if for all messages $m \in \mc{M}$ we have
	that
	\begin{equation}
		\E_{k \gets K(\emptystring)}
		\bra{m} \overline{D}_k \circ E_k (m) \ket{m}
		\geq
		1 - \epsilon.
	\end{equation}
\end{definition}

Our expectation that the flag qubit should be found in the $\bo$
state is represented in the above definition by the fact that it uses
the maps $\overline{D}_k$ and not the channels $D_k$.

It is intuitively clear that AQECM schemes are, as the name suggests, an
augmentation of QECM schemes.
The following lemma formalizes this idea by confirming that discarding
the flag register $\tsf{F}$ of an AQECM scheme yields a QECM scheme and
that correctness is preserved through this transformation.

\begin{lemma}
	Let $(K, E, D)$ be an $\epsilon$-correct AQECM scheme as given in
	\cref{df:aqecm}.
	Then, $(K, E, (\Id_\tsf{M}\tensor\Tr_\tsf{F}) \circ D)$ is a QECM
	scheme which is $\epsilon$-correct.
\end{lemma}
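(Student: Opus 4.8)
The plan is to verify the two claims in the statement separately: first that the triplet $(K, E, D')$ with $D' = (\Id_\tsf{M}\tensor\Tr_\tsf{F}) \circ D$ has the syntax of a QECM scheme in the sense of \cref{df:qecm}, and then that it inherits $\epsilon$-correctness from the AQECM scheme. For the syntactic claim, I would observe that $K$ and $E$ are left untouched and already have exactly the form required by \cref{df:qecm}, so only $D'$ needs attention. Since $D : \mc{L}(\tsf{K}\tensor\tsf{C}) \to \mc{L}(\tsf{M}\tensor\tsf{F})$ is a channel and $\Id_\tsf{M}\tensor\Tr_\tsf{F} : \mc{L}(\tsf{M}\tensor\tsf{F}) \to \mc{L}(\tsf{M})$ is a channel --- being the tensor product of the identity channel on $\mc{L}(\tsf{M})$ and the trace $\Tr_\tsf{F}$, both channels, under the identification $\tsf{M}\tensor\C \cong \tsf{M}$ --- their composition $D'$ is a channel of type $\mc{L}(\tsf{K}\tensor\tsf{C}) \to \mc{L}(\tsf{M})$, which is precisely what \cref{df:qecm} demands.

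For correctness, the key step is to relate $D'_k$ to the map $\overline{D}_k$. Using the decomposition $D_k(\rho) = \rho_{\bz\bz}\tensor\ketbra{\bz} + \rho_{\bz\bo}\tensor\ketbra{\bz}{\bo} + \rho_{\bo\bz}\tensor\ketbra{\bo}{\bz} + \rho_{\bo\bo}\tensor\ketbra{\bo}$ already recorded in the text, with $\rho_{\bz\bz}, \rho_{\bo\bo} \in \mc{D}_\bullet(\tsf{M})$, applying $\Id_\tsf{M}\tensor\Tr_\tsf{F}$ annihilates the two off-diagonal terms and leaves $D'_k(\rho) = \rho_{\bz\bz} + \rho_{\bo\bo}$, whereas by construction $\overline{D}_k(\rho) = \rho_{\bo\bo}$. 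Hence $D'_k(\rho) - \overline{D}_k(\rho) = \rho_{\bz\bz}$ is positive semidefinite, so $\bra{m} D'_k(\rho) \ket{m} \geq \bra{m} \overline{D}_k(\rho) \ket{m}$ for every $m \in \mc{M}$ and every density operator $\rho$. Specializing to $\rho = E_k(m)$, taking the expectation over $k \gets K(\emptystring)$, and invoking the $\epsilon$-correctness of the AQECM scheme (\cref{df:aqecm-correct}) then gives $\E_{k \gets K(\emptystring)} \bra{m} D'_k\circ E_k(m)\ket{m} \geq \E_{k \gets K(\emptystring)} \bra{m} \overline{D}_k\circ E_k(m)\ket{m} \geq 1 - \epsilon$, which is exactly the correctness condition for the QECM $(K, E, D')$.

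There is no substantial obstacle here; the only thing requiring a little care is not conflating the two natural ways of discarding the flag register --- tracing it out entirely (which yields $D'$) versus conditioning on the $\bo$ outcome and keeping only that branch (which yields $\overline{D}$) --- and noticing that the former dominates the latter by the positive-semidefinite operator $\rho_{\bz\bz}$, so that the correctness bound transfers in the right direction. Everything else is routine bookkeeping with the partial trace and the identification $\tsf{M}\tensor\C\cong\tsf{M}$.
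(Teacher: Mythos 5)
Your proposal is correct and follows essentially the same route as the paper: both reduce $(\Id_\tsf{M}\tensor\Tr_\tsf{F})\circ D_k$ to $\overline{D}_k$ plus the positive semidefinite "reject branch" term (your $\rho_{\bz\bz}$ is exactly the paper's $(I_\tsf{M}\tensor\bra{\bz})D_k(\cdot)(I_\tsf{M}\tensor\ket{\bz})$), and then transfer the correctness bound via the resulting inequality on the $\bra{m}\cdot\ket{m}$ matrix elements. The only cosmetic difference is that you invoke the four-block decomposition of $D_k(\rho)$ while the paper writes the partial trace as $\sum_b(I_\tsf{M}\tensor\bra{b})\,\cdot\,(I_\tsf{M}\tensor\ket{b})$ directly; these are the same computation.
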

\begin{proof}
	We directly see that $(\Id_\tsf{M}\tensor\Tr_\tsf{F}) \circ D$ is
	a channel from $\mc{L}(\tsf{K} \tensor \tsf{C})$ to
	$\mc{L}(\tsf{M})$ and so the triplet
	$(K, E, (\Id_\tsf{M} \tensor \Tr_\tsf{F}) \circ D)$ satisfies the
	syntactical conditions to be a QECM.
	It now suffices to show that this QECM is $\epsilon$-correct.

	By recalling that $\Tr_{\tsf{F}}(\rho) = \sum_{b \in
	\bs} \bra{b} \rho \ket{b}$ and the definition of
	$\overline{D}$, for every key
	$k \in \mc{K}$ and message $m \in \mc{M}$ we find that
	\begin{equation}
	\begin{aligned}
		\left(\Id_\tsf{M} \tensor \Tr_\tsf{F}\right) \circ D_k \circ E_k
		(m)
		&=
		\sum_{b \in \bs{}}
		\left(I_\tsf{M} \tensor \bra{b}\right)
			(D_k \circ E_k)(m)
		\left(I_\tsf{M} \tensor \ket{b}\right)
		\\&=
		\overline{D}_k \circ E_k(m)
		+
		\left(I_\tsf{M} \tensor \bra{\bz}\right)
			(D_k \circ E_k)(m)
		\left(I_\tsf{M} \tensor \ket{\bz}\right)
	\end{aligned}
	\end{equation}
	from which it follows that
	\begin{equation}
		\bra{m}
		\left(\Id_\tsf{M} \tensor \Tr_\tsf{F}\right) \circ D_k \circ E_k
		(m)
		\ket{m}
		\geq
		\bra{m} \overline{D}_m \circ E_k(m) \ket{m}
	\end{equation}
	since $
		\left(\bra{m} \tensor \bra{\bz}\right)
			(D_k \circ E_k)(m)
		\left(\ket{m} \tensor \ket{\bz}\right)
		\geq
		0
	$ by virtue of $D_k \circ E_k(m)$ being a positive semidefinite
	operator.
	Thus, for all $m \in \mc{M}$, we have that
	\begin{equation}
		\E_{k \gets K(\emptystring)}
		\bra{m}
			\left(
				\Id_\tsf{M} \tensor \Tr_\tsf{F}
			\right) \circ D_k \circ E_k
			(m)
		\ket{m}
		\geq
		\E_{k \gets K(\emptystring)}
			\bra{m} \overline{D}_k \circ E_k(m) \ket{m}
		\geq
		1 - \epsilon,
	\end{equation}
	as the original AQECM scheme is assumed to be $\epsilon$-correct,
	which is the desired result.
\end{proof}

\subsection{Tamper Evidence}

With the notion of AQECM schemes in hand, we can now define
tamper evidence.

In a nutshell, an AQECM scheme exhibits tamper evidence if the
party which receives a ciphertext --- the one implementing the decoding
channel $D$ --- can detect any non-trivial eavesdropping.
This notion was first presented by Gottesman \cite{Got03}, and we review
it here.
To place Gottesman's notion of tamper evidence in our AQECM framework,
we proceed in two steps.
We first formalize in \cref{df:te-attack} \emph{tamper attacks} against
a given AQECM schem and we then state in \cref{df:te-security} the
condition that an AQECM scheme must satisfy against all tamper attacks
to be considered \emph{tamper-evident}.
Up to a difference in presentation and
some minor differences discussed later, this is the definition of
tamper evidence given in \cite{Got03}.

A tamper attack can be understood as an adversary's attempt to extract,
undetected, information from an intercepted ciphertext  $E_k(m) \in
\mc{D}(\tsf{C})$. We model any information that the adversary managed to
extract as a quantum system held in a register $\tsf{A}$. Moreover, we
assume that the eavesdropping adversary must allow \emph{something},
\ie~a possibly modified ciphertext, to reach the receiver. Otherwise, we
may assume that the eavesdropping will necessarily be detected. Under
this model, we can represent the adversary's action as a channel which
acts on the space of the encoded messages $\tsf{C}$ to produce a
state on the space $\tsf{C} \tensor \tsf{A}$. This is formalized in
\cref{df:te-attack} and sketched in \cref{fg:te-security}.

\begin{definition}
\label{df:te-attack}
	Let $(K, E, D)$ be an AQECM scheme as given in \cref{df:aqecm}.
	A \emph{tamper attack} against it is a channel
	$A : \mc{L}(\tsf{C}) \to \mc{L}(\tsf{C} \tensor \tsf{A})$ where
	$\tsf{A}$ is a Hilbert space.
\end{definition}

\begin{figure}
	\begin{center}
		\begin{tikzpicture}[x=1.5em,y=1.5em]

\tikzset{str note/.style={xshift=+1.5em,font=\scriptsize}}
\tikzset{end note/.style={xshift=-1.5em,font=\scriptsize}}
\tikzset{mid note/.style={midway,font=\scriptsize}}


	\node (E) at (-2,-1) {$E_k$};
	\draw[thick] ($(E) + ( 1, 1)$) rectangle ($(E) - ( 1, 1)$);

	\node (D) at ($(E) + (8,1.5)$) {$D_k$};
	\draw[thick] ($(D) + ( 1, 2.5)$) rectangle ($(D) - ( 1, 2.5)$);
	
	\node (A) at ($(E) + (4,-1.5)$) {$A$};
	\draw[thick] ($(A) + ( 1, 2.5)$) rectangle ($(A) - ( 1, 2.5)$);

	\node (T) at ($(D) + (4,-1.5)$) {$\Tr$};
	\draw[thick] ($(T) + (1,1)$) rectangle ($(T) - (1,1)$);

	\node (S) at ($(D) + (4, 1.5)$)
		{$\bra{\bo}\hspace{-0.1em}\cdot\hspace{-0.1em}\ket{\bo}$};
	\draw[thick,rounded corners]
		($(S) + (1,1)$) rectangle ($(S) - (1,1)$);

	\node (olD) at ($(D) +(2,3)$) {$\overline{D}_k$};
	\draw[dashed]
		($(olD) + (-3.5,0.5)$)
		to
		($(S) + (1.5,2)$)
		to
		($(S) + (1.5,-1.5)$)
		to
		($(S) + (-2.5,-1.5)$)
		to
		($(D) + (1.5,-3)$)
		to
		($(D) + (-1.5,-3)$)
		to
		($(olD) + (-3.5,0.5)$);

	\draw[thick]
		($(E) - (5,0)$)
		node[str note, above] {$m$}
		node[str note, below] {$\tsf{M}$}
		to
		($(E) - (1,0)$)
	;

	\draw[thick]
		($(E) + (1, 0)$)
		to
		node[mid note, below] {$\tsf{C}$}
		($(E) + (3, 0)$)
	;

	\draw[thick]
		($(A) + (1,1.5)$)
		to
		node[mid note, below] {$\tsf{C}$}
		($(A) + (3,1.5)$)
	;

	\draw[thick]
		($(D) + (1,1.5)$)
		to
		node[mid note, below] {$\tsf{F}$}
		($(D) + (3,1.5)$)
	;

	\draw[thick]
		($(D) + (1,-1.5)$)
		to
		node[mid note, below] {$\tsf{M}$}
		($(D) + (3,-1.5)$)
	;

	\draw[thick]
		($(A) + (1,-1.5)$)
		to
		($(A) + (13,-1.5)$)
		node[end note, below] {$\tsf{A}$}
		node[end note, above] {$\rho_{k,m}$}
	;

\end{tikzpicture}
	\end{center}
	\caption[%
			Representation of the maps used to define tamper evidence.]
		{\label{fg:te-security}%
			An AQECM scheme $(K, E, D)$ subject to a
			tamper attack $A$, as considered in \cref{df:te-attack} and
			\cref{df:te-security}. For any given
			message $m$ and key $k$, the final subnormalized state held
			by the adversary if the receiver does not detect any
			eavesdropping is $\rho_{k,m} \in \mc{D}_\bullet(\tsf{A})$.}
\end{figure}
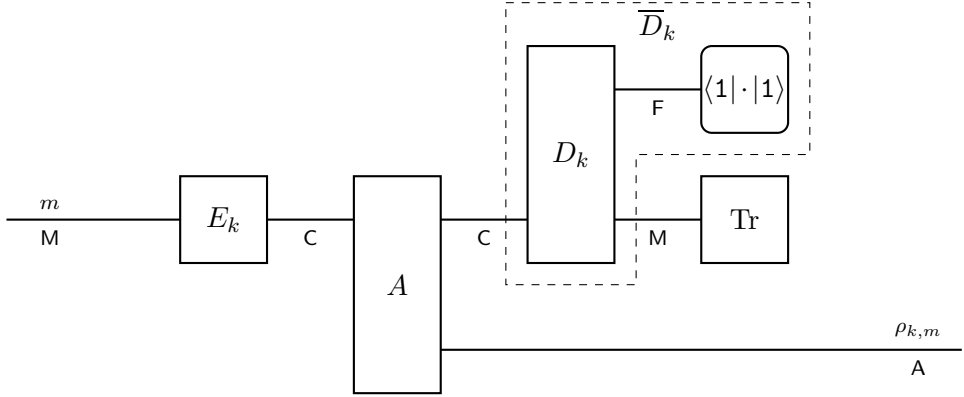

We can now formalize the notion of tamper evidence. To do so will
require us to consider certain subnormalized states acting on the
additional Hilbert space $\tsf{A}$ produced by a tamper attack $A$. We
take a moment to examine these subnormalized states in some detail
before defining tamper evidence.

For any key $k$, message $m$, and tamper attack
$A : \mc{L}(\tsf{C}) \to \mc{L}(\tsf{C} \tensor \tsf{A})$ against an
AQECM scheme $(K, E, D)$, we consider the operator
\begin{equation}
	\left(
		\left(\Tr_\tsf{M} \circ \overline{D}_k\right)
		\tensor \Id_\tsf{A}
	\right)
	\circ
	A
	\circ
	E_k
	(m),
\end{equation}
and we note that this is a subnormalized density operator on the
additional register $\tsf{A}$ produced by the tamper attack $A$,
\emph{i.e.} an element of $\mc{D}_\bullet(\tsf{A})$. In
\cref{fg:te-security}, this is the operator $\rho_{k,m}$.

Operationally, this represents the state the adversarial
eavesdropper holds when the message $m$ is sent using the key $k$,
\emph{conditioned} on the receiver accepting \emph{and scaled} by the
probability of this acceptance.

An AQECM scheme will be tamper evident if there is a high probability,
taken over the choice the of key $k$, that the trace distance between
the subnormalized states $\rho_{k,m}$ and $\rho_{k,m'}$ is small for any
two messages $m, m' \in \mc{M}$. This is formalized in
\cref{df:te-security}. Recall that $
	\left(
		\left(\Tr_\tsf{M} \circ \overline{D}_k\right)
		\tensor \Id_\tsf{A}
	\right)
	\circ
	A
	\circ
	E_k
	:
	\mc{L}(\tsf{M})
	\to
	\mc{L}(\tsf{A})
$ is a linear operator. Hence, the difference between the image of $m$
and $m'$ under the action of this map can be more succinctly expressed
as $
	\left(
		\left(\Tr_\tsf{M} \circ \overline{D}_k\right)
		\tensor \Id_\tsf{A}
	\right)
	\circ
	A
	\circ
	E_k
	(m - m')
$. We will often make use of this property to make our notation more
compact.

\begin{definition}
\label{df:te-security}
	Let $\delta \in \R$.
	An AQECM scheme $(K, E, D)$ as given in \cref{df:aqecm}
	is~\emph{$\delta$-tamper-evident} if for all messages
	$m, m' \in \mc{M}$ and any tamper attack $A$ against it we have
	that
	\begin{equation}
	\label{eq:te-security}
		\Pr_{k \gets K(\emptystring)}\left[
			\frac{1}{2}
			\norm{
				\left(
					\left(\Tr_\tsf{M} \circ \overline{D}_k\right)
					\tensor \Id_\tsf{A}
				\right)
				\circ
				A
				\circ
				E_k
				(m - m')
			}_1
			\leq
			\delta
		\right]
		\geq
		1 - \delta
		.
	\end{equation}
\end{definition}

We know that we can interpret the trace-distance between two density
operators as a measure of how well two quantum states can be
distinguished. But, how should we interpret the trace distance between
two \emph{subnormalized} density operators in this context? The
following lemma can be understood as stating that if the trace distance
between two subnormalized density operators is small, then
either the trace distance between their corresponding renormalized
operators is small, or both of their traces are small.\footnote{%
	This point was also made, but not quantified as we did here, by
	Gottesman \cite{Got03}.}

\begin{lemma}
	Let $\rho_0, \rho_1 \in \mc{D}(\tsf{A})$ be two
	density operators and let $t_0, t_1 \in \R$ be two non-negative
	real numbers.
	Then,
	\begin{equation}
		\frac{\max\{t_0, t_1\}}{2}
		\cdot
		\frac{1}{2}\norm{\rho_0 - \rho_1}_1
		\leq
		\frac{1}{2}\norm{t_0 \rho_0 - t_1 \rho_1}_1.
	\end{equation}
\end{lemma}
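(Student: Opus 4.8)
The plan is to reduce to the triangle inequality after a suitable rescaling. Assume without loss of generality that $t_0 \geq t_1$, so that $\max\{t_0,t_1\} = t_0$; if $t_0 = 0$ the claimed inequality is trivial since both sides vanish, so I may also assume $t_0 > 0$. The target inequality is then $\frac{t_0}{2}\cdot\frac{1}{2}\norm{\rho_0-\rho_1}_1 \leq \frac{1}{2}\norm{t_0\rho_0 - t_1\rho_1}_1$, or equivalently $\frac{t_0}{2}\norm{\rho_0-\rho_1}_1 \leq \norm{t_0\rho_0 - t_1\rho_1}_1$.

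First I would write $t_0\rho_0 = (t_0\rho_0 - t_1\rho_1) + t_1\rho_1$ and apply the triangle inequality for the Schatten $1$-norm, together with the fact that $\norm{t_1\rho_1}_1 = t_1\norm{\rho_1}_1 = t_1$ since $\rho_1$ is a density operator and $t_1 \geq 0$. This gives $t_0 = \norm{t_0\rho_0}_1 \leq \norm{t_0\rho_0 - t_1\rho_1}_1 + t_1$, hence $t_0 - t_1 \leq \norm{t_0\rho_0 - t_1\rho_1}_1$. Next I would control the other piece: by the triangle inequality again, $\norm{t_0(\rho_0-\rho_1)}_1 \leq \norm{t_0\rho_0 - t_1\rho_1}_1 + \norm{(t_1 - t_0)\rho_1}_1 = \norm{t_0\rho_0 - t_1\rho_1}_1 + (t_0 - t_1)$, using $t_0 \geq t_1$ and $\norm{\rho_1}_1 = 1$. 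Combining this with the bound $t_0 - t_1 \leq \norm{t_0\rho_0-t_1\rho_1}_1$ from the previous step yields $t_0\norm{\rho_0-\rho_1}_1 \leq 2\norm{t_0\rho_0 - t_1\rho_1}_1$, which is exactly the desired inequality after dividing by $2$ and recalling $t_0 = \max\{t_0,t_1\}$.

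This argument is short and essentially mechanical; there is no real obstacle. The only point requiring a small amount of care is handling the degenerate cases $t_0 = 0$ (where the statement is vacuous) and ensuring the "without loss of generality" step is legitimate, which it is since the claimed inequality is symmetric in the pairs $(t_0,\rho_0)$ and $(t_1,\rho_1)$ on both sides. One could alternatively phrase the whole thing by setting $s = t_1/t_0 \in [0,1]$, factoring out $t_0$, and proving $\norm{\rho_0-\rho_1}_1 \leq 2\norm{\rho_0 - s\rho_1}_1$ for $s\in[0,1]$ via the same two applications of the triangle inequality; I would likely present whichever of these two forms reads more cleanly.
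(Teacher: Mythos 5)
Your proposal is correct and follows essentially the same route as the paper: the same decomposition $t_0\rho_0 - t_0\rho_1 = (t_0\rho_0 - t_1\rho_1) + (t_1 - t_0)\rho_1$ combined with the bound $t_0 - t_1 \leq \norm{t_0\rho_0 - t_1\rho_1}_1$. The only cosmetic difference is that you obtain the latter bound via the (reverse) triangle inequality, whereas the paper derives it from contractivity of the trace distance under the trace channel; both are one-line arguments yielding the same estimate.
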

\begin{proof}
	First, note that $
		\frac{1}{2}\abs{t_0 - t_1}
		\leq
		\frac{1}{2}\norm{t_0 \rho_0 - t_1 \rho_1}_1
	$ by the fact that the trace distance is contractive under the
	action of any channel, including, as needed for this case, the
	trace.
	Without loss of generality, assume that $t_0 = \max\{t_0, t_1\}$.
	Then,
	\begin{equation}
	\begin{aligned}
		\frac{t_0}{2}
		\norm{\rho_0 - \rho_1}_1
		&=
		\frac{1}{2}\norm{t_0\rho_0 - t_1\rho_1 + (t_1 - t_0)\rho_1}_1
		\\&\leq
		\frac{1}{2}\norm{t_0\rho_0 - t_1\rho_1}_1
		+
		\frac{1}{2} \cdot \abs{t_0 - t_1}
	\end{aligned}
	\end{equation}
	which, with the previous inequality, yields that $
		\frac{t_0}{2}\norm{\rho_0 - \rho_1}_1
		\leq
		\norm{t_0\rho_0 - t_1\rho_1}_1
	$.
	Dividing both sides by two then yields the desired result.
\end{proof}

In the context of tamper-evident encryption, this means that if
$\frac{1}{2}\norm{\rho_{k,m} - \rho_{k,m'}}_1$ is small, then either the
eavesdropping attempt will be detected with high probability (in the
case that the traces of both $\rho_{k,m}$ and $\rho_{k,m'}$ are small),
or it will yield very little information on the plaintext if it is
undetected (in the case where the trace distance between the
renormalized states $\tilde{\rho}_{k,m}=\Tr(\rho_{k,m})^{-1}\rho_{k,m}$
and $\tilde{\rho}_{k,m'} = \Tr(\rho_{k,m'})^{-1}\rho_{k,m'}$ is small, assuming
for simplicity here that neither has trace $0$). Indeed, in this second
case, the adversary will not be able to distinguish between the
scenario where the message $m$ is sent or the one where the message $m'$
is sent, \emph{even if they later learn the key} $k$ \emph{which was
used}.

\Cref{df:te-security} has two notable differences with the definition of
tamper evidence originally given by Gottesman.
First, we do not require the AQECM scheme to satisfy any type of
encryption security guarantee to be a tamper-evident scheme.
We consider tamper evidence and encryption completely separately for the
time being, but we will show later in \cref{th:te=>enc} that any
sufficiently good tamper-evident scheme will indeed be an encryption
scheme.
Second, Gottesman does not make an explicit statement on the probability
of the trace distance being small. Instead, it is required that the
trace distance considered in \cref{eq:te-security}
be at most $\delta$ for at least $(1-\delta)\cdot\abs{\mc{K}}$ keys in
$\mc{K}$.
Of course, these notions coincide if the keys are sampled uniformly
at random by the key generation procedure, but our formulation of this
definition is a bit more general.

Before proceeding further, we make explicit two simple but useful facts
which immediately follow from \cref{df:te-security}.

\begin{fact}
	A $\delta$-tamper-evident AQECM scheme is also $\delta'$-tamper
	evident if $\delta' > \delta$.
\end{fact}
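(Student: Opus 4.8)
The plan is to observe that this is a pure monotonicity statement and prove it directly from \cref{df:te-security} without touching the internal structure of the trace-distance expression at all. Fix arbitrary messages $m, m' \in \mc{M}$ and an arbitrary tamper attack $A$ against the scheme, and abbreviate by $X_k$ the non-negative quantity
\begin{equation*}
	X_k
	=
	\frac{1}{2}
	\norm{
		\left(
			\left(\Tr_\tsf{M} \circ \overline{D}_k\right)
			\tensor \Id_\tsf{A}
		\right)
		\circ
		A
		\circ
		E_k
		(m - m')
	}_1
\end{equation*}
viewed as a random variable over the sampling $k \gets K(\emptystring)$.

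First I would invoke the hypothesis that the scheme is $\delta$-tamper-evident, which gives $\Pr_{k}[X_k \leq \delta] \geq 1 - \delta$. Next, since $\delta' > \delta$, the event $\{X_k \leq \delta\}$ is contained in the event $\{X_k \leq \delta'\}$, so monotonicity of probability under set inclusion yields $\Pr_k[X_k \leq \delta'] \geq \Pr_k[X_k \leq \delta]$. Chaining these, and using that $1 - \delta > 1 - \delta'$, gives $\Pr_k[X_k \leq \delta'] \geq 1 - \delta \geq 1 - \delta'$, which is exactly the inequality required by \cref{df:te-security} for the parameter $\delta'$. Since $m$, $m'$, and $A$ were arbitrary, the scheme is $\delta'$-tamper-evident.

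There is essentially no obstacle here: the only things used are that the trace distance is non-negative (so the quantities involved are well-defined random variables), that probability is monotone with respect to event inclusion, and the elementary fact $\delta < \delta' \Rightarrow 1 - \delta > 1 - \delta'$. If one wanted, the same argument shows more generally that the set of $\delta$ for which a fixed scheme is $\delta$-tamper-evident is an up-set in $\R$, but stating it for a single threshold $\delta'$ as above suffices.
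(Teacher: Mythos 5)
Your proof is correct and is exactly the immediate argument the paper has in mind: the paper states this as a \emph{fact} without proof, since relaxing $\delta$ to $\delta' > \delta$ simultaneously weakens the threshold inside the probability (by event inclusion) and the required lower bound $1-\delta'$ on that probability. Nothing further is needed.
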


\begin{fact}
\label{ft:te-bounds}
	Every AQECM scheme is $1$-tamper-evident, none is
	$\delta$-tamper-evident if $\delta < 0$.
\end{fact}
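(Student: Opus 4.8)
The plan is to unwind \cref{df:te-security} directly; no tools beyond the definition itself are needed. Recall that for any key $k$, messages $m, m'$, and tamper attack $A$, the quantity appearing inside the probability in \cref{eq:te-security} is exactly $\frac{1}{2}\norm{\rho_{k,m} - \rho_{k,m'}}_1$, where $\rho_{k,m}, \rho_{k,m'} \in \mc{D}_\bullet(\tsf{A})$ are the subnormalized states described in \cref{fg:te-security}.

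For the first claim, I would fix arbitrary messages $m, m' \in \mc{M}$ and an arbitrary tamper attack $A$ against the scheme. With $\delta = 1$, \cref{eq:te-security} requires only that $\Pr_{k \gets K(\emptystring)}\left[\frac{1}{2}\norm{\rho_{k,m} - \rho_{k,m'}}_1 \leq 1\right] \geq 1 - 1 = 0$, which is trivial since probabilities are non-negative. To make the statement slightly less vacuous, I would additionally note that the triangle inequality for the Schatten $1$-norm gives $\frac{1}{2}\norm{\rho_{k,m} - \rho_{k,m'}}_1 \leq \frac{1}{2}\left(\Tr(\rho_{k,m}) + \Tr(\rho_{k,m'})\right) \leq 1$ for every key $k$, so this probability is in fact equal to $1$.

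For the second claim I would argue by contradiction. Suppose some AQECM scheme were $\delta$-tamper-evident for some $\delta < 0$. Since $\mc{M}$ is a non-empty alphabet we may fix some $m, m' \in \mc{M}$, and a tamper attack always exists, for instance $\rho \mapsto \rho \tensor \varepsilon$ with $\tsf{A} = \C$. Applying \cref{eq:te-security} to this data would assert that $\Pr_{k \gets K(\emptystring)}\left[\frac{1}{2}\norm{\rho_{k,m} - \rho_{k,m'}}_1 \leq \delta\right] \geq 1 - \delta > 1$, which is impossible since any probability lies in $[0,1]$. This contradiction proves that no AQECM scheme is $\delta$-tamper-evident when $\delta < 0$. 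I anticipate no real obstacle here: both parts follow at once from the elementary bounds on the probability and on the trace distance occurring in \cref{df:te-security}, and if there is any subtlety at all it is only the bookkeeping point that one must exhibit at least one valid triple $(m, m', A)$ so that the contradiction argument is not vacuous.
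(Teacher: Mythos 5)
Your proposal is correct and matches the paper's (implicit) reasoning: the paper states this as a \emph{fact} with no written proof precisely because it follows by unwinding \cref{df:te-security} exactly as you do, using that probabilities lie in $[0,1]$ and that the trace distance between subnormalized density operators is at most $1$. Your extra care in exhibiting a concrete non-vacuous triple $(m, m', A)$ for the $\delta < 0$ case is a nice touch, though the paper does not bother to spell it out.
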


The definition of the security notion of tamper evidence is a bit
unorthodox because it is not expressed as an expectation over the choice
of the key.
The following lemma expresses sufficient and necessary conditions
for an AQECM scheme to be $\delta$-tamper-evident as bounds concerning
the expected trace distance

\begin{lemma}
\label{tee:th:te-ns}
	Let $S = (K, E, D)$ be an AQECM as given in \cref{df:aqecm}.
	\begin{enumerate}
		\item
			If for all tamper attacks $
				A :
				\mc{L}(\tsf{C})
				\to
				\mc{L}(\tsf{C} \tensor \tsf{A})
			$
			and all messages $m,m' \in \mc{M}$ we have that
			\begin{equation}
				\E_{k \gets K(\emptystring)}
				\frac{1}{2}
				\norm{
					\left(
						\left(\Tr_\tsf{M} \circ \overline{D}_k\right)
						\circ
						\Id_{\tsf{A}}
					\right)
					\circ
					A
					\circ
					E_k(m - m')
				}_1
				\leq
				\delta
			\end{equation}
			then $S$ is $\sqrt{\delta}$-tamper-evident.
		\item
			If $S$ is $\delta$-tamper-evident, then for all tamper
			attacks $
				A :
				\mc{L}(\tsf{C})
				\to
				\mc{L}(\tsf{C} \tensor \tsf{A})
			$
			and all messages $m,m' \in \mc{M}$ we have that
			\begin{equation}
				\E_{k \gets K(\emptystring)}
				\frac{1}{2}
				\norm{
					\left(
						\left(\Tr_\tsf{M} \circ \overline{D}_k\right)
						\circ
						\Id_{\tsf{A}}
					\right)
					\circ
					A
					\circ
					E_k(m - m')
				}_1
				\leq
				2\delta - \delta^2
				\leq
				2\delta.
			\end{equation}
	\end{enumerate}
\end{lemma}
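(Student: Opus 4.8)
The plan is to reduce both items to the two concentration inequalities recorded in \cref{th:markov,th:concentration}. Fix, once and for all, an arbitrary tamper attack $A : \mc{L}(\tsf{C}) \to \mc{L}(\tsf{C} \tensor \tsf{A})$ and messages $m, m' \in \mc{M}$, and study the real-valued random variable
\begin{equation}
	X_k
	=
	\frac{1}{2}
	\norm{
		\left(
			\left(\Tr_\tsf{M} \circ \overline{D}_k\right)
			\tensor \Id_\tsf{A}
		\right)
		\circ A \circ E_k(m - m')
	}_1,
	\qquad
	k \gets K(\emptystring).
\end{equation}
Both items of the lemma quantify over $A$, $m$, and $m'$ in exactly the same way, so it suffices to relate $\E(X_k)$ to $\Pr[X_k \leq \sqrt{\delta}]$ (for item~1) and to $\Pr[X_k \leq \delta]$ (for item~2) for this single fixed choice; the bounds obtained will then be uniform over all attacks and message pairs and hence yield the stated universally quantified conclusions. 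The only structural input I would first record is that $X_k \in [0,1]$ for every $k$: as observed in the discussion preceding \cref{df:te-security}, each of $\rho_{k,m} = \left(\left(\Tr_\tsf{M} \circ \overline{D}_k\right) \tensor \Id_\tsf{A}\right) \circ A \circ E_k(m)$ and $\rho_{k,m'}$ lies in $\mc{D}_\bullet(\tsf{A})$, so its trace norm equals its trace, which is in $[0,1]$, and the triangle inequality gives $X_k \leq \frac{1}{2}\bigl(\Tr(\rho_{k,m}) + \Tr(\rho_{k,m'})\bigr) \leq 1$ while $X_k \geq 0$ as a norm value. I expect this boundedness --- needed precisely so that \cref{th:concentration} applies with $\beta = 1$ --- to be the only mildly delicate point; everything else is elementary probability.

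For item~1, assume $\E(X_k) \leq \delta$ for all $A, m, m'$; we may take $\delta > 0$, since $\delta < 0$ makes the hypothesis vacuous (as $X_k \geq 0$) and $\delta = 0$ forces $X_k \equiv 0$, hence $\Pr[X_k \leq 0] = 1$. Markov's inequality (\cref{th:markov}) with threshold $\sqrt{\delta}$ gives $\Pr[X_k < \sqrt{\delta}] \geq 1 - \E(X_k)/\sqrt{\delta} \geq 1 - \delta/\sqrt{\delta} = 1 - \sqrt{\delta}$, whence $\Pr[X_k \leq \sqrt{\delta}] \geq 1 - \sqrt{\delta}$. As this holds for every attack and every message pair, $S$ is $\sqrt{\delta}$-tamper-evident by \cref{df:te-security}.

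For item~2, assume $S$ is $\delta$-tamper-evident; by \cref{ft:te-bounds} we may assume $0 \leq \delta \leq 1$, and for the fixed $A, m, m'$ this gives $\Pr[X_k > \delta] = 1 - \Pr[X_k \leq \delta] \leq \delta$. Applying \cref{th:concentration} with $\beta = 1$ and $\alpha = \delta$ (when $0 < \delta < 1$; the endpoints $\delta \in \{0,1\}$ are immediate from $0 \leq X_k \leq 1$) yields $\Pr[X_k > \delta] \geq \bigl(\E(X_k) - \delta\bigr)/(1 - \delta)$; combining the two inequalities and rearranging gives $\E(X_k) \leq \delta + (1 - \delta)\delta = 2\delta - \delta^2 \leq 2\delta$, uniformly in $A, m, m'$, which is the asserted bound. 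Equivalently, one can sidestep \cref{th:concentration} and split $\E(X_k)$ over the events $\{X_k \leq \delta\}$ and $\{X_k > \delta\}$, bounding $X_k$ by $\delta$ on the former and by $1$ on the latter and then using $\Pr[X_k > \delta] \leq \delta$ together with $1 - \delta \geq 0$.
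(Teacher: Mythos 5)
Your proposal is correct and follows essentially the same route as the paper: item 1 via Markov's inequality at threshold $\sqrt{\delta}$, and item 2 via the concentration inequality of \cref{th:concentration} with $\beta = 1$ (your explicit justification that the random variable lies in $[0,1]$, and your clean handling of the endpoint cases $\delta \in \{0,1\}$, are if anything slightly more careful than the paper's treatment). No gaps.
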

\begin{proof}
	We begin by showing the first point.
	By Markov's inequality (\cref{th:markov}) and our assumed upper
	bound on the expectation we have
	\begin{equation}
		\left(1
		-
		\Pr_{k \gets K(\emptystring)}\left[
			\frac{1}{2}
			\norm{
				\left(
				\left(\Tr_\tsf{M} \circ \overline{D}_k\right)
					\circ
					\Id_{\tsf{A}}
				\right)
				\circ
				A
				\circ
				E_k(m - m')
			}_1
			<
			\sqrt{\delta}
		\right]
		\right)
		\leq
		\frac{\delta}{\sqrt{\delta}}
		=
		\sqrt{\delta}
	\end{equation}
	which implies that
	\begin{equation}
		\Pr_{k \gets K(\emptystring)}\left[
			\frac{1}{2}
			\norm{
				\left(
				\left(\Tr_\tsf{M} \circ \overline{D}_k\right)
					\circ
					\Id_{\tsf{A}}
				\right)
				\circ
				A
				\circ
				E_k(m - m')
			}_1
			<
			\sqrt{\delta}
		\right]
		\geq
		1 - \sqrt{\delta}
	\end{equation}
	and so $S$ is $\sqrt{\delta}$-tamper-evident.

	We now show the second point.
	Note that if $\delta \geq \frac{1}{2}$, then the desired inequality
	on the expectation immediately holds as the trace distance between
	two subnormalized states can never be more than $1$.
	If $\delta < \frac{1}{2}$, then it suffices to invoke the
	concentration inequality of \cref{th:concentration}, with an upper
	bound of $\beta = 1$.
	This yields, after a few algebraic manipulations and recalling our
	assumption that $S$ is $\delta$-tamper-evident, that
	\begin{equation}
				\E_{k \gets K(\emptystring)}
				\frac{1}{2}
				\norm{
					\left(
						\left(\Tr_\tsf{M} \circ \overline{D}_k\right)
						\circ
						\Id_{\tsf{A}}
					\right)
					\circ
					A
					\circ
					E_k(m - m')
				}_1
				\leq
				\delta + (1  - \delta)\cdot \delta
				=
				2\delta - \delta^2
				\leq
				2\delta
	\end{equation}
	which is the desired result.
\end{proof}

We conclude this section by reiterating what is, morally, Gottesman's
main theorem, namely that the theory of tamper-evident schemes is not
vacuous. Tamper-evident schemes can be explicitly constructed.
Better yet, for any message set $\mc{M}$ and $\delta > 0$, there exists
a perfectly correct AQECM scheme which is $\delta$-tamper-evident.

\begin{theorem}[{\cite{Got03}}]
\label{te:th:te-exist}
Let $\mc{M}$ be an alphabet and $\delta \in \R^+$ be a strictly positive
real. There exists a perfectly correct $\delta$-tamper-evident AQECM
scheme with messages $\mc{M}$.
\end{theorem}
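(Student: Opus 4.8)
The plan is to exhibit Gottesman's prepare-and-measure scheme, built from Wiesner (conjugate-coding) states together with a classical error-detecting code, verifying correctness directly and deferring the security estimate to a Shor--Preskill-style analysis. Fix $\ell = \lceil \log_2 \abs{\mc{M}} \rceil$ and identify $\mc{M}$ with a subset of $\bs^\ell$. Choose a linear code $\mathrm{Enc} : \bs^\ell \to \bs^n$ of minimum distance $\Theta(n)$; such a code exists for all sufficiently large $n$ by a standard counting (Gilbert--Varshamov) argument. The scheme $(K,E,D)$ is then the following: $K$ samples $(\theta, p) \in \bs^n \times \bs^n$ uniformly at random; $E_{(\theta,p)}$ sends a message $m$ to the $n$-qubit product state $\bigotimes_{i=1}^{n} H^{\theta_i}\ket{(\mathrm{Enc}(m)\xor p)_i}$; and $D_{(\theta,p)}$ measures the $i$-th qubit in the computational or Hadamard basis according to $\theta_i$, obtains a string $y'$, sets its flag qubit to $\bo$ precisely when $y' \xor p$ has zero syndrome (i.e.\ lies in the image of $\mathrm{Enc}$), and outputs $\mathrm{Enc}^{-1}(y' \xor p)$ when it is a valid message, or a fixed default message otherwise. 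Padding $n$ so that $n \geq \ell$ and all the syntactic requirements of \cref{df:aqecm} hold is routine.

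Correctness is immediate, and this is where I would start. In the honest run with no tamper attack, $D_{(\theta,p)}$ measures each qubit in exactly the basis in which it was prepared, so $y' = \mathrm{Enc}(m) \xor p$ with certainty, $y' \xor p = \mathrm{Enc}(m)$ has zero syndrome, the flag is $\bo$, and $\mathrm{Enc}^{-1}(y' \xor p) = m$. Hence $\bra{m} \overline{D}_k \circ E_k(m) \ket{m} = 1$ for every key $k$ and message $m$, so the scheme is perfectly correct regardless of $n$.

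The real content is $\delta$-tamper-evidence, and it is the main obstacle. Here I would run the information--disturbance argument underlying the Shor--Preskill security proof of BB84: a tamper attack $A$ whose forwarded state is accepted by $D_k$ with non-negligible probability must have disturbed only a few of the bits measured by $D_k$, and --- quantified either through an entropic uncertainty relation or by passing to the EPR-based purification of the scheme and invoking entanglement distillation --- this forces the adversary's side register $\tsf{A}$ to be nearly uncorrelated with the transmitted string $\mathrm{Enc}(m) \xor p$, hence, even after $(\theta,p)$ is later disclosed, nearly uncorrelated with $m$ itself. Concretely, for each pair $m, m' \in \mc{M}$ one bounds $\E_{k \gets K(\emptystring)} \frac{1}{2} \norm{((\Tr_\tsf{M} \circ \overline{D}_k) \tensor \Id_\tsf{A}) \circ A \circ E_k(m - m')}_1$ by some $\delta' = 2^{-\Omega(n)}$, with the subnormalization of $\overline{D}_k$ absorbing the ``attack detected'' branch; converting this expectation bound into the fraction-of-keys statement of \cref{df:te-security} is then exactly \cref{tee:th:te-ns} via Markov's inequality, and taking $n$ large makes the resulting parameter smaller than any target $\delta > 0$.

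The genuinely hard step is that final BB84-style estimate relating acceptance (a low bit-flip rate in the measured bases) to the adversary's ignorance of the encoded string, and since it is essentially Gottesman's theorem I would import his analysis directly rather than redo it. A clean alternative that sidesteps it entirely is to invoke the unconditional existence of a perfectly correct quantum authentication scheme --- for instance the construction of \cite{BCG+02} --- together with the implication, also due to Gottesman, that every quantum authentication scheme is tamper-evident; restricting such a scheme to classical messages then yields a perfectly correct $\delta$-tamper-evident AQECM scheme with messages $\mc{M}$ for any desired $\delta \in \R^+$.
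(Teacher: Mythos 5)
Your proposal is correct and matches the paper's treatment: the paper likewise does not reprove Gottesman's security estimate but simply cites \cite{Got03} for message spaces of the form $\bs{}^n$ and extends to arbitrary alphabets via an injection into bit strings, exactly as you do by identifying $\mc{M}$ with a subset of $\bs{}^\ell$. Your direct verification of perfect correctness and your alternative route through quantum authentication are both sound additions, but the core of the argument — deferring the tamper-evidence bound to Gottesman's analysis — is the same.
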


Gottesman explicitly demonstrates this for cases where $\mc{M}=\bs{}^n$
for any $n \in \N^+$ by constructing
such schemes. This result can be
extended to arbitrary alphabets $\mc{M}'$ by adding an injective
map $i : \mc{M} \to \bs{}^{\ceil{\log_2\abs{\mc{M}}}}$ and any map
$j : \bs{}^{\ceil{\log_2\abs{\mc{M}}}} \to \mc{M}$ such that $j
\circ i$ is the identity on $\mc{M}$. Messages $m \in \mc{M}$ are then
pre-processed with $i$ before being encoded and post-processed by
$j$ after being decoded. It is trivial to see that this preserves
correctness and tamper evidence.

\section{Tamper Evidence Implies Encryption}
\label{sc:te=>enc}

In this section, we show that any sufficiently good tamper-evident
scheme is a good encryption scheme.
This answers, in the positive, one of Gottesman's open questions
\cite{Got03}.

We begin by giving a simple definition of encryption for QECMs and
AQECMs, which states that the ciphertexts corresponding to two different
messages $m_0$ and $m_1$ should be close in trace distance when
averaging over all possible keys. Our definition is essentially the same
as the one given by Ambainis, Mosca, Tapp, and de Wolf \cite{AMTW00},
restricted to only considering the computational basis states and with a
relaxation allowing the resulting ciphertexts to differ slightly in
trace distance. This is captured in the following definition.

\begin{definition}
\label{tee:df:enc}
	Let $\delta \in \R$.
	A QECM or AQECM scheme $(K, E, D)$ is $\delta$-encrypting if, for
	all messages $m_0, m_1 \in \mc{M}$, we have that
	\begin{equation}
		\frac{1}{2}
		\norm{
			\E_{k \gets K(\emptystring)} E_k\left(m_0-m_1\right)
		}_1
		\leq
		\delta.
	\end{equation}
\end{definition}
Note that linearity yields
$
	\E_{k \gets K(\emptystring)}E_k(m_0 - m_1)
	=
	\E_{k \gets K(\emptystring)}E_k(m_0)
	-
	\E_{k \gets K(\emptystring)}E_k(m_1)
$, where the right-hand side is a bit less succinct but a bit more
descriptive than what is used in \cref{tee:df:enc}.

The main theorem we prove in this section, \cref{th:te=>enc}, is that
any $\epsilon$-correct $\delta$-tamper-evident AQECM scheme is
also $\sqrt{19(\delta + \sqrt{2} \epsilon)}$-encrypting.
As discussed in \cref{te:sc:contributions-enc}, our proof follows the
same ideas as the proof given by Barnum, Cr\'epeau, Gottesman, Smith,
and Tapp showing that quantum authentication schemes are also quantum
encryption schemes \cite{BCG+02} and proceeds in three steps:
\begin{enumerate}
	\item
		We first show in \cref{th:benc=>bte} of \cref{sc:benc=>bte} that
		any AQECM scheme which does not offer much security as an
		encryption scheme cannot be very tamper-evident.
	\item
		Next, we show in \cref{th:te-parallel} of
		\cref{sc:aqecm-repetition} that the parallel applications of a
		$\delta$-tamper-evident scheme and a $\delta'$-tamper-evident
		scheme yields a $(\delta + \delta')$-tamper-evident scheme.
		In other words, the parallel applications of AQECM schemes
		yields a \emph{linear} loss of tamper-evident security.
	\item
		Contrasting the above point with an observation from Barnum
		\etal~that the parallel applications of (A)QECM
		schemes yields an \emph{exponential} loss of encryption
		security due to \cref{th:trace-distance-copies}, we can boost
		the weak bound of \cref{th:benc=>bte}.
		This yields our main result of this section, \cref{th:te=>enc}
		of \cref{sc:gte=>genc}.
\end{enumerate}

\subsection{Bad Encryption Implies Bad Tamper Evidence}
\label{sc:benc=>bte}

We show in this subsection that an AQECM scheme which is not a good
encryption scheme cannot be a good tamper-evident scheme.

\begin{theorem}
\label{th:benc=>bte}
	Let $S = (K, E, D)$ be an $\epsilon$-correct AQECM scheme.
	If $S$ is \emph{not} $\delta$-encrypting for $\delta \geq 0$, then
	it is \emph{not}
	$(1 - \sqrt[3]{4(1-\delta)} - \sqrt{2\epsilon})$-tamper-evident.
\end{theorem}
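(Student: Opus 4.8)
The plan is to carry out just the first of the three steps sketched in \cref{te:sc:contributions-enc}: from the failure of $S$ as an encryption scheme I would directly manufacture a tamper attack that the honest receiver accepts with high probability but that still leaks the plaintext, using the coherent gentle measurement of \cref{tee:sc:cgm} to keep the attack minimally disturbing. Since $S$ is not $\delta$-encrypting there are messages $m_0, m_1 \in \mc{M}$ with $\frac{1}{2}\norm{\E_{k \gets K(\emptystring)} E_k(m_0 - m_1)}_1 > \delta$, and as $\E_k E_k(m_0)$ and $\E_k E_k(m_1)$ are density operators, \cref{th:trace-channel} supplies a channel $\Phi : \mc{L}(\tsf{C}) \to \mc{L}(\C^{\bs})$ with $\bra{\bz}\Phi(\E_k E_k(m_0))\ket{\bz} + \bra{\bo}\Phi(\E_k E_k(m_1))\ket{\bo} = 1 + \frac{1}{2}\norm{\E_k E_k(m_0-m_1)}_1 > 1 + \delta$. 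Writing $a_k = \bra{\bz}\Phi(E_k(m_0))\ket{\bz}$ and $b_k = \bra{\bo}\Phi(E_k(m_1))\ket{\bo}$ (both in $[0,1]$) and $c_k^i = \bra{m_i}\overline{D}_k \circ E_k(m_i)\ket{m_i}$, linearity gives $\E_k[(1-a_k)+(1-b_k)] < 1-\delta$, while $\epsilon$-correctness gives $\E_k[1-c_k^i] \le \epsilon$ for $i \in \{0,1\}$. The tamper attack will be $A = \CGM(\Phi) : \mc{L}(\tsf{C}) \to \mc{L}(\tsf{C}\tensor\C^{\bs})$, with side register $\tsf{A} = \C^{\bs}$; operationally, the adversary coherently records the outcome of $\Phi$ and forwards the barely-disturbed ciphertext.

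Next I would produce a per-key lower bound on $X_k := \frac{1}{2}\norm{\big((\Tr_\tsf{M}\circ\overline{D}_k)\tensor\Id_\tsf{A}\big)\circ A\circ E_k(m_0-m_1)}_1$. By \cref{th:cgm}, $\CGM(\Phi)(E_k(m_0))$ is within trace distance $\sqrt{1-a_k^2}$ of $E_k(m_0)\tensor\ketbra{\bz}$, and $\CGM(\Phi)(E_k(m_1))$ within $\sqrt{1-b_k^2}$ of $E_k(m_1)\tensor\ketbra{\bo}$. The post-decryption map $(\Tr_\tsf{M}\circ\overline{D}_k)\tensor\Id_\tsf{A}$ is not a channel, but it factors as the channel $D_k\tensor\Id_\tsf{A}$ followed by conjugation by the contraction $(I_\tsf{M}\tensor\bra{\bo}_\tsf{F})\tensor I_\tsf{A}$ followed by the channel $\Tr_\tsf{M}\tensor\Id_\tsf{A}$, so it does not increase the trace norm of Hermitian operators; moreover it sends $E_k(m_0)\tensor\ketbra{\bz}$ to $\alpha_k^0\,\ketbra{\bz}$ with $\alpha_k^0 := \Tr(\overline{D}_k\circ E_k(m_0)) \ge c_k^0$, and $E_k(m_1)\tensor\ketbra{\bo}$ to $\alpha_k^1\,\ketbra{\bo}$ with $\alpha_k^1 \ge c_k^1$. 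The triangle inequality (using $\frac{1}{2}\norm{\alpha_k^0\ketbra{\bz} - \alpha_k^1\ketbra{\bo}}_1 = \frac{\alpha_k^0+\alpha_k^1}{2}$), followed by $\sqrt{1-x^2} \le \sqrt{2(1-x)}$ for $x\in[0,1]$ and $\sqrt{u}+\sqrt{v}\le\sqrt{2(u+v)}$, then yields the estimate $X_k \ge \frac{c_k^0+c_k^1}{2} - 2\sqrt{(1-a_k)+(1-b_k)}$, in which only the \emph{sum} $(1-a_k)+(1-b_k)$ — the quantity we actually control — appears.

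Finally I would run a good-key argument. Put $\gamma = 4^{-1/3}(1-\delta)^{2/3}$ — the value balancing $2\sqrt{\gamma}$ against $(1-\delta)/\gamma$, which is precisely where the cube root $\sqrt[3]{4(1-\delta)}$ is born — and $\gamma' = \sqrt{2\epsilon}$, and call a key $k$ \emph{good} if $(1-a_k)+(1-b_k) < \gamma$ and $c_k^0, c_k^1 \ge 1-\gamma'$. Markov's inequality (\cref{th:markov}) applied to $(1-a_k)+(1-b_k)$ and to each of $1-c_k^0$, $1-c_k^1$, together with a union bound, gives $\Pr_{k \gets K(\emptystring)}[k \text{ good}] > 1 - \frac{1-\delta}{\gamma} - \frac{2\epsilon}{\gamma'} = 1 - \sqrt[3]{4(1-\delta)} - \sqrt{2\epsilon}$, while the estimate above shows that every good key satisfies $X_k > (1-\gamma') - 2\sqrt{\gamma} = 1 - \sqrt[3]{4(1-\delta)} - \sqrt{2\epsilon} =: \delta'$. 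Hence $\Pr_k[X_k > \delta'] \ge \Pr_k[k\text{ good}] > \delta'$, i.e.\ $\Pr_k[X_k \le \delta'] < 1-\delta'$; witnessed by $m_0, m_1$ and $A$, this is exactly the statement that $S$ is not $\delta'$-tamper-evident. I expect the delicate points to be verifying that the non-trace-preserving post-decryption map is trace-norm contractive (whence the factoring through a contraction) and keeping the constants sharp enough that $2\sqrt{\gamma}$ and $(1-\delta)/\gamma$ both land on $\sqrt[3]{4(1-\delta)}$ and $2\epsilon/\gamma'$ on $\sqrt{2\epsilon}$; the degenerate cases $\epsilon = 0$ and $\delta'\notin[0,1]$ are dealt with directly via \cref{ft:te-bounds}.
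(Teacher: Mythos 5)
Your proposal is correct and follows essentially the same route as the paper's proof: the same attack $\CGM(\Phi)$ built from the optimal distinguisher, the same triangle-inequality comparison of the real post-decryption states against the ideal states $\alpha_k^b\ketbra{b}$, and the same good-key/union-bound argument landing on the identical threshold (your $\gamma = \bigl(\tfrac{1-\delta}{2}\bigr)^{2/3}$ is exactly the paper's $1-\delta'$). The only cosmetic differences are that you use plain Markov on the complementary quantities where the paper invokes its reverse concentration lemma, and elementary square-root inequalities where the paper solves a small optimization, both yielding the same bound $2\sqrt{1-\delta'}$.
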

\begin{proof}
	We begin by identifying pairs of values $(\epsilon, \delta)$ for
	which the claim trivially holds.
	We can assume that $\delta < 1$ as there is no scheme which is
	not $\delta$-encrypting for $\delta \geq 1$. We can further assume
	that $\epsilon \geq 0$ as there does not exist an AQECM
	scheme which is $\epsilon$-correct for $\epsilon < 0$.
	If $1 - \sqrt[3]{4(1-\delta)} - \sqrt{2\epsilon} < 0$,
	then the result trivially holds as no AQECM scheme is
	$\tilde{\delta}$-tamper-evident if $\tilde{\delta} < 0$.
	Thus, we can further assume that
	$1 - \sqrt[3]{4(1-\delta)} - \sqrt{2\epsilon} \geq 0$ for the
	remainder of the proof.

	Next, we identify two messages and a tamper attack which we will use
	to demonstrate that $S$ is not sufficiently tamper-evident.
	As $S$ is not $\delta$-encrypting by assumption, there are two
	messages $m_\bz ,m_\bo \in \mc{M}$ such that
	\begin{equation}
		\frac{1}{2}\norm{
			\E_{k \gets K(\emptystring)} E_k(m_\bz)
			-
			\E_{k \gets K(\emptystring)} E_k(m_\bo)
		}_1
		>
		\delta.
	\end{equation}
	By \cref{th:trace-channel}, this implies the existence of a
	distinguishing channel $
		\Phi : \mc{L}(\tsf{C}) \to \mc{L}(\C^{\bs{}})
	$ satisfying
	\begin{equation}
		\bra{\bz}
			\Phi\left(\E_{k \gets K(\emptystring)} E_k(m_\bz)\right)
		\ket{\bz}
		+
		\bra{\bo}
			\Phi\left(\E_{k \gets K(\emptystring)} E_k(m_\bo)\right)
		\ket{\bo}
		>
		1 + \delta
	\end{equation}
	which, by linearity, also satisfies
	\begin{equation}
	\label{eq:benc=>bte-1}
		\E_{k \gets K(\emptystring)}
		\left(
		\bra{\bz}
			\Phi\left(E_k(m_\bz)\right)
		\ket{\bz}
		+
		\bra{\bo}
			\Phi\left(E_k(m_\bo)\right)
		\ket{\bo}
		\right)
		>
		1 + \delta.
	\end{equation}
	Let $\tsf{A} = \C^{\bs{}}$ be the state space of a single qubit and
	$
		A = \CGM(\Phi):
		\mc{L}(\tsf{C})\to\mc{L}(\tsf{C}\tensor\tsf{A})
	$ be the
	coherent gentle measurement channel obtained from the distinguishing
	channel $\Phi$ as described in \cref{df:cgm}. Note that $A$ is a
	tamper attack against the scheme $S$ and that by \cref{th:cgm} we
	have that
	\begin{equation}
	\label{te:eq:te-benc-cgm}
		\frac{1}{2}\norm{
			A(\rho)
			-
			\rho \tensor \ketbra{b}
			}_1
		\leq
		\sqrt{
			\Tr\left(\rho\right)
			-
			\left(\bra{b} \Phi(\rho)\ket{b}\right)^2
		}
	\end{equation}
	for each $b \in \bs{}$ and all $\rho \in \mc{D}_\bullet(\tsf{C})$.

	Before proceeding, we recall that tamper evidence is defined with
	respect to a probability over keys, and not an expectation.
	Thus, we use the concentration inequality of
	\cref{th:concentration} to translate \cref{eq:benc=>bte-1} to this
	setting and obtain that
	\begin{equation}
	\begin{aligned}
		\Pr_{k \gets K(\emptystring)}
		\left[
			\bra{\bz}
				\Phi\left(E_k(m_0)\right)
			\ket{\bz}
			+
			\bra{\bo}
				\Phi\left(E_k(m_1)\right)
			\ket{\bo}
			>
			1 + \delta'
		\right]
		&>
		\frac{(1 + \delta) - (1 + \delta')}{2 - (1 + \delta')}
		\\&=
		1
		-
		\frac{1-\delta}{1-\delta'}
	\end{aligned}
	\end{equation}
	for a yet-to-be-determined value of $\delta'$ satisfying
	$0 \leq \delta' + 1 < 2$, as required by \cref{th:concentration}.
	Similarly, we use the same concentration inequality and the assumed
	$\epsilon$-correctness of the AQECM scheme to obtain
	\begin{equation}
		\Pr_{k \gets K(\emptystring)}\left[
			\Tr\left[\overline{D}_k \circ E_k(m_b)\right]
			\geq
			1
			-
			\sqrt{2\epsilon}
		\right]
		\geq
		1 - \sqrt{\frac{\epsilon}{2}}
	\end{equation}
	for each $b \in \bs{}$. Note that we are only checking here that
	$D_k$ accepted, not that it also obtained the correct message.
	Combining these previous inequalities via the union bound, we obtain
	that
	\begin{equation}
		\Pr_{k \gets K(\emptystring)}\left[
			\begin{array}{c}
				\bra{\bz} \Phi \circ E_k(m_\bz) \ket{\bz}
				+
				\bra{\bo} \Phi \circ E_k(m_\bo) \ket{\bo}
				> 1 + \delta'
				\\
				\land
				\\
				\Tr\left[\overline{D}_k \circ E_k(m_\bz)\right]
				\geq
				1 - \sqrt{2\epsilon}
				\\
				\land
				\\
				\Tr\left[\overline{D}_k \circ E_k(m_\bo)\right]
				\geq
				1 - \sqrt{2\epsilon}
			\end{array}
		\right]
		>
		1
		-
		\frac{1-\delta}{1-\delta'}
		-
		2\sqrt{\frac{\epsilon}{2}}.
	\end{equation}
	Let $\mc{K}_\text{Good} \subseteq \mc{K}$ be the set of keys
	satisfying the event in the previous inequality.
	In other words, these are the keys which satisfy two conditions: the
	encodings of $m_\bz$ and $m_\bo$ under these keys are distinguished
	sufficiently well by $\Phi$ and, in the absence of any adversarial
	attack, the encodings of $m_\bz$ and $m_\bo$ under these keys will
	be accepted with sufficiently high probability by the decryption
	map. Our goal now is to show that the tamper attack $A$ succeeds
	well enough when a key in $\mc{K}_\text{Good}$ is used. We note
	that, by definition,
	\begin{equation}
		\Pr_{k \gets K(\varepsilon)}\left[k\in\mc{K}_\text{Good}\right]
		>
		1 - \frac{1-\delta}{1-\delta'} - \sqrt{2\epsilon}
	\end{equation}
	and so selecting $\delta'$ in such a way that the right-hand side of
	this inequality is at least $0$ is sufficient to ensure that
	$\mc{K}_\text{Good}$ is non-empty.

	Now, for each $b \in \bs{}$ and key $k \in \mc{K}_\text{Good}$,
	define the operators
	\begin{equation}
		\rho_{k,b}
		=
		\left(
			\left(\Tr_\tsf{M} \circ \overline{D}_k\right)
			\tensor
			\Id_\tsf{A}
		\right)
		\circ
		A
		\circ
		E_k(m_b)
		\qq{and}
		\sigma_{k,b}
		=
		\Tr\left[\overline{D}_k \circ E_k(m_b)\right] \cdot \ketbra{b}.
	\end{equation}
	All of these operators are subnormalized density operators on a
	single qubit, \ie~elements of $\mc{D}_\bullet(\C^{\bs{}})$,  and so
	represent a possible state for the system held by an eavesdropping
	adversary who used the tamper attack $A$ against
	the scheme $S$, scaled by the probability that this attack was
	undetected.
	The $\rho$ operators correspond to the ``real''
	scenario where the tamper attack $A$ is actually applied.
	The $\sigma$ operators, on the other hand, correspond to an
	``ideal'' scenario where $A$ is not actually applied, but where the
	adversary is simply given directly the bit corresponding to the
	message which was encoded.

	We pause a moment to emphasize that if we can show that for
	every $k \in \mc{K}_\text{Good}$ we have that
	\begin{equation}
		\frac{1}{2}\norm{\rho_{k,\bz} - \rho_{k,\bo}}
		>
		1
		-
		\frac{1 - \delta}{1 - \delta'} - \sqrt{2\epsilon}
	\end{equation}
	then we will have shown that $(K, E, D)$ is not $
		\left(1 - \frac{1-\delta}{1-\delta'} - \sqrt{2\epsilon}\right)
	$-tamper-evident.
	This is now our goal.

	For any key $k \in \mc{K}_\text{Good}$, we obtain via the triangle
	inequality that
	\begin{equation}
		\norm{\sigma_{k,\bz} - \sigma_{k,\bo}}_1
		\leq
		\norm{\sigma_{k,\bz} - \rho_{k,\bz}}_1
		+
		\norm{\rho_{k,\bz} - \rho_{k,\bo}}_1
		+
		\norm{\rho_{k,\bo} - \sigma_{k,\bo}}_1,
	\end{equation}
	which in turn implies that
	\begin{equation}
		\norm{\rho_{k,\bz} - \rho_{k,\bo}}_1
		\geq
		\norm{\sigma_{k,\bz} - \sigma_{k,\bo}}_1
		-
		\norm{\rho_{k,\bz} - \sigma_{k,\bz}}_1
		-
		\norm{\rho_{k,\bo} - \sigma_{k,\bo}}_1.
	\end{equation}
	Thus, a lower-bound on $\norm{\rho_{k,\bz} - \rho_{k,\bo}}_1$ can be
	found by finding a lower bound on
	$\norm{\sigma_{k,\bz} - \sigma_{k,\bo}}_1$ and an upper bound on
	$\sum_{b \in \bs{}} \norm{\rho_{k,b} - \sigma_{k,b}}_1$.

	By direct computation, we find that
	\begin{equation}
		\norm{
			\sigma_{k,0} - \sigma_{k,1}
		}_1
		=
		\Tr\left[\overline{D}_k \circ E_k(m_\bz)\right]
		+
		\Tr\left[\overline{D}_k \circ E_k(m_\bo)\right]
		\geq
		2 - 2\sqrt{2\epsilon}
	\end{equation}
	where the inequality follows from the assumption that $k \in
	\mc{K}_\text{Good}$.
	Next, by \cref{te:eq:te-benc-cgm}, we have that
	\begin{equation}
	\label{tee:eq:opt}
		\sum_{b \in \bs{}}
		\frac{1}{2}
		\norm{A(E_k(m_b)) - E_k(m_b) \tensor \ketbra{b}}_1
		\leq
		\sum_{b \in \bs{}}
		\sqrt{1 - \left(\bra{b} \Phi(E_k(m_b))
		\ket{b}\right)^2}.
	\end{equation}
	While we do not have much knowledge on the individual values of
	$\sqrt{1 - \left(\bra{b} \Phi(E_k(m_b))\ket{b}\right)^2}$ for each
	value of $b \in \bs{}$, we do know that
	$\sum_{b \in \{0,1\}}\bra{b}\Phi(E_k(m_b))\ket{b} > 1 + \delta'$ by
	virtue of $k \in \mc{K}_\text{Good}$.
	Solving the optimization problem of maximizing the right-hand
	side of \cref{tee:eq:opt} under this constraint, we find that
	\begin{equation}
	\begin{aligned}
		\sum_{b \in \bs{}}
		\frac{1}{2}
		\norm{A(E_k(m_b)) - E_k(m_b) \tensor \ketbra{b}}_1
		&<
		2\sqrt{1 - \left(\frac{1+\delta'}{2}\right)^2}
		\\&=
		\sqrt{4 - (1 + \delta')^2}
		\\&
		<
		2\sqrt{1-\delta'}
	\end{aligned}
	\end{equation}
	where the last inequality follows by algebraic manipulation and
	dropping a few terms,\footnote{%
		Specifically, $
			\sqrt{4 - (1 + \delta')^2}
			=
			\sqrt{4 - (2 + (\delta' -1))^2}
			=
			\sqrt{4(1-\delta') - (\delta'-1)^2}
			<
			\sqrt{4(1-\delta')}$.}
	which we do to simplify later arguments.
	Now, as the trace distance is contractive under the map
	$\left(\Tr_\tsf{M} \circ \overline{D}_k\right) \tensor \Id_\tsf{A}$
	and  applying this map to the states in the above trace distance
	will allows us to recover our $\rho$ and $\sigma$ operators, we have
	that
	\begin{equation}
		\sum_{b \in \bs{}}
		\frac{1}{2}
		\norm{
			\rho_{k,b} - \sigma_{k,b}
		}_1
		\leq
		2\sqrt{1-\delta'}.
	\end{equation}
	Combining all the above, we obtain that
	\begin{equation}
		\frac{1}{2}
		\norm{
			\rho_{k,\bz} - \rho_{k,\bo}
		}_1
		> 1 - 2\sqrt{1-\delta'} - \sqrt{2\epsilon}.
	\end{equation}
	for all keys $k \in \mc{K}_\text{Good}$.

	It now suffices to fix the value of $\delta'$.
	We choose to take
	$\delta' = 1 - \left(\frac{1-\delta}{2}\right)^{2/3}$ as this
	implies that $
		2\sqrt{1-\delta'}
		=
		\frac{1-\delta}{1-\delta'}
		=
		\sqrt[3]{4(1-\delta)}
	$.
	In particular, we see that $\delta < 1$ implies that $\delta' < 1$,
	as required, and that $\mc{K}_\text{Good} \not= \emptyset$ as $
		\Pr_{k \gets K(\emptystring)}[k \in \mc{K}_\text{Good}]
		>
		1 - \sqrt[3]{4(1-\delta)} - \sqrt{2\epsilon}
		\geq
		0
	$.
	Thus,
	\begin{equation}
		\Pr_{k \gets K(\emptystring)}
			\left[
				\frac{1}{2}
				\norm{
					\rho_{k,0} - \rho_{k,1}
				}_1
				>
				1-\sqrt[3]{4(1-\delta)} - \sqrt{2\epsilon}
			\right]
			>
			1-\sqrt[3]{4(1-\delta)} - \sqrt{2\epsilon}
	\end{equation}
	from which we conclude that the scheme $S$ is not
	$(1-\sqrt[3]{4(1-\delta)} - \sqrt{2\epsilon})$-tamper-evident.
\end{proof}

\subsection{On The Parallel Composition of AQECM Schemes}              %
\label{sc:aqecm-repetition}                                            %

We study here the parallel composition of
AQECM schemes and how it affects the correctness and
tamper evidence of such schemes.

The parallel composition of two AQECM schemes $S'$ and $S''$ is a
scheme $S = S' \tensor S''$ which encodes pairs of messages $(m', m'')$
with pairs of independent keys $(k', k'')$ to obtain pairs of
ciphertexts $(\rho', \rho'')$. The first component of
these pairs is treated with the scheme $S'$ and the second with scheme
$S''$. Note that, formally, we will model these pairs as tensor
products, such as $\rho'\tensor\rho''$.
Ciphertexts are also decoded independently, \emph{except} for the
production of the single flag qubit necessary for a well defined AQECM.
The flag qubit produced by the overall scheme $S$ will be the result of
the logical-and of the flag qubits independently produced by the
component schemes $S'$ and $S''$. In other words, $S$ accepts if and
only if both of the schemes $S'$ and $S''$ accept their respective
ciphertexts.

This construction is formalized in \cref{df:aqecm-parallel}. Note that
$\Swap$ channels are included to ensure that keys, plaintexts, and
ciphertexts are correctly ordered and distributed to the component
schemes. The $V$ channel, for its part, takes the two flag qubits
produced by the component schemes and computes their logical-and.

\begin{definition}
\label{df:aqecm-parallel}
	Let $S' = (K', E', D')$ and $S'' = (K'', E'', D'')$ be two
	AQECM schemes as given in \cref{df:aqecm} where the notation for all
	associated alphabets and Hilbert spaces carry an additional prime
	$'$ or double prime $''$ to clarify with which scheme it is
	associated.

	We define the parallel composition of these schemes, denoted
	$S' \tensor S''$, as the triplet of channels $(K, E, D)$ defined by
	\begin{align}
		K
		&=
		K' \tensor K''
		\\
		E
		&=
		\left(E' \tensor E''\right)
		\circ
		\Swap
		_{\tsf{K}',  \tsf{K}'', \tsf{M}', \tsf{M}''}
		^{(1, 3, 2, 4)}
		\\
		D
		&=
		V
		\circ
		\left(D' \tensor D''\right)
		\circ
		\Swap
		_{\tsf{K}', \tsf{K}'', \tsf{C}', \tsf{C}''}
		^{(1, 3, 2, 4)}
	\end{align}
	where $V$ is the channel defined by
	\begin{equation}
		\rho
		\mapsto
		\sum_{b', b'' \in \{\bz,\bo\}}
		\left(
			I_{\tsf{M}'}
			\tensor
			\bra{b'}
			\tensor
			I_{\tsf{M}''}
			\tensor
			\bra{b''}
		\right)
		\rho
		\left(
			I_{\tsf{M}'}
			\tensor
			\ket{b'}
			\tensor
			I_{\tsf{M}''}
			\tensor
			\ket{b''}
		\right)
			\tensor
			(b' \land b'')
	\end{equation}
	where $b' \land b'' \in \bs{}$ and $b' \land b'' = \bo \iff b' = b''
	= \bo$.
	The channel $V$ can be understood as measuring and discarding the
	flag qubits produced by $D'$ and $D''$ in the computational basis
	and outputing a new flag qubit whose state is the logical-and
	of these measurement outcomes.

	Note that this yields an AQECM whose keys are the elements of
	$\mc{K} = \mc{K}' \times \mc{K}''$ and whose messages are the
	elements of $\mc{M} = \mc{M}' \times \mc{M}''$.
	Further note that, for any key $(k',k'') \in \mc{K}'\times\mc{K}''$,
	we have that
	\begin{equation}
		E_{(k',k'')} = E'_{k'} \tensor E''_{k''},
		\qq*{}
		D_{(k',k'')} = V \circ (D'_{k'} \tensor D''_{k''}),
		\qq{and}
		\overline{D}_{(k',k'')} = \overline{D'}_{k'} \tensor
		\overline{D''}_{k''}.
	\end{equation}
\end{definition}

We first show how correctness is preseved under this construction.

\begin{theorem}
\label{th:aqecm-parallel}
	Let $S'$ and $S''$ be two AQECM schemes, as given in
	\cref{df:aqecm-parallel}, which are  $\epsilon'$- and
	$\epsilon''$-correct, respectively.
	Then, $S = S' \tensor S''$ is $(\epsilon' + \epsilon'')$-correct.
\end{theorem}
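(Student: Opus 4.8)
The plan is to reduce the correctness of the composed scheme, key by key, to the correctness of the component schemes, exploiting the independence of the two key-generation channels.

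First I would fix an arbitrary message $(m',m'') \in \mc{M}' \times \mc{M}''$ of $S = S' \tensor S''$. Invoking the identities recorded at the end of \cref{df:aqecm-parallel}, namely $E_{(k',k'')} = E'_{k'} \tensor E''_{k''}$ and $\overline{D}_{(k',k'')} = \overline{D'}_{k'} \tensor \overline{D''}_{k''}$, together with the identification $\ket{(m',m'')} = \ket{m'} \tensor \ket{m''}$ and the interchange law $(\Phi' \tensor \Phi'') \circ (\Psi' \tensor \Psi'') = (\Phi' \circ \Psi') \tensor (\Phi'' \circ \Psi'')$, the quantity appearing inside the expectation of \cref{df:aqecm-correct} factorizes:
\begin{equation}
	\bra{(m',m'')}
		\overline{D}_{(k',k'')} \circ E_{(k',k'')}\big((m',m'')\big)
	\ket{(m',m'')}
	=
	p'_{k'} \cdot p''_{k''},
\end{equation}
where $p'_{k'} = \bra{m'} \overline{D'}_{k'} \circ E'_{k'}(m') \ket{m'}$ and $p''_{k''}$ is defined analogously. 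Each factor is a real number in $[0,1]$: it is non-negative because $\overline{D'}_{k'} \circ E'_{k'}(m')$ is positive semidefinite, and it is bounded above by $\Tr\big(\overline{D'}_{k'} \circ E'_{k'}(m')\big) \leq \Tr\big(E'_{k'}(m')\big) = 1$ since $\overline{D'}_{k'}$ is trace non-increasing on positive operators.

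Next I would use $K = K' \tensor K''$, so that sampling $(k',k'') \gets K(\emptystring)$ is the same as sampling $k' \gets K'(\emptystring)$ and $k'' \gets K''(\emptystring)$ independently. Independence of the two factors then gives
\begin{equation}
	\E_{(k',k'') \gets K(\emptystring)} p'_{k'} \cdot p''_{k''}
	=
	\Big(\E_{k' \gets K'(\emptystring)} p'_{k'}\Big)
	\Big(\E_{k'' \gets K''(\emptystring)} p''_{k''}\Big)
	\geq
	(1-\epsilon')(1-\epsilon''),
\end{equation}
where the inequality combines the assumed $\epsilon'$- and $\epsilon''$-correctness of $S'$ and $S''$ with the non-negativity of the two factors. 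Since no AQECM is $\epsilon$-correct for $\epsilon < 0$, we have $\epsilon', \epsilon'' \geq 0$, hence $(1-\epsilon')(1-\epsilon'') = 1 - \epsilon' - \epsilon'' + \epsilon'\epsilon'' \geq 1 - (\epsilon'+\epsilon'')$, which is exactly the bound required for $(\epsilon'+\epsilon'')$-correctness. The only edge case is when $\epsilon' \geq 1$ or $\epsilon'' \geq 1$, so that the displayed chain is no longer informative; but then $1 - (\epsilon'+\epsilon'') \leq 0$ and the conclusion holds trivially because the expectation is non-negative.

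There is no substantial obstacle here; the only points requiring care are the factorization $\overline{D}_{(k',k'')} = \overline{D'}_{k'} \tensor \overline{D''}_{k''}$ and the fact that each $p'_{k'}$ lies in $[0,1]$. Both follow directly from the structure of $\overline{D}$ and the logical-and channel $V$ as spelled out in \cref{df:aqecm-parallel}: reducing to the $\bo$-component of the composed flag qubit is precisely reducing to the sub-block on which both component flags equal $\bo$, which is what $\overline{D'} \tensor \overline{D''}$ extracts.
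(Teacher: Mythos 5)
Your proposal is correct and follows essentially the same route as the paper's proof: factorize the acceptance probability via $\overline{D}_{(k',k'')} = \overline{D'}_{k'} \tensor \overline{D''}_{k''}$, use independence of the key sampling to split the expectation into a product, and conclude with $(1-\epsilon')(1-\epsilon'') \geq 1-(\epsilon'+\epsilon'')$. Your extra care about non-negativity of the factors and the degenerate case $\epsilon'+\epsilon'' \geq 1$ is sound but not a substantive difference.
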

\begin{proof}
	Let $m = (m',m'') \in \mc{M}' \times \mc{M}''$ be a message for $S$.
	We have that
	\begin{align}
		&
		\E_{k \gets K(\emptystring)}
		\bra{m}
			\overline{D}_{k}
			\circ
			E_{k}
			(m)
		\ket{m}
		\\&=
		\E_{\substack{k' \gets K'(\emptystring) \\ k'' \gets
		K''(\emptystring)}}
		\bra{m', m''}
			\overline{D'}_{k'}
			\circ
			E'_{k'}
			(m')
			\tensor
			\overline{D''}_{k''}
			\circ
			E''_{k''}
			(m'')
		\ket{m', m''}
		\\&=
		\E_{k' \gets K'(\emptystring)}
		\bra{m'}
			\overline{D'}_{k'}
			\circ
			E'_{k'}
			(m')
		\ket{m'}
		\E_{k'' \gets K''(\emptystring)}
		\bra{m^1}
			\overline{D''}_{k''}
			\circ
			E_{k''}
			(m'')
		\ket{m''}
		\\&\geq
		(1 - \epsilon') \cdot (1 - \epsilon'')
		\\&\geq
		1 - (\epsilon' + \epsilon'')
	\end{align}
	which is the desired result.
\end{proof}

We now show how tamper evidence is preserved under parallel composition.

\begin{theorem}
\label{th:te-parallel}
	Let $S'$ and $S''$ be two AQECM schemes, as given in
	\cref{df:aqecm-parallel}, which are $\delta'$- and
	$\delta''$-tamper-evident, respectively.
	Then, $S = S' \tensor S''$ is
	$(\delta' + \delta'')$-tamper-evident.
\end{theorem}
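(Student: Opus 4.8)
The plan is a two-step hybrid argument: interpolate between the two plaintexts of $S$ through a ``mixed'' intermediate plaintext and reduce each leg of the hybrid to the tamper-evidence of one of the two component schemes. Fix messages $m = (m_1', m_1'')$ and $\tilde{m} = (m_2', m_2'')$ of $S$ and a tamper attack $A$ against $S$. For a key $(k', k'') \in \mc{K}' \times \mc{K}''$ let $R_{k',k''}$ denote the linear map $\left(\left(\Tr_\tsf{M} \circ \overline{D}_{(k',k'')}\right) \tensor \Id_\tsf{A}\right) \circ A \circ E_{(k',k'')}$ appearing in \cref{df:te-security}, and set $m^* = (m_1', m_2'')$. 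By the triangle inequality, for every key,
\[
	\frac{1}{2}\norm{R_{k',k''}(m - \tilde{m})}_1
	\leq
	\frac{1}{2}\norm{R_{k',k''}(m - m^*)}_1
	+
	\frac{1}{2}\norm{R_{k',k''}(m^* - \tilde{m})}_1 .
\]
So it suffices to prove $\Pr_{(k',k'') \gets K(\emptystring)}\left[\frac{1}{2}\norm{R_{k',k''}(m - m^*)}_1 \leq \delta''\right] \geq 1 - \delta''$, the symmetric statement for the other summand with $\delta'$, and then to combine them by a union bound.

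For the first bound I reduce to the $\delta''$-tamper-evidence of $S''$. Fix $k'$. Since $E_{(k',k'')} = E'_{k'} \tensor E''_{k''}$ and $m - m^* = \ketbra{m_1'} \tensor \left(\ketbra{m_1''} - \ketbra{m_2''}\right)$, the first-scheme ciphertext $E'_{k'}(m_1')$ factors out as a fixed state that is the same on both sides of the difference. From $A$ I build a tamper attack $A''_{k',m_1'}$ against $S''$, with side register $\tsf{A}'' = \tsf{A} \tensor \tsf{F}'$, which on input $\sigma'' \in \mc{L}(\tsf{C}'')$ adjoins the state $E'_{k'}(m_1')$ on a register $\tsf{C}'$, applies $A$ across $\tsf{C}' \tensor \tsf{C}''$, applies the channel $D'_{k'}$ to the $\tsf{C}'$-output, discards the message register $\tsf{M}'$, and keeps the flag register $\tsf{F}'$ together with $\tsf{A}$ (the $\Swap$ channels of \cref{df:aqecm-parallel} only reorder registers and I suppress them). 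Using $\overline{D}_{(k',k'')} = \overline{D'}_{k'} \tensor \overline{D''}_{k''}$ and the fact that $\Tr_{\tsf{M}'} \circ \overline{D'}_{k'}$ is exactly ``apply $D'_{k'}$, discard $\tsf{M}'$, then conjugate the flag register by $\ket{\bo}_{\tsf{F}'}$'', one checks the identity
\[
	R_{k',k''}(m - m^*)
	=
	\left(I_\tsf{A} \tensor \bra{\bo}_{\tsf{F}'}\right)
	\left[
		\left(\left(\Tr_{\tsf{M}''} \circ \overline{D''}_{k''}\right) \tensor \Id_{\tsf{A}''}\right)
		\circ A''_{k',m_1'} \circ E''_{k''}(m_1'' - m_2'')
	\right]
	\left(I_\tsf{A} \tensor \ket{\bo}_{\tsf{F}'}\right) .
\]
Conjugation by the isometry $I_\tsf{A} \tensor \ket{\bo}_{\tsf{F}'}$ cannot increase the Schatten $1$-norm, so $\frac{1}{2}\norm{R_{k',k''}(m - m^*)}_1$ is bounded by $\frac{1}{2}\norm{\cdot}_1$ of the bracketed operator. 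Invoking $\delta''$-tamper-evidence of $S''$ for the attack $A''_{k',m_1'}$ and the messages $m_1'', m_2''$ gives $\Pr_{k'' \gets K''(\emptystring)}\left[\frac{1}{2}\norm{R_{k',k''}(m - m^*)}_1 \leq \delta''\right] \geq 1 - \delta''$ for this fixed $k'$; since $K = K' \tensor K''$ samples $k'$ and $k''$ independently, averaging over $k' \gets K'(\emptystring)$ yields the claimed bound $\Pr_{(k',k'') \gets K(\emptystring)}\left[\frac{1}{2}\norm{R_{k',k''}(m - m^*)}_1 \leq \delta''\right] \geq 1 - \delta''$.

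An identical argument with the roles of $S'$ and $S''$ exchanged — now fixing $k''$ and absorbing $E''_{k''}(m_2'')$ and $D''_{k''}$ into a tamper attack against $S'$ — gives $\Pr_{(k',k'') \gets K(\emptystring)}\left[\frac{1}{2}\norm{R_{k',k''}(m^* - \tilde{m})}_1 \leq \delta'\right] \geq 1 - \delta'$. A union bound then shows that with probability at least $1 - \delta' - \delta''$ over $(k',k'') \gets K(\emptystring)$ both trace distances are at most $\delta''$ and $\delta'$ respectively, whence $\frac{1}{2}\norm{R_{k',k''}(m - \tilde{m})}_1 \leq \delta' + \delta''$ by the displayed triangle inequality. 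As $m$, $\tilde{m}$, and $A$ were arbitrary, $S$ is $(\delta' + \delta'')$-tamper-evident. The step I expect to require the most care is the verification of the displayed identity for $R_{k',k''}(m - m^*)$: the ``accept'' event of the untouched scheme $S'$ contributes the trace-\emph{decreasing} map $\overline{D'}_{k'}$, which cannot itself be packaged inside a (trace-preserving) tamper attack; the resolution is to keep the full channel $D'_{k'}$ and its flag register inside the new attack and to perform only the final postselection onto $\bo$ outside it, exploiting that such a postselection is a Schatten-$1$-norm contraction.
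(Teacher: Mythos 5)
Your proposal is correct and follows essentially the same route as the paper's proof: the same hybrid decomposition through an intermediate message, the same reduction in which the attack against one component scheme internally generates the other scheme's ciphertext, applies its full decoding channel, and keeps the flag register in the side register, with the postselection onto $\bo$ performed outside the attack and controlled by a Schatten-$1$-norm contraction, followed by a union bound. The one step you flag as delicate is exactly the step the paper also handles this way.
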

\begin{proof}
	Fix a tamper attack $
		A :
		\mc{L}(\tsf{C})
		\to
		\mc{L}(\tsf{C} \tensor \tsf{A})
	$ against the scheme $S$.
	Now, for any key $k = (k', k'') \in \mc{K}'\times\mc{K}''$
	and any message $m = (m', m'') \in \mc{M}'\times\mc{M}''$,
	let $\rho_{k,m} \in \mc{D}_\bullet(\tsf{A})$ be the
	subnormalized state defined by
	\begin{equation}
		\rho_{k,m}
		=
		\left(\Tr_\tsf{M} \tensor \Id_\tsf{A}\right)
		\circ
		\left(\overline{D}_{k} \tensor \Id_\tsf{A}\right)
		\circ
		A
		\circ
		E_k(m).
	\end{equation}
	For the remainder of the proof, fix two messages $m_0=(m'_0,m''_0)$
	and $m_1 = (m'_1, m''_1)$ in $\mc{M}$.
	It suffices to show that
	\begin{equation}
		\Pr_{k \gets K(\emptystring)}
		\left[
			\frac{1}{2}
			\norm{
				\rho_{k,m_0} - \rho_{k,m_1}
			}_1
			\leq \delta_0 + \delta_1
		\right]
		\geq
		1 - (\delta_0 + \delta_1).
	\end{equation}
	By the triangle inequality, we have for any key $k \in \mc{K}$ that
	\begin{equation}
		\frac{1}{2}
		\norm{
			\rho_{k,(m'_0,m''_0)}
			-
			\rho_{k,(m'_1,m''_1)}
		}_1
		\leq
		\underbrace{
		\frac{1}{2}
		\norm{
			\rho_{k,(m'_0,m''_0)}
			-
			\rho_{k,(m'_1,m''_0)}
		}_1
		}_{= N'_k}
		+
		\underbrace{
		\frac{1}{2}
		\norm{
			\rho_{k,(m'_1,m''_0)}
			-
			\rho_{k,(m'_1,m''_1)}
		}_1
		}_{= N''_k}
	\end{equation}
	where we note the presence of the subnormalized state
	$\rho_{k,(m'_1,m''_0)}$, corresponding to the ``hybrid'' message
	$(m'_1,m''_0)$, in both norms in the right-hand side of the above.
	It now suffices to show that $N'_k$ is at most $\delta'$ with
	probability at least $1 - \delta'$ over the sampling of $k$ and
	that $N''_k$ is at most $\delta''$ with probability at least
	$1 - \delta''$, over the sampling of $k$.
	Indeed, by the union bound, this would then imply that the
	probability over the sampling of the key $k$ that both
	$N'_k \leq \delta'$ and $N''_k \leq \delta''$ is at least
	$1 - (\delta' + \delta'')$, which yields the desired result.

	We show that $
		\Pr_{k \gets K(\emptystring)}\left[N'_k \leq \delta'\right]
		\geq 1 - \delta'
	$ by constructing a certain
	family of tamper attacks $\{A'_{k''}\}_{k'' \in \mc{K}''}$ against
	the scheme $S'$.
	The corresponding inequality for $N''_k$ can be obtained completely
	analogously and will not be explicitly proven here.

	For all keys $k'' \in \mc{K}''$ we define the tamper attack $
		A'_{k''}
		:
		\mc{L}(\tsf{C}')
		\to
		\mc{L}(\tsf{C}' \tensor \tsf{A}')
	$ against the scheme $S'$, where
	$\tsf{A}'=\C^{\bs{}}\tensor\tsf{A}$, by
	\begin{equation}
		\rho
		\mapsto
		\left(
		\left(
			\Id_{\tsf{C}'}
			\tensor
			\left(\Tr_{\tsf{M}''} \tensor \Id_{\C^{\bs{}}}\right)
			\circ D''_{k''}\right)
		\right)
		\circ
		A
		\circ
		\left(
			\rho
			\tensor
			E_{k''}^1(m''_0)
		\right).
	\end{equation}
	In other words, $A'_{k''}$ attacks $S'$ by first generating the
	state $E''_{k''}(m''_0)$, where $k''$ parametrizes the attack and
	$m''_0$ is determined by the fixed message $m_0$ we are considering
	in this proof, and then running
	the attack $A$ against the ciphertext obtained by combining this
	state with the one intercepted from the honest users.
	The attack $A'_{k''}$ then applies the decoding channel $D''_{k''}$
	on the $\tsf{C}''$ portion of the ciphertext outputted by $A$,
	discards the corresponding message register, but \emph{keeps} the
	flag qubit.
	For all keys $k' \in \mc{K}'$ and messages $m' \in \mc{M}'$, let
	$\sigma_{(k',k''),m'} \in \mc{D}_\bullet(\tsf{A}')$ be the
	subnormalized state defined by
	\begin{equation}
		\sigma_{(k',k''),m'}
		=
		\left(\Tr_{\tsf{M}'} \tensor \Id_{\tsf{A}'}\right)
		\circ
		\overline{D'}_{k'}
		\circ
		A'_{k''}
		\circ
		E'_{k'}
		(m').
	\end{equation}
	Since $S'$ is $\delta'$-tamper-evident, we have that
	\begin{equation}
		\Pr_{k' \gets K'(\emptystring)}\left[
			\frac{1}{2}
			\norm{
				\sigma_{(k',k''),m'_0}
				-
				\sigma_{(k',k''),m'_1}
			}_1
			\leq
			\delta'
		\right]
		\geq
		1 - \delta'
	\end{equation}
	for all $k'' \in \mc{K}''$.
	Indeed, the various keys $k''$ simply represents different tamper
	attacks against the scheme $S'$ and, by our assumption on the
	tamper-evident security of this scheme, the above inequality must
	hold for \emph{all} attacks.
	Since sampling $(k', k'')$ from $K(\varepsilon)$ is
	equivalent to sampling~$k'$ from $K'(\varepsilon)$ and $k''$ from
	$K''(\varepsilon)$ independently, the above implies that
	\begin{equation}
		\Pr_{(k',k'') \gets K(\emptystring)}\left[
			\frac{1}{2}
			\norm{
				\sigma_{(k',k''),m'_0}
				-
				\sigma_{(k',k''),m'_1}
			}_1
			\leq
			\delta'
		\right]
		\geq
		1 - \delta'.
	\end{equation}
	To complete the proof, it now suffices to show that
	\begin{equation}
		\frac{1}{2}\norm{
			\rho_{(k',k''),(m'_0,m''_0)}
			-
			\rho_{(k',k''),(m'_1,m''_0)}
		}_1
		\leq
		\frac{1}{2}\norm{
			\sigma_{(k',k''),m'_0}
			-
			\sigma_{(k',k''),m'_1}
		}_1
	\end{equation}
	for all keys $(k',k'') \in \mc{K}$.

	Recall that $\overline{D''}_{k''}$ is defined by $
		\rho
		\mapsto
		(I_{\tsf{M}''} \tensor \bra{\bo}_{\tsf{F}''})
		D''_{k''}(\rho)
		(I_{\tsf{M}''} \tensor \ket{\bo}_{\tsf{F}''})
	$.
	Moreover, note that the only difference between the subnormalized
	states $\rho_{(k',k''),(m',m''_0)}$ and $\sigma_{(k',k''),m'}$ is
	that the latter is obtained by applying
	$\left(\Tr_{\tsf{M}''}\tensor\Id_{\C^{\bs{}}}\right)\circ D''_{k''}$
	to the $\tsf{C}''$ register
	produced by the channel $A$ and the former obtained by applying
	$\Tr_{\tsf{M}''} \circ \overline{D''}_{k''}$ to this same register.
	Thus, for any message $m' \in \mc{M}''$ and any
	key $(k',k'') \in \mc{K}$, we have that
	\begin{equation}
		\rho_{(k',k''),(m',m''_0)}
		=
		(\bra{\bo} \tensor I_{\tsf{A}})
		\sigma_{(k',k''),m'}
		(\ket{\bo} \tensor I_{\tsf{A}})
	\end{equation}
	from which it follows, using a well-known property of the operator
	norm $\norm{\cdot}_\infty$ (\eg~\cite[Eq.~1.175]{Wat18}), that
	\begin{equation}
	\begin{aligned}
		&
		\frac{1}{2}\norm{
			\rho_{(k',k''),(m'_0,m''_0)}
			-
			\rho_{(k',k''),(m'_1,m''_0)}
		}_1
		\\&\leq
		\frac{1}{2}\norm{
			(\bra{\bo} \tensor I_{\tsf{A}})
			\left(
				\sigma_{(k',k''),m'_0}
				-
				\sigma_{(k',k''),m'_1}
			\right)
			(\ket{\bo} \tensor I_{\tsf{A}})
		}_1
		\\&\leq
		\frac{1}{2}
		\norm{
			(\bra{\bo} \tensor I_{\tsf{A}})
		}_\infty
		\cdot
		\norm{
			\left(
				\sigma_{(k',k''),m'_0}
				-
				\sigma_{(k',k''),m'_1}
			\right)
		}_1
		\cdot
		\norm{
			\ket{\bo} \tensor I_{\tsf{A}}
		}_\infty
		\\&=
		\frac{1}{2}
		\norm{
			\left(
				\sigma_{(k',k''),m'_0}
				-
				\sigma_{(k',k''),m'_1}
			\right)
		}_1,
	\end{aligned}
	\end{equation}
	since $
		\norm{
			\bra{\bo} \tensor I_{\tsf{A}}
		}_\infty
		=
		\norm{
			\ket{\bo} \tensor I_{\tsf{A}}
		}_\infty
		=
		1
	$.
\end{proof}

Before proceeding, we note that the repeated parallel
composition of an AQECM scheme with itself is easily defined.

\begin{definition}
\label{df:aqecm-nfold}
	Let $S$ be an AQECM scheme.
	For all $n \in \N^+$, we define the $n$-fold parallel repetition of
	$S$, denoted by $S^{\tensor n}$, by
	\begin{equation}
		S^{\tensor n}
		=
		\begin{cases}
			S & \qq{if} n = 1 \\
			S^{\tensor(n-1)} \tensor S & \qq{if} n > 1.
		\end{cases}
	\end{equation}
\end{definition}

Finally, the results on how correctness and tamper-evident security is
preserved under parallel composition generalizes as expected to the
repeated parallel composition of AQECM schemes with themselves.

\begin{lemma}
\label{th:aqecm-nfold}
	Let $S$ be an AQECM scheme which is $\epsilon$-correct and
	$\delta$-tamper-evident.
	Then, for all $n \in \N^+$, the scheme
	$S^{\tensor n}$ is $(n \epsilon)$-correct and
	$(n \delta)$-tamper-evident.
\end{lemma}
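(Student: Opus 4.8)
The plan is to proceed by induction on $n$, leveraging the recursive definition of $S^{\tensor n}$ in \cref{df:aqecm-nfold} together with the two composition results already established, namely \cref{th:aqecm-parallel} (parallel composition incurs an additive loss $\epsilon' + \epsilon''$ in correctness) and \cref{th:te-parallel} (parallel composition incurs an additive loss $\delta' + \delta''$ in tamper-evidence).

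For the base case $n = 1$, we simply note that $S^{\tensor 1} = S$ by \cref{df:aqecm-nfold}, and $S$ is $\epsilon$-correct and $\delta$-tamper-evident by hypothesis; since $1 \cdot \epsilon = \epsilon$ and $1 \cdot \delta = \delta$, the claim holds. For the inductive step, suppose $n > 1$ and that $S^{\tensor(n-1)}$ is $((n-1)\epsilon)$-correct and $((n-1)\delta)$-tamper-evident. By \cref{df:aqecm-nfold} we have $S^{\tensor n} = S^{\tensor(n-1)} \tensor S$. Applying \cref{th:aqecm-parallel} with $S' = S^{\tensor(n-1)}$ and $S'' = S$ — which are $((n-1)\epsilon)$- and $\epsilon$-correct respectively — shows $S^{\tensor n}$ is $\bigl((n-1)\epsilon + \epsilon\bigr) = (n\epsilon)$-correct. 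Applying \cref{th:te-parallel} to the same pair of schemes — which are $((n-1)\delta)$- and $\delta$-tamper-evident respectively — shows $S^{\tensor n}$ is $\bigl((n-1)\delta + \delta\bigr) = (n\delta)$-tamper-evident. This closes the induction.

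I do not expect a genuine obstacle here: the argument is a routine telescoping induction, and the only points requiring minor care are bookkeeping in nature. First, the message and key alphabets of $S^{\tensor n}$ are relabelled as iterated Cartesian products under composition; this is harmless because both the correctness and tamper-evidence quantifications in \cref{df:aqecm-correct} and \cref{df:te-security} are universally quantified over all messages, so the particular labelling of $\mc{M}$ is immaterial. Second, \cref{df:aqecm-nfold} composes on the left, forming $S^{\tensor(n-1)} \tensor S$ rather than $S \tensor S^{\tensor(n-1)}$, which is precisely the order in which I invoke the two composition theorems above, so no symmetry argument is even needed.
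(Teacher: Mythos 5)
Your proof is correct and is exactly the argument the paper intends: the paper's own proof is the one-line remark that the claim ``follows directly from the previous two results by induction,'' which is precisely the telescoping induction you spell out. Nothing further is needed.
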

\begin{proof}
	This follows directly from the previous two results by induction.
\end{proof}

\subsection{Good Tamper Evidence Implies Good Encryption}
\label{sc:gte=>genc}

We can now prove the main result of this section, namely that good
tamper-evident schemes are good encryption schemes.

\begin{theorem}
\label{th:te=>enc}
	An $\epsilon$-correct $\delta$-tamper-evident AQECM scheme is
	$\sqrt{19(\delta + \sqrt{2\epsilon})}$-encrypting.
\end{theorem}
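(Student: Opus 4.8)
The plan is to execute the three-step strategy sketched before the statement, introducing a single free parameter $n$ — the number of parallel repetitions — and then optimizing it to produce the constant $19$. Write $\gamma = \delta + \sqrt{2\epsilon}$. First I would dispose of the trivial case: since the trace distance between two density operators is at most $1$, every scheme is $1$-encrypting, so if $19\gamma \ge 1$ there is nothing to prove and I may assume $\gamma < \tfrac1{19}$ (and the degenerate sub-case $\delta = \epsilon = 0$ is handled by running the argument below with, say, $\delta_0 = \tfrac34$ and every $n \in \N^+$, which forces the relevant trace distance to $0$).

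For the main case, I would fix $n = \ceil{c/\gamma}$ for a constant $c \in (0, \tfrac{18}{19})$ to be pinned down at the very end (so $n = \Theta(1/\gamma)$ and $n\gamma \le c + \gamma < 1$). By \cref{th:aqecm-nfold} the scheme $S^{\tensor n}$ is $(n\epsilon)$-correct and $(n\delta)$-tamper-evident. I would then set
\[
	\delta_0 \;=\; 1 - \tfrac14\bigl(1 - n\delta - \sqrt{2n\epsilon}\bigr)^3 \;\in\; \bigl[\tfrac34, 1\bigr),
\]
this value being chosen precisely so that $1 - \sqrt[3]{4(1 - \delta_0)} - \sqrt{2n\epsilon} = n\delta$. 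Applying the contrapositive of \cref{th:benc=>bte} to $S^{\tensor n}$ (whose correctness parameter is $n\epsilon$), together with the fact that $S^{\tensor n}$ is $(n\delta)$-tamper-evident, then gives that $S^{\tensor n}$ is $\delta_0$-encrypting.

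Next I would transfer this back to $S$ via \cref{th:trace-distance-copies}. Fix messages $m_0, m_1 \in \mc{M}$, set $\rho_b = \E_{k \gets K(\emptystring)} E_k(m_b)$, and let $d = \tfrac12\norm{\rho_0 - \rho_1}_1$ be the quantity we must bound by $\sqrt{19\gamma}$. Because the parallel composition distributes the keys independently, the encryption witness of $S^{\tensor n}$ on the message pair $\bigl((m_0)^{\tensor n}, (m_1)^{\tensor n}\bigr)$ is exactly $\rho_0^{\tensor n} - \rho_1^{\tensor n}$, so $\delta_0$-encryption of $S^{\tensor n}$ reads $\tfrac12\norm{\rho_0^{\tensor n} - \rho_1^{\tensor n}}_1 \le \delta_0$; combined with the lower bound $\tfrac12\norm{\rho_0^{\tensor n} - \rho_1^{\tensor n}}_1 \ge 1 - 2\exp\!\bigl(-\tfrac n2 d^2\bigr)$ of \cref{th:trace-distance-copies} this forces $2\exp(-\tfrac n2 d^2) \ge 1 - \delta_0$, i.e.
\[
	d \;\le\; \sqrt{\tfrac2n \ln\tfrac{2}{1 - \delta_0}} \;=\; \sqrt{\tfrac6n\,\ln\tfrac{2}{1 - n\delta - \sqrt{2n\epsilon}}}.
\]
To finish, I would use $\sqrt{2n\epsilon} = \sqrt{n}\,\sqrt{2\epsilon} \le n\sqrt{2\epsilon}$ and $\delta \le \gamma$ to get $n\delta + \sqrt{2n\epsilon} \le n\gamma \le c + \gamma$, hence $1 - n\delta - \sqrt{2n\epsilon} \ge 1 - c - \gamma$, together with $\tfrac1n \le \gamma/c$; plugging in and using $\gamma < \tfrac1{19}$ yields $d^2 \le \tfrac6c\,\ln\!\tfrac{2}{1 - c - 1/19}\cdot\gamma$, so it only remains to check that $c$ can be chosen (for instance $c = \tfrac58$, which gives $\tfrac6c\,\ln\tfrac{2}{1-c-1/19} \approx 17.5 \le 19$) so that this prefactor is at most $19$.

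The main obstacle is this final optimization and the accompanying bookkeeping: the two error contributions pull against each other, since enlarging $n$ shrinks the prefactor $\tfrac6n$ but drives $1 - n\delta - \sqrt{2n\epsilon}$ toward $0$ and so blows up the logarithm, and one must juggle the ceiling in $n = \ceil{c/\gamma}$ and the deliberately crude bound $\sqrt{2\epsilon} \le \gamma$ carefully enough to stay under the stated constant while keeping $1 - n\delta - \sqrt{2n\epsilon}$ bounded away from $0$. A secondary point to get exactly right is the clean identity $1 - \sqrt[3]{4(1 - \delta_0)} - \sqrt{2n\epsilon} = n\delta$, which is what lets the contrapositive of \cref{th:benc=>bte} be invoked with no slack.
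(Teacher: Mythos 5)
Your proposal is correct and follows essentially the same route as the paper's proof: both combine \cref{th:aqecm-nfold}, \cref{th:benc=>bte}, and \cref{th:trace-distance-copies} on the $n$-fold repetition $S^{\tensor n}$ with $n = \Theta(1/(\delta+\sqrt{2\epsilon}))$, differing only in that you read the chain forward via the contrapositive of \cref{th:benc=>bte} (choosing $\delta_0$ so that the tamper-evidence thresholds match exactly) where the paper argues by contradiction, and in the precise choice of $n$ ($\ceil{5/(8\gamma)}$ versus $\floor{1/(2\gamma)}$). Both parameter choices land under the constant $19$, and your handling of the trivial and degenerate cases is sound.
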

\begin{proof}
	First, we can assume for this proof that
	$\delta + \sqrt{2\epsilon} < \frac{1}{19}$ as, otherwise, the result
	already holds since every AQECM scheme is $1$-encrypting.

	Let $S = (K, E, D)$ be an $\epsilon$-correct
	$\delta$-tamper-evident
	AQECM scheme and consider the scheme $S^{\tensor n} = (K', E', D')$,
	the $n$-fold parallel repetition of $S$ with itself for some value
	$n \in \N^+$ to be determined later.
	Note that, by \cref{th:aqecm-nfold}, $S^{\tensor n}$ is
	$n\epsilon$-correct.

	Assume that $S$ is not $\alpha$-encrypting for some $\alpha \geq 0$.
	Then, there exists two messages~$m_0, m_1 \in \mc{M}$ such that
	\begin{equation}
		\frac{1}{2}\norm{
			\E_{k \gets K(\emptystring)} E_k(m_0)
			-
			\E_{k \gets K(\emptystring)} E_k(m_1)
		}_1
		>
		\alpha.
	\end{equation}
	Now, let $m'_0 = m_0^{\times n} = (m_0, \ldots, m_0)$ and
	$m'_1 = m_1^{\times n} = (m_1, \ldots, m_1)$ be the $n$-fold
	repetitions of $m_0$ and $m_1$.
	Note that for each $b \in \{0,1\}$ we have that
	\begin{equation}
		\E_{k' \gets K'(\emptystring)}E'_{k'}(m'_b)
		=
		\left(\E_{k \gets K(\emptystring)} E_k(m_b)\right)^{\tensor n}.
	\end{equation}
	Then, \cref{th:trace-distance-copies} concerning the trace
	distance between copies of two states yields that
	\begin{equation}
	\begin{aligned}
		\frac{1}{2}\norm{
			\E_{k' \gets K'(\emptystring)}
			E'_{k'}(m'_0 - m'_1)
		}_1
		&>
		1
		-
		2\exp\left(
			-\frac{n}{2}
			\left(
			\frac{1}{2}
			\norm{
				\E_{k \gets K(\emptystring)}
				E_k(m_0 - m_1)
			}_1\right)^2
		\right)
		\\&>
		1
		-
		2\exp\left(
			-\frac{n \cdot \alpha^2}{2}
		\right)
	\end{aligned}
	\end{equation}
	which is to say that $S^{\tensor n}$ is not $
		\left(1 - 2\exp\left(-\frac{n \cdot \alpha^2}{2}\right)\right)
	$-encrypting.
	It follows, by \cref{th:benc=>bte}, that $S^{\tensor n}$ is not $
		\left(
			1
			-
			2\exp\left(-\frac{n \cdot \alpha^2}{6}\right)
			-
			\sqrt{2n\epsilon}
		\right)
	$-tamper-evident.
	However, by \cref{th:aqecm-nfold}, $S^{\tensor n}$ is
	$n\delta$-tamper-evident.
	Thus, we must have that
	\begin{equation}
		1 - 2\exp\left(-\frac{n \alpha^2}{6}\right)-\sqrt{2n\epsilon}
		<
		n
		\delta
	\end{equation}
	which implies that
	\begin{equation}
		\exp\left(-\frac{n\alpha^2}{6}\right)
		>
		\frac{1-n\delta-\sqrt{2n\epsilon}}{2}
		>
		\frac{1-n(\delta+\sqrt{2\epsilon})}{2}
	\end{equation}
	where the second inequality follows from the fact that
	$\sqrt{2n\epsilon} < n \sqrt{2\epsilon}$.
	We now fix the value of $n$ to be $
		n = \floor{\frac{1}{2(\delta + \sqrt{2\epsilon})}}
	$.
	Note that this quantity is well defined as at least one of $\delta$ or
	$\epsilon$ must be larger than $0$.\footnote{%
		A perfectly correct AQECM scheme cannot be $0$-tamper-evident
		since an adversary can always correctly guess which key was used
		with probability $1/\abs{\mc{K}} > 0$. Against a perfectly
		correct scheme, an adversary who correctly guessed the key can
		use it to recover the plaintext undetected with certainty.}
	Furthermore, our assumption that $\delta + \sqrt{2\epsilon} <
	\frac{1}{19}$ implies that $n \geq 1$, as required for our
	construction of $S^{\tensor n}$.
	Thus, by placing this value for $n$ in the right-hand side of the
	above inequality, we find that
	$\exp(-\frac{n\alpha^2}{6}) > \frac{1}{4}$ which then implies that
	\begin{equation}
		\alpha^2
		<
		\frac{6\ln(4)}{n}
		<
		\frac{6\ln(4)}{\frac{1}{2(\delta + \sqrt{2\epsilon})} - 1}
		=
		\frac{
			12\ln(4)(\delta + 2\sqrt{2\epsilon})
		}{
			1-2(\delta+\sqrt{2\epsilon})
		}
		<
		\frac{
			12\ln(4)(\delta + \sqrt{2\epsilon})
		}{
			\frac{17}{19}
		}
		<
		19(\delta + \sqrt{2\epsilon})
	\end{equation}
	where the second inequality above follows from the fact that $
		n
		>
		\frac{1}{2(\delta + \sqrt{2\epsilon})} - 1
	$, the second to last from the assumption that
	$\delta + \sqrt{2\epsilon} < \frac{1}{19}$, and the last from direct
	computation.
	Taking the square root, we obtain that $\alpha < \sqrt{19(\delta +
	\sqrt{2\epsilon})}$.

	We have thus shown that if $S$ is not $\alpha$-encrypting, then
	$\alpha < \sqrt{19(\delta + \sqrt{2\epsilon})}$.
	By contraposition, if $\alpha \geq \sqrt{19(\delta +
	\sqrt{2\epsilon})}$, then $S$ must be $\alpha$-encrypting.
	In particular we have that $S$ is
	$\sqrt{19(\delta +\sqrt{2\epsilon})}$-encrypting.
\end{proof}

\section{Quantum Money From Any Tamper-Evident Scheme}                 %
\label{ch:te=>qm}                                                      %
\label{te:sc:te=>qm}

As discussed in \cref{sc:contributions-qm}, quantum money was first
conceived by Wiesner in the founding paper of quantum cryptography
\cite{Wie83}. By leveraging the no-cloning principle, a bank is able to
mint unforgeable banknotes by incorporating into them quantum systems
whose specific states are unknown to the general public. In 2011,
Gottesman sketched how to obtain a secure quantum money scheme from any
secure tamper-evident scheme \cite{Got11se}.

In this section, we formalize Gottesman's construction of a
quantum money scheme from any tamper-evident scheme and we quantify and
prove the level of security it achieves. We do this in
\cref{te:sc:qm-got} after reviewing the relevant definitions pertaining
to quantum money in \cref{sc:qm}.

\subsection{Defining Quantum Money}                                    %
\label{sc:qm}                                                          %

Formally, we define a quantum money scheme as a triplet of channels: a
\emph{key generation} channel $K$, a \emph{minting} channel $M$, and a
\emph{verification} channel $V$. To produce a banknote, the bank first
samples a classical secret key $k$ via the key generation channel,
$k \gets K(\emptystring)$, and then uses it with the minting channel
$M$ to produce a quantum state: $\rho_k = M(k)$. The resulting quantum
state $\rho_k$ is the banknote. We then suppose that the bank places
this banknote in circulation, but that it eventually receives it back as
part of a deposit. The bank can then, with the help of the key $k$,
attempt to validate the banknote via the verification channel $V$. This
channel either accepts by outputting $\bo$ or rejects by
outputting~$\bz$. This is captured later in \cref{df:qm}. We say that a
quantum money scheme is secure if it is infeasible for an adversary
having access to a \emph{single} honestly produced banknote to output
\emph{two} quantum systems which would both be accepted by the bank's
verification procedure. This is formalized in \cref{df:qm-s}.

Note that what we have described here is a \emph{symmetric-} or
\emph{private-key} quantum money scheme: the key $k$ used to generate
the banknote $\rho_k$ is the same as the one used to later validate it.
This is in contrast to \emph{public-key} quantum money schemes which
have also been discussed in the literature (\eg: \cite{AC12}). These
schemes admit a publicly known verification key which allows anyone ---
not only the bank --- to verify if a supposed banknote is legitimate or
not, but which does not aid adversaries in producing forgeries.
Public-key quantum money schemes are typically studied in the context of
computational assumptions and we do not consider them in this work.

The syntax we have outlined here and which we formalize in the next
definition is implicit in Wiesner's original work
on quantum money \cite{Wie83}. It is also essentially the one given in
Aaronson and Christiano's more modern work on this topic \cite{AC12},
up to two changes. First, we identify the public and private keys in
Aaronson and Christiano's work to a be a single private key. Second, we
move their definition from a computational setting, \ie~considering
polynomial-time algorithms, to an information-theoretic setting,
\ie~considering arbitrary quantum channels. We also note that our
definition abstracts away the problem of determining which key is
associated to which banknote. The standard solution, without introducing
any computational assumptions, is for the banknote to be actually
composed of a pair $(\rho_k, s)$ where $s$ is a classical serial number.
This serial number is selected at random and the bank maintains a
large database mapping serial numbers to the private keys used to
generate and verify the banknotes. We will not consider the issue of
properly associating keys to banknotes any further.

\begin{definition}
\label{df:qm}
    A \emph{quantum money scheme} is a triplet of channels $(K, M, V)$
    of the form
    \begin{equation}
            K : \C \to \mc{L}(\tsf{K})
            ,\qq{}
            M : \mc{L}(\tsf{K}) \to \mc{L}(\tsf{N})
            ,\qq{and}
            V : \mc{L}(\tsf{K} \tensor \tsf{N}) \to \mc{L}(\tsf{F})
    \end{equation}
    for an alphabet $\mc{K}$ and Hilbert spaces $\tsf{K} = \C^{\mc{K}}$,
    $\tsf{N}$, and $\tsf{F} = \C^{\bs{}}$.
    The elements of $\mc{K}$ are called the \emph{keys} of the scheme and the banknotes are quantum states on the space $\tsf{N}$.

    Moreover, for every key $k \in \mc{K}$ we additionally define the
    channel $V_k : \mc{L}(\tsf{N}) \to \mc{L}(\tsf{F})$ by $\rho \mapsto
    V(k \tensor \rho)$ and the map $\overline{V}_k : \mc{L}(\tsf{N}) \to
    \C$ by $\rho \mapsto \bra{\bo}V_k(\rho)\ket{\bo}$.
\end{definition}

A quantum money scheme is said to be $\epsilon$-correct if \emph{every}
banknote which it can generate, which is to say those produced by any
key with a non-zero chance of being sampled, will be accepted with
probability at least $1 - \epsilon$. This is analogous to the definition
given by Aaronson and Christiano \cite{AC12}.

\begin{definition}
\label{df:qm-c}
    Let $\epsilon \in \R$.
    A quantum money scheme $(K, M , V)$ as given in \cref{df:qm} is
    \emph{$\epsilon$-correct} if for all keys $k \in \mc{K}$ we have
    that
    \begin{equation}
        \bra{k} K(\emptystring) \ket{k} > 0
        \implies
        \overline{V}_k \circ M(k) \geq 1 - \epsilon.
    \end{equation}
\end{definition}

We now define our security notion. This notion quantifies an upper bound
on the ability of any adversary to transform one honest banknote into
two forgeries which would both be accepted, with the same key, by the
bank. This notion is implicit in Wiesner's original work \cite{Wie83}
and is the one with respect to which Molina, Vidick, and Watrous prove
the security of Wiesner's scheme~\cite{MVW13}.
In particular, the adversarial counterfeiter is not able to query the
bank's verification procedure during the course of their attack. These
are called \emph{simple counterfeiting attacks}\footnote{
    For discussions on more involved attack models against quantum money
    schemes, we invite the interested reader to examine the works of
    Lutowski \cite{Lut10arxiv} and of Nagaj,
    Sattath, Brodutch, and Unruh
    \cite{NSBU16}.}
and they are the only ones we consider in this work.

\begin{definition}
\label{df:qm-s}
    Let $\delta \in \R$.
    A quantum money scheme $(K, M, V)$ as given in \cref{df:qm} is
    \emph{$\delta$-secure} if for all channels
    $A : \mc{L}(\tsf{N}) \to \mc{L}(\tsf{N} \tensor \tsf{N})$ we have
    that
    \begin{equation}
        \E_{k \gets K(\emptystring)}
            \left(\overline{V}_k \tensor \overline{V}_k\right)
            \circ
            A
            \circ
            M(k)
        \leq
        \delta.
    \end{equation}
\end{definition}

\subsection{Gottesman's Construction of Quantum Money}
\label{sc:qm-got}
\label{te:sc:qm-got}

Now that we have the necessary definitions pertaining to quantum money,
we can formalize Gottesman's construction of such a scheme from
any tamper-evident scheme \cite{Got11se}. We do so in \cref{df:te=>qm},
after which we state the correctness and security of the resulting
scheme in \cref{te:th:te=>qm-c} and \cref{th:te=>qm-s} after proving a
few lemmas.

As discussed in \cref{sc:contributions-qm}, the basic idea of
Gottesman's construction is to start with a tamper-evident scheme
$(K,E,D)$ and to obtain banknotes by encoding a randomly chosen message
with a randomly chosen key. The bank, which knows the key-message pair,
can verify a candidate banknote by decoding it according to the
tamper-evident scheme and verifying that the state was accepted
\emph{and} that the correct message was obtained. In this
construction, the keys of the resulting quantum money scheme are pairs
composed of a key and a message from the underlying tamper-evident
scheme.

A non-trivial issue which was not discussed by Gottesman but which
must be considered in a formal construction occurs when the underlying
tamper-evident scheme is not perfectly correct. If a tamper-evident
scheme $(K, E, D)$ is not perfectly correct, then there may exist
key-message pairs $(k,m)$ such that the decryption map $D_k$ only
accepts and correctly decodes $E_k(m)$ with arbitrarily small, perhaps
even $0$, probability. This follows from the definition of correctness
for AQECM schemes (\cref{df:aqecm-correct}) which only requires each
message to be accepted and correctly recovered with sufficient high
probability \emph{over the expected choice of the key}. However,
correctness of a quantum money scheme (\cref{df:qm-c}) requires that
banknotes produced with \emph{any valid key} be accepted with high
probability. Thus, naively instantiating Gottesman's construction with
an AQECM scheme which is not perfectly correct may yield a quantum money
scheme without a significant level of correctness.

To solve this issue, we require that the key generation map of the
resulting quantum money scheme only samples from key-message pairs
$(k,m)$ with a sufficiently high probability that $E_k(m)$ will be
correctly decoded and accepted by $D_k$. More precisely, we parametrize
our construction by a real value $\gamma \in \R$ and only sample
key-message pairs $(k,m)$ for which it holds that
$\bra{m} \overline{D}_k \circ E_k(m) \ket{m} \geq 1 - \gamma$.
Our construction must also account for the possibility that the value of
$\gamma$ is such that no key-message pair $(k,m)$ satisfies this
condition. In this case, we fall back to an essentially trivial scheme
which will be correct, but offer no security. All of this is captured in
\cref{df:te=>qm}. Recall that our construction is in the
information-theoretic setting and so we need not worry about the
possible complexity of identifying and sampling from this subset of
key-message pairs. We discuss this point in more detail at the end of
this section.

\begin{definition}
\label{df:te=>qm}
    Let $S = (K, E, D)$ be an AQECM scheme as given in
    \cref{df:aqecm}, let~$\gamma \in \R$ be a real number, and let
    $\mc{G} \subseteq \mc{K} \times \mc{M}$ be the set of key-message
    pairs for $S$ satisfying
    \begin{equation}
        \bra{k}K(\emptystring)\ket{k} > 0
        \qq{and}
        \bra{m} \overline{D}_k \circ E_k(m) \ket{m} \geq 1 - \gamma.
    \end{equation}
    We define the quantum money scheme $\text{QM}_\gamma(S)=(K',M',V')$
    in one of two ways, depending on if $\mc{G}$ is empty or not. In
    both cases, the keys will be elements of
    $\mc{K}' = \mc{K} \times \mc{M}$, implying that the Hilbert space of
    the key space of $\text{QM}_\gamma(S)$ is $\tsf{K}' = \tsf{K}
    \tensor \tsf{M}$, and the banknotes will be states on
    the Hilbert space $\tsf{C}$ corresponding to the ciphertexts of $S$,
    implying that the state space of the banknotes is
    $\tsf{N}' = \tsf{C}$.

    If $\mc{G}$ is non-empty, we define $\text{QM}_\gamma(S)=(K',M',V')$
    as follows:
    \begin{itemize}
        \item
            The key generation channel $K'$ for the quantum money scheme
            $\text{QM}_{\gamma}(S)$ is obtained by sampling a message
            $m$ uniformly at random from the message space $\mc{M}$ of
            $S$ and a key $k$ according to the key generation procedure
            $K$ of $S$, conditioned on the resulting pair $(k,m)$ being
            in $\mc{G}$.

            More formally, consider a key-message pair $(k,m)$ where $k$
            is sampled according to the AQECM's key generation channel
            and where $m$ is sampled uniformly at random. Let $
                p_\mc{G}
                =
                \sum_{(k,m) \in \mc{G}}
                \bra{k} K(\varepsilon) \ket{k}
                \cdot
                \frac{1}{\abs{\mc{M}}}
            $ be the probability that a key-message pair generated in
            this way is in $\mc{G}$. Note that $p_\mc{G} > 0$ as
            $\mc{G}$ is non-empty.
            We then define~$K'$ by
            \begin{equation}
                c
                \mapsto
                c \cdot
                \frac{1}{p_\mc{G}}
                \sum_{(k,m) \in \mc{G}}
                \frac{\bra{k}K(\varepsilon)\ket{k}}{\abs{\mc{M}}}
                \ketbra{k,m}.
            \end{equation}
            In particular, we see that $
                \bra{k,m}K'(\varepsilon)\ket{k,m}
                =
                \frac{
                    \bra{k}K(\varepsilon)\ket{k}
                }{
                    p_\mc{G} \cdot\abs{\mc{M}}
                }
            $ for any pair $(k,m) \in \mc{G}$.
        \item
            On input of a key-message pair $(k,m)$, the minting channel
            $M'$ encrypts $m$ with the key $k$ using the
            encryption map $E$ of $S$. In other words, $M' = E$.
        \item
            On input of a key-message pair $(k,m)$ and a candidate
            banknote $\rho$, the verification channel $V'$
            decrypts $\rho$ using the key $k$ with the AQECM's
            decryption channel $D$. It then accepts if $D$ accepts
            \emph{and} produces $m$. Otherwise, it rejects.

            Formally, and in full generality, $V'$ is given by
            \begin{equation}
            \begin{aligned}
                \rho
                \mapsto
                &
                \left(
                    \sum_{m \in \mc{M}}
                    \bra{m}
                        \overline{D}\left(
                            \left(
                                I_\tsf{K}
                                \tensor
                                \bra{m}_\tsf{M}
                                \tensor
                                I_\tsf{C}
                            \right)
                            \rho
                            \left(
                                I_\tsf{K}
                                \tensor
                                \ket{m}_\tsf{M}
                                \tensor
                                I_\tsf{C}
                            \right)
                        \right)
                    \ket{m}
                \right)
                \left(\ketbra{\bo} - \ketbra{\bz}\right)
                \\&+
                \Tr(\rho) \ketbra{\bz}.
            \end{aligned}
            \end{equation}
            This can be more easily parsed in the case that $\rho$ is
            the tensor of a well formed key for the quantum money scheme
            and some candidate banknote: $
                \rho
                =
                \ketbra*{\tilde{k}}
                \tensor
                \ketbra{\tilde{m}}
                \tensor
                \sigma
            $. In this case, we see that
            \begin{equation}
                \begin{aligned}
                    V'(
                        \ketbra*{\tilde{k}}
                        \tensor
                        \ketbra{\tilde{m}}
                        \tensor
                        \sigma
                    )
                    &=
                    \bra{\tilde{m}}
                    \overline{D}(
                        \ketbra*{\tilde{k}} \tensor \sigma
                    )
                    \ket{\tilde{m}}\left(\ketbra{\bo} -
                    \ketbra{\bz}\right)
                    +
                    \Tr(\sigma)\ketbra{\bz}
                    \\&=
                    \bra*{\tilde{m}}
                    \overline{D}_{\tilde{k}}(\sigma)
                    \ket*{\tilde{m}}
                    \ketbra{\bo}
                    +
                    \left(
                        \Tr(\sigma)
                        -
                        \bra*{\tilde{m}}
                        \overline{D}_{\tilde{k}}(\sigma)
                        \ket*{\tilde{m}}
                    \right)
                    \ketbra{\bz}
                    \\&=
                    p_\text{Accept}     \cdot \ketbra{\bo} +
                    (1-p_\text{Accept}) \cdot \ketbra{\bz}
                \end{aligned}
            \end{equation}
            where $
                p_\text{Accept}
                =
                \bra*{\tilde{m}}
                    \overline{D}_{\tilde{k}}(\sigma)
                \ket*{\tilde{m}}
            $ is the probability that the ciphertext is accepted and the
            message $\tilde{m}$ is obtained.\footnote{%
                Note that the $\Tr(\rho)$ term, which reduces to
                $\Tr(\sigma) = 1$ in our example, is necessary to
                ensure that $V'$ remains trace-preserving in the case
                that $\Tr(\rho) \not= 1$.}
    \end{itemize}

    If $\mc{G}$ is empty, we define $\text{QM}_\gamma(S) = (K', M', V')$
    as in the previous case, but with the following two differences:
    \begin{itemize}
        \item
            We take $K' : \C \to \mc{L}(\tsf{K}')$ to be defined by
            \begin{equation}
                c
                \mapsto
                c
                \cdot
                K(\varepsilon)
                \tensor
                \frac{I_\tsf{M}}{\dim\tsf{M}},
            \end{equation}
            which is to say that a key-message pair $(k,m)$ is obtained
            by sampling $m$ uniformly at random from $\mc{M}$ and $k$
            according to $K$.
        \item
            We take $V'$ to be defined by $\rho \mapsto \Tr(\rho)
            \cdot \ketbra{\bo}$, \ie, $V'$ always accepts.
    \end{itemize}
\end{definition}

We define this construction even in the case where
$\mc{G}$ is empty so that $\text{QM}_\gamma(S)$ is well defined for any
AQECM $S$ and real value $\gamma$. In particular, the map
$S \mapsto \text{QM}_\gamma(S)$ taking AQECM schemes to quantum money
schemes is well defined for all values $\gamma \in \R$.

Note that our definition of $\text{QM}_\gamma(S)$ when $\mc{G}$ is empty
has no chance of offering any non-trivial security since the
verification procedure always accepts. On the other hand, this yields a
scheme which is perfectly correct. Thus, we obtain the following
theorem concerning the correctness of $\text{QM}_\gamma(S)$ and
emphasize that it is independent of the correctness of $S$.

\begin{theorem}
    \label{te:th:te=>qm-c}
    Let $S = (K, E, D)$ be an AQECM scheme and let $\gamma \in \R^+_0$
    be a non-negative real.
    Then, the quantum money scheme $\text{QM}_\gamma(S) = (K', V', M')$
    is $\gamma$-correct.
\end{theorem}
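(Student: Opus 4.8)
The plan is to verify $\gamma$-correctness directly against \cref{df:qm-c}: for every key $k'$ of $\text{QM}_\gamma(S)$ with $\bra{k'}K'(\emptystring)\ket{k'} > 0$, I would check that $\overline{V'}_{k'} \circ M'(k') \geq 1 - \gamma$. Since the construction in \cref{df:te=>qm} is itself defined by cases according to whether the set $\mc{G}$ of ``good'' key--message pairs is empty or not, I would split the proof along the same lines. No step here is a genuine mathematical obstacle; the whole argument is a bookkeeping exercise, and I expect the only place needing a little care is unwinding the general formula for $V'$ on legitimate keys.

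First I would dispatch the case $\mc{G} = \emptyset$. There $V'$ is defined by $\rho \mapsto \Tr(\rho)\cdot\ketbra{\bo}$, so $\overline{V'}_{k'}(\sigma) = \bra{\bo}V'(k' \tensor \sigma)\ket{\bo} = \Tr(\sigma)$ for any key $k'$ and candidate banknote $\sigma$. Since $M' = E$ is a channel, hence trace-preserving, and any key $k'$ is a computational-basis state with $\Tr(\ketbra{k'}) = 1$, we get $\overline{V'}_{k'} \circ M'(k') = \Tr(M'(k')) = 1 \geq 1 - \gamma$ using the hypothesis $\gamma \geq 0$. Note this bound holds regardless of the correctness of $S$, as claimed in the statement.

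Next I would treat the case $\mc{G} \neq \emptyset$. Here the key generation channel $K'$ places nonzero weight exactly on the pairs $(k,m) \in \mc{G}$, so it suffices to establish the inequality for $k' = (k,m) \in \mc{G}$. Since $M' = E$, we have $M'(k,m) = E(k \tensor m) = E_k(m)$. For the verification side I would invoke the simplified form of $V'$ recorded in \cref{df:te=>qm} on inputs $\ketbra{k} \tensor \ketbra{m} \tensor \sigma$, which gives $V'(\ketbra{k}\tensor\ketbra{m}\tensor\sigma) = p_{\text{Accept}}\cdot\ketbra{\bo} + (1-p_{\text{Accept}})\cdot\ketbra{\bz}$ with $p_{\text{Accept}} = \bra{m}\overline{D}_k(\sigma)\ket{m}$; hence $\overline{V'}_{(k,m)}(\sigma) = \bra{m}\overline{D}_k(\sigma)\ket{m}$. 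Composing, $\overline{V'}_{(k,m)} \circ M'(k,m) = \bra{m}\overline{D}_k\bigl(E_k(m)\bigr)\ket{m} = \bra{m}\overline{D}_k \circ E_k(m)\ket{m}$.

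To conclude, I would simply recall that membership of $(k,m)$ in $\mc{G}$ means, by the very definition of $\mc{G}$ in \cref{df:te=>qm}, that $\bra{m}\overline{D}_k \circ E_k(m)\ket{m} \geq 1 - \gamma$. Combining this with the computation above yields $\overline{V'}_{(k,m)} \circ M'(k,m) \geq 1 - \gamma$ for every key $(k,m)$ sampled with nonzero probability, which is exactly the condition of \cref{df:qm-c} for $\gamma$-correctness. Thus $\text{QM}_\gamma(S)$ is $\gamma$-correct in both cases.
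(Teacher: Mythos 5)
Your proposal is correct and follows essentially the same argument as the paper: the same case split on whether $\mc{G}$ is empty, the same observation that the always-accepting $V'$ gives perfect correctness in the empty case, and the same unwinding of $V'$ to reduce the non-empty case to the defining inequality of $\mc{G}$. No further comment is needed.
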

\begin{proof}
    Let $\mc{G}$ be defined as in \cref{df:te=>qm}. If $\mc{G}$ is
    empty, then $V'$ always accepts by construction and so
    \begin{equation}
        \bra{\bo} V'_{k'} \circ M'(k') \ket{\bo}
        =
        1
        \geq
        1 - \gamma
    \end{equation}
    for any key $k' \in \mc{K} \tensor \mc{M}$. This implies that
    $\text{QM}_\gamma(S)$ is $\gamma$-correct.

    If $\mc{G}$ is not empty, then any key $k' = (k,m)$ for the quantum
    money scheme which satisfies
    $\bra{k,m} K'(\varepsilon) \ket{k,m} > 0$ must be in $\mc{G}$. For
    such keys, we have by construction that
    \begin{equation}
        \bra{\bo} V'_{k'} \circ M'(k') \ket{\bo}
        =
        \bra{m} \overline{D}_k \circ E_k(m) \ket{m}
        \geq 1 - \gamma
        .
    \end{equation}
    Thus, $\text{QM}_\gamma(S)$ is $\gamma$-correct.
\end{proof}

We now move on to demonstrating the security of
$\text{QM}_{\sqrt{\epsilon}}(S)$ if the underlying AQECM scheme
$S = (K, E, D)$ is $\epsilon$-correct and $\delta$-tamper-evident.
We begin by sketching the ideas involved in this proof, emphasizing
that these were first presented by Gottesman \cite{Got11se}.

Consider an attack $A$ against the quantum money scheme
$\text{QM}_{\sqrt{\epsilon}}(S)$.
For this attack to succeed, it must, on input of a single ciphertext
$E_k(m)$, produce a bipartite state on two registers such that both
will be accepted and both will produce a correct message when decoded
by the channel $D_k$.
Clearly, an upper bound on the probability of this occurring is the
probability that the first instance of $D_k$ accepts, but may not
have recovered the correct message, and that the second instance of
$D_k$ recovers the correct message, but may not have accepted.
Our proof consists of upper-bounding this probability by leveraging the
fact that $S$ is tamper-evident.
Indeeed, tamper evidence means that any eavesdropping
attempt --- roughly corresponding to a counterfeiting attack against the
resulting quantum money scheme --- will either be detected
(\ie~the first instance of the $D_k$ map will reject) with high
probability or will not provide much information on the encoded message
(\ie~the second instance of the $D_k$ map will not recover $m$ with high
probability).

Our goal is now to formalize this argument.
There are two technical points which need to be considered to obtain a
proper proof:
\begin{enumerate}
    \item
        The definition of tamper evidence, \cref{df:te-security},
        considers the ability of an adversary to distinguish,
        undetected, between any two specific plaintexts.
        To prove the security of $\text{QM}_{\sqrt{\epsilon}}(S)$, we
        need a statement about the (in)ability of an adversary to
        determine, undetected, which \emph{specific} message was sent
        among many possibilities.
    \item
        Recall that the way key-message pairs $(k,m)$ are sampled in
        $\text{QM}_\gamma(S)$ may be slightly different from the way
        they would be sampled when evaluating the security of $S$ as a
        tamper-evident scheme. This occurs, roughly, when
        $\mc{G} \not= \mc{K} \times \mc{M}$. We will need to account for
        this difference in our proof.
\end{enumerate}
We give two lemmas to overcome these two points.

The first point is covered by \cref{th:te-random-message}.
In this lemma, we consider an adversary characterized by two parts which
act sequentially. First, there is an initial tamper attack $A$ which,
as usual, is understood as trying to obtain some information on the
encoded plaintext without being detected. Second, there is a set of
channels $\{A_k\}_{k \in \mc{K}}$ indexed by the keys of the AQECM
scheme which attempt to explicitly guess the plaintext based on what the
initial attack $A$ gathered.\footnote{%
    In the context of our proof of security for the quantum money scheme
    $\text{QM}_\gamma(S)$ these will be taken to be specifically
    $A_k = \left(\Id_\tsf{M} \tensor \Tr_\tsf{F}\right) \circ D_k$,
    which is to say the honest decryption channel which discards the
    flag output. However, our lemma is more general.}
The fact that these channels are indexed by the keys models the fact
that, at this point, we will assume that the adversary knows the key.
The lemma bounds the ability of this two-part adversary to remain
undetected \emph{and} correctly guess the value of the message encoded
by the tamper-evident scheme after they have learned the key.
To prove \cref{th:te-random-message}, we will make use of
another technical lemma, \cref{th:set-discrimination}. This second lemma
gives a bound on how well any quantum channel can distinguish
(subnormalized) states from a given set. In this sense, it can be
thought of as a generalization of \cref{th:trace-channel} which
considered only the case of sets given by precisely two states. Note
that, unlike \cref{th:trace-channel}, we do not claim that
\cref{th:set-discrimination} always gives an optimal bound.

The second point is covered by \cref{th:te=>qm-prob}, which is a
statement purely of probability theory giving an upper bound on the
change in the expectation of a non-negative function when going from one
random variable to another which is simply the first conditioned on some
particular event occurring.

We now state and prove these lemmas.

\begin{lemma}
\label{th:set-discrimination}
    Let $\mc{M}$ be an alphabet and, for each $m \in \mc{M}$, let
    $\rho_m \in \mc{D}_\bullet(\tsf{H})$ be a subnormalized density
    operator on a Hilbert space $\tsf{H}$.
    Further, let $\delta \in \R$ be a real value such that for all
    $m_0, m_1 \in \mc{M}$ we have that
    $\frac{1}{2}\norm{\rho_{m_0} - \rho_{m_1}}_1 \leq \delta$.
    Then, for any random variable $\mu$ distributed on $\mc{M}$ and any
    channel $\Phi:\mc{L}(\tsf{H})\to\mc{L}(\C^\mc{M})$, we have that
    \begin{equation}
        \E_{m \gets \mc{\mu}}
        \bra{m} \Phi(\rho_m) \ket{m}
        \leq
        \left(\max_{m \in \mc{M}} \Pr[\mu = m]\right)\cdot(1 - 2\delta)
        + 2\delta.
    \end{equation}
\end{lemma}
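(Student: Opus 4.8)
The plan is to re-express the left-hand side through the POVM induced by the channel $\Phi$, reduce the problem to comparing each $\rho_m$ against a single fixed reference state, and then exploit the operator bound $0 \le \mu_\Phi(m) \le I$ together with the pairwise trace-distance hypothesis. First I would invoke \cite[Th.~2.37]{Wat18} (recalled just before \cref{df:cgm}) to obtain the unique measurement $\mu_\Phi : \mc{M} \to \Pos(\tsf{H})$ with $\Delta_\mc{M} \circ \Phi = \Ms(\mu_\Phi)$, so that $\bra{m}\Phi(\rho)\ket{m} = \Tr(\mu_\Phi(m)\rho)$ for every $m \in \mc{M}$ and every $\rho$; in particular $0 \le \mu_\Phi(m) \le I_\tsf{H}$ and $\sum_m \mu_\Phi(m) = I_\tsf{H}$. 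Writing $p_m = \Pr[\mu = m]$, I would fix $m^\star \in \mc{M}$ attaining $p^\star := \max_m p_m$ and use $\rho_{m^\star}$ as reference, decomposing $\Tr(\mu_\Phi(m)\rho_m) = \Tr(\mu_\Phi(m)\rho_{m^\star}) + \Tr(\mu_\Phi(m)(\rho_m - \rho_{m^\star}))$ for each $m$.

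For the ``difference'' term, since $\rho_m - \rho_{m^\star}$ is Hermitian I would bound $\Tr(\mu_\Phi(m)(\rho_m-\rho_{m^\star}))$ by the trace of $\mu_\Phi(m)$ against the positive part $X^+$ of $X = \rho_m - \rho_{m^\star}$ (using $\mu_\Phi(m) \ge 0$ to discard the negative part), and that in turn by $\Tr(X^+)$ (using $\mu_\Phi(m) \le I$). Then the identity $\Tr(X^+) = \frac{1}{2}(\norm{X}_1 + \Tr X)$, together with $\frac{1}{2}\norm{\rho_m - \rho_{m^\star}}_1 \le \delta$ and $\Tr\rho_m - \Tr\rho_{m^\star} = \Tr(\rho_m - \rho_{m^\star}) \le \norm{\rho_m - \rho_{m^\star}}_1 \le 2\delta$, gives $\Tr(\mu_\Phi(m)(\rho_m - \rho_{m^\star})) \le 2\delta$ for $m \ne m^\star$, while this term is exactly $0$ when $m = m^\star$. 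For the reference term I would use $\sum_m p_m\Tr(\mu_\Phi(m)\rho_{m^\star}) = \Tr\bigl(\rho_{m^\star}\sum_m p_m\mu_\Phi(m)\bigr) \le p^\star\Tr\rho_{m^\star} \le p^\star$, since $\sum_m p_m\mu_\Phi(m) \le p^\star\sum_m\mu_\Phi(m) = p^\star I$ and $\rho_{m^\star} \ge 0$. Averaging the decomposition over $m$ with weights $p_m$ and keeping the $m = m^\star$ summand separate, so that the $\delta$-scale errors are charged only to the indices $m \ne m^\star$ and contribute at most $2\delta(1-p^\star)$, I would conclude $\E_{m \gets \mu}\bra{m}\Phi(\rho_m)\ket{m} \le p^\star + 2\delta(1-p^\star) = p^\star(1-2\delta) + 2\delta$.

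The delicate point — and the only real obstacle — is getting the constant right. A crude accounting that bounds the reference term by $p^\star$ and the total disturbance by $2\delta$ independently loses roughly a term of order $p^\star\delta$ and only yields the weaker $p^\star + 2\delta$. The resolution is precisely to arrange that the $\delta$-scale terms carry the prefactor $(1-p^\star)$ rather than $1$, which forces the reference to be the \emph{most likely} message $m^\star$ and its summand to be split off. Everything else is routine manipulation of positive semidefinite operators, the trace, and the Schatten $1$-norm.

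Finally, I would note that the hypothesis already forces $\delta \ge 0$ (take $m_0 = m_1$) and that the claimed inequality holds trivially once $\delta \ge 1/2$, so no edge cases need separate treatment; and that specializing to $\abs{\mc{M}} = 2$ recovers a (slightly weaker) version of \cref{th:trace-channel}, consistent with this lemma being a generalization of it.
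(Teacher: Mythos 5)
Your proof is correct and follows the same skeleton as the paper's: fix the most likely message $m^\star$, use $\rho_{m^\star}$ as a common reference, bound each pairwise deviation by $2\delta$, and arrange for those deviations to be weighted by $\sum_{m \neq m^\star} p_m = 1 - p^\star$ so that the final bound is $p^\star(1-2\delta) + 2\delta$ rather than the lossier $p^\star + 2\delta$. The one place you genuinely diverge is in how the key estimate $\abs{\bra{m}\Phi(\rho_m - \rho_{m^\star})\ket{m}} \leq \norm{\rho_m - \rho_{m^\star}}_1$ is justified. The paper obtains it by invoking, in a footnote and without proof, a ``slight generalization'' of \cref{th:trace-channel} from two-outcome channels to channels with codomain $\mc{L}(\C^{\mc{M}})$. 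You instead pass to the induced POVM $\mu_\Phi$ and prove the bound from scratch: $\Tr(\mu_\Phi(m) X) \leq \Tr(X^+) = \tfrac{1}{2}(\norm{X}_1 + \Tr X) \leq 2\delta$ using $0 \leq \mu_\Phi(m) \leq I_\tsf{H}$ and the fact that $\abs{\Tr X} \leq \norm{X}_1$. This is self-contained, makes completely transparent why the constant is $2\delta$ rather than $\delta$ for subnormalized operators of unequal trace (the $\Tr X$ term in the positive-part identity), and in effect supplies the proof the paper omits. Your treatment of the reference term, bounding $\sum_m p_m \mu_\Phi(m) \leq p^\star I_\tsf{H}$ as an operator inequality before tracing against $\rho_{m^\star} \geq 0$, is also cleaner than the paper's redistribution of the $p_{m'}\Tr(\rho_{m'})$ term across the sum, though it buys nothing quantitatively. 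The closing remarks on $\delta \geq 0$, triviality for $\delta \geq 1/2$, and consistency with \cref{th:trace-channel} at $\abs{\mc{M}} = 2$ are all accurate.
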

\begin{proof}
    Let $p_m = \Pr[\mu = m]$ for all $m \in\mc{M}$ and let $m' \in
    \mc{M}$ be an element that satisfies
    $p_{m'} = \max_{m \in \mc{M}}\Pr[\mu = m]$.
    By singling out the $m'$ contribution to the expectation and
    recalling that $\Tr(\rho_{m'}) = \sum_{m \in
    \mc{M}}\bra{m}\rho_{m'}\ket{m}$, we have that
    \begin{equation}
    \begin{aligned}
        \E_{m \gets \mu}
        \bra{m} \Phi(\rho_m) \ket{m}
        &=
        p_{m'}\bra{m'} \Phi(\rho_{m'})\ket{m'}
        +
        \sum_{m \in \mc{M} \setminus \{m'\}}
        p_m
        \cdot
        \bra{m} \Phi(\rho_m) \ket{m}
        \\&=
        p_{m'}\Tr\left(\rho_{m'}\right)
        +
        \sum_{m \in \mc{M}\setminus\{m'\}}
        \left(
            p_{m}\bra{m}\Phi(\rho_{m})\ket{m}
            -
            p_{m'}\bra{m'}\Phi(\rho_{m'})\ket{m'}
        \right)
    \end{aligned}
    \end{equation}
    As $p_{m'} \geq p_m$ for all $m \in \mc{M}$, we can replace $p_{m'}$
    by $p_m$ in each term of the summation at the cost of introducing an
    inequality. At the same time, we recall that $\Tr(\rho_{m'})\leq1$.
    Thus,
    \begin{equation}
    \begin{aligned}
        \E_{m \gets \mu}
        \bra{m} \Phi(\rho_m) \ket{m}
        &\leq
        p_{m'}
        +
        \sum_{\mc{M} \setminus \{m'\}}
            p_m
            \cdot
            \left(
                \bra{m}\Phi(\rho_m)\ket{m}
                -
                \bra{m}\Phi(\rho_{m'})\ket{m}
            \right)
        \\&\leq
        p_{m'}
        +
        \sum_{\mc{M} \setminus \{m'\}}
            p_m
            \cdot
            \abs{
                \bra{m}\Phi(\rho_m - \rho_{m'})\ket{m}
            }
    \end{aligned}
    \end{equation}
    By invoking a slight generalization of \cref{th:trace-channel} which
    we will not formally prove,\footnote{%
        Recall that \cref{th:trace-channel} implies that $
            \abs{\bra{\bz} \Phi(\rho - \sigma) \ket{\bz}}
            \leq
            \norm{\rho - \sigma}_1
        $ for any two subnormalized density operators
        $\rho, \sigma \in \mc{D}_\bullet(\tsf{H})$ and any channel
        $\Phi : \mc{L}(\tsf{H}) \to \mc{L}(\C^{\bs{}})$. The slight
        generalizing here is changing the codomain
        of $\Phi$ to $\mc{L}(\C^\mc{M})$  and arguing that the
        inequality does not depend on the element we pick in $\mc{M}$.}
    we can give upper bounds to the absolute
    values in the summands in terms of the trace distance. Further
    recalling that the trace distance is contractive under the action of
    channels, we obtain
    \begin{equation}
    \begin{aligned}
        \E_{m \gets \mu}
        \bra{m} \Phi(\rho_m) \ket{m}
        &\leq
        p_{m'}
        +
        \sum_{m \in \mc{M} \setminus {m'}}
        p_{m} \cdot \norm{\rho_m - \rho_{m'}}_1
        \\&\leq
        p_{m'} + (1-p_{m'})\cdot2\delta
    \end{aligned}
    \end{equation}
    where we invoke our assumption of the pair-wise trace distance
    between the subnormalized states we consider and that $\sum_{m \in
    \mc{M} \setminus \{m'\}} p_m = 1 - p_{m'}$. Rearranging the terms
    yields the desired result.
\end{proof}

Note that if $\Tr(\rho_m) = \Tr(\rho_{m'})$ for all $m \in \mc{M}$, then
we can obtain a better bound. Indeed, under this assumption
\cref{th:trace-channel} yields $
    \abs{\bra{m} \Phi(\rho_m - \rho_{m'}) \ket{m}}
    \leq
    \frac{1}{2}\norm{\rho_m - \rho_{m'}}_1
$ instead of $
    \abs{\bra{m} \Phi(\rho_m - \rho_{m'}) \ket{m}}
    \leq
    \norm{\rho_m - \rho_{m'}}_1
$, implying a final bound of $p_{m'} + (1-p_{m'})\delta$. Moreover, this
matches the bound of \cref{th:trace-channel} if $\mc{M} = \{\bz, \bo\}$,
$\mu$ is uniformly random, and both traces are $1$.

We can now state and prove \cref{th:te-random-message}.

\begin{lemma}
\label{th:te-random-message}
    Let $(K, E, D)$ be a $\delta$-tamper-evident AQECM as given in
    \cref{df:aqecm} for~$\delta \leq 1$ and let $\mu$ be a random
    variable on the messages of this scheme.
    For any tamper attack
    $A : \mc{L}(\tsf{C}) \to \mc{L}(\tsf{C} \tensor \tsf{A})$ against
    this scheme and any set
    $\{A'_k : \mc{L}(\tsf{A}) \to \mc{L}(\tsf{M})\}_{k \in \mc{K}}$
    of channels indexed by the keys of the scheme,
    we have that
    \begin{equation}
    \label{te:eq:te-random-message}
        \E_{\substack{k \gets K(\emptystring)\\m \gets \mu}}
        \bra{m}
            \left(
                \left(\Tr_\tsf{M} \circ \overline{D}_k\right)
                \tensor
                A'_k
            \right)
            \circ
            A
            \circ
            E_k(m)
        \ket{m}
        \leq
        \left(\max_{m \in \mc{M}} \Pr\left[\mu = m\right]\right)
        (1 - 4\delta)
        +
        4\delta.
    \end{equation}
\end{lemma}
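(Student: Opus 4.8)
The plan is to reduce the statement to the two-message tamper-evidence guarantee by comparing every message against a single fixed most-likely message. First I would introduce, for each key $k \in \mc{K}$ and message $m \in \mc{M}$, the subnormalized operator $\rho_{k,m} = \left(\left(\Tr_\tsf{M} \circ \overline{D}_k\right) \tensor \Id_\tsf{A}\right) \circ A \circ E_k(m) \in \mc{D}_\bullet(\tsf{A})$, which is exactly the operator appearing in \cref{df:te-security} (and whose membership in $\mc{D}_\bullet(\tsf{A})$ was established there). Since $A'_k$ acts on the register $\tsf{A}$ whereas $\Tr_\tsf{M} \circ \overline{D}_k$ acts on $\tsf{C}$, the left-hand side of \cref{te:eq:te-random-message} equals $\E_{k \gets K(\emptystring)} \E_{m \gets \mu} \bra{m} A'_k(\rho_{k,m}) \ket{m}$. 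Write $p_m = \Pr[\mu = m]$, let $m^\star$ attain $p_{m^\star} = \max_{m \in \mc{M}} p_m$, and set $p^\star = p_{m^\star}$.

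\textbf{Per-key splitting.} Next, for a fixed key $k$, I would add and subtract $A'_k(\rho_{k,m^\star})$, writing $\bra{m} A'_k(\rho_{k,m}) \ket{m} = \bra{m} A'_k(\rho_{k,m^\star}) \ket{m} + \bra{m} A'_k(\rho_{k,m} - \rho_{k,m^\star}) \ket{m}$, and bound the two resulting sums over $m$ separately. For the first, since each $\bra{m} A'_k(\rho_{k,m^\star}) \ket{m} \geq 0$ and $p_m \leq p^\star$, we get $\sum_{m} p_m \bra{m} A'_k(\rho_{k,m^\star}) \ket{m} \leq p^\star \sum_m \bra{m} A'_k(\rho_{k,m^\star}) \ket{m} = p^\star \Tr\left(A'_k(\rho_{k,m^\star})\right) = p^\star \Tr(\rho_{k,m^\star}) \leq p^\star$, using that $A'_k$ is trace preserving and $\rho_{k,m^\star} \in \mc{D}_\bullet(\tsf{A})$. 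For the second, the $m = m^\star$ summand is $0$, and for $m \neq m^\star$, since $A'_k$ is a channel (hence contractive under the trace norm) and $\rho_{k,m} - \rho_{k,m^\star}$ is Hermitian, $\bra{m} A'_k(\rho_{k,m} - \rho_{k,m^\star}) \ket{m} \leq \norm{A'_k(\rho_{k,m} - \rho_{k,m^\star})}_1 \leq \norm{\rho_{k,m} - \rho_{k,m^\star}}_1$; hence $\sum_m p_m \bra{m} A'_k(\rho_{k,m} - \rho_{k,m^\star}) \ket{m} \leq \sum_{m \neq m^\star} p_m \norm{\rho_{k,m} - \rho_{k,m^\star}}_1$.

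\textbf{Averaging over keys.} Then I would take $\E_{k \gets K(\emptystring)}$ of the resulting per-key bound and apply the second part of \cref{tee:th:te-ns}, which gives $\E_{k \gets K(\emptystring)} \frac{1}{2} \norm{\rho_{k,m} - \rho_{k,m^\star}}_1 \leq 2\delta$ for every $m$, so that $\E_{k \gets K(\emptystring)} \norm{\rho_{k,m} - \rho_{k,m^\star}}_1 \leq 4\delta$. Combining everything yields $\E_{k \gets K(\emptystring)} \E_{m \gets \mu} \bra{m} A'_k(\rho_{k,m}) \ket{m} \leq p^\star + 4\delta \sum_{m \neq m^\star} p_m = p^\star + 4\delta(1 - p^\star) = p^\star(1 - 4\delta) + 4\delta$, which is \cref{te:eq:te-random-message}; the hypothesis $\delta \leq 1$ is only used to ensure the claim is not vacuous.

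\textbf{Main obstacle.} The step I expect to require the most care is the averaging over keys. Although the computation parallels the proof of \cref{th:set-discrimination}, that lemma's hypothesis is a \emph{uniform} pairwise bound on the trace distances, whereas $\delta$-tamper-evidence only provides, for each \emph{fixed} pair $(m, m^\star)$, a trace-distance bound that holds with high probability over $k$ (equivalently, in expectation over $k$ via \cref{tee:th:te-ns}). Since the set of ``bad'' keys can vary from pair to pair and $\mc{M}$ may be large, one cannot first condition on a set of keys that is simultaneously good for all pairs; the expectation over $k$ must be kept on the outside and taken only after the triangle-inequality comparison against the single reference message $m^\star$.
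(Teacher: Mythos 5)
Your proof is correct and follows essentially the same route as the paper's: both single out the most likely message, use trace preservation of $A'_k$ for the diagonal term, contractivity of the trace distance under $A'_k$ for the comparison terms, and \cref{tee:th:te-ns} to convert $\delta$-tamper evidence into the expected trace-distance bound $\E_k\norm{\rho_{k,m}-\rho_{k,m^\star}}_1\leq 4\delta$. The only cosmetic difference is that the paper first averages over keys and then invokes \cref{th:set-discrimination} on the key-averaged states $\E_k A'_k(\rho_{k,m})$ (whose pairwise distances are bounded by $2\delta$ via Jensen and contractivity), whereas you re-derive that lemma's argument inline per key and average afterwards; by linearity both orderings work, so your closing remark that the expectation over $k$ \emph{must} stay outside is slightly too strong, but harmless.
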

\begin{proof}
    For any key $k$ and message $m$ of the AQECM scheme $(K, E, D)$,
    define the two operators
    \begin{equation}
        \rho_{k,m}
        =
        \left(
            \left(\Tr_\tsf{M}\circ\overline{D}_k\right)
            \tensor
            \Id_\tsf{A}
        \right)
        \circ
        A
        \circ
        E_k(m)
        \qq{and}
        \sigma_{k,m}
        =
        A'_k(\rho_{k,m}).
    \end{equation}
    More precisely, $\rho_{k,m} \in \mc{D}_\bullet(\tsf{A})$ is the
    (subnormalized) state held by the adversary after the verifier has
    verified the ciphertext but before they learn the key, and
    $\sigma_{k,m} \in \mc{D}_\bullet(\tsf{M})$ can be understood as the
    adversary's subsequent guess of $m$ after having been given the key
    $k$. Note that the $\rho_{k,m}$ operators are precisely those in the
    left-hand side of \cref{te:eq:te-random-message}.

    Now, by linearity and \cref{th:set-discrimination} with
    $\Phi = \Id_\tsf{M}$, it suffices for us to show that the inequality
    $
        \frac{1}{2}\norm{
            \E_{k \gets K(\varepsilon)}\sigma_{k,m}
            -
            \E_{k \gets K(\varepsilon)}\sigma_{k,m'}
        }
        \leq
        2\delta
    $
    holds for all $m, m' \in \mc{M}$ to obtain the desired result.
    To obtain this bound, we observe that the triangle inequality allows
    us to take the expectation out of the norm and the fact that the
    trace distance is contractive under the action of channels allows us
    to pass from the $\sigma$ operators to the $\rho$ operators. This
    yields that
    \begin{equation}
    \begin{aligned}
        \frac{1}{2}\norm{
            \E_{k \gets K(\varepsilon)}\sigma_{k,m}
            -
            \E_{k \gets K(\varepsilon)}\sigma_{k,m'}
        }
        &\leq
        \E_{k \gets K(\varepsilon)}
        \frac{1}{2}\norm{
            \sigma_{k,m}
            -
            \sigma_{k,m'}
        }
        \\&\leq
        \E_{k \gets K(\varepsilon)}
        \frac{1}{2}\norm{
            \rho_{k,m}
            -
            \rho_{k,m'}
        }.
    \end{aligned}
    \end{equation}
    It now suffices to show that $
        \E_{k \gets K(\varepsilon)}
        \frac{1}{2}\norm{
                \rho_{k,m}
                -
                \rho_{k,m'}
        }_1
        \leq 2\delta
    $. However, this follows directly from the fact that $S$ is assumed
    to be $\delta$-tamper-evident and \cref{tee:th:te-ns} which states
    that $\delta$-tamper-evidence implies a $2\delta$ upper-bound on
    this expectation.
\end{proof}

We now state and prove \cref{th:te=>qm-prob} which will help us
overcome the second technical hurdle we had identified. As
mentioned, this is a result purely of probability theory.

\begin{lemma}
\label{th:te=>qm-prob}
    Let $\mc{A}$ be an alphabet, $A$ be a random variable distributed
    on $\mc{A}$, $\mc{B} \subseteq \mc{A}$ be a susbset such that
    $\Pr\left[A \in \mc{B}\right] > 0$, and $f : \mc{A} \to \R^+_0$ be
    any map. Finally, let $B$ be the random variable distributed on
    $\mc{B}$ defined by $\Pr[B = b] = \Pr[A = b]/\Pr[A \in \mc{B}]$,
    \ie~$B$ is simply $A$ conditioned on $A \in \mc{B}$. Then
    \begin{equation}
        \abs{\E_{a \gets A} f(a) - \E_{a \gets B} f(a)}
        \leq
        \max_{a \in \mc{A}}f(a) \cdot \Pr[A \notin \mc{B}].
    \end{equation}
\end{lemma}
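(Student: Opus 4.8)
The plan is to prove this by a direct computation, splitting the expectation over $A$ according to whether the sampled value lands in $\mc{B}$ or not, and then controlling the ``outside'' contribution using the bound $0 \leq f(a) \leq \max_{a \in \mc{A}} f(a)$. Since $\mc{A}$ is an alphabet, hence a non-empty finite set, the quantity $M := \max_{a \in \mc{A}} f(a)$ is well defined and non-negative, and all sums below are finite.

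First I would write $p := \Pr[A \in \mc{B}] > 0$ and decompose
\begin{equation}
    \E_{a \gets A} f(a)
    =
    \sum_{a \in \mc{B}} \Pr[A = a] f(a)
    +
    \sum_{a \in \mc{A} \setminus \mc{B}} \Pr[A = a] f(a).
\end{equation}
The second sum lies in the interval $[0, M \cdot \Pr[A \notin \mc{B}]]$ since $0 \leq f(a) \leq M$ for every $a$ and $\sum_{a \notin \mc{B}} \Pr[A=a] = \Pr[A \notin \mc{B}]$. For the first sum, the definition $\Pr[B = a] = \Pr[A = a]/p$ for $a \in \mc{B}$ gives $\sum_{a \in \mc{B}} \Pr[A = a] f(a) = p \cdot \E_{a \gets B} f(a)$. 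Combining these observations yields the two-sided bound
\begin{equation}
    p \cdot \E_{a \gets B} f(a)
    \leq
    \E_{a \gets A} f(a)
    \leq
    p \cdot \E_{a \gets B} f(a) + M \cdot \Pr[A \notin \mc{B}].
\end{equation}

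Then I would subtract $\E_{a \gets B} f(a)$ throughout and use $p = 1 - \Pr[A \notin \mc{B}]$ to obtain
\begin{equation}
    -\Pr[A \notin \mc{B}] \cdot \E_{a \gets B} f(a)
    \leq
    \E_{a \gets A} f(a) - \E_{a \gets B} f(a)
    \leq
    -\Pr[A \notin \mc{B}] \cdot \E_{a \gets B} f(a) + M \cdot \Pr[A \notin \mc{B}].
\end{equation}
Since $f$ is non-negative we have $0 \leq \E_{a \gets B} f(a) \leq M$, so the left-hand side is at least $-M \cdot \Pr[A \notin \mc{B}]$ and the right-hand side is at most $M \cdot \Pr[A \notin \mc{B}]$. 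This gives $\abs{\E_{a \gets A} f(a) - \E_{a \gets B} f(a)} \leq M \cdot \Pr[A \notin \mc{B}]$, which is the claim. There is no real obstacle here — this is elementary probability theory; the only points requiring a moment of care are noting that $M$ exists (finiteness of $\mc{A}$) and that the hypothesis $\Pr[A \in \mc{B}] > 0$ is exactly what makes $B$, and hence $\E_{a \gets B} f(a)$, well defined.
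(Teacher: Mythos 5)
Your proof is correct and follows essentially the same route as the paper's: both decompose the expectation over $A$ according to membership in $\mc{B}$, identify the $\mc{B}$-part with $\Pr[A \in \mc{B}] \cdot \E_{a \gets B} f(a)$, and use $0 \leq f \leq \max_a f(a)$ to control the remainder. The only cosmetic difference is that you package the conclusion as a two-sided sandwich bound, whereas the paper bounds the absolute difference of two non-negative sums by the larger of the two; both are valid and equally elementary.
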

\begin{proof}
    By definition of the expectation and our random variables, we find
    \begin{equation}
    \begin{aligned}
        \abs{\E_{a \gets A} f(a) - \E_{a \gets B} f(a)}
        &=
        \abs{
            \sum_{a \in \mc{A}} \Pr[A = a] \cdot f(a)
            -
            \sum_{a \in \mc{B}} \frac{\Pr[A = a]}{\Pr[A \in \mc{B}]}
            \cdot f(a)
        }
        \\&=
        \abs{
            \sum_{a \in \mc{A} \setminus \mc{B}}
            \Pr[A = a] \cdot f(a)
            -
            \sum_{a \in \mc{B}}
            \left(\frac{\Pr[A = a]}{\Pr[A \in \mc{B}]} - \Pr[A=a]\right)
            \cdot
            f(a)
        }
        \\&=
        \abs{
            \sum_{a \in \mc{A} \setminus \mc{B}}
            \Pr[A = a] \cdot f(a)
            -
            \sum_{a \in \mc{B}}
            \left(1 - \Pr[A \in \mc{B}]\right)\frac{\Pr[A = a]}{\Pr[A
            \in \mc{B}]}
            \cdot
            f(a)
        }.
    \end{aligned}
    \end{equation}
    Since the codomain of $f$ is $\R^+_0$, the last of these absolute
    values is the absolute value of the difference between two
    non-negative real values.
    Hence, the absolute value of the difference is upper-bounded by the
    maximum of these two values.
    Noting that
    \begin{equation}
    \begin{aligned}
        \sum_{a \in \mc{A}\setminus \mc{B}}
        \Pr[A = a] \cdot f(a)
        &\leq
        \sum_{a \in \mc{A} \setminus \mc{B}}
        \Pr[A = a] \cdot \max_{a \in \mc{A}} f(a)
        \\&=
        \Pr[A \not\in \mc{B}] \cdot \max_{a \in \mc{A}} f(a)
    \end{aligned}
    \end{equation}
    and
    \begin{equation}
    \begin{aligned}
        \sum_{a \in \mc{B}}
            \left(1 - \Pr[A \in \mc{B}]\right)\frac{\Pr[A = a]}{\Pr[A
            \in \mc{B}]}
            \cdot
            f(a)
        &\leq
        \Pr[A \not\in\mc{B}]
        \frac{\sum_{a \in \mc{B}} \Pr[A = a]}{\Pr[A \in \mc{B}]}
        \cdot
        \max_{a \in \mc{A}} f(a)
        \\&=
        \Pr[A \not\in\mc{B}]
        \cdot
        \max_{a \in \mc{A}} f(a)
    \end{aligned}
    \end{equation}
    yields the desired result.
\end{proof}

With \cref{th:te-random-message} and \cref{th:te=>qm-prob} in hand, we
can now state and prove the security of the quantum money
scheme $\text{QM}_{\sqrt{\epsilon}}(S)$.

\begin{theorem}
\label{th:te=>qm-s}
    Let $S = (K, E, D)$ be an $\epsilon$-correct $\delta$-tamper-evident
    AQECM scheme with messages~$\mc{M}$.
    Then, the quantum money scheme
    $\text{QM}_{\sqrt{\epsilon}}(S)$ is
    $\left(\abs{\mc{M}}^{-1} + 4\delta + \sqrt{\epsilon}\right)$-secure.
\end{theorem}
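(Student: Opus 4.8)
The goal is to bound, for an arbitrary counterfeiting channel $A : \mc{L}(\tsf{C}) \to \mc{L}(\tsf{C} \tensor \tsf{C})$, the quantity $\E_{(k,m) \gets K'(\emptystring)} \big(\overline{V'}_{(k,m)} \tensor \overline{V'}_{(k,m)}\big) \circ A \circ M'(k,m)$ and show it never exceeds $\abs{\mc{M}}^{-1} + 4\delta + \sqrt{\epsilon}$, where $(K', M', V') = \text{QM}_{\sqrt{\epsilon}}(S)$. First I would dispose of the trivial cases: the quantity above is a probability, so if $\abs{\mc{M}}^{-1} + 4\delta + \sqrt{\epsilon} \geq 1$ we are done; hence we may assume $\delta < 1/4$ (so in particular $\delta \leq 1$, as \cref{th:te-random-message} requires) and $\epsilon < 1$. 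Under $\epsilon < 1$ the set $\mc{G}$ from \cref{df:te=>qm} instantiated with $\gamma = \sqrt{\epsilon}$ is non-empty: for a fixed message $m$, applying Markov's inequality (\cref{th:markov}) to the non-negative random variable $1 - \bra{m} \overline{D}_k \circ E_k(m) \ket{m}$ and using the $\epsilon$-correctness of $S$ gives $\Pr_{k \gets K(\emptystring)}\big[\bra{m} \overline{D}_k \circ E_k(m) \ket{m} \geq 1 - \sqrt{\epsilon}\big] \geq 1 - \sqrt{\epsilon} > 0$; averaging over a uniformly random $m$ moreover shows that, for the \emph{independent} sampling in which $k \gets K(\emptystring)$ and $m$ is uniform on $\mc{M}$, the pair $(k,m)$ lands in $\mc{G}$ with probability $p_{\mc{G}} \geq 1 - \sqrt{\epsilon}$. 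So only the non-empty branch of the construction is relevant, and there $M' = E$ and $\overline{V'}_{(k,m)}(\sigma) = \bra{m} \overline{D}_k(\sigma) \ket{m}$.

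Second I would prove a pointwise inequality that bounds the event ``both copies are accepted and both decode to $m$'' by the weaker event ``the first copy is accepted and the second copy decodes to $m$''. Writing $\Lambda_k = (\Id_\tsf{M} \tensor \Tr_\tsf{F}) \circ D_k$ for the honest decoding channel that discards the flag, the difference $\Lambda_k - \overline{D}_k$ is the completely positive map $\sigma \mapsto \bra{\bz}_\tsf{F} D_k(\sigma) \ket{\bz}_\tsf{F}$; hence replacing the second $\overline{D}_k$ by $\Lambda_k$ can only increase $\bra{m,m}\,(\overline{D}_k \tensor \overline{D}_k)(\tau)\,\ket{m,m}$ for any positive semidefinite $\tau$ on $\tsf{C} \tensor \tsf{C}$, and then the operator inequality $\bra{m} X \ket{m} \leq \Tr_{\tsf{M}}(X)$ for positive semidefinite $X$ lets me drop the projector on the first register. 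This yields $\big(\overline{V'}_{(k,m)} \tensor \overline{V'}_{(k,m)}\big)(\tau) \leq \bra{m} \big( (\Tr_\tsf{M} \circ \overline{D}_k) \tensor \Lambda_k \big)(\tau) \ket{m}$. Taking $\tau = A \circ E_k(m)$, it thus suffices to bound $\E_{(k,m) \gets K'(\emptystring)} f(k,m)$ where $f(k,m) = \bra{m} \big( (\Tr_\tsf{M} \circ \overline{D}_k) \tensor \Lambda_k \big) \circ A \circ E_k(m) \ket{m}$, a quantity lying in $[0,1]$ since $\Tr_\tsf{M} \circ \overline{D}_k$ is trace non-increasing and $\Lambda_k$ is a channel.

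Third I would replace the conditioned sampling of $K'$ by the independent sampling and then invoke \cref{th:te-random-message}. Let $q$ denote the left-hand side of \cref{th:te-random-message} for the scheme $S$, with $\tsf{A} := \tsf{C}$, with the second output register of $A$ playing the role of $\tsf{A}$, with tamper attack $A$, with post-processing channels $A'_k := \Lambda_k$, and with $\mu$ the uniform distribution on $\mc{M}$ (so $\max_{m} \Pr[\mu = m] = \abs{\mc{M}}^{-1}$); concretely, $q$ is the expectation of $f$ under the independent sampling. Applying \cref{th:te=>qm-prob} with underlying alphabet $\mc{K} \times \mc{M}$, with its random variable being that independent sampling, with its subset being $\mc{G}$, and with the map $f$ (whose maximum is at most $1$), we obtain $\E_{(k,m) \gets K'(\emptystring)} f(k,m) \leq q + (1 - p_{\mc{G}}) \leq q + \sqrt{\epsilon}$. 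Since $S$ is $\delta$-tamper-evident with $\delta \leq 1$, \cref{th:te-random-message} gives $q \leq \abs{\mc{M}}^{-1}(1 - 4\delta) + 4\delta \leq \abs{\mc{M}}^{-1} + 4\delta$. Chaining these inequalities with the pointwise bound of the previous paragraph bounds the counterfeiting probability by $\abs{\mc{M}}^{-1} + 4\delta + \sqrt{\epsilon}$ for every $A$, which is exactly $\big(\abs{\mc{M}}^{-1} + 4\delta + \sqrt{\epsilon}\big)$-security.

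The step I expect to require the most care is the pointwise reduction together with the correct bookkeeping of Hilbert spaces --- in particular, which of the two output registers of $A$ plays the role of $\tsf{C}$ and which the role of $\tsf{A}$ when feeding the construction into \cref{th:te-random-message}, and verifying that the honest ``decode and forget the flag'' channel $\Lambda_k$ is a legitimate choice of post-processing. Everything else is routine: the Markov estimate on $p_{\mc{G}}$, the case analysis ruling out $\mc{G} = \emptyset$ together with the near-trivial regime $\abs{\mc{M}}^{-1} + 4\delta + \sqrt{\epsilon} \geq 1$, and the two applications of the cited lemmas.
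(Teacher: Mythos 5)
Your proposal is correct and follows essentially the same route as the paper's proof: the same pointwise relaxation to ``first copy accepted, second copy decodes correctly'', the same invocation of \cref{th:te-random-message} with the honest flag-discarding decoder as post-processing, and the same use of \cref{th:te=>qm-prob} to pay $\sqrt{\epsilon}$ for the conditioning on $\mc{G}$. The only cosmetic differences are that you use Markov's inequality on $1 - \bra{m}\overline{D}_k \circ E_k(m)\ket{m}$ where the paper uses its \cref{th:concentration} (these are equivalent here), and you handle the $\delta \leq 1$ hypothesis of \cref{th:te-random-message} slightly more explicitly.
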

\begin{proof}
    First, we assume that $\epsilon < 1$. Otherwise, the result
    trivially holds since any quantum money scheme is $\delta'$-secure
    for any $\delta' \geq 1$.

    Now, note that the set
    \begin{equation}
        \mc{G}
        =
        \left\{
            (k,m) \in \mc{K} \times \mc{M}
            \;:\;
            \bra{k} K(\emptystring) \ket{k} > 0
            \land
            \bra{m} \overline{D}_k \circ E_k(m) \ket{m}
            \geq
            1 - \sqrt{\epsilon}
        \right\}
    \end{equation}
    used in the construction of $\text{QM}_{\sqrt{\epsilon}}(S)$ is
    non-empty.
    Indeed, as the AQECM scheme $S$ is assumed to be $\epsilon$-correct,
    we have that
    \begin{equation}
        \E_{m \gets \mc{M}}
        \E_{k \gets K(\emptystring)}
            \bra{m}
            \overline{D}_k \circ E(m)
            \ket{m}
        \geq 1 - \epsilon
    \end{equation}
    which, by the concentration inequality of \cref{th:concentration},
    implies that
    \begin{equation}
    \label{tee:eq:qm-sec}
        \Pr_{\substack{k \gets K(\emptystring)\\m \gets \mc{M}}}\left[
            (k,m) \in \mc{G}
    \right]
        =
        \Pr_{\substack{k \gets K(\emptystring)\\m \gets \mc{M}}}\left[
            \bra{m} \overline{D}_k \circ E(m)\ket{m}
            \geq
            1 - \sqrt{\epsilon}
        \right]
        \geq
        1 - \sqrt{\epsilon}
        >
        0.
    \end{equation}
    It follows that the quantum money scheme
    $\text{QM}_{\sqrt{\epsilon}}(S) = (K', M', V')$ is constructed
    according to the non-empty $\mc{G}$ case of  \cref{df:te=>qm}.
    Recalling this definition, we see that for any counterfeiting attack
    $A' : \mc{L}(\tsf{C}) \to \mc{L}(\tsf{C} \tensor \tsf{C})$ and any
    $(k,m) \in \mc{K} \times \mc{M} = \mc{K}'$, we have that
    \begin{equation}
    \begin{aligned}
            \left(\overline{V'}_{(k,m)} \tensor \overline{V'}_{(k,m)}\right)
                \circ
                A' \circ M'(k \tensor m)
            &=
            \bra{m,m}
                \left(\overline{D}_k \tensor \overline{D}_k\right)
                \circ
                A'
                \circ
                E_k(m)
            \ket{m,m}
            \\&\leq
            \bra{m}
            \left(
                \left(\Tr_{\tsf{M}} \circ \overline{D}_k\right)
                \tensor
                \tilde{D}_{k}
            \right)
            \circ
            A'
            \circ
            E_k(m)
            \ket{m}
    \end{aligned}
    \end{equation}
    where $\tilde{D}_k = \left(\Id_\tsf{M} \tensor \Tr_\tsf{F}\right)
    \circ D_k$ is the decoding channel which discards, without
    verifying, the accept/reject flag.
    At a high-level this inequality is obtained by
    noting that the probability that both instances of the map $D_k$
    accept \emph{and} decode the correct message is no greater than the
    probability that the first accepts, possibly with the wrong message,
    and that the second decodes the correct message, possibly while
    rejecting.
    Mathematically, it follows from the definition of the trace and the
    fact that for any positive semidefinite operator $\rho$ on the
    appropriate spaces, such as
    $\rho = \left(D_k \tensor D_k\right) \circ A \circ E_k(m)$, it holds
    that
    \begin{equation}
    \begin{aligned}
        \bra{m,\bo,m,\bo} \rho \ket{m,\bo,m,\bo}
        &\leq
        \sum_{m' \in \mc{M}}
        \sum_{b' \in \bs{}}
        \bra{m',\bo,m,b'} \rho \ket{m',\bo,m,b'}
        \\&=
        \bra{\bo,m}
            \left(
                \Tr_\tsf{M}
                \tensor
                \Id_{\tsf{F}}
                \tensor
                \Id_\tsf{M}
                \tensor
                \Tr_\tsf{F}
            \right)
            (\rho)
        \ket{\bo,m}.
    \end{aligned}
    \end{equation}

    By \cref{th:te-random-message} and the fact that $S$ is
    $\delta$-tamper-evident, we have that
    \begin{equation}
    \label{eq:te=>qm-s}
    \begin{aligned}
        \E_{\substack{k \gets K(\emptystring)\\m \gets \mc{M}}}
        \bra{m}
            \left(
                \left(\Tr_{\tsf{M}} \circ \overline{D}_k\right)
                \tensor
                \tilde{D}_{k}
            \right)
            \circ
            A
            \circ
            E_k(m)
        \ket{m}
        &\leq
        \frac{1}{\abs{\mc{M}}}(1 - 4\delta) + 4\delta
        \\&\leq
        \frac{1}{\abs{\mc{M}}} + 4\delta
    \end{aligned}
    \end{equation}
    where, for simplicity, we neglect the $4\delta\abs{\mc{M}}^{-1}$
    term for the remainder of the proof.

    We must now account for the possible deviation of the key generation
    procedures $K'$ of $\text{QM}_{\sqrt{\epsilon}}(S)$ from simply
    sampling $m$ uniformly at random and $k$ from $K(\varepsilon)$.
    For every key-message pair
    $(k,m) \in \mc{K}' = \mc{K} \times \mc{M}$, we have that
    \begin{equation}
        \bra{k,m} K'(\emptystring) \ket{k,m}
        =
        \begin{cases}
            \frac{
                \bra{k,m} K(\emptystring) \ket{k,m}
            }{
                \sum_{(k,m) \in \mc{G}}
                \bra{k,m} K(\emptystring) \ket{k,m}
            }
            &
            \qq{if} (k,m) \in \mc{G}
            \\
            0 & \qq{else.}
        \end{cases}
    \end{equation}
    Hence, by \cref{th:te=>qm-prob}, we have that
    \begin{equation}
    \begin{aligned}
        &
        \E_{(k,m) \gets K'(\emptystring)}
        \bra{m}
            \left(
                \left(\Tr_{\tsf{M}} \circ \overline{D}_k\right)
                \tensor
                \tilde{D}_{k}
            \right)
            \circ
            A
            \circ
            E_k(m)
        \ket{m}
        \\&\leq
        \E_{\substack{k \gets K(\emptystring)\\m \gets \mc{M}}}
        \bra{m}
            \left(
                \left(\Tr_{\tsf{M}} \circ \overline{D}_k\right)
                \tensor
                \tilde{D}_{k}
            \right)
            \circ
            A
            \circ
            E_k(m)
        \ket{m}
        +
        \left(
            1
            -
            \Pr_{\substack{k \gets K(\emptystring)\\m \gets \mc{M}}}
            \left[(k,m)\in\mc{G}\right]
        \right)
        \\&\leq
        \frac{1}{\mc{M}}
        +
        4\delta
        +
        \left(
            1
            -
            \Pr_{\substack{k \gets K(\emptystring)\\m \gets \mc{M}}}
            \left[(k,m)\in\mc{G}\right]
        \right)
        \\&\leq
        \frac{1}{\mc{M}}
        +
        4\delta
        +
        \sqrt{\epsilon}
    \end{aligned}
    \end{equation}
    where the last inequality is obtained by \cref{tee:eq:qm-sec}. This
    is the desired result.
\end{proof}

\paragraph{On avoiding sampling from $\mc{G}$.}
As we have previously mentioned, we do not consider here the complexity
of sampling key-message pairs from $\mc{G}$. If the complexity of this
sampling would be problematic, we can modify our construction of
$\text{QM}_\gamma(S)$ to avoid it. However, this comes at the cost of
only achieving a weaker notion of correctness. Instead of achieving the
notion of correctness we gave in \cref{df:qm-c}, namely that
\begin{equation}
    \bra{k} K(\varepsilon) \ket{k} > 0
    \implies
    \overline{V}_k \circ M(k) \geq 1 - \epsilon,
\end{equation}
which is to say that \emph{all} banknotes produced by the bank are
accepted with high probability, we will only achieve
\begin{equation}
    \label{eq:qm-c-weak}
    \E_{k \gets K(\varepsilon)} \overline{V}_k \circ M(k)
    \geq
    1 - \epsilon,
\end{equation}
which is to say that banknotes are accepted, \emph{on average}, with
high probability. In particular, it may be possible for the bank to
produce faulty banknotes which will not be accepted with any significant
probability.

We denote our modified construction of a quantum money scheme from an
AQECM scheme $S$ by $\text{QM}_\text{weak}(S)$. This is exactly as the
$\mc{G} \not= \varnothing$ case of $\text{QM}_\gamma(S)$ in
\cref{df:te=>qm}, except that key-message pairs $(k,m)$ are always
generated by sampling $k$ according to the key generation channel $S$
and sampling $m$ uniformly at random from the messages of $S$. If $S$ is
an $\epsilon$-correct $\delta$-tamper-evident AQECM scheme,
then it is straightforward to see that the quantum money scheme
$\text{QM}_\text{weak}(S)$ is $\epsilon$-correct according with respect
to the weaker definition of correctness given in \cref{eq:qm-c-weak} and
that it is $\left(\abs{\mc{M}}^{-1} + 4\delta\right)$-secure by applying
the proof of \cref{th:te=>qm-s} but stopping at \cref{eq:te=>qm-s}.

\section{Tamper Evidence, Revocation, and Certified Deletion}
\label{te:sc:te<=>rev}

In this section, we explore the relation between tamper evidence and
revocation schemes. A high-level overview of this section can be found
in \cref{te:sc:contributions-rev}, our initial discussion in the
introduction.

In \cref{sc:revocation-syntax} we review the syntax of a revocation
scheme and we define the security of such schemes in
\cref{sc:revocation-security-definition}. This includes our novel
definition, the existing game-based definition, and proofs of the near
equivalence of these two definitions. We give our construction of
tamper-evident schemes from revocation schemes in \cref{te:sc:rev=>te}
and our construction of revocation schemes from tamper-evident schemes
in \cref{te:sc:te=>rev}. Finally, in
\cref{sc:revocation=>encryption} we highlight that our results yield as
an easy corollary that revocation implies encryption.

\subsection{The Syntax of Revocation and Certified Deletion Schemes}
\label{sc:revocation-syntax}

Our first task in defining revocation and certified-deletion is to
lay out the syntax for schemes aiming to achieve this security notion.
We formalize this syntax in \cref{df:qecmr} as \emph{quantum encryptions
of classical messages with revocation} (QECMR) schemes, or simply
\emph{revocation schemes}, and note that it is essentially the same
definition as the one given for \emph{certified deletion encryption}
schemes by Broadbent and Islam \cite{BI20}, modulo the fact that we move
the definition from a computational setting to an information-theoretic
setting.

A QECMR $(K, E, D, R, V)$ is a QECM $(K, E, D)$ equipped
with two additional channels $R$ and $V$.
The first, $R$, is the \emph{revocation} channel and is unkeyed.
This is the channel which the receiver of the original message should
use if the message is to be revoked or forfeited. It processes a
ciphertext into a state which we will call a \emph{revocation token}.
The second channel,~$V$, is the \emph{verification} channel and it is
keyed.
This channel processes a revocation token with the help of a key and
produces a single qubit indicating whether it accepts or rejects the
revocation.

\begin{definition}
\label{df:qecmr}
	A \emph{quantum encoding of classical messages with revocation}
	(QECMR) is a tuple $(K, E, D, R, V)$ of channels such that $(K,E,D)$
	is a QECM as given in \cref{df:qecm} and $V$ and $R$ are of the form
	\begin{equation}
	\begin{aligned}
		R : \mc{L}(\tsf{C}) \to \mc{L}(\tsf{R})
		\qq{and}
		V : \mc{L}(\tsf{K} \tensor \tsf{R}) \to \mc{L}(\tsf{F})
	\end{aligned}
	\end{equation}
	for an alphabet $\mc{R}$ and Hilbert spaces $\tsf{R} = \C^\mc{R}$
	and $\tsf{F} = \C^{\bs{}}$.
	For all keys $k \in \mc{K}$, we also define the map
	$\overline{V}_k : \mc{L}(\tsf{R}) \to \C$ by
	$\rho \mapsto \bra{\bo} V(k \tensor \rho) \ket{\bo}$.
\end{definition}

The correctness of a QECMR is quantified by a single value
$\epsilon \in \R$ which is an upper bound on both (1) the probability
over the key generation process that any message is not correctly
recovered after decryption and (2) the probability over the key
generation process that any well formed revocation token is not
accepted.\footnote{%
	One could quantify correctness for a
	QECMR with a \emph{pair} of values, one bounding the
	probability of bad decryption and the other the probability of
	bad revocation. We do not need this level of granularity here.}
This is given in \cref{df:qecmr-c}.

\begin{definition}
\label{df:qecmr-c}
	Let $\epsilon \in \R$. A QECMR as given in \cref{df:qecmr} is
    $\epsilon$-correct if the QECM $(K, E, D)$ is $\epsilon$-correct
    and, for all messages $m \in \mc{M}$, we have that
			\begin{equation}
				\E_{k \gets K(\emptystring)}
				\overline{V}_k \circ R \circ E_k(m)
				\geq
				1
				-
				\epsilon.
			\end{equation}
\end{definition}

\paragraph{Certified deletion schemes as a subset of revocation
schemes.}
We briefly comment on the distinction between a revocation scheme and a
certified deletion scheme, but emphasize that this distinction is
inconsequential for our work.
Note that the previous definition does not require the revocation
channel $R$ to produce a classical output. While this property was not
formally encoded in Broadbent and Islam's original definition
\cite{BI20}, it was explicitly emphasized in the surrounding discussion.

A natural way to enforce the requirement that the revocation map
produces a classical output would be to require that
$R = \Delta_{\mc{R}} \circ R$, which is to say that the output of $R$
never changes if it is immediately measured in
the computational basis.
This is \emph{not} what we will do here.
Instead, we make a similar requirement on the verification map $V$.

\begin{definition}
\label{df:cd}
	A QECMR scheme as given in \cref{df:qecmr} is a \emph{certified
	deletion scheme} if
	$V = V \circ \left(\Id_\tsf{K} \tensor \Delta_{\mc{R}}\right)$.
\end{definition}

We impose this condition on the \emph{verification} map instead of
the \emph{revocation} map because, of the two, only the verification map
appears in the upcoming security definitions.
Hence, the condition ``revocation tokens in a certified deletion scheme
must be classical'' needs to be encoded in the verification map for it
to be relevant in the discussion of security. However, for honest users,
there is no distinction between these two options.
Indeed, if the verification map always begins with a computational basis
measurement, the revocation map may as well output a classical state.
Similarly, in this case, adversaries attempting to fool the verification
map will gain no advantage in submitting a non-classical state.
Mathematically, this is due to the fact that the computational basis
measurement
channel $\Delta_\mc{R}$ is idempotent, yielding, for any $k \in \mc{K}$,
that
$
	(V_k \circ \Delta_\mc{R})
	\circ
	(\Delta_\mc{R} \circ R)
	=
	(V_k \circ \Delta_\mc{R}) \circ R
	=
	V_k \circ (\Delta_\mc{R} \circ R)
$.

\subsection{Defining Revocation Security}
\label{sc:revocation-security-definition}

We give two definitions of revocation security for QECMR schemes and
show that they are conceptually equivalent. The first,
\cref{df:revocation-a}, is a novel formulation which is well suited for
the information-theoretic regime and closely resembles Gottesman's
definition of tamper evidence. The second, \cref{df:game-revocation}, is
essentially the one given by Broadbent and Islam \cite{BI20}.

\paragraph{Our new definition inspired by tamper evidence.}
We begin by defining a \emph{revocation} attack against an QECMR
scheme. This is a channel which, on input of a ciphertext, aims to
produce a revocation token which will be accepted while also keeping
some information on the ciphertext.

\begin{definition}
\label{df:revocation-a}
	A \emph{revocation attack} against a QECMR scheme as given in
	\cref{df:qecmr} is a channel of the
	form $A : \mc{L}(\tsf{C}) \to \mc{L}(\tsf{R} \tensor \tsf{A})$ for
	some Hilbert space $\tsf{A}$.
\end{definition}

This is quite similar to the definition of a tamper
attack $A: \mc{L}(\tsf{C}) \to \mc{L}(\tsf{C} \tensor \tsf{A})$. The
only difference is that the ciphertext space in the codomain is replaced
with the revocation token space. Conceptually, there is also a close
similarity: in both cases the adversary attempts to have an honest party
accept a state while trying to obtain or keep information.

We now give our novel definition for revocation security, inspired by
the definition of tamper evidence given in \cref{df:te-security}. The
only differences are that we substitute the decryption map of an AQECM
with the verification map of a QECMR and that we bound the
\emph{expectation} of the trace distance over the choice of keys
instead of requiring that it is small with high probability over the
choice of the keys.

\begin{definition}
\label{df:revocation}
	Let $\delta \in \R$ and let $(K, E, D, R, V)$ be a QECMR scheme as 
	given in \cref{df:qecmr}.
	This scheme has $\delta$-revocation security if for any
	two messages $m,m' \in \mc{M}$ and any revocation attack $A$ against
	it we have that
	\begin{equation}
	\label{eq:revocation}
		\E_{k \gets K(\emptystring)}
		\frac{1}{2}\norm{
			\left(\overline{V}_k \tensor \Id_\tsf{A}\right)
			\circ
			A
			\circ E_k(m-m')
		}_1
		\leq
		\delta.
	\end{equation}
\end{definition}

\paragraph{The existing game-based definition of revocation security.}
We now recall the game-based definition of security for revocation,
which is stated in \cref{df:game-revocation}. This will formalize the
idea of the following game played between an adversary and a referee:

\begin{enumerate}
	\item
		The adversary prepares a tripartite state over the space
		$\tsf{M} \tensor \tsf{M} \tensor \tsf{A}$. This should be
		understood as the adversary preparing two candidate messages
		$m_\bz$ and $m_\bo$ on the space $\tsf{M} \tensor \tsf{M}$ as
		well as keeping a memory state on the space $\tsf{A}$.
		The adversary then sends the $\tsf{M} \tensor \tsf{M}$ subsystem
		to the referee.
	\item
		The referee samples a key $k \gets K(\emptystring)$, samples
		uniformly at random a bit $b \gets \bs{}$, and measures the
		received states on the $\tsf{M} \tensor \tsf{M}$ space in the
		computational basis to obtain a result $(m_\bz, m_\bo)$.
		Then, they return $E_k(m_b)$ to the adversary.
	\item
		Acting on their memory state and the $E_k(m_b)$ state received
		from the referee, the adversary produces a bipartite state on
		$\tsf{R} \tensor \tsf{A}$ and sends the $\tsf{R}$ subsystem to
		the referee.
		This should be understood as the adversary attempting to
		convince the referee that the ciphertext has been revoked while
        keeping some information.
	\item
		The referee verifies the state returned by the adversary by
		applying $V_k$ and measuring the resulting qubit in the
		computational basis. Call the result $v$.
	\item
		The referee divulges the key $k$ to the adversary, after which
		the adversary produces a guess $g \in \bs{}$ for the value of
		$b$.
\end{enumerate}

Conceptually, we understand the adversary to win if and only if they
correctly guess the value of $b$ \emph{and} the referee accepted the
revocation.
In other words, the adversary wins if and only if $v = \bo$ and $g = b$.
We could then define the security of this scheme as the advantage of the
adversary in winning this game beyond the trivially achievable
probability of $\frac{1}{2}$.

Instead, we take a slightly different approach and simply
require that the adversary's output $g$ does not depend much on $b$
when $v = \bo$. Formally, the scheme
achieves $\delta$-game-revocation security if the absolute value between
    $\E_{k\gets K(\varepsilon)}\Pr\left[v=\bo\land g=\bo |b=\bz\right]$
and~$\E_{k\gets K(\varepsilon)}\Pr\left[v=\bo\land g=\bo |b=\bo\right]$,
where the probability is taken over one execution of this game with the
sampled key $k$, is at most~$\delta$. This is represented in
\cref{tb:revocation}.

\begin{table}[H]
	\begin{center}
	\begin{tabular}{c c c c c}
	\toprule
		&
		\multicolumn{2}{c}{\footnotesize Ref.~Sampled $b = \bz$} &
		\multicolumn{2}{c}{\footnotesize Ref.~Sampled $b = \bo$} \\
	\cmidrule(lr){2-3} \cmidrule(lr){4-5}
		&
		{\footnotesize Ref.~Rej. ($v = \bz$)} &
		{\footnotesize Ref.~Acc. ($v = \bo$)} &
		{\footnotesize Ref.~Rej. ($v = \bz$)} &
		{\footnotesize Ref.~Acc. ($v = \bo$)} \\
	\midrule
		{\footnotesize Adv.~Guessed $g = \bz$}\\
		{\footnotesize Adv.~Guessed $g = \bo$} &
		&
		$p_{\bz}$
		&
		&
		$p_{\bo}$
		\\
	\bottomrule
\end{tabular}
	\end{center}
	\caption{\label{tb:revocation}%
		Eight possible scenarios in a revocation game. For both values
        of $b' \in \{\bz,\bo\}$, we let
        $p_{b'} = \Pr[g = 1 \land v = 1 | b = b']$ and identify these
		events in the table.}
\end{table}

We formalize this game and security statement in the following
two definitions. In order to differentiate from our previously defined
notions of revocation attacks and revocation security, we will use the
terms \emph{game-revocation attack} and \emph{game-revocation security}
here.

\begin{definition}
\label{df:game-revocation-a}
	Let $(K, E, D, V, R)$ be a QECMR as given in \cref{df:qecmr}.
	A \emph{game-revocation attack} against this scheme is a triplet of
	channel $(A^0, A^1, A^2)$ of the form
	\begin{equation}
		\begin{aligned}
			&
			A^0 :
			\mc{L}(\C) \to \mc{L}(\tsf{M}\tensor\tsf{M}\tensor\tsf{A})
			\\&
			A^1 :
			\mc{L}(\tsf{C}\tensor\tsf{A})
			\to
			\mc{L}(\tsf{R}\tensor\tsf{A})
			\\&
			A^2 :
			\mc{L}(\tsf{K} \tensor \tsf{A}) \to \mc{L}(\C^{\bs{}})
		\end{aligned}
	\end{equation}
	for some Hilbert space $\tsf{A}$.
	
	For every key $k \in \mc{K}$, we also define the map $A^2_k :
	\mc{L}(\tsf{A}) \to \mc{L}(\C^{\bs{}})$ by $\rho \mapsto A^2(k \tensor
	\rho)$.
\end{definition}

Note that $A^0$, $A^1$, and $A^2$ model, respectively, the actions taken
by the adversary in steps $1$, $3$, and $5$ of the game we described.

\begin{definition}
\label{df:game-revocation}
	Let $S = (K, E, D, V, R)$ be a QECMR scheme as given in
	\cref{df:qecmr}.
	For all keys $k \in \mc{K}$, we define the channels
	\begin{equation}
		E^\bz_k
		=
		E_k \circ \left(\Delta_\mc{M} \tensor \Tr_\tsf{M}\right)
		\qq{and}
		E^\bo_k
		=
		E_k
		\circ\left(\Tr_\tsf{M} \tensor \Delta_\mc{M}\right)
	\end{equation}
	which is to say that for both $b \in \bs{}$, $E^b_k$ measures its
	input in the computational basis to obtain an outcome
	$(m_\bz, m_\bo)$ and then outputs $E_k(m_b)$.

	The scheme $S$ has \emph{$\delta$-game-revocation security} if for
	all game-revocation attacks $(A^0, A^1, A^2)$ against it we have
	that
	\begin{equation}
		\abs{
			\E_{k \gets K(\emptystring)}
			\bra{\bo}
			A^2_k
			\circ
			\left(\overline{V}_k \tensor \Id_\tsf{A}\right)
			\circ
			A^1
			\circ
			\left(\left(E^\bz_k - E^\bo_k\right) \tensor \Id_A\right)
			\circ
			A^0(\emptystring)
			\ket{\bo}
		}
		\leq
		\delta.
	\end{equation}
\end{definition}

This is essentially the definition of certified deletion given by
Broadbent and Islam \cite{BI20} and of revocation given by Unruh
\cite{Unr15b}. The main difference is that we do not
state this definition in an asymptotic regime; instead of
requiring the adversaries to have an advantage at most negligible in a
security parameter, we say that the scheme achieves $\delta$-security if
their advantage is at most $\delta$. We also differ from Broadbent and
Islam as we require the adversary to initially submit \emph{two}
messages. Their definition only considered cases where the message
space was the set of bit strings of a fixed length and fixed the second
message which the referee could use to be the all-zero bit string. In
our setting, where we do not impose such restrictions on the message
space, it is more natural to let the adversary submit two messages than
to have a definition which depends on the referee's alternative message.

\paragraph{Another game-based definition.}
A slightly different game-based definition of revocation security has
also been given in the work of  Hiroka, Morimae, Nishimaki, and
Yamakawa~\cite{HMNY21} on certified deletion.
In their definition, the adversary does not need the referee to accept
the revocation token to win the game. However, the adversary is given
the key in step 5 if and only if the referee accepts this token. If the
referee does not accept, then the adversary must generate their guess
$g$ without knowledge of the key $k$. Evidently, a QECMR scheme
achieving a good level of revocation security with respect to the game
of Hiroka \emph{et al}.~must also be a good encryption scheme. Indeed,
if the scheme was not a good encryption scheme, then an adversary could
distinguish the two ciphertexts without needing the key, and hence would
be able to win the game without even trying to obtain the key. Thus, the
Hiroka \emph{et al}.~definition has the advantages of covering two
security notions, encryption and revocation, in a single security game
and may better reflect certain use cases. However, it does have certain
theoretical drawbacks.

One such drawback is that analysing the revocation security of a QECMR
scheme with respect to this definition is a bit more involved than with
respect to \cref{df:game-revocation}. Indeed, we must now consider the
adversary's guessing strategy in two very distinct cases: in the case
where they do get the key, and in the case where they do not.

A second drawback is that this definition ties
together two security notions, encryption and revocation, which
\emph{a priori} are distinct. For example, not every encryption scheme
allows revocation. This grouping of encryption and revocation hinders a
more fine-grained study of these security notions. For example, is it
possible to achieve revocation without encryption? With respect to their
definition, the answer is trivially ``no''. On the other hand, this is a
non-trivial question with respect to \cref{df:game-revocation}, even if
we show later in \cref{sc:revocation=>encryption} that the answer
remains ``no''.
We believe that it is theoretically more interesting to know that
revocation security implies encryption by necessity, and not simply by
definition.

\paragraph{The near equivalence of these two definitions.}
As we have introduced in \cref{df:revocation} a new formulation of
revocation security, we have a responsibility to show how it relates to
the existing definition. We show that there is at most a factor of two
difference between the revocation security and game-revocation security
achieved by any given QECMR scheme.
More precisely, we show in \cref{th:itrev=>gamerev} that
$\delta$-revocation implies $2\delta$-game-revocation and in
\cref{th:gamerev=>itrev} that $\delta$-game-revocation implies
$\delta$-revocation. In other words, if a scheme has
$\delta_\text{r}$-revocation and $\delta_\text{gr}$-game-revocation
security, then
\begin{equation}
	\delta_\text{r}
	\leq
	\delta_\text{gr}
	\leq
	2\delta_\text{r}
	\qq{or, equivalently,}
	\frac{1}{2}\delta_\text{gr}
	\leq
	\delta_\text{r}
	\leq
	\delta_\text{gr}.
\end{equation}

This factor of two between upper and lower bounds may be a proof
artefact. However, it seems to be closely related to the fact that the
inequality $
	\abs{\bra{\bo}\Phi(\rho - \sigma)\ket{\bo}}
	\leq
	\frac{1}{2}\norm{\rho - \sigma}_1
$ for a channel $\Phi$ may not hold if $\rho$ or~$\sigma$ are
subnormalized operators. In the discussion surrounding
\cref{th:trace-channel}, we noted that the correct inequality in
this case is
$\abs{\bra{\bo}\Phi(\rho-\sigma)\ket{\bo}} \leq \norm{\rho - \sigma}_1$,
namely that the upper bound must be multiplied by a factor of two to
hold.

\begin{theorem}
\label{th:itrev=>gamerev}
	A QECMR with $\delta$-revocation security has
	$2\delta$-game-revocation security.
\end{theorem}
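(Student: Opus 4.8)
The plan is to reduce an arbitrary game-revocation attack to a convex combination of ordinary revocation attacks, and then to close the argument with the assumed $\delta$-revocation security of \cref{df:revocation} together with \cref{th:trace-channel} (to dispose of the final, key-dependent adversarial map). Fix a game-revocation attack $(A^0, A^1, A^2)$ against the scheme and write $\sigma = A^0(\emptystring) \in \mc{D}(\tsf{M} \tensor \tsf{M} \tensor \tsf{A})$. For each pair $(m_0, m_1) \in \mc{M} \times \mc{M}$, set
\begin{equation}
	\xi_{m_0,m_1}
	=
	\left(\bra{m_0} \tensor \bra{m_1} \tensor I_\tsf{A}\right)
	\sigma
	\left(\ket{m_0} \tensor \ket{m_1} \tensor I_\tsf{A}\right)
	\in \Pos(\tsf{A}),
\end{equation}
so that $\sum_{m_0,m_1} \Tr(\xi_{m_0,m_1}) = \Tr(\sigma) = 1$. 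The first step I would carry out is a short direct computation showing that, because $E^\bz_k$ and $E^\bo_k$ each begin by measuring one message register and tracing out the other, one has
\begin{equation}
\label{eq:plan-decomp}
	\left(\left(E^\bz_k - E^\bo_k\right) \tensor \Id_\tsf{A}\right)(\sigma)
	=
	\sum_{m_0, m_1 \in \mc{M}}
	\left(E_k(m_0) - E_k(m_1)\right) \tensor \xi_{m_0,m_1},
\end{equation}
the point being that the two separate sums defining $(E^\bz_k\tensor\Id_\tsf{A})(\sigma)$ and $(E^\bo_k\tensor\Id_\tsf{A})(\sigma)$ can be merged under a single pair of message indices once the joint message--memory operator $\xi_{m_0,m_1}$ is isolated.

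Next I would fold the memory preparation into a family of revocation attacks. For each pair $(m_0,m_1)$ with $q_{m_0,m_1} := \Tr(\xi_{m_0,m_1}) > 0$ (the remaining terms of \cref{eq:plan-decomp} vanishing, since a positive operator of trace zero is zero), let $\hat\xi_{m_0,m_1} = q_{m_0,m_1}^{-1}\xi_{m_0,m_1} \in \mc{D}(\tsf{A})$ and define $A_{m_0,m_1} : \mc{L}(\tsf{C}) \to \mc{L}(\tsf{R} \tensor \tsf{A})$ by $\rho \mapsto A^1(\rho \tensor \hat\xi_{m_0,m_1})$; this is a composition of channels, hence a legitimate revocation attack in the sense of \cref{df:revocation-a}. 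Linearity and \cref{eq:plan-decomp} then give $A^1 \circ \left(\left(E^\bz_k - E^\bo_k\right) \tensor \Id_\tsf{A}\right)(\sigma) = \sum_{m_0,m_1} q_{m_0,m_1} \, A_{m_0,m_1}\!\left(E_k(m_0 - m_1)\right)$. To eliminate the final, key-dependent channel $A^2_k$, I would set $Z_k = \left(\overline{V}_k \tensor \Id_\tsf{A}\right) \circ A^1 \circ \left(\left(E^\bz_k - E^\bo_k\right) \tensor \Id_\tsf{A}\right) \circ A^0(\emptystring)$, observe that $Z_k$ is Hermitian since every map involved is Hermitian-preserving (a channel, a difference of channels, or a reduction of one), and apply \cref{th:trace-channel} to the channel $A^2_k : \mc{L}(\tsf{A}) \to \mc{L}(\C^{\bs{}})$ with $Z_k$ in the role of $A$ and $B = 0$, obtaining $\abs{\bra{\bo} A^2_k(Z_k) \ket{\bo}} \leq \norm{Z_k}_1$. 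Since $\bra{\bo}A^2_k(Z_k)\ket{\bo}$ is real, pulling the absolute value inside the expectation bounds the game-revocation quantity by $\E_{k \gets K(\emptystring)} \norm{Z_k}_1$. It is exactly here — in the absence of the factor $1/2$ one would have were $\overline{V}_k$ trace-preserving rather than merely trace-non-increasing — that the factor of two in the statement is incurred.

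Finally, I would bound $\E_k \norm{Z_k}_1$ by the triangle inequality for $\norm{\cdot}_1$ together with the assumed revocation security: using that $\sum_{m_0,m_1} q_{m_0,m_1} = 1$ and that \cref{df:revocation} bounds $\E_k \tfrac{1}{2} \norm{\left(\overline{V}_k \tensor \Id_\tsf{A}\right) \circ A_{m_0,m_1} \circ E_k(m_0-m_1)}_1$ by $\delta$ for each pair $(m_0,m_1)$, one gets
\begin{equation}
\begin{aligned}
	\E_{k \gets K(\emptystring)} \norm{Z_k}_1
	&\leq
	\sum_{m_0,m_1} q_{m_0,m_1}
	\E_{k \gets K(\emptystring)}
	\norm{
		\left(\overline{V}_k \tensor \Id_\tsf{A}\right)
		\circ A_{m_0,m_1} \circ E_k(m_0 - m_1)
	}_1
	\\&\leq
	\sum_{m_0,m_1} q_{m_0,m_1} \cdot 2\delta
	=
	2\delta,
\end{aligned}
\end{equation}
which gives the claim. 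The step I expect to be the main obstacle is getting the bookkeeping of \cref{eq:plan-decomp} exactly right: one must check carefully that the dephasing built into $E^\bz_k$ and $E^\bo_k$ renders any off-diagonal coherences of $\sigma$ on the two message registers irrelevant, so that the possibly-entangled quantum memory can be repackaged cleanly as the normalized operators $\hat\xi_{m_0,m_1}$ parametrizing the reduced attacks. Once this decomposition is in place, the treatment of $A^2_k$ via \cref{th:trace-channel} and the final summation are routine, modulo the subnormalization subtlety already flagged above, which is precisely the source of the factor of two.
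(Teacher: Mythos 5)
Your proposal is correct and follows essentially the same route as the paper's proof: the same decomposition of $A^0(\emptystring)$ into message-indexed memory operators (the paper writes $p_{m,m'}\,m\tensor m'\tensor\sigma_{m,m'}$ where you write $\xi_{m_0,m_1}$), the same reduction to a family of revocation attacks $\rho\mapsto A^1(\rho\tensor\hat\xi_{m_0,m_1})$, and the same use of \cref{th:trace-channel} (in its subnormalized form, without the factor $\tfrac{1}{2}$) to dispose of $A^2_k$, which is indeed where the factor of two enters. The only differences are cosmetic — the order in which $A^0$ and $A^2$ are eliminated, and your choice to carry unnormalized operators and normalize at the end.
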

\begin{proof}
	Let $(K, E, D, V, R)$ be a QECMR as given in \cref{df:qecmr} and let
	$(A^0, A^1, A^2)$ be a game-revocation attack against this scheme as
	given in \cref{df:game-revocation-a}.
	The first step of this proof is to eliminate the roles
	of the $A^0$ and $A^2$ channels and identify the remaining $A^1$
    channel with a revocation attack as considered in
	\cref{df:revocation}.

	Recall that the quantity to bound is
	\begin{equation}
		\abs{
			\E_{k \gets K(\emptystring)}
			\bra{\bo}
				A^2_k
				\circ
				(\overline{V}_k \tensor \Id_\tsf{A})
				\circ
				A^1
				\circ
				\left((E^0_k - E^1_k) \tensor \Id_\tsf{A}\right)
				\circ
				A^0(\emptystring)
			\ket{\bo}
		}.
	\end{equation}
	We begin by eliminating $A^2$.
	By the triangle inequality and
	\cref{th:trace-channel}, which tells us
	that~$\abs{\bra{\bo} \Phi(A - B)\ket{\bo}} \leq \norm{A - B}_1$ if
	$\Phi$ is a channel and~$A-B$ is Hermitian, we have that
	\begin{equation}
	\label{eq:itrev=>gamerev-1}
	\begin{aligned}
		&
		\abs{
			\E_{k \gets K(\emptystring)}
			\bra{\bo}
				A^2_k
				\circ
				(\overline{V}_k \tensor \Id_\tsf{A})
				\circ
				A^1
				\circ
				\left((E^0_k - E^1_k) \tensor \Id_\tsf{A}\right)
				\circ
				A^0(\emptystring)
			\ket{\bo}
		}
		\\&\leq
		\E_{k \gets K(\emptystring)}
		\abs{
			\bra{\bo}
				A^2_k
				\circ
				(\overline{V}_k \tensor \Id_\tsf{A})
				\circ
				A^1
				\circ
				\left((E^0_k - E^1_k) \tensor \Id_\tsf{A}\right)
				\circ
				A^0(\emptystring)
			\ket{\bo}
		}
		\\&\leq
		\E_{k \gets K(\emptystring)}
		\norm{
			(\overline{V}_k \tensor \Id_\tsf{A})
			\circ
			A^1
			\circ
			\left((E^0_k - E^1_k) \tensor \Id_\tsf{A}\right)
			\circ
			A^0(\emptystring)
		}_1.
	\end{aligned}
	\end{equation}
	Next, we eliminate $A^0$. By definition, we have that
	\begin{equation}
		\left(\left(E^0_k - E^1_k\right) \tensor \Id_\tsf{A}\right)
		\circ
		A^0
		=
		E_k
		\circ
		\left(
			\left(\Id_\tsf{M} \tensor \Tr_\tsf{M}\right)
			-
			\left(\Tr_\tsf{M} \tensor \Id_\tsf{M}\right)
		\right)
		\circ
		\left(\Delta_\mc{M} \tensor \Delta_\mc{M}\right)
		\circ
		A^0
	\end{equation}
	and we can write
	\begin{equation}
		\left(\Delta_\mc{M} \tensor \Delta_\mc{M}\right)
		\circ
		A^0
		(\emptystring)
		=
		\sum_{m,m' \in \mc{M}} p_{m,m'} \cdot
		m \tensor m' \tensor \sigma_{m,m'}
	\end{equation}
	where the $p_{m,m'}$ terms are non-negative reals which sum to $1$
	and $\sigma_{m,m'} \in \mc{D}(\tsf{A})$ is the state of the
	adversarial memory register $\tsf{A}$ if $(m,m')$ is the result of
	the measurement. Thus,
	\begin{equation}
	\begin{aligned}
		\left(\left(E^0_k - E^1_k\right) \tensor \Id_\tsf{A}\right)
		\circ
		A^0(\emptystring)
		=
		\sum_{m,m' \in \mc{M}}
		p_{m,m'}
		\cdot
		E_k(m - m') \tensor \sigma_{m,m'}
	\end{aligned}
	\end{equation}
	and so
	\begin{equation}
	\begin{aligned}
		&
		\E_{k \gets K(\emptystring)}
		\norm{
			(\overline{V}_k \tensor \Id_\tsf{A})
			\circ
			A^1
			\circ
			\left((E^0_k - E^1_k) \tensor \Id_\tsf{A}\right)
			\circ
			A^0(\emptystring)
		}_1
		\\&=
		\E_{k \gets K(\emptystring)}
		\norm{
			\sum_{m,m' \in \mc{M}}
			p_{m,m'}
			(\overline{V}_k \tensor \Id_\tsf{A})
			\circ
			A^1
			\left(
				\left(E_k(m - m')\right) \tensor \sigma_{m,m'}
			\right)
		}_1.
    \end{aligned}
    \end{equation}
    Using the triangle inequality, we can upper bound the values of the
    previous equation by
    \begin{equation}
    \begin{aligned}
        &
		\sum_{m,m' \in \mc{M}}
		p_{m,m'}
		\E_{k \gets K(\emptystring)}
		\norm{
			(\overline{V}_k \tensor \Id_\tsf{A})
		\circ
			A^1
			\left(
				\left(E_k(m - m')\right) \tensor \sigma_{m,m'}
			\right)
		}_1
		\\&=
		\sum_{m,m' \in \mc{M}}
		p_{m,m'}
		\E_{k \gets K(\emptystring)}
		\norm{
			(\overline{V}_k \tensor \Id_\tsf{A})
			\circ
			A'_{m,m'}
			\circ
			E_k(m - m')
		}_1
	\end{aligned}
	\end{equation}
	where, for every $m,m' \in \mc{M}$, we define the channel
	$A'_{m,m'}$ by $\rho \mapsto A^1(\rho \tensor \sigma_{m,m'})$. Note
	that the channel $A'_{m,m'}$ is a revocation attack.
	As we assume that the QECMR scheme has~$\delta$-revocation security,
    we have by definition that
	\begin{equation}
		\E_{k \gets K(\varepsilon)}
		\frac{1}{2}
		\norm{
			(\overline{V}_k \tensor \Id_\tsf{A})
			\circ
			A'_{m,m'}
			\circ
			E_k(m - m')
		}_1
		\leq \delta
	\end{equation}
	for all $m,m' \in \mc{M}$ and so
	\begin{equation}
		\sum_{m,m'}
		p_{m,m'}
		\E_{k \gets K(\emptystring)}
		\norm{
			(\overline{V}_k \tensor \Id_\tsf{A})
			\circ
			A'_{m,m'}
			\circ
			E_k(m - m')
		}_1
		\leq
		\sum_{m,m'} p_{m,m'} 2 \delta
		=
		2\delta
	\end{equation}
	which, with \cref{eq:itrev=>gamerev-1}, yields the desired result.
\end{proof}

We now move on to show that $\delta$-game-revocation
security implies $\delta$-revocation security. We do this by
transforming an revocation attack $A$ and pair of messages $(m,m')$ into
a game-revocation attack $(A^0, A^1, A^2)$. Essentially, $A^0$ simply
generates the state $m \tensor m'$ as the candidate messages for the
referee, $A^1$ is taken to be $A$, and $A^2$ is an optimal
distinguishing channel for the two possible states held by the
adversary. For technical reasons, namely to saturate the first bound in
\cref{th:trace-channel}, we also need to consider a closely related
game-revocation attack $(A^0, A^1, A^2_\text{Flip})$ which simply flips
the guess of $A^2$.

\begin{theorem}
\label{th:gamerev=>itrev}
	A QECMR with $\delta$-game-revocation security has
	$\delta$-revocation security.
\end{theorem}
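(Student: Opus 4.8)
The plan is to prove the contrapositive, or equivalently to take an arbitrary revocation attack and a pair of messages violating (or saturating) the revocation bound and package them into a game-revocation attack that certifies the same quantity. So suppose $(K,E,D,R,V)$ has $\delta$-game-revocation security, fix a revocation attack $A : \mc{L}(\tsf{C}) \to \mc{L}(\tsf{R} \tensor \tsf{A})$ and messages $m, m' \in \mc{M}$, and set $\rho_k = (\overline{V}_k \tensor \Id_\tsf{A}) \circ A \circ E_k(m)$ and $\sigma_k = (\overline{V}_k \tensor \Id_\tsf{A}) \circ A \circ E_k(m')$, two subnormalized operators on $\tsf{A}$ whose difference is Hermitian. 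The goal is to show $\E_{k \gets K(\emptystring)} \frac{1}{2}\norm{\rho_k - \sigma_k}_1 \leq \delta$. First I would build $A^0 : \mc{L}(\C) \to \mc{L}(\tsf{M} \tensor \tsf{M} \tensor \tsf{A})$ to be the channel $c \mapsto c \cdot \ketbra{m} \tensor \ketbra{m'} \tensor \varepsilon$ (producing the fixed candidate pair with a trivial one-dimensional memory register), and $A^1 : \mc{L}(\tsf{C} \tensor \tsf{A}) \to \mc{L}(\tsf{R} \tensor \tsf{A})$ to be essentially $A$ acting on the $\tsf{C}$ register and carrying the (trivial) $\tsf{A}$ register along. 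With these choices, the inner expression of \cref{df:game-revocation} reduces, before applying $A^2_k$, to exactly $\rho_k - \sigma_k$ (up to the bookkeeping that $E^0_k - E^1_k$ applied to $\ketbra{m}\tensor\ketbra{m'}$ gives $E_k(m) - E_k(m')$).

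The remaining task is to choose $A^2$ so that $\abs{\E_k \bra{\bo} A^2_k(\rho_k - \sigma_k) \ket{\bo}}$ recovers $\E_k \frac{1}{2}\norm{\rho_k - \sigma_k}_1$. The natural move is to invoke \cref{th:trace-channel}: for each $k$ there is a channel $\Psi_k : \mc{L}(\tsf{A}) \to \mc{L}(\C^{\bs{}})$ — depending on $\rho_k$ and $\sigma_k$ — saturating $\abs{\bra{\bo}\Psi_k(\rho_k - \sigma_k)\ket{\bo}}$ against (half of) the trace norm, but with the subnormalization subtlety that the clean bound $\abs{\bra{\bo}\Psi_k(\rho_k-\sigma_k)\ket{\bo}} = \frac{1}{2}\norm{\rho_k - \sigma_k}_1$ requires using the first inequality of that lemma, which bounds $\abs{\bra{\bz}\Phi(A-B)\ket{\bz}} + \abs{\bra{\bo}\Phi(A-B)\ket{\bo}}$ by $\norm{A-B}_1$. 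Concretely, the saturating measurement $\mu$ from the proof of \cref{th:trace-channel} (with $P$ the projector onto the non-negative part of $\rho_k - \sigma_k$ and $Q$ onto the negative part) gives $\bra{\bz}\Psi_k(\rho_k - \sigma_k)\ket{\bz} = \sum_{\lambda_j \geq 0}\abs{\lambda_j}$ and $\bra{\bo}\Psi_k(\rho_k - \sigma_k)\ket{\bo} = -\sum_{\lambda_j < 0}\abs{\lambda_j}$, so one of $\abs{\bra{\bz}\Psi_k(\rho_k-\sigma_k)\ket{\bz}}$, $\abs{\bra{\bo}\Psi_k(\rho_k-\sigma_k)\ket{\bo}}$ equals half the trace norm and the other may not. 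This is why, as the preamble to the theorem hints, we also consider $(A^0, A^1, A^2_{\text{Flip}})$ where $A^2_{\text{Flip}}$ post-composes $A^2$ with a bit-flip: one of the two game-revocation attacks will have its output-$\bo$-branch aligned with the positive part of $\rho_k - \sigma_k$ and give exactly $\frac{1}{2}\norm{\rho_k - \sigma_k}_1$ in expectation after taking $A^2_k = \Psi_k$ (a single channel that works for all $k$ simultaneously can be had by defining $\mu$ keyed on $k$ inside $A^2$, which is legitimate since $A^2 : \mc{L}(\tsf{K}\tensor\tsf{A}) \to \mc{L}(\C^{\bs{}})$ has access to $k$).

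So the structure is: construct the two game-revocation attacks; observe that for each $k$, $\frac{1}{2}\norm{\rho_k - \sigma_k}_1$ equals $\max$ of the two branch-values $\abs{\bra{\bo}\Psi_k(\rho_k-\sigma_k)\ket{\bo}}$ and $\abs{\bra{\bo}(\text{Flip}\circ\Psi_k)(\rho_k-\sigma_k)\ket{\bo}}$, hence equals their sum minus the smaller — but more simply, for each $k$ it equals \emph{one} of the two, and since both attacks are bounded by $\delta$ and the expectation of a max is at most the sum, or by splitting $\mc{K}$ into the two sets according to which branch dominates, we get $\E_k \frac{1}{2}\norm{\rho_k-\sigma_k}_1 \leq \delta$ from the game-revocation bound applied to whichever of the two attacks realizes the value on each piece. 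The cleanest route: let $A^2_k$ output $\bo$ exactly on the positive-eigenvalue subspace of $\rho_k - \sigma_k$; then $\bra{\bo}A^2_k(\rho_k - \sigma_k)\ket{\bo} = \sum_{\lambda_j(k)\geq 0}\abs{\lambda_j(k)} \geq \frac{1}{2}\norm{\rho_k - \sigma_k}_1$ is already nonnegative and $\geq$ half the norm — wait, it is $\geq$, not $=$, so this alone gives $\E_k\frac{1}{2}\norm{\rho_k-\sigma_k}_1 \leq \E_k\bra{\bo}A^2_k(\rho_k-\sigma_k)\ket{\bo} \leq \delta$ directly, and the Flip attack is needed only to handle the absolute value / sign if $\sum_{\lambda_j \geq 0}\abs{\lambda_j}$ could somehow be the smaller of the two, which it is not once we choose the subspace correctly. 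The main obstacle is precisely this subnormalization bookkeeping — getting a single channel $A^2$ (with key access) that dominates $\frac{1}{2}\norm{\cdot}_1$ uniformly in $k$ rather than only pointwise, and confirming that $\Tr(\rho_k) \neq \Tr(\sigma_k)$ in general does not break the argument — everything else (building $A^0$, identifying $A^1$ with $A$, pushing the expectation and the absolute value around via the triangle inequality) is routine and mirrors the proof of \cref{th:itrev=>gamerev} run in reverse.
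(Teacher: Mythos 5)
Your construction is the paper's: $A^0$ prepares the fixed pair $(m,m')$ with a dummy memory register, $A^1$ is essentially $A$, and $A^2$ is a key-controlled Helstrom measurement for $\rho_k - \sigma_k$, considered together with its bit-flipped variant $A^2_\text{Flip}$. The correct way to finish --- and the way the paper finishes --- is to note that the two branch values $a_k = \Tr\left(P_k(\rho_k - \sigma_k)\right) = \sum_{\lambda_j(k) \geq 0}\abs{\lambda_j(k)}$ and $b_k = -\Tr\left(Q_k(\rho_k-\sigma_k)\right) = \sum_{\lambda_j(k)<0}\abs{\lambda_j(k)}$ are each nonnegative and sum \emph{exactly} to $\norm{\rho_k - \sigma_k}_1$ for every $k$; since each branch has a definite sign uniformly in $k$, the absolute value in \cref{df:game-revocation} commutes with the expectation, each of $\E_k a_k$ and $\E_k b_k$ is at most $\delta$ by the game-revocation guarantee applied to the two attacks, and adding gives $\E_k \norm{\rho_k-\sigma_k}_1 \leq 2\delta$, i.e.\ the claim. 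It is a sum, not a max, and no splitting of $\mc{K}$ is involved (nor could it be: the security definition averages over the full key distribution, not a conditioned one).

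Your preferred ``cleanest route,'' however, does not work. The single-attack claim $\bra{\bo}A^2_k(\rho_k - \sigma_k)\ket{\bo} = \sum_{\lambda_j(k)\geq 0}\abs{\lambda_j(k)} \geq \frac{1}{2}\norm{\rho_k-\sigma_k}_1$ holds if and only if $\Tr(\rho_k) \geq \Tr(\sigma_k)$, since $\sum_{\lambda_j\geq0}\abs{\lambda_j} - \sum_{\lambda_j<0}\abs{\lambda_j} = \Tr(\rho_k - \sigma_k)$; these traces are the probabilities that the verifier accepts the adversary's revocation token under the two messages, and nothing orders them consistently across keys. Choosing, per key, whichever eigenspace carries the larger weight restores the pointwise inequality in absolute value, but then the sign of $\bra{\bo}A^2_k(\rho_k-\sigma_k)\ket{\bo}$ varies with $k$, and because \cref{df:game-revocation} places the absolute value \emph{outside} the expectation over $k$, contributions from different keys can cancel, so the $\delta$ bound on the game no longer controls $\E_k\frac{1}{2}\norm{\rho_k-\sigma_k}_1$. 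This is precisely why both attacks are genuinely needed and why they must be combined by summation rather than by a per-key selection. The rest of your plan (reducing the game expression to $\rho_k-\sigma_k$, invoking \cref{th:trace-channel} for the saturating measurement and \cref{th:trace-orthogonal} to pull the expectation out of the norm) is sound and matches the paper.
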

\begin{proof}
	Let $(K, E, D, V, R)$ be a QECMR as given in \cref{df:qecmr}.
	For any revocation attack $A : \mc{L}(\tsf{C}) \to \mc{L}(\tsf{R}
	\tensor \tsf{A})$ and any two messages $m,m' \in \mc{M}$, we define
	the game-revocation attack $(A^0, A^1, A^2)$ as follows:
	\begin{itemize}
		\item
			The channel $A^0 : \C \to \mc{L}(\tsf{C} \tensor \tsf{C}
			\tensor \tsf{A})$ is defined by $c \mapsto c\left(m\tensor m'
			\tensor \frac{I_\tsf{A}}{\dim\tsf{A}}\right)$.

			The final term in this tensor product is there merely
            because the syntax requires
			$A^0$ to include the $\tsf{A}$ space in its output. It will
			be promptly deleted and replaced by $A^1$.
		\item
			The channel $A^1 : \mc{L}(\tsf{C} \tensor \tsf{A}) \to
			\mc{L}(\tsf{R} \tensor \tsf{A})$ is defined by
			$A^1 = A\circ\left(\Id_\tsf{C} \tensor \Tr_\tsf{A}\right)$,
			which is to say that it simply applies the revocation
			attack $A$.
		\item
			The channel $
				A^2
				:
				\mc{L}(\tsf{K} \tensor \tsf{A})
				\to
				\mc{L}(\C^{\bs})
			$
			is any channel which satisfies
			\begin{equation}
			\begin{aligned}
				&
				\sum_{b \in \bs}
				\abs{
					\bra{b}
					A^2\left(
					\E_{k \gets K(\emptystring)}
					k
					\tensor
					\overline{V}_k
					\circ
					A
					\circ
					E_k(m - m')
					\right)
					\ket{b}
				}
				\\&\hspace{10em}=
				\norm{
					\E_{k \gets K(\emptystring)}
					k
					\tensor
					\overline{V}_k
					\circ
					A
					\circ
					E_k(m - m')
				}_1
			\end{aligned}
			\end{equation}
            which is to say that $A^2$ is an optimal channel to
            distinguish between the possible outputs of $A^1$ when
            given the key $k$.
			Such a channel exists by \cref{th:trace-channel}.
	\end{itemize}
	We also define $A^2_\text{Flip} : \mc{L}(\tsf{K} \tensor \tsf{C}) \to
	\mc{L}(\C^{\bs})$ by $
		\rho
		\mapsto
		(\ketbra{\bo}{\bz} + \ketbra{\bz}{\bo})
		A^2(\rho)
		(\ketbra{\bo}{\bz} + \ketbra{\bz}{\bo})
	$, which is to say that $A_\text{Flip}^2$ is simply $A^2$ followed
    by a bit flip operation.
	In particular, it holds that $
		\bra{\bo}{A}^2_\text{Flip}(\rho)\ket{\bo}
		=
		\bra{\bz}A^2(\rho)\ket{\bz}$
	for any linear operator $\rho$.
	Note that $(A^0, A^1, A_\text{Flip}^2)$ is also a game-revocation
	attack.
	We examine the performance of both these attacks.

	By the definition of $A^0$ and $A^1$, we have that
	\begin{equation}
		A^1
		\circ
		\left((E^0_k - E^1_k)\tensor\Id_\tsf{A}\right)
		\circ
		A^0(\emptystring)
		=
		A \circ E_k(m-m')
	\end{equation}
	and so, summing over both attacks, we see that
	\begin{equation}
	\begin{aligned}
		&
		\sum_{C \in \{A^2,A_\text{Flip}^2\}}
		\abs{
			\E_{k \gets K(\emptystring)}
			\bra{\bo}
				C_k
				\circ
				\left(\overline{V}_k \tensor \Id_\tsf{A}\right)
				\circ
				A^1
				\circ
				\left(
					(E_k^0 - E_k^1) \tensor \Id_\tsf{A}
				\right)
				\circ
				A^0(\emptystring)
			\ket{\bo}
		}
		\\&=
		\sum_{C \in \{A^2,A_\text{Flip}^2\}}
		\abs{
			\E_{k \gets K(\emptystring)}
			\bra{\bo}
				C_k
				\circ
				\left(\overline{V}_k \tensor \Id_\tsf{A}\right)
				\circ
				A
				\circ
				E_k(m - m')
			\ket{\bo}
		}
		\\&=
		\sum_{C \in \{A^2,A_\text{Flip}^2\}}
		\abs{
			\bra{\bo}
				C
				\left(
				\E_{k \gets K(\emptystring)}
				k
				\tensor
				\left(\overline{V}_k \tensor \Id_\tsf{A}\right)
				\circ
				A
				\circ
				E_k(m - m')
				\right)
			\ket{\bo}
		}
		\\&=
		\sum_{b \in \bs}
		\abs{
			\bra{b}
				A^2
				\left(
				\E_{k \gets K(\emptystring)}
				k
				\tensor
				\left(\overline{V}_k \tensor \Id_\tsf{A}\right)
				\circ
				A
				\circ
				E_k(m - m')
			\right)
			\ket{b}
		}
		\\&=
		\norm{
				\E_{k \gets K(\emptystring)}
				k
				\tensor
				\left(\overline{V}_k \tensor \Id_\tsf{A}\right)
				\circ
				A
				\circ
				E_k(m-m')
		}_1
		\\&=
		\E_{k \gets K(\emptystring)}
		\norm{
			\left(\overline{V}_k \tensor \Id_\tsf{A}\right)
				\circ
				A
				\circ
				E_k(m - m')
		}_1
	\end{aligned}
	\end{equation}
	where the first three equalities follow from the definitions of the
	various maps, the fourth from the defining property of $A^2$, and
	the fifth from \cref{th:trace-orthogonal} which states a technical
    property of the Schatten $1$-norm.
	
    As the scheme is assumed to be $\delta$-game-revocation secure,
	both of the initial absolute values is upper-bounded by $\delta$.
	Hence,
	$
		\E_{k \gets K(\emptystring)}
		\norm{
			\left(\overline{V}_k \tensor \Id_\tsf{A}\right)
				\circ
				A
				\circ
				E_k(m - m')
		}_1
		\leq
		2\delta
	$
    which yields the desired result after dividing both sides by two.
\end{proof}

\subsection{From Revocation to Tamper Evidence and Back}
\label{sc:revocation-tamper-evidence-construction}
\label{sc:te<=>rev}

We show here how to construct a revocation scheme from a tamper-evident
scheme (\cref{te:sc:te=>rev}) and vice-versa (\cref{te:sc:rev=>te}).
The intuition for these constructions can be conveyed by
recalling and directly comparing the inequalities defining the two
relevant security guarantees.

As stated in \cref{df:te-security}, an AQECM scheme is
$\delta$-tamper-evident if and only if for all channels
$A : \mc{L}(\tsf{C}) \to \mc{L}(\tsf{C} \tensor \tsf{A})$ and all
messages $m, m' \in \mc{M}$ we have that
\begin{equation}
	\Pr_{k \gets K(\emptystring)}
	\left[
		\frac{1}{2}\norm{
			\left(
				\left(\Tr_\tsf{M} \circ \overline{D}_k\right)
				\tensor
				\Id_\tsf{A}
			\right)
			\circ
			A
			\circ
			E_k(m - m')
		}_1
		\leq \delta
		\right]
		\geq
		1-\delta.
\end{equation}
As stated in \cref{df:revocation}, a QECMR scheme has
$\delta$-revocation security if and only if for all channels
$A : \mc{L}(\tsf{C}) \to \mc{L}(\tsf{R} \tensor \tsf{A})$ and all
messages $m, m' \in \mc{M}$ we have that
\begin{equation}
	\E_{k \gets \kappa}
	\frac{1}{2}
	\norm{
		\left(\overline{V}_k \tensor \Id_\tsf{A}\right)
		\circ
		A
		\circ
		E_k(m - m')
	}_1
	\leq
	\delta.
\end{equation}
The striking similarity between these two security guarantees is that
they are both statements on the trace distance between subnormalized
states held by an adversary, conditioned on an honest receiver accepting
what the adversary gave them.

We acknowledge one difference, namely that revocation security is
expressed as a bound on the \emph{expected} trace distance while tamper
evidence states that with \emph{high probability over the choice of
keys} the trace distance is small. However, using the concentration
inequality given in \cref{th:concentration}, we can shift between both
types of statements with ease.

\subsubsection{Revocation From Tamper Evidence}
\label{te:sc:te=>rev}

We show here how to construct a QECMR which is $\epsilon$-correct and
has $2\delta$-revocation security from any $\epsilon$-correct
$\delta$-tamper-evident AQECM.

Our construction, which we denote $S \mapsto \text{Rev}(S)$, is quite
simple. It suffices to note that in an AQECM
scheme the decoding and verification of a ciphertext are done by the
same channel, but that in a QECMR these tasks are accomplished by
different channels. Thus, we can obtain the decoding and revocation maps
of the QECMR by simply using the decoding map of the AQECM and
keeping only the appropriate output. Finally, the revocation map is the
identity, which is to say that the original recipient simply returns the
ciphertext unmodified during the revocation protocol. This is formalized
in \cref{df:aqecm->qecmr}.

\begin{definition}
\label{df:aqecm->qecmr}
\label{df:te=>rev}
	Let $S = (K, E, D)$ be an AQECM scheme as given in \cref{df:aqecm}.
	We define $\text{Rev}(S) = (K', E', D', R', V')$ by $K' = K$,
    $E' = E$, $R = \Id_\tsf{C}$, as well as
	\begin{equation}
		D' = \left(\Id_\tsf{M} \tensor \Tr_\tsf{F}\right) \circ D,
		\qq{and}
		V' = \left(\Tr_\tsf{M} \tensor \Id_\tsf{F}\right) \circ D.
	\end{equation}
	Note that $\text{Rev}(S)$ is a QECMR with the same keys and messages
	as $S$.
\end{definition}

We show in the following theorem that $\text{Rev}(S)$ inherits the
correctness of $S$.

\begin{theorem}
	If $S$ is an $\epsilon$-correct AQECM, then $\text{Rev}(S)$ is an
	$\epsilon$-correct QECMR.
\end{theorem}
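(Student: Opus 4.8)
The plan is to verify directly the two clauses in the definition of $\epsilon$-correctness for QECMR schemes (\cref{df:qecmr-c}) for the scheme $\text{Rev}(S) = (K', E', D', R', V')$ of \cref{df:aqecm->qecmr}: first, that the underlying QECM $(K', E', D')$ is $\epsilon$-correct, and second, that $\E_{k \gets K(\emptystring)} \overline{V'}_k \circ R' \circ E'_k(m) \geq 1 - \epsilon$ for every message $m \in \mc{M}$.

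For the first clause, I would simply observe that $K' = K$, $E' = E$, and $D' = (\Id_\tsf{M} \tensor \Tr_\tsf{F}) \circ D$ by construction, so $(K', E', D') = (K, E, (\Id_\tsf{M} \tensor \Tr_\tsf{F}) \circ D)$ is precisely the QECM obtained from $S$ by discarding the flag register. The earlier lemma stating that this operation preserves $\epsilon$-correctness then gives the claim with no further work.

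For the second clause, I would first simplify the quantity to be bounded. Since $R' = \Id_\tsf{C}$ and $E' = E$, we have $\overline{V'}_k \circ R' \circ E'_k(m) = \overline{V'}_k \circ E_k(m)$. Unfolding $\overline{V'}_k(\sigma) = \bra{\bo}\,(\Tr_\tsf{M} \tensor \Id_\tsf{F})\,D_k(\sigma)\,\ket{\bo}$ and commuting the partial trace over $\tsf{M}$ past the scalar map $\bra{\bo}_\tsf{F}\,\cdot\,\ket{\bo}_\tsf{F}$ acting on $\tsf{F}$, one checks that $\overline{V'}_k(\sigma) = \Tr\!\left( (I_\tsf{M} \tensor \bra{\bo}_\tsf{F})\, D_k(\sigma)\, (I_\tsf{M} \tensor \ket{\bo}_\tsf{F}) \right) = \Tr(\overline{D}_k(\sigma))$; informally, the acceptance probability of the revocation check applied to a returned ciphertext is exactly the trace of $\overline{D}_k$ applied to that ciphertext. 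Hence $\overline{V'}_k \circ R' \circ E'_k(m) = \Tr(\overline{D}_k \circ E_k(m))$. Finally, since $\overline{D}_k \circ E_k(m) \in \mc{D}_\bullet(\tsf{M})$ is positive semidefinite, its trace dominates its $\ket{m}$-diagonal entry, so
\begin{equation}
\begin{aligned}
	\E_{k \gets K(\emptystring)} \overline{V'}_k \circ R' \circ E'_k(m)
	&=
	\E_{k \gets K(\emptystring)} \Tr\left(\overline{D}_k \circ E_k(m)\right)
	\\&\geq
	\E_{k \gets K(\emptystring)} \bra{m}\, \overline{D}_k \circ E_k(m)\, \ket{m}
	\\&\geq
	1 - \epsilon,
\end{aligned}
\end{equation}
where the last inequality is the $\epsilon$-correctness of $S$ (\cref{df:aqecm-correct}). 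This completes the proof. The only step requiring any care is the bookkeeping with partial traces needed to identify $\overline{V'}_k$ with $\Tr \circ \overline{D}_k$, together with the elementary fact that the trace of a positive semidefinite operator bounds each of its diagonal entries; there is otherwise no real obstacle.
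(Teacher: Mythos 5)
Your proposal is correct and follows essentially the same route as the paper: both clauses of \cref{df:qecmr-c} are checked directly, the first by reducing to the flag-discarding lemma for AQECMs (which the paper re-derives inline rather than citing), and the second by identifying $\overline{V'}_k \circ R'$ with $\Tr_\tsf{M} \circ \overline{D}_k$ and bounding the trace below by the diagonal entry $\bra{m}\,\cdot\,\ket{m}$ before invoking the $\epsilon$-correctness of $S$. No gaps.
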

\begin{proof}
	Let $S$ be an AQECM as given in \cref{df:aqecm} and let
	$\text{Rev}(S)$ be a QECMR as given in \cref{df:aqecm->qecmr}.
	For any message $m \in \mc{M}$, we have that
	\begin{equation}
	\begin{aligned}
		\E_{k\gets K'(\emptystring)}
			\bra{m}D'_k\circ E'_k(m)\ket{m}
		&=
		\E_{k\gets K(\emptystring)}
			\bra{m}
			\left(\Id_\tsf{M} \circ \Tr_\tsf{F}\right)
			\circ
			D_k
			\circ
			E_k(m)
			\ket{m}
		\\&\geq
		\E_{k\gets K(\emptystring)}
			\bra{m} \overline{D}_k \circ E_k(m) \ket{m}
		\\&\geq
		1 - \epsilon
	\end{aligned}
	\end{equation}
	where the equality follows from the definition of $\text{Rev}(S)$,
	the first inequality by replacing the trace with the map
	$\rho \mapsto \bra{\bo} \rho \ket{\bo}$, and the second by the
	assumption that $S$ is $\epsilon$-correct.
	Similarly, for any message $m \in \mc{M}$, we have that
	\begin{equation}
	\begin{aligned}
		\E_{k\gets K'(\emptystring)}
			\overline{V'}_k \circ R \circ E'_k(m)
		&=
		\E_{k\gets K(\emptystring)}
			\Tr_\tsf{M}
			\circ
			\overline{D}_k
			\circ
			E_k(m)
		\\&\geq
		\E_{k\gets K(\emptystring)}
			\bra{m} \overline{D}_k \circ E_k(m) \ket{m}
		\\&\geq
		1 - \epsilon.
	\end{aligned}
	\end{equation}
	Thus, we can conclude that $\text{Rev}(S)$ is an $\epsilon$-correct
	QECMR scheme.
\end{proof}

Finally, we show that if $S$ is $\delta$-tamper evident, then
$\text{Rev}(S)$ has $2\delta$-revocation security. The factor of two
loss in security is incurred by passing from a concentration inequality
in the definition of tamper evidence to an expectation in the definition
of revocation security, as illustrated in \cref{tee:th:te-ns}.

\begin{theorem}
	$\text{Rev}(S)$ has $2\delta$-revocation security if $S$ is a
	$\delta$-tamper-evident AQECM.
\end{theorem}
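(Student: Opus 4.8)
The plan is to observe that the construction $\text{Rev}$ was designed precisely so that a revocation attack against $\text{Rev}(S)$ \emph{is} a tamper attack against $S$, and so that the quantity controlled by revocation security coincides --- key for key --- with the expected trace distance controlled by tamper evidence via \cref{tee:th:te-ns}. First I would unpack \cref{df:aqecm->qecmr}: since $\tsf{R} = \tsf{C}$, $E'_k = E_k$, and $V'_k = \left(\Tr_\tsf{M} \tensor \Id_\tsf{F}\right) \circ D_k$, a one-line computation gives, for every key $k \in \mc{K}$,
\[
	\overline{V'}_k(\rho)
	=
	\bra{\bo} V'(k \tensor \rho) \ket{\bo}
	=
	\Tr_\tsf{M}\left(
		\left(I_\tsf{M} \tensor \bra{\bo}\right)
		D_k(\rho)
		\left(I_\tsf{M} \tensor \ket{\bo}\right)
	\right)
	=
	\Tr_\tsf{M}\left(\overline{D}_k(\rho)\right),
\]
that is, $\overline{V'}_k = \Tr_\tsf{M} \circ \overline{D}_k$ as maps $\mc{L}(\tsf{C}) \to \C$.

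Next I would note that a revocation attack against $\text{Rev}(S)$ --- namely a channel $A : \mc{L}(\tsf{C}) \to \mc{L}(\tsf{R} \tensor \tsf{A})$ with $\tsf{R} = \tsf{C}$ --- is literally a tamper attack against $S$ in the sense of \cref{df:te-attack}. Fixing arbitrary messages $m, m' \in \mc{M}$ and such an $A$, the left-hand side of the revocation-security inequality of \cref{df:revocation} then equals
\[
	\E_{k \gets K(\emptystring)}
	\frac{1}{2}
	\norm{
		\left(
			\left(\Tr_\tsf{M} \circ \overline{D}_k\right)
			\tensor \Id_\tsf{A}
		\right)
		\circ A \circ E_k(m - m')
	}_1,
\]
which is exactly the expected trace distance appearing in \cref{tee:th:te-ns}.

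Since $S$ is $\delta$-tamper-evident, the second part of \cref{tee:th:te-ns} bounds this expectation by $2\delta - \delta^2 \leq 2\delta$; as $m$, $m'$, and $A$ were arbitrary, $\text{Rev}(S)$ has $2\delta$-revocation security. There is essentially no obstacle beyond carefully matching the Hilbert spaces and maps under the identification $\tsf{R} = \tsf{C}$ and checking the reduction $\overline{V'}_k = \Tr_\tsf{M} \circ \overline{D}_k$; the factor of two in the conclusion is precisely the loss incurred by \cref{tee:th:te-ns} when translating the ``small with high probability over the keys'' formulation of tamper evidence into a bound on the expected trace distance.
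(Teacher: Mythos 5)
Your proposal is correct and follows essentially the same route as the paper: identify a revocation attack against $\text{Rev}(S)$ with a tamper attack against $S$ (since $\tsf{R}=\tsf{C}$ and $\overline{V'}_k = \Tr_\tsf{M}\circ\overline{D}_k$), then invoke the second part of \cref{tee:th:te-ns} to bound the expected trace distance by $2\delta-\delta^2\leq 2\delta$. The only cosmetic difference is that the paper dispenses with the trivial ranges of $\delta$ up front, which your appeal to \cref{tee:th:te-ns} renders unnecessary anyway.
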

\begin{proof}
	We can assume that $0 \leq \delta < \frac{1}{2}$ as the result
	result is trivially true if $\delta \geq \frac{1}{2}$ and there are
    no $\delta$-tamper-evident schemes if $\delta < 0$.

	By definition of $\text{Rev}(S)$, specifically that $R' =
	\Id_\tsf{C}$, revocation tokens are simply ciphertexts from $S$.
	This implies that the Hilbert space $\tsf{R}$ is equal to $\tsf{C}$.
	Thus, any revocation attack
	$A : \mc{L}(\tsf{C}) \to \mc{L}(\tsf{R} \tensor \tsf{A})$ against
	$\text{Rev}(S)$ is also a tamper attack against $S$.
	
	Since $S$ is $\delta$-tamper-evident, for any two
	messages $m, m' \in \mc{M}$ it holds by \cref{tee:th:te-ns} that
	\begin{equation}
		\E_{k \gets K(\emptystring)}
		\frac{1}{2}\norm{
			\left(
				\left(\Tr_\tsf{M} \circ \overline{D}_k\right)
				\tensor
				\Id_\tsf{A}
			\right)
			\circ
			A
			\circ
			E_k(m - m')
		}_1
		\leq
		2\delta - \delta^2
	\end{equation}
	which implies, after substituting in the channels of
	$\text{Rev}(S)$, that
	\begin{equation}
		\E_{k \gets K'(\emptystring)}
		\frac{1}{2}\norm{
			\left(
				\overline{V'}_k
				\tensor
				\Id_\tsf{A}
			\right)
			\circ
			A
			\circ
			E'_k(m - m')
		}_1
		\leq
		2\delta - \delta^2
	\end{equation}
	which is sufficient to obtain that $\text{Rev}(S)$ has
	$2\delta$-revocation security.
\end{proof}

\subsubsection{Tamper Evidence from Revocation} %
\label{te:sc:rev=>te}

We show here how to construct an AQECM which is
$2\sqrt[4]{\epsilon}$-correct and $\sqrt{\delta}$-tamper evident from
any QECMR which is $\epsilon$-correct and has $\delta$-revocation
security. We denote the construction $S \mapsto \text{TE}(S)$.

The main idea of this construction is for the recipient in the
tamper-evident setting to try to detect meaningful eavesdropping by
running the complete revocation protocol of the original revocation
scheme, namely $V \circ (\Id_\tsf{K} \tensor R)$, on their side
upon reception of the ciphertext. They accept the ciphertext if and only
if this revocation protocol accepts.
This leads to an obvious problem: how can they then recover the
plaintext if they have run the revocation protocol?

The solution is for the recipient to actually use the \emph{coherent
gentle measurement} version of
$V\circ (\Id_\tsf{K} \tensor R)$, namely
$\text{CGM}(V \circ (\Id_\tsf{K} \tensor R))$ as described in
\cref{df:cgm}.
The resulting channel produces as output a state on the
ciphertext space as well as a single qubit flag where the resulting
ciphertext state should be quite close to the original one, provided
that it was very likely for the revocation protocol to accept.
The recipient can then subsequently recover the plaintext by applying
the decryption map of the QECMR scheme to this ciphertext state.

This construction is formalized in the following definition.

\begin{definition}
\label{df:qecmr=>aqecm}
\label{df:rev=>te}
	Let $S = (K, E, D, V, R)$ be a QECMR scheme as given in
	\cref{df:qecmr}. We define the AQECM $\text{TE}(S) = (K', E', D')$ by
	\begin{equation}
		K' = K,
		\qq{}
		E' = E,
		\qq{and}
		D' = \left(D \tensor \Id_\tsf{F}\right)
		\circ \CGM(V \circ \left(\Id_\tsf{K} \tensor R\right)).
	\end{equation}
\end{definition}

We now show the correctness of $\text{TE}(S)$.

\begin{theorem}
    \label{te:th:rev=>te-c}
	Let $S$ be an $\epsilon$-correct QECMR.
	Then, $\text{TE}(S)$ is a $2\sqrt[4]{\epsilon}$-correct AQECM
\end{theorem}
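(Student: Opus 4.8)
The plan is to track, for each message $m \in \mc{M}$, the two sources of error in the construction $\text{TE}(S) = (K', E', D')$: the error introduced by the coherent gentle measurement step, and the error already present in the decryption of the underlying QECMR $S$. Recall that $D' = (D \tensor \Id_\tsf{F}) \circ \CGM(V \circ (\Id_\tsf{K} \tensor R))$, so that $\overline{D'}_k \circ E'_k(m)$ is obtained by first applying $\CGM(V_k \circ R)$ to the ciphertext $E_k(m)$, then applying $D_k$ to the resulting $\tsf{C}$-register while keeping the flag qubit, and finally post-selecting on the flag qubit being $\bo$. I would fix a message $m$ and a key $k$, write $p_k = \overline{V}_k \circ R \circ E_k(m) = \bra{\bo} V_k \circ R \circ E_k(m) \ket{\bo}$ for the probability that the revocation check on the honest ciphertext accepts, and write $q_k = \bra{m} D_k \circ E_k(m) \ket{m}$ for the probability that honest decryption recovers $m$. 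Correctness of $S$ (\cref{df:qecmr-c}) gives $\E_{k} p_k \geq 1 - \epsilon$ and $\E_{k} q_k \geq 1 - \epsilon$.

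First I would use \cref{th:cgm}, point 1, with $\Phi = V_k \circ R$ and $\rho = E_k(m)$ (which has trace $1$), to bound
\begin{equation}
	\frac{1}{2}\norm{
		\CGM(V_k \circ R)(E_k(m)) - E_k(m) \tensor \ketbra{\bo}
	}_1
	\leq
	\sqrt{1 - p_k^2}.
\end{equation}
Applying the channel $(D_k \tensor \Id_\tsf{F})$ followed by post-selection on the flag being $\bo$ — both of which are operations that do not increase trace distance (the post-selection is conjugation by $I \tensor \bra{\bo}$, handled as in the proof of \cref{th:te-parallel} via $\norm{\bra{\bo}}_\infty = 1$) — and then taking the $\bra{m} \cdot \ket{m}$ matrix element, I would conclude that $\bra{m} \overline{D'}_k \circ E'_k(m) \ket{m}$ differs from $\bra{m} D_k \circ E_k(m) \ket{m} = q_k$ by at most $2\sqrt{1 - p_k^2}$; more carefully, the ideal branch $E_k(m) \tensor \ketbra{\bo}$ maps under $(D_k \tensor \Id_\tsf{F})$ and post-selection exactly to $D_k \circ E_k(m)$, so
\begin{equation}
	\bra{m} \overline{D'}_k \circ E'_k(m) \ket{m}
	\geq
	q_k - 2\sqrt{1 - p_k^2}
	\geq
	q_k - 2\sqrt{2(1 - p_k)},
\end{equation}
using $1 - p_k^2 = (1-p_k)(1+p_k) \leq 2(1 - p_k)$.

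Next I would take the expectation over $k \gets K(\emptystring)$. Using linearity, $\E_k q_k \geq 1 - \epsilon$, concavity of the square root with Jensen's inequality to move the expectation inside ($\E_k \sqrt{2(1-p_k)} \leq \sqrt{2\,\E_k(1-p_k)} = \sqrt{2(1 - \E_k p_k)} \leq \sqrt{2\epsilon}$), I would obtain
\begin{equation}
	\E_{k \gets K(\emptystring)}
	\bra{m} \overline{D'}_k \circ E'_k(m) \ket{m}
	\geq
	1 - \epsilon - 2\sqrt{2\epsilon}.
\end{equation}
The remaining step is purely arithmetic: show $1 - \epsilon - 2\sqrt{2\epsilon} \geq 1 - 2\sqrt[4]{\epsilon}$, i.e. $\epsilon + 2\sqrt{2\epsilon} \leq 2\sqrt[4]{\epsilon}$, which holds for $\epsilon$ in the relevant range (and when it fails the bound $2\sqrt[4]{\epsilon} \geq 1$ makes the claim trivial, since every AQECM is $1$-correct in the sense that the left side is a probability). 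This establishes that $\text{TE}(S)$ is $2\sqrt[4]{\epsilon}$-correct.

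The main obstacle I anticipate is bookkeeping rather than conceptual: one must be careful that $\CGM$ is applied to $V_k \circ R$ as a channel with codomain the flag space, that the $\tsf{C}$-register surviving $\CGM$ is the one fed to $D_k$, and that the post-selection on $\bo$ (which makes $\overline{D'}_k$ a non-channel, trace-decreasing map) interacts correctly with the contractivity argument — specifically, I would phrase it as conjugation by the contraction $I_\tsf{M} \tensor \bra{\bo}_\tsf{F}$ and invoke $\norm{ \cdot }_\infty = 1$ exactly as in the proof of \cref{th:te-parallel}, rather than claiming $\overline{D'}_k$ is itself a channel. A secondary subtlety is justifying the $2$ in front of $\sqrt{1-p_k^2}$: the trace distance bound of \cref{th:cgm} is a $\frac12 \norm{\cdot}_1$ statement, so the relevant matrix-element deviation $|\bra{m}(\cdot)\ket{m}|$ is bounded by $\norm{\cdot}_1 = 2 \cdot \frac12\norm{\cdot}_1 \leq 2\sqrt{1-p_k^2}$, using the Hölder-type estimate for subnormalized operators noted in the discussion after \cref{th:trace-channel}. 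Everything else is routine.
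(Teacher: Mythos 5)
Your proof is correct, and it takes a genuinely different route from the paper's. The paper does not average directly: it first applies the concentration inequality of \cref{th:concentration} and a union bound to isolate, for each message $m$, a set $\mc{G}_m$ of ``good'' keys (occurring with probability at least $1-2\sqrt{\epsilon}$) on which both the revocation-acceptance probability and the decryption probability are at least $1-\sqrt{\epsilon}$; it then applies the $\CGM$ bound per good key, uses the equal-trace matrix-element estimate $\abs{\bra{m,\bo}\Phi(A-B)\ket{m,\bo}}\leq\frac{1}{2}\norm{A-B}_1$ (saving your factor of $2$), and finally multiplies by $\Pr[k\in\mc{G}_m]$. You instead keep the per-key bound $q_k - 2\sqrt{2(1-p_k)}$ and pass the expectation through the square root via Jensen's inequality, which eliminates the good-key-set machinery entirely. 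Both roads end at $2\sqrt[4]{\epsilon}$, and your final arithmetic checks out: with $x=\epsilon^{1/4}\leq 1/2$ one needs $x^3+2\sqrt{2}\,x\leq 2$, which holds since the left side is at most $1/8+\sqrt{2}<2$. Your use of the looser H\"older-type bound $\abs{\bra{m,\bo}(\cdot)\ket{m,\bo}}\leq\norm{\cdot}_1$ is safe here (indeed, since both operators being compared have trace one, you could have invoked the second remark after \cref{th:trace-channel} to drop the factor of $2$, as the paper does, but you do not need it). What your approach buys is worth noting: your intermediate bound shows that $\text{TE}(S)$ is in fact $(\epsilon+2\sqrt{2\epsilon})$-correct, which is $1-O(\sqrt{\epsilon})$ rather than the $1-O(\sqrt[4]{\epsilon})$ the paper obtains -- the fourth root in the paper's statement is an artefact of the two-level thresholding in the concentration-inequality step, which your direct averaging avoids.
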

\begin{proof}
	Let $S = (K, E, D, V, R)$ be an $\epsilon$-correct QECMR as given in
	\cref{df:qecmr}.
	We assume that $0\leq\epsilon<\frac{1}{16}$ as there is no QECMR
    which is $\epsilon'$-correct if $\epsilon' < 0$ and the result is
	trivially true if $\epsilon \geq \frac{1}{16}$. Let $\text{TE}(S) =
	(K', E', D')$ be as given in \cref{df:rev=>te}.

	Recall that correctness for a QECMR considers the correctness of
	decoding the proper message and the correctness of accepting an
	honest revocation separately.
	In particular, there is \emph{a priori} no guarantee that a message
	$m$ which correctly encodes and decodes with the key $k$ will have
	its revocation with the same key accepted with high probability.
	However, correctness for an AQECM considers both the correct
	decoding and the acceptance simultaneously.

	The fact that the QECMR $S$ is $\epsilon$-correct means that
	\begin{equation}
		\E_{k \gets K(\emptystring)}
		\bra{\bo} V_k \circ R \circ E_k(m) \ket{\bo}
		\geq
		1 - \epsilon
		\qq{and}
		\E_{k \gets K(\emptystring)}
		\bra{m}D_k \circ E_k(m)\ket{m} \geq 1 - \epsilon
	\end{equation}
	for any message $m \in \mc{M}$.
	With the concentration inequality of \cref{th:concentration} and
	the union bound, this implies that
	for any $m \in \mc{M}$ it holds that
	\begin{equation}
		\Pr_{k \gets K(\emptystring)}
		\begin{bmatrix}
			\bra{\bo}V_k\circ R \circ E_k(m)\ket{\bo}
			\geq
			1-\sqrt{\epsilon}
			\\
			\land
			\\
			\bra{m} D_k \circ E_k(m) \ket{m}
			\geq
			1 - \sqrt{\epsilon}
		\end{bmatrix}
		\geq
		1 - 2\sqrt{\epsilon}.
	\end{equation}
	Let $\mc{G}_m \subseteq \mc{K}$ be the set of keys satisfying the
	conditions in this probability statement.
	Note that $1 - 2\sqrt{\epsilon} > 0$ as $\epsilon < 1/4$ and so
	$\mc{G}_k \not= \varnothing$.
	Note that this set depends, in general, on $m$.
	
	By definition of $E'$ and $D'$, we have for any message $m\in\mc{M}$
    and key $k \in \mc{K}$ that
	\begin{equation}
	\label{eq:te=>qecmr-c-1}
	\begin{aligned}
		&
		\bra{m}\overline{D'}_k \circ E'_{k}(m) \ket{m}
		=
		\left(\bra{m} \tensor \bra{\bo}\right)
		\left(D \tensor \Id_\tsf{F}\right)
		\circ \CGM(V \circ \left(\Id_\tsf{K} \tensor R\right))
		\left(k \tensor E_k (m)\right)
		\left(\ket{m} \tensor \ket{\bo}\right).
	\end{aligned}
	\end{equation}
	We begin by examining
	$\CGM(V \circ (\Id_\tsf{K} \tensor R))(k \tensor E_k(m))$ under the
	assumption that $k \in \mc{G}_m$.
	Under this assumption, we have that
    \begin{equation}
		\bra{\bo}
			(V \circ (\Id_\tsf{K} \tensor R))(k \tensor E_k(m))
		\ket{\bo}
		=
		\bra{\bo}V_k \circ R \circ E_k(m)\ket{\bo}
		\geq
		1 - \sqrt{\epsilon}
    \end{equation}
	and so, by the properties of the $\CGM$ map established in
	\cref{th:cgm}, it follows that
	\begin{equation}
		\frac{1}{2}
		\norm{
			\CGM(V \circ (\Id_\tsf{K} \tensor R))(k \tensor E_k(m))
			-
			k \tensor E_k(m) \tensor \ketbra{\bo}
		}_1
		\leq
		\sqrt{1 - (1-\sqrt{\epsilon})^2}
		=
		\sqrt{2\sqrt{\epsilon} - \epsilon}.
	\end{equation}
    We can substitute
	$\CGM(V \circ (\Id_K \tensor R))(k \tensor E_k(m))$ with the
    state $k \tensor E_k(m) \tensor \ketbra{\bo}$ in the rigth-hand side
	of \cref{eq:te=>qecmr-c-1} and its value will change by at most
	$\sqrt{2\sqrt{\epsilon} - \epsilon}$.\footnote{
        That the change in the inner product following such a
        substitution can be controlled by the trace distance is well
        known. It can be recovered as a corollary to the
        second remark after \cref{th:trace-channel} and considering, in
        this case, the
        channel $\Phi = \Ms(\mu)$ where
        $\mu = \bs{} \to \Pos(\tsf{M} \tensor \C^{\bs})$ is the
        measurement defined by~$\mu(\bo)=\ketbra{m}\tensor\ketbra{\bo}$
        and $\mu(\bz) = I_{\tsf{M} \tensor \C^{\bs{}}} - \mu(\bz)$.}
	Thus,
	\begin{equation}
	\begin{aligned}
		&
		\left(\bra{m} \tensor \bra{\bo}\right)
		\left(D \tensor \Id_\tsf{F}\right)
		\circ \CGM(V \circ \left(\Id_\tsf{K} \tensor R\right))
		\left(k \tensor E_k (m)\right)
		\left(\ket{m} \tensor \ket{\bo}\right)
		\\&\geq
		\left(\bra{m} \tensor \bra{\bo}\right)
		\left(D \tensor \Id_\tsf{F}\right)
		(k \tensor E_k(m) \tensor \ketbra{\bo})
		\left(\ket{m} \tensor \ket{\bo}\right)
		-
		\sqrt{2\sqrt{\epsilon} - \epsilon}
		\\&=
		\bra{m} D(k \tensor E_k(m)) \ket{m}
		-
		\sqrt{2\sqrt{\epsilon} - \epsilon}
		\\&\geq
		1 - \sqrt{\epsilon} - \sqrt{2\sqrt{\epsilon} - \epsilon}
	\end{aligned}
	\end{equation}
	where the final inequality is obtained by the assumption that
	$\bra{m} D_k \circ E_k(m) \ket{m} \geq 1 - \sqrt{\epsilon}$ since
	$k \in \mc{G}_m$.

	Finally, we have to account for the cases where $k\not\in\mc{G}_m$.
	We have that
	\begin{equation}
	\begin{aligned}
		\E_{k \gets K'(\emptystring)}
		\bra{m}
		\overline{D'}_k \circ E'_k(m)
		\ket{m}
		&\geq
		\Pr_{k \gets K(\emptystring)}[k \in \mc{G}_m]
		\cdot
		\left(1-\sqrt{\epsilon}-\sqrt{2\sqrt{\epsilon}-\epsilon}\right)
		\\&\geq
		(1 - 2\sqrt{\epsilon})
		\left(1-\sqrt{\epsilon}-\sqrt{2\sqrt{\epsilon}-\epsilon}\right)
		\\&=
		1
		-
		\sqrt[4]{\epsilon}
		\underbrace{\left(
			(1-2\sqrt{\epsilon})
			\sqrt{2-\sqrt{\epsilon}}
			-
			2\epsilon^{3/4}
			+
			3\sqrt[4]{\epsilon}
		\right)}_{=g(\epsilon)}
		\\&\geq
		1-2\sqrt[4]{\epsilon}
	\end{aligned}
	\end{equation}
	where the factor of $2$ results from maximizing $g(\epsilon)$ over
	all values $0 \leq \epsilon \leq \frac{1}{16}$.
	Noting that this holds for all $m \in \mc{M}$ completes the proof.
\end{proof}

We now show that if $S$ has $\delta$-revocation security, then
$\text{TE}(S)$ is $\sqrt{\delta}$-tamper evident. We begin with a
technical lemma making explicit a relation between the maps of these
schemes.

\begin{lemma}
\label{te:th:rev=>te-s-lemma}
	Let $S = (K, E, D, V, R)$ be a QECMR scheme as given in
	\cref{df:qecmr} and let $S' = \text{TE}(S) = (K', E', D')$ be an
	AQECM as given in \cref{df:qecmr=>aqecm}. Then, for any key
	$k \in \mc{K}$, we have that
	\begin{equation}
		\Tr_\tsf{M} \circ \overline{D'}_k = \overline{V}_k \circ R.
	\end{equation}
\end{lemma}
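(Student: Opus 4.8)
The plan is to unfold every definition involved and then push a partial trace through the channels that make up $D'$. Set $\Psi = V \circ \left(\Id_\tsf{K} \tensor R\right) : \mc{L}(\tsf{K} \tensor \tsf{C}) \to \mc{L}(\C^{\bs})$ for the combined revoke-then-verify channel, so that $D' = \left(D \tensor \Id_\tsf{F}\right) \circ \CGM(\Psi)$, and let $\mu_\Psi : \bs \to \Pos(\tsf{K} \tensor \tsf{C})$ be the unique measurement with $\Delta_{\bs} \circ \Psi = \Ms(\mu_\Psi)$ provided by \cref{df:cgm}. Using the explicit formula for $\CGM$ in \cref{eq:cgm-def}, together with the remark after \cref{df:cgm} that the adjoints on the $\sqrt{\mu_\Psi(b)}$ terms may be dropped, for any key $k \in \mc{K}$ and any $\rho \in \mc{L}(\tsf{C})$ I would write
\begin{equation}
	\CGM(\Psi)(k \tensor \rho)
	=
	\sum_{b, b' \in \bs}
	\sqrt{\mu_\Psi(b)}\,(k \tensor \rho)\,\sqrt{\mu_\Psi(b')}
	\tensor
	\ketbra{b}{b'}.
\end{equation}

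Next I would apply $D \tensor \Id_\tsf{F}$ — invoking only the linearity of $D$, since the operators $\sqrt{\mu_\Psi(b)}\,(k \tensor \rho)\,\sqrt{\mu_\Psi(b')}$ are not positive semidefinite in general — and then project the flag register onto $\ket{\bo}$, which selects exactly the $b = b' = \bo$ summand. This gives
\begin{equation}
	\overline{D'}_k(\rho)
	=
	\left(I_\tsf{M} \tensor \bra{\bo}_\tsf{F}\right) D'(k \tensor \rho) \left(I_\tsf{M} \tensor \ket{\bo}_\tsf{F}\right)
	=
	D\!\left(\sqrt{\mu_\Psi(\bo)}\,(k \tensor \rho)\,\sqrt{\mu_\Psi(\bo)}\right).
\end{equation}
Applying $\Tr_\tsf{M}$ and using that the channel $D$ is trace preserving turns the right-hand side into $\Tr_{\tsf{K} \tensor \tsf{C}}\!\left(\sqrt{\mu_\Psi(\bo)}\,(k \tensor \rho)\,\sqrt{\mu_\Psi(\bo)}\right)$, which by the cyclicity of the trace (and $\sqrt{\mu_\Psi(\bo)}\sqrt{\mu_\Psi(\bo)} = \mu_\Psi(\bo)$) equals $\Tr\!\left(\mu_\Psi(\bo)\,(k \tensor \rho)\right)$.

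Finally I would identify this scalar with $\overline{V}_k \circ R(\rho)$: by the definition of $\Ms(\mu_\Psi)$ and the fact that $\Delta_{\bs}$ preserves the $(\bo,\bo)$ matrix entry,
\begin{equation}
	\Tr\!\left(\mu_\Psi(\bo)\,(k \tensor \rho)\right)
	=
	\bra{\bo} \Ms(\mu_\Psi)(k \tensor \rho) \ket{\bo}
	=
	\bra{\bo} \Delta_{\bs} \circ \Psi(k \tensor \rho) \ket{\bo}
	=
	\bra{\bo} \Psi(k \tensor \rho) \ket{\bo},
\end{equation}
and $\bra{\bo} \Psi(k \tensor \rho) \ket{\bo} = \bra{\bo} V\bigl(k \tensor R(\rho)\bigr) \ket{\bo} = \overline{V}_k \circ R(\rho)$ by the definitions of $\Psi$ and $\overline{V}_k$. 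Since this chain holds for all $\rho \in \mc{L}(\tsf{C})$, the operator identity $\Tr_\tsf{M} \circ \overline{D'}_k = \overline{V}_k \circ R$ follows. There is no genuine obstacle here; the only points needing care are bookkeeping the Hilbert space on which each operator acts — in particular $\mu_\Psi$ is a measurement on $\tsf{K} \tensor \tsf{C}$, not on $\tsf{C}$ alone — and remembering to use only the linearity and trace-preservation of $D$, since the intermediate operators fed to it need not be states.
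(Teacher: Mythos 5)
Your proof is correct and follows essentially the same route as the paper's: both reduce $\Tr_\tsf{M}\circ\overline{D'}_k$ to the acceptance probability of $V\circ(\Id_\tsf{K}\tensor R)$ by exploiting that $D$ is trace preserving (so $\Tr_\tsf{M}\circ D=\Tr_{\tsf{K}\tensor\tsf{C}}$) and that the flag statistics of $\CGM(\Psi)$ reproduce those of $\Psi$. The only cosmetic difference is that you re-derive the needed instance of the second property of \cref{th:cgm} by unfolding the isometry defining $\CGM$ and cycling the trace over $\mu_\Psi(\bo)$, whereas the paper cites that property directly; your explicit care that the off-diagonal summands fed to $D$ are handled by linearity alone, and that $\mu_\Psi$ lives on $\tsf{K}\tensor\tsf{C}$, is exactly right.
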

\begin{proof}
	Let $\rho \in \mc{L}(\tsf{C})$ be a linear operator.
	Then, by definition of $D'$, we have that
	\begin{align}
	\begin{split}
		\Tr_\tsf{M} \circ \overline{D'}_k(\rho)
		&=
		\Tr_\tsf{M}\left(
			\left(I_\tsf{M} \tensor \bra{\bo}\right)
			D'(k \tensor \rho)
			\left(I_\tsf{M} \tensor \ket{\bo}\right)
		\right)
		\\&=
		\Tr_\tsf{M}\left(
			\left(I_\tsf{M} \tensor \bra{\bo}\right)
			\left(D \tensor \Id_\tsf{F}\right)
			\circ
			\CGM(V \circ(\Id_\tsf{K} \tensor R))
			(k \tensor \rho)
			\left(I_\tsf{M} \tensor \ket{\bo}\right)
		\right)
	\end{split}
		\\
		\intertext{
			and if we let $C_{\bra{\bo}} : \mc{L}(\tsf{F}) \to \C$ be
			defined by $\sigma \mapsto \bra{\bo} \sigma \ket{\bo}$,
            we find that $$
				(\Tr_\tsf{M} \tensor C_{\bra{\bo}})
				\circ
				(D \tensor \Id_\tsf{F})
				=
				(\Tr_\tsf{M} \circ D) \tensor C_{\bra{\bo}}
				=
				\Tr_{\tsf{K} \tensor \tsf{C}} \tensor C_{\bra{\bo}}
				=
				C_{\bra{\bo}}
				\circ
				\left(
					\Tr_{\tsf{K} \tensor \tsf{C}} \tensor \Id_\tsf{F}
				\right)
			$$
            which allows us to continue the previous equation as
		}
		&=
		\bra{\bo}
		\left(\Tr_{\tsf{K} \tensor \tsf{C}} \tensor \Id_\tsf{F}\right)
		\circ\CGM(V \circ (\Id_\tsf{K} \tensor R))(k \tensor \rho)
		\ket{\bo}
		\\
		\intertext{which, by properties of the $\CGM$ map exhibited in
        \cref{th:cgm}, allows us to further continue as}
	\begin{split}
		&=
		\bra{\bo}
		V \circ \left(\Id_k \tensor R\right)(k \tensor \rho)
		\ket{\bo}
		\\&=
		\bra{\bo} V_k \circ R(\rho) \ket{\bo}
	\end{split}
	\end{align}
	which is precisely $\overline{V}_k \circ R(\rho)$.
	Since this equality holds for all linear operators $\rho$, we obtain
	the desired result that the maps are equal.
\end{proof}

With this lemma in hand, we now proceed to the theorem proper.

\begin{theorem}
    \label{te:th:rev=>te-s}
	If $S$ is a $\delta$-revocation secure QECMR, then $\text{TE}(S)$ is
	$\sqrt{\delta}$-tamper evident.
\end{theorem}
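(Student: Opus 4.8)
The plan is to reduce the tamper evidence of $\text{TE}(S)$ directly to the $\delta$-revocation security of $S$. \Cref{te:th:rev=>te-s-lemma} already identifies the ``decode-and-accept'' map of $\text{TE}(S)$ with the composition of the revocation and verification maps of $S$, so the only remaining work is to match up the adversaries of the two notions and then translate between the ``high probability over keys'' formulation of tamper evidence and the ``expectation over keys'' formulation of revocation security, a translation already packaged in \cref{tee:th:te-ns}.

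First I would fix an arbitrary tamper attack $A : \mc{L}(\tsf{C}) \to \mc{L}(\tsf{C} \tensor \tsf{A})$ against $\text{TE}(S)$ and an arbitrary pair of messages $m, m' \in \mc{M}$. Composing $A$ with the revocation channel $R$ of $S$ on the ciphertext register, I would set $A' = (R \tensor \Id_\tsf{A}) \circ A$, which is a channel from $\mc{L}(\tsf{C})$ to $\mc{L}(\tsf{R} \tensor \tsf{A})$ and hence a legitimate revocation attack against $S$ in the sense of \cref{df:revocation-a}. Then, using that $E' = E$, that \cref{te:th:rev=>te-s-lemma} gives $\Tr_\tsf{M} \circ \overline{D'}_k = \overline{V}_k \circ R$, and functoriality of the tensor product, I would observe that
\[
	\left(\left(\Tr_\tsf{M} \circ \overline{D'}_k\right) \tensor \Id_\tsf{A}\right) \circ A \circ E'_k(m - m')
	=
	\left(\overline{V}_k \tensor \Id_\tsf{A}\right) \circ A' \circ E_k(m - m')
\]
for every key $k \in \mc{K}$.

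Applying the $\delta$-revocation security of $S$ (\cref{df:revocation}) to the revocation attack $A'$ and the messages $m, m'$, the right-hand side above has expected trace distance at most $\delta$ over $k \gets K(\emptystring)$, hence so does the left-hand side. Since this bound holds for every tamper attack $A$ and every pair of messages, and since $K' = K$, the first part of \cref{tee:th:te-ns} applies verbatim to $\text{TE}(S)$ and yields that $\text{TE}(S)$ is $\sqrt{\delta}$-tamper evident, which is the claim. I do not expect a genuine obstacle here: the only points needing care are checking that the two composite maps displayed above are literally equal — which is exactly \cref{te:th:rev=>te-s-lemma} together with routine rewriting — and recognizing that the $\sqrt{\delta}$ loss is precisely the Markov-type price already isolated in \cref{tee:th:te-ns} for passing from an expectation bound to a high-probability bound.
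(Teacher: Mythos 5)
Your proposal is correct and follows essentially the same route as the paper's own proof: compose the tamper attack with the revocation channel $R$ to obtain a revocation attack, invoke \cref{te:th:rev=>te-s-lemma} to identify $\Tr_\tsf{M} \circ \overline{D'}_k$ with $\overline{V}_k \circ R$, apply $\delta$-revocation security to bound the expected trace distance, and finish with the first part of \cref{tee:th:te-ns}. The only difference is notational (which channel is called $A$ versus $A'$).
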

\begin{proof}
	Let $S = (K, E, D, V, R)$ be a QECMR scheme as given in
	\cref{df:qecmr} which has $\delta$-revocation security and let
	$\text{TE}(S) = (K',E',D')$ be as given in \cref{df:qecmr=>aqecm}.

	Let $A' : \mc{L}(\tsf{C}) \to \mc{L}(\tsf{C} \tensor \tsf{A})$ be a
	tamper attack against $\text{TE}(S)$.
	Then, $A = (R \tensor \Id_\tsf{A}) \circ A'$ is a revocation attack
	against $S$.
	Since $S$ has $\delta$-revocation security, for any two messages
	$m, m' \in \mc{M}$ we have that
	\begin{equation}
		\E_{k \gets K(\emptystring)}
		\frac{1}{2}
		\norm{
			\left(\overline{V}_k \tensor \Id_\tsf{A}\right)
			\circ
			A
			\circ
			E_k(m - m')
		}_1
		\leq
		\delta
	\end{equation}
	which, as $\Tr_\tsf{M}\circ\overline{D'}_k=\overline{V}_k \circ R$,
	implies that
	\begin{equation}
	\begin{aligned}
		&
		\E_{k \gets K(\emptystring)}
		\frac{1}{2}
		\norm{
			\left(
				(\Tr_\tsf{M} \circ \overline{D'}_k)
				\tensor
				\Id_\tsf{A}
			\right)
			\circ
			A'
			\circ
			E_k(m - m')
		}_1
		\\&=
		\E_{k \gets K(\emptystring)}
		\frac{1}{2}
		\norm{
			\left(\overline{V}_k \tensor \Id_\tsf{A}\right)
			\circ
			\left(R \tensor \Id_\tsf{A}\right)
			\circ
			A'
			\circ
			E_k(m - m')
		}_1
		\\&=
		\E_{k \gets K(\emptystring)}
		\frac{1}{2}
		\norm{
			\left(\overline{V}_k \tensor \Id_\tsf{A}\right)
			\circ
			A
			\circ
			E_k(m - m')
		}_1
		\\&\leq
		\delta.
	\end{aligned}
	\end{equation}
	Thus, $\text{TE}(S)$ is $\sqrt{\delta}$-tamper evident by
	application of \cref{tee:th:te-ns}.
\end{proof}

\subsection{Revocation Implies Encryption}
\label{sc:revocation=>encryption}

An interesting and direct corollary to the work we have done so far 
is that every revocation scheme must be an encryption
scheme. This follows from three facts: First, as discussed in
\cref{sc:te=>enc}, every tamper-evident scheme is an encryption scheme.
Second, as discussed in \cref{te:sc:rev=>te}, every revocation scheme
can be transformed into a tamper-evident scheme via the mapping
$S \mapsto \text{TE}(S)$. And, third, this mapping leaves the key
generation and encryption channel unchanged. These observations yield
the following.

\begin{corollary}
\label{th:cd=>enc}
	Let $S = (K, E, D, V, R)$ be an $\epsilon$-correct QECMR with
	$\delta$-revocation security.
	Then, $(K, E, D)$ is a $\sqrt{19(\sqrt{\delta} +
	2\sqrt[8]{\epsilon})}$-encrypting QECM.
\end{corollary}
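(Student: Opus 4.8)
The plan is to simply chain together three results already established in the paper, observing that the $\text{TE}(\cdot)$ construction touches neither the key generation nor the encryption channel. Concretely, starting from the $\epsilon$-correct QECMR $S = (K, E, D, V, R)$ with $\delta$-revocation security, I would first pass to the AQECM scheme $\text{TE}(S) = (K', E', D')$ of \cref{df:rev=>te}. By \cref{te:th:rev=>te-c}, this scheme is $2\sqrt[4]{\epsilon}$-correct, and by \cref{te:th:rev=>te-s}, it is $\sqrt{\delta}$-tamper evident.

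Next, I would invoke \cref{th:te=>enc}, which states that any $\epsilon'$-correct $\delta'$-tamper-evident AQECM scheme is $\sqrt{19(\delta' + \sqrt{2\epsilon'})}$-encrypting. Applying this with $\delta' = \sqrt{\delta}$ and $\epsilon' = 2\sqrt[4]{\epsilon}$ gives that $\text{TE}(S)$ is $\beta$-encrypting, where
\begin{equation}
	\beta
	=
	\sqrt{19\left(\sqrt{\delta} + \sqrt{2 \cdot 2\sqrt[4]{\epsilon}}\right)}
	=
	\sqrt{19\left(\sqrt{\delta} + \sqrt{4\sqrt[4]{\epsilon}}\right)}
	=
	\sqrt{19\left(\sqrt{\delta} + 2\sqrt[8]{\epsilon}\right)},
\end{equation}
using that $\sqrt{4\epsilon^{1/4}} = 2\epsilon^{1/8}$.

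Finally, I would conclude by recalling that \cref{df:rev=>te} sets $K' = K$ and $E' = E$, and that the property of being $\beta$-encrypting (\cref{tee:df:enc}) is a statement involving only the key generation channel and the encoding channel of a scheme — it does not reference the decoding channel at all. Hence the bound just derived for $\text{TE}(S)$ transfers verbatim to the original QECM $(K, E, D)$, which is therefore $\sqrt{19(\sqrt{\delta} + 2\sqrt[8]{\epsilon})}$-encrypting, as claimed. There is no real obstacle here; the only thing to be careful about is the arithmetic simplification of the nested radicals and making the (trivial but essential) observation that $\delta$-encryption depends solely on $K$ and $E$, so that the construction $S \mapsto \text{TE}(S)$ preserves it in both directions.
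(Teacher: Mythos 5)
Your proposal is correct and follows exactly the same route as the paper: pass to $\text{TE}(S)$, apply \cref{te:th:rev=>te-c,te:th:rev=>te-s} to get $2\sqrt[4]{\epsilon}$-correctness and $\sqrt{\delta}$-tamper evidence, invoke \cref{th:te=>enc}, and transfer the bound back via $K'=K$, $E'=E$ and the fact that \cref{tee:df:enc} only involves the key generation and encoding channels. The arithmetic $\sqrt{2\cdot 2\sqrt[4]{\epsilon}} = 2\sqrt[8]{\epsilon}$ is also right, so nothing is missing.
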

\begin{proof}
    By \cref{te:th:rev=>te-c,te:th:rev=>te-s}, we have that 
    $\text{TE}(S)=(K', E', D')$
	is a $2\sqrt[4]{\epsilon}$-correct $\sqrt{\delta}$-tamper-evident
	AQECM. By \cref{th:te=>enc}, which tells us that tamper evidence
	implies encryption, $\text{TE}(S)$ is
	$\sqrt{19(\sqrt{\delta} + 2\sqrt[8]{\epsilon})}$-encrypting. Recall that
    the property of being encrypting, given in  \cref{tee:df:enc},
    depends only on the key generation and the encoding channels.
	Moreover, since $\text{TE}(S)$ and $(K, E, D)$ have the same key
	generation and encoding channels by construction
	(see \cref{df:qecmr=>aqecm}), it follows that
	$(K, E, D)$ is a
	$\sqrt{19(\sqrt{\delta}+2\sqrt[8]{\epsilon})}$-encrypting QECM.
\end{proof}

\section{What Tamper Evidence Is Not}
\label{te:sc:separations}

We have so far been concerned with ``positive'' implications of
tamper evidence. We complete this work by changing our perspective
and collecting in this section a few shortcomings, or non-implications,
of tamper evidence.
We have four results in this category:
\begin{itemize}
    \item
		In \cref{sc:double}, we show that tamper evidence does not
		imply uncloneable encryption. In other words, a single
		ciphertext can be split into two shares, both of which cause the
		honest decryption procedure to yield the correct message, but
		neither accepting.
    \item
		In \cref{te:sc:ast}, we show that tamper evidence may
		permit an adversary to split a single ciphertext into two states
		which are both accepted by the honest decryption procedure, but
		neither yielding the correct message. In other words,
		tamper evidence does not imply a unique accepting instance.
    \item
		In \cref{te:sc:te-not-auth}, we show that tamper evidence does
		not imply authentication.
	\item
		In \cref{te:sc:te-not-qenc}, we show that tamper evidence does
		not imply the encryption of arbitrary quantum states.
\end{itemize}
In all cases, we obtain optimal separations. We can pick any
$\delta > 0$ and instantiate the result for a perfectly correct
$\delta$-tamper-evident AQECM where the other security notion is, in
some sense, perfectly broken.

The last three results are obtained from the same generic
construction using a simple group-based secret sharing scheme
described in \cref{te:sc:ast}. Even further, we obtain the
last two with precisely the same AQECM scheme described in
\cref{te:sc:otp-te}: a classical one-time pad
where only the key is encoded with a tamper-evident scheme.

\subsection{Tamper Evidence Does Not Imply Uncloneable Encryption}
\label{sc:double}
\label{te:sc:double}

Uncloneable encryption was first formalized by Broadbent and Lord
\cite{BL20} following ideas and questions initially raised during
Gottesman's work on tamper-evident schemes \cite{Got03}.
We eschew stating a formal definition of uncloneable encryption in this
work. It suffices to recall the general principle that a symmetric key
encryption scheme is uncloneable if it is infeasible for an initial
adversary ignorant of the key used to encrypt a message to split the
resulting ciphertext between two subsequent non-communicating
adversaries such that both can recover information on the plaintext,
even if \emph{they} know the key. In general, the adversaries
attempting to recover information on the plaintext are not restricted to
using the decryption procedure prescribed by the scheme.

The exact relation between tamper evidence and uncloneable encryption
has remained unclear since Gotesman first highlighted both of these
notions \cite{Got03}. In one direction, it is easy to see that
uncloneability does not imply tamper evidence as we can equip any
uncloneable encryption scheme with a decryption procedure which
\emph{always} accepts. Such a scheme cannot offer a non-trivial level of
security as a tamper-evident scheme. However, this modification
preserves any notion of security as an uncloneable encryption scheme as
this security guarantee is independent of the decryption process.

Here, we complete the picture by showing that there is no implication in
the other direction either. We do so by explicitly constructing
tamper-evident schemes which do not offer any non-trivial security
as uncloneable encryption schemes. Specifically, we define a
construction taking any tamper-evident scheme $S$ to a new scheme
denoted $\text{Double}(S)$ which is the parallel repetition of $S$,
namely $S \tensor S$, encoding the same message \emph{twice} with
independent keys.

Note that $\text{Double}(S)$ differs from $S \tensor S$ in two
respects: it adds to the encryption map a mechanism to copy classical
plaintexts before encoding them and adds a mechanism to select which
of the two recovered messages should ultimately be produced after
decoding both ciphertexts. This ``message selection'' mechanism takes
the form of an extra qubit added to the ciphertexts that the decryption
map will measure to determine which message to output. By showing that
these modifications do not negatively change the correctness or security
of $\text{Double}(S)$ compared to $S \tensor S$, we obtain that
$\text{Double}(S)$ is $2\epsilon$-correct and~$2\delta$-tamper-evident
if $S$ is $\epsilon$-correct and $\delta$-tamper-evident.

Clearly, $\text{Double}(S)$ offers no non-trivial security as an
uncloneable encryption scheme. Indeed, by splitting the two encryptions
between two different parties, both will be able to
recover the plaintext once they learn the key. Even further, due to the
way we construct the ``message selection'' mechanism of the decryption
channel, both parties will be able to use the honest decryption map to
recover the message, something that is stronger than what is needed to
break the uncloneable encryption security guarantee.

We formalize this discussion in what follows.

\begin{definition}
\label{df:double}
	Let $S = (K, E, D)$ be an AQECM scheme as given in \cref{df:aqecm}
	and let $S \tensor S = (K',E',D')$ be the parallel repetition of
	this scheme with itself, as given in \cref{df:aqecm-parallel}.
	Finally, let $V \in \mc{U}(\tsf{M}, \tsf{M} \tensor \tsf{M})$ be the
	isometry defined by $V = \sum_{m \in \mc{M}}\ketbra{m,m}{m}$.
	Note that $V$ perfectly copies all elements of $m \in \mc{M}$, which
	is to say that $V m V^\dag = m \tensor m$.

	We define $\text{Double}(S)=(K'', E'', D'')$ to be the triplet
	given by the following channels:
	\begin{itemize}
		\item
			The key generation channel $K''$ is identical to the key
			generation channel of $S \tensor S$,
			\ie~$K'' = K' = K \tensor K$.
		\item
			The encoding channel $
				E'' :
				\mc{L}(\tsf{K} \tensor \tsf{K} \tensor \tsf{M})
				\to
				\mc{L}(\C^{\bs{}} \tensor \tsf{C} \tensor \tsf{C})
			$ is given by
			\begin{equation}
				\rho \mapsto
				\ketbra{\bz}
				\tensor
				E'\left(
					\left(I_{\tsf{K}\tensor\tsf{K}}\tensor V\right)
					\rho
					\left(I_{\tsf{K}\tensor\tsf{K}}\tensor V\right)^\dag
				\right)
			\end{equation}
			which is to say that $E''$ first copies the plaintext,
			assuming it is in the computational basis, via the isometry
			$V$ and then simply applies the encoding channel $E'$ from
			$S \tensor S$, followed by prepending a single qubit in
			the $\bz$ state.
		\item
			Let $
				\Psi :
				\mc{L}(
					\C^{\bs{}}
					\tensor
					\tsf{M}
					\tensor
					\tsf{M}
				)
				\to
				\mc{L}(
					\tsf{M}
				)
			$ be a channel which measures, in the computational basis,
			the first qubit it is given and then either discards the
			second $\tsf{M}$ register if the measurement outcome is
			$\bz$ or discards the first $\tsf{M}$ register if the
			outcome is $\bo$.
			In other words, the first qubit indicates which plaintext
			register is kept.
			Formally, this is given by the map
			\begin{equation}
			\begin{aligned}
				\rho
				\mapsto
				&
				\left(\Id_\tsf{M} \tensor \Tr_\tsf{M}\right)\left(
					\left(
						\bra{\bz}
						\tensor
						I_{\tsf{M} \tensor \tsf{M}}
					\right)
					\rho
					\left(
						\ket{\bz}
						\tensor
						I_{\tsf{M} \tensor \tsf{M}}
					\right)
				\right)
				\\&+
				\left(\Tr_\tsf{M} \tensor \Id_\tsf{M}\right)\left(
					\left(
						\bra{\bo}
						\tensor
						I_{\tsf{M} \tensor \tsf{M}}
					\right)
					\rho
					\left(
						\ket{\bo}
						\tensor
						I_{\tsf{M} \tensor \tsf{M}}
					\right)
				\right).
			\end{aligned}
			\end{equation}

			The decoding channel $
				D'' :
				\mc{L}(
					\tsf{K}\tensor\tsf{K}
					\tensor
					\C^{\bs{}}\tensor\tsf{C}\tensor\tsf{C}
				)
				\to
				\mc{L}(\tsf{M} \tensor \tsf{F})
			$ is then given by
			\begin{equation}
				D'' = 
				\left(\Psi \tensor \Id_{\tsf{F}}\right)
				\circ
				\left(\Id_{\C^{\bs}} \tensor D'\right)
				\circ \left(\Swap_{\tsf{K}, \tsf{K},
				\C^{\bs{}}}^{(2,3,1)} \tensor \Id_{\tsf{C} \tensor
				\tsf{C}}\right)
			\end{equation}
			which is to say that $D''$ begins by applying $D'$, the
			decoding map from $S \tensor S$, on all but the first qubit,
			then checks the first qubit via a computational basis
			measurement to determine which message register $\tsf{M}$ to
			keep and which to discard, all the while keeping the flag
			qubit produced by $D'$ untouched. The swap channel is merely
			there to correctly route the key material.
			Finally, note that $
				\overline{D''}_{(k_0, k_1)}
				=
				\Psi
				\circ
				\left(
					\Id_{\C^{\{0,1\}}} 
					\tensor
					\overline{D'}_{(k_0, k_1)}
				\right)
			$ for all keys $(k_0, k_1) \in \mc{K} \times \mc{K}$.
	\end{itemize}
\end{definition}

We now prove that the mapping $S \mapsto \text{Double}(S)$
preserves the correctness and security of the scheme, up
to the factor of $2$ introduced by passing via the $S \tensor S$ scheme.

For correctness, this is essentially trivial. The probability of
accepting and correctly recovering both messages, \ie~the
$S \tensor S$ correctness criterion, cannot be less than the
probability of accepting both messages and correctly recovering the
first, which is precisely the $\text{Double}(S)$ correctness criterion.

\begin{theorem}
\label{th:double-c}
	If $S$ is an $\epsilon$-correct AQECM scheme, then
	$\text{Double}(S)$ is $2\epsilon$-correct.
\end{theorem}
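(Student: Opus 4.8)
The plan is to unwind \cref{df:double} until the $\text{Double}(S)$ correctness criterion collapses to the correctness criterion of $S \tensor S$, which is already controlled by \cref{th:aqecm-parallel}. Fix a message $m \in \mc{M}$ and a key $(k_0, k_1) \in \mc{K} \times \mc{K}$. Since $V m V^\dag = m \tensor m$ and prepending $\ketbra{\bz}$ commutes with everything downstream, we have $E''_{(k_0,k_1)}(m) = \ketbra{\bz} \tensor E'_{(k_0,k_1)}(m \tensor m)$, where $E'$ is the encoding map of $S \tensor S$. Recall from \cref{df:double} that $\overline{D''}_{(k_0,k_1)} = \Psi \circ (\Id_{\C^{\bs{}}} \tensor \overline{D'}_{(k_0,k_1)})$ and from \cref{df:aqecm-parallel} that $\overline{D'}_{(k_0,k_1)} = \overline{D}_{k_0} \tensor \overline{D}_{k_1}$. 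Feeding the above state through these maps, the $\C^{\bs{}}$ register is still $\ketbra{\bz}$ when $\Psi$ measures it, so $\Psi$ retains the first $\tsf{M}$ register --- the one produced from $k_0$ --- and traces out the second.

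First I would carry out this substitution explicitly to obtain
\begin{equation}
	\overline{D''}_{(k_0,k_1)} \circ E''_{(k_0,k_1)}(m)
	=
	\Tr\!\left(\overline{D}_{k_1} \circ E_{k_1}(m)\right)
	\cdot
	\overline{D}_{k_0} \circ E_{k_0}(m),
\end{equation}
where the scalar factor is exactly the probability that the second copy is accepted by $D$ (with any recovered message). Sandwiching with $\bra{m}$ and $\ket{m}$ and writing $p_0 = \bra{m} \overline{D}_{k_0} \circ E_{k_0}(m) \ket{m}$ and $t_1 = \Tr(\overline{D}_{k_1} \circ E_{k_1}(m))$, this gives $\bra{m} \overline{D''}_{(k_0,k_1)} \circ E''_{(k_0,k_1)}(m) \ket{m} = t_1 \, p_0$. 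Since $k_0$ and $k_1$ are sampled independently by $K'' = K \tensor K$, taking the expectation factors as $(\E_{k_1 \gets K(\emptystring)} t_1)(\E_{k_0 \gets K(\emptystring)} p_0)$. Now $\E_{k_0} p_0 \geq 1 - \epsilon$ by the assumed $\epsilon$-correctness of $S$, and $\E_{k_1} t_1 \geq \E_{k_1} \bra{m} \overline{D}_{k_1} \circ E_{k_1}(m) \ket{m} \geq 1 - \epsilon$ as well, because $\Tr(\sigma) \geq \bra{m} \sigma \ket{m}$ for any positive semidefinite $\sigma$. Hence the expectation is at least $(1-\epsilon)^2 \geq 1 - 2\epsilon$ for every $m$, which is precisely $2\epsilon$-correctness.

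There is no genuine obstacle here; the argument is pure bookkeeping. The only point demanding care is that $\overline{D''}$ (like $\overline{D'}$ and $\overline{D}$) is not a channel but only trace-non-increasing, so the $\Psi$ step should be manipulated via positivity rather than by assuming trace preservation --- in particular it is the partial trace over the discarded second message register inside $\Psi$ that manufactures the $\Tr(\overline{D}_{k_1} \circ E_{k_1}(m))$ factor, and one must resist bounding it by $1$ (which would discard the $k_1$ contribution and is an upper, not lower, bound). Alternatively, one can sidestep this entirely by observing that $t_1 p_0 \geq p_0 p_1 = \bra{m,m} \overline{D'}_{(k_0,k_1)} \circ E'_{(k_0,k_1)}(m \tensor m) \ket{m,m}$ and invoking \cref{th:aqecm-parallel} directly to conclude that the expectation of the right-hand side over $(k_0,k_1)$ is at least $1 - 2\epsilon$.
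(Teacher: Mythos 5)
Your proof is correct and follows essentially the same route as the paper's: both reduce to the $2\epsilon$-correctness of $S \tensor S$ established in \cref{th:aqecm-parallel}, using positivity to argue that accepting-and-recovering the first copy is at least as likely as accepting-and-recovering both copies (your closing ``alternative'' via $t_1 p_0 \geq p_0 p_1$ is verbatim the paper's argument). The explicit computation yielding the factored bound $(1-\epsilon)^2$ is a slightly more detailed, marginally tighter presentation of the same idea, and your cautions about $\overline{D''}$ not being trace-preserving are well placed but do not change the substance.
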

\begin{proof}
	This follows essentially immediately from the fact that
	$S \tensor S$ is $2\epsilon$-correct, as shown in
	\cref{th:aqecm-parallel}. We go through the details for
	completeness.

	Let $S$, $S \tensor S$, and $\text{Double}(S)$ be defined and
	denoted as in \cref{df:double}.
	By definition, for every key $(k_0,k_1)\in\mc{K}\times\mc{K}$ and
	every message $m \in \mc{M}$ we have that
	\begin{equation}
		E''_{(k_0,k_1)}(m)
		=
		\ketbra{\bz}
		\tensor
		E'_{(k_0,k_1)}(m\tensor m).
	\end{equation}
	It then follows that
	\begin{equation}
	\begin{aligned}
		&
		\E_{(k_0,k_1) \gets K''(\emptystring)}
		\bra{m}
			\overline{D''}_{(k_0,k_1)} \circ E''_{(k_0,k_1)}(m)
		\ket{m}
		\\&=
		\E_{(k_0,k_1) \gets K'(\emptystring)}
		\bra{m}
			\left(\Id_\tsf{M} \tensor \Tr_{\tsf{M}}\right)
			\circ
			\overline{D'}_{(k_0, k_1)}
			\circ
			E'_{k_0,k_1}(m \tensor m)
		\ket{m}
		\\&\geq
		\E_{(k_0,k_1) \gets K'(\emptystring)}
		\bra{m,m}
			\overline{D'}_{(k_0,k_1)}
			\circ
			E'_{(k_0,k1)}(m \tensor m)
		\ket{m,m}
		\\&\geq
		1 - 2\epsilon
	\end{aligned}
	\end{equation}
	where the first inequality holds as the event of 
	measuring both plaintexts to be $m$ cannot be more likely than the
	event of measuring the first plaintext to be $m$ and the second
	inequality follows from the fact that $S \tensor S$ is
	$2\epsilon$-correct. Thus, $\text{Double}(S)$ is
	$2\epsilon$-correct.
\end{proof}

For security, we note that the choice to accept or reject
the ciphertext is independent of the extra qubit used for the message
selection mechanism and that this qubit does not depend on the original
message or key. Informally, this means it does not provide any useful
information to the adversary. Formally, this means that it can be
simulated by the adversary.

\begin{theorem}
	\label{te:th:double-s}
	If $S$ is a $\delta$-tamper-evident AQECM, then
	$\text{Double}(S)$ is $2\delta$-tamper-evident.
\end{theorem}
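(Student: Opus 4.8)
The plan is to reduce tamper evidence of $\text{Double}(S)$ to tamper evidence of $S \tensor S$, which is $2\delta$-tamper-evident by \cref{th:te-parallel} applied with both component schemes equal to $S$. The two structural facts we will exploit, both read off \cref{df:double}, are that $E''_{(k_0,k_1)}(m) = \ketbra{\bz}\tensor E'_{(k_0,k_1)}(m\tensor m)$ and that $\overline{D''}_{(k_0,k_1)} = \Psi\circ(\Id_{\C^{\bs{}}}\tensor\overline{D'}_{(k_0,k_1)})$. First I would verify the elementary identity $\Tr_\tsf{M}\circ\Psi = \Tr_{\C^{\bs{}}\tensor\tsf{M}\tensor\tsf{M}}$: once the single output message register of $\Psi$ is traced out, which of the two recovered messages is kept becomes irrelevant and the selection qubit together with both message registers are simply discarded. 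Consequently $\Tr_\tsf{M}\circ\overline{D''}_{(k_0,k_1)} = \Tr_{\C^{\bs{}}}\circ\left(\Id_{\C^{\bs{}}}\tensor\left(\Tr_{\tsf{M}\tensor\tsf{M}}\circ\overline{D'}_{(k_0,k_1)}\right)\right)$.

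Fix a tamper attack $A : \mc{L}(\C^{\bs{}}\tensor\tsf{C}\tensor\tsf{C})\to\mc{L}(\C^{\bs{}}\tensor\tsf{C}\tensor\tsf{C}\tensor\tsf{A})$ against $\text{Double}(S)$ and two messages $m, m'\in\mc{M}$. I would construct a tamper attack $\tilde A$ against $S\tensor S$ with adversarial space $\tsf{A}' = \C^{\bs{}}\tensor\tsf{A}$ by prepending a fresh qubit in state $\ketbra{\bz}$, applying $A$, and then reordering the output registers so that the leftover $\C^{\bs{}}$ register is absorbed into the adversarial one; this is a composition of a state-preparation channel, of $A$, and of a $\Swap$ channel, hence a channel. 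Writing $k = (k_0,k_1)$ and letting $O^{\otimes}_{k} = \left(\left(\Tr_{\tsf{M}\tensor\tsf{M}}\circ\overline{D'}_{k}\right)\tensor\Id_{\tsf{A}'}\right)\circ\tilde A\circ E'_{k}((m,m)-(m',m'))$ denote the operator appearing in the tamper-evidence condition of \cref{df:te-security} for $S\tensor S$ against $\tilde A$ with messages $(m,m)$ and $(m',m')$, the two identities above give
\[
	O^{\text{Dbl}}_{k}
	:=
	\left(\left(\Tr_\tsf{M}\circ\overline{D''}_{k}\right)\tensor\Id_\tsf{A}\right)\circ A\circ E''_{k}(m-m')
	=
	\left(\Tr_{\C^{\bs{}}}\tensor\Id_\tsf{A}\right)(O^{\otimes}_{k})
\]
under the natural identification $\tsf{A}' = \C^{\bs{}}\tensor\tsf{A}$.

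Since the trace distance is contractive under the channel $\Tr_{\C^{\bs{}}}\tensor\Id_\tsf{A}$, this yields $\frac{1}{2}\norm{O^{\text{Dbl}}_{k}}_1 \leq \frac{1}{2}\norm{O^{\otimes}_{k}}_1$ for every key $k$. Because $S\tensor S$ is $2\delta$-tamper-evident, $\frac{1}{2}\norm{O^{\otimes}_{k}}_1 \leq 2\delta$ holds with probability at least $1-2\delta$ over $k\gets K'(\emptystring)$; since $K'' = K'$, the same probability bound transfers to $O^{\text{Dbl}}_{k}$. As $m$, $m'$, and $A$ were arbitrary, $\text{Double}(S)$ is $2\delta$-tamper-evident.

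The only genuine work is the register bookkeeping through the $\Swap$ channels of \cref{df:double} together with checking that $\tilde A$ literally has the syntax of a tamper attack against $S\tensor S$ as in \cref{df:te-attack}; I expect this to be the main obstacle, though it is routine. Once it is settled, the conclusion follows immediately from contractivity of the trace distance and \cref{th:te-parallel}.
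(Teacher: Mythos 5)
Your proposal is correct and follows essentially the same route as the paper: both reduce to the $2\delta$-tamper evidence of $S\tensor S$ from \cref{th:te-parallel} by having the simulated attack prepend the selection qubit itself, and both rest on the identities $E''_{(k_0,k_1)}(m)=\ketbra{\bz}\tensor E'_{(k_0,k_1)}(m\tensor m)$ and $\Tr_\tsf{M}\circ\Psi=\Tr_{\C^{\bs{}}\tensor\tsf{M}\tensor\tsf{M}}$. The only (immaterial) difference is that the paper traces out the leftover qubit \emph{inside} the constructed attack, obtaining an exact equality of the relevant operators, whereas you keep it in the adversarial register and discard it at the end via contractivity of the trace distance.
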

\begin{proof}
	This follows quickly from the fact that
	$S \tensor S$ is $2\delta$-tamper-evident, as shown in
	\cref{th:aqecm-parallel}. We demonstrate that any attack against
	$\text{Double}(S)$ implies an attack against $S \tensor S$ with
	the same performance by showing that an adversary can simulate all
	differences between these two schemes.

	Let $S$, $S \tensor S$, and $\text{Double}(S)$ be defined and
	denoted as in \cref{df:double}. Next, let~$
		A'' :
		\mc{L}(\C^{\bs{}} \tensor \tsf{C} \tensor \tsf{C})
		\to
		\mc{L}(\C^{\bs{}}\tensor\tsf{C}\tensor\tsf{C}\tensor\tsf{A})
	$ be a tamper attack against $\text{Double}(S)$. Consider now the
	tamper attack $
		A' :
		\mc{L}(\tsf{C} \tensor \tsf{C})
		\to
		\mc{L}(\tsf{C}\tensor\tsf{C}\tensor\tsf{A})
	$ against $S \tensor S$ given by
	\begin{equation}
		\rho
		\mapsto
		\left(
			\Tr_{\C^{\bs{}}}
			\tensor
			\Id_{\tsf{C} \tensor \tsf{C} \tensor \tsf{A}}
		\right)
		\circ
		A''\left(\ketbra{\bz}\tensor\rho\right)
	\end{equation}
	In essence, $A'$
	simulates the attack $A''$ by first providing it with the extra
	qubit it expects, and then removing it before returning a ciphertext
	to the decoding map $D'$ of $S' = S \tensor S$. So, for every
	message $m \in \mc{M}$ and key
	$(k_0, k_1) \in \mc{K} \times
	\mc{K}$, we have by definition that
	\begin{equation}
		A' \circ E'_{(k_0, k_1)}(m \tensor m)
		=
		\left(
			\Tr_{\C^{\bs{}}}
			\tensor
			\Id_{\tsf{C} \tensor \tsf{C} \tensor \tsf{A}}
		\right)
		\circ
		A'' \circ E''_{(k_0, k_1)}(m).
	\end{equation}
	Next, we claim that
	\begin{equation}
		\Tr_{\tsf{M}} \circ \overline{D''}_{(k_0, k_1)}
		=
		\Tr_{\tsf{M}\tensor\tsf{M}} \circ \overline{D'}_{(k_0, k_1)}
		\circ
		\left(
			\Tr_{\C^{\{0,1\}}}
			\tensor
			\Id_{\tsf{C} \tensor \tsf{C}}
		\right).
	\end{equation}
	Intuitively, this is simply stating that discarding the single
	plaintext ultimately produced by $\overline{D''}_{(k_0, k_1)}$ is
	equivalent to discarding both plaintexts produced by
	$\overline{D'}_{(k_0, k_1)}$, ignoring for the moment the extra
	qubit in the ciphertext. Formally, the trace-preserving property of
	channels implies that $
		\Tr_\tsf{M} \circ \Psi
		=
		\Tr_{\C^{\bs{}}\tensor\tsf{M} \tensor \tsf{M}}
		=
		\Tr_{\C^{\bs{}}}
		\tensor
		\Tr_{\tsf{M} \tensor \tsf{M}}
		$.
	We can then write
	\begin{equation}
	\begin{aligned}
		\Tr_\tsf{M} \circ \overline{D''}_{(k_0,k_1)}
		&=
		\Tr_\tsf{M}
		\circ
		\Psi
		\circ
		\left(
			\Id_{\C^{\bs{}}}
			\tensor
			\overline{D'}_{(k_0, k_1)}
		\right)
		\\&=
		\Tr_{\C^{\bs{}} \tensor \tsf{M} \tensor \tsf{M}}
		\left(
			\Id_{\C^{\bs{}}}
			\tensor
			\overline{D'}_{(k_0, k_1)}
		\right)
		\\&=
		\Tr_{\C^{\bs{}}}
		\tensor
		\left(
			\Tr_{\tsf{M} \tensor \tsf{M}}
			\circ
			\overline{D'}_{(k_0, k_1)}
		\right)
		\\&=
		\left(
			\Tr_{\tsf{M} \tensor \tsf{M}}
			\circ
			\overline{D'}_{(k_0, k_1)}
		\right)
		\circ
		\left(
			\Tr_{\C^{\bs{}}}
			\tensor
			\Id_{\tsf{C} \tensor \tsf{C}}
		\right)
	\end{aligned}
	\end{equation}
	which is the desired claim.

	This, and a final remainder that $K'' = K'$, allows us to
	write that for any two messages $m, m' \in \mc{M}$ it holds that
	\begin{equation}
	\begin{aligned}
		&
		\Pr_{(k_0,k_1) \gets K''(\emptystring)}\left[
			\frac{1}{2}\norm{
					\left(
						\left(
							\Tr_\tsf{M}
							\circ
							\overline{D''}_{(k_0,k_1)}
						\right)
						\tensor
						\Id_{\tsf{A}}
					\right)
					\circ
					A''
					\circ
					E''_{(k_0,k_1)}
					(m - m')
				}_1
				\leq
				2\delta
		\right]
		\\&=
		\Pr_{(k_0,k_1) \gets K'(\emptystring)}\left[
			\frac{1}{2}\norm{
					\left(
						\left(
							\Tr_{\tsf{M}\tensor \tsf{M}}
							\circ
							\overline{D'}_{(k_0,k_1)}
						\right)
						\tensor
						\Id_{\tsf{A}}
					\right)
					\circ
					A'
					\circ
					E'_{(k_0,k_1)}
					(m \tensor m - m' \tensor m')
				}_1
				\leq
				2\delta
		\right]
		\\&\geq
		1 - 2\delta
	\end{aligned}
	\end{equation}
	where the last inequality follows from the fact that $S \tensor S$
	is $2\delta$-tamper evident. Thus, the scheme $\text{Double}(S)$ is
	also $2\delta$-tamper evident.
\end{proof}

As mentioned, the end result of this construction is
that it yields an optimal separation between tamper-evident
schemes and uncloneable encryption schemes. This is the
content of the next theorem.

\begin{theorem}
\label{th:te=/>ue}
	For any strictly positive $\delta \in \R^+$ and any alphabet
	$\mc{M}$, there exists
			a perfectly correct $\delta$-tamper-evident scheme
			$S' = (K', E', D')$ with messages $\mc{M}$ and ciphertext
			space $\tsf{C}'$ as well as
			a channel
			$
			\tilde{A} :
			\mc{L}(\tsf{C}' \tensor \tsf{C}')
			\to
			\mc{L}(\tsf{C}' \tensor \tsf{C}')
			$
	such that for any message $m \in \mc{M}$ and any key $k'$ with
	$\bra{k'}K'(\emptystring)\ket{k'} > 0$, we have that
	\begin{equation}
		\bra{m,m}
		\left(\Id_\tsf{M} \tensor \Tr_{\tsf{F}} \tensor \Id_{\tsf{M}}
		\tensor \Tr_{\tsf{F}}\right)
		\circ
		\left(D'_{k'} \circ D'_{k'}\right)
		\circ
		\tilde{A}
		\circ
		E'_k(m)
		\ket{m,m}
		=
		1.
	\end{equation}
\end{theorem}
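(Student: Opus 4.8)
The plan is to instantiate $S'$ as $\text{Double}(S)$ for a sufficiently strong base scheme and to let $\tilde{A}$ be the (key-independent) channel that peels the two independent tamper-evident encodings out of a $\text{Double}$-ciphertext, routes one to each output copy, pads the unused ciphertext slots with fixed states, and resets the message-selection qubit so that each copy decodes to a different one of the two internal plaintexts. Concretely, I would first fix $\delta \in \R^+$ and the alphabet $\mc{M}$ and invoke \cref{te:th:te-exist} to obtain a perfectly correct $(\delta/2)$-tamper-evident AQECM scheme $S = (K, E, D)$ with messages $\mc{M}$ and ciphertext space $\tsf{C}$; then set $S' = \text{Double}(S)$, written $(K', E', D')$. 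By \cref{th:double-c} it is $0$-correct, i.e.\ perfectly correct, and by \cref{te:th:double-s} it is $\delta$-tamper-evident, so $S'$ already meets the first two requirements of the theorem. Recall from \cref{df:double} that the ciphertext space of $S'$ is $\tsf{C}' = \C^{\bs{}} \tensor \tsf{C} \tensor \tsf{C}$, that its keys are pairs $(k_0, k_1) \in \mc{K} \times \mc{K}$, and that, since the isometry $V$ copies computational-basis states, the honest encoding is
\begin{equation}
	E'_{(k_0, k_1)}(m)
	=
	\ketbra{\bz} \tensor E_{k_0}(m) \tensor E_{k_1}(m).
\end{equation}

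Next I would define $\tilde{A}$. Fix any state $\tau \in \mc{D}(\tsf{C})$ and let $\tilde{A}$ be the channel that, on input $\ketbra{\bz} \tensor \sigma_0 \tensor \sigma_1$, discards the leading qubit, prepares the fixed ancillas $\ketbra{\bz}$, $\ketbra{\bo}$, and two copies of $\tau$, and reorders (via $\Swap$ channels) to produce
\begin{equation}
	\left(\ketbra{\bz} \tensor \sigma_0 \tensor \tau\right)
	\tensor
	\left(\ketbra{\bo} \tensor \tau \tensor \sigma_1\right)
	\in
	\mc{D}(\tsf{C}' \tensor \tsf{C}').
\end{equation}
This is a bona fide channel, being a composition of a partial trace, preparations of fixed states, and swaps, and it depends neither on the key nor on the message. (If one insists on the stated signature $\mc{L}(\tsf{C}' \tensor \tsf{C}') \to \mc{L}(\tsf{C}' \tensor \tsf{C}')$, precede it by discarding the second input register; the composition $\tilde{A} \circ E'_{k'}$ is read accordingly, and likewise I read $D'_{k'} \circ D'_{k'}$ as $D'_{k'} \tensor D'_{k'}$ applied to the two output copies.) Applying $\tilde{A}$ to the honest ciphertext then yields $\bigl(\ketbra{\bz} \tensor E_{k_0}(m) \tensor \tau\bigr) \tensor \bigl(\ketbra{\bo} \tensor \tau \tensor E_{k_1}(m)\bigr)$.

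It remains to verify the probability-one claim, and this is where perfect correctness of $S$ enters: since $\E_{k \gets K(\emptystring)} \bra{m} \overline{D}_k \circ E_k(m) \ket{m} = 1$ with every summand at most $1$, for each key $k$ in the support of $K(\emptystring)$ and each $m$ we have $D_k(E_k(m)) = \ketbra{m} \tensor \ketbra{\bo}$ exactly. Unfolding $D'_{(k_0,k_1)}$ from \cref{df:double} on the first output copy: after the key-routing $\Swap$ and the $S \tensor S$ decoding, the register $E_{k_0}(m)$ decodes deterministically to $m$ on its $\tsf{M}$ component (the filler $\tau$ decoding with $k_1$ affects only the other, soon-discarded $\tsf{M}$ register and the flag), and then the selection channel $\Psi$, reading the leading qubit $\bz$, keeps exactly the $\tsf{M}$ register carrying $m$; after $\Id_{\tsf{M}} \tensor \Tr_{\tsf{F}}$ the output is $\ketbra{m}$. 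Symmetrically, on the second copy the leading qubit $\bo$ makes $\Psi$ keep the $\tsf{M}$ register fed by $E_{k_1}(m)$, which again decodes to $m$. Hence
\begin{equation}
	\left(\Id_{\tsf{M}} \tensor \Tr_{\tsf{F}} \tensor \Id_{\tsf{M}} \tensor \Tr_{\tsf{F}}\right)
	\circ
	\left(D'_{k'} \tensor D'_{k'}\right)
	\circ
	\tilde{A}
	\circ
	E'_{k'}(m)
	=
	\ketbra{m} \tensor \ketbra{m},
\end{equation}
so the stated matrix element equals $1$.

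The main obstacle is pure bookkeeping rather than anything conceptual: one must carefully unwind the definition of the $\text{Double}$ decoding map $D'$ (the key-routing $\Swap$, the logical-and of the two flag qubits, and the selection channel $\Psi$) and confirm that the filler states $\tau$ in the unused $\tsf{C}$ slots cannot spoil the output. This holds precisely because (i) $\Psi$, steered by the leading qubit, routes the $\tsf{M}$ register produced from $\tau$ into the discard, and (ii) the theorem traces out the flag register $\tsf{F}$, so it is irrelevant that decoding $\tau$ may well cause the scheme to reject. A secondary, cosmetic point is reconciling the stated domain of $\tilde{A}$ with the fact that a cloning/splitting attack naturally has signature $\mc{L}(\tsf{C}') \to \mc{L}(\tsf{C}' \tensor \tsf{C}')$.
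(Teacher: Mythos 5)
Your proposal is correct and follows essentially the same route as the paper: instantiate $S'=\text{Double}(S)$ for a perfectly correct $(\delta/2)$-tamper-evident $S$ from \cref{te:th:te-exist}, and let $\tilde{A}$ split the two independent encodings of $m$ into separate $\text{Double}$-ciphertexts with complementary selection qubits and filler states in the unused slots (the paper uses the maximally mixed state where you use an arbitrary fixed $\tau$, which changes nothing). Your extra care about the signature of $\tilde{A}$ and the exact consequence of perfect correctness only makes explicit what the paper leaves as "clear."
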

\begin{proof}
	By \cref{te:th:te-exist}, there exists a perfectly correct
	$\frac{\delta}{2}$-tamper-evident AQECM scheme $S$ with messages
	$\mc{M}$. Then, $S' = \text{Double}(S)$ is a perfectly correct
	$\delta$-tamper-evident scheme. It then suffices to consider the
	channel $\tilde{A}$ given by
	\begin{equation}
		\rho
		\mapsto
		\Swap_{
				\C^{\bs{}}, \C^{\bs{}},
				\tsf{C},\tsf{C},\tsf{C}, \tsf{C}
			}^{
				(1,4,3,5,2,6)
			}
		\left(
		\ketbra{\bz} \tensor \ketbra{\bo}
		\tensor
		\frac{\Id_\tsf{C}}{\dim \tsf{C}}
		\tensor
		\frac{\Id_\tsf{C}}{\dim \tsf{C}}
		\tensor
		\left(
			\Tr_{\C^{\bs{}}}
			\tensor \Id_{\tsf{C} \tensor \tsf{C}}
		\right)(\rho)
		\right)
	\end{equation}
	so that for any key $k' = (k_0, k_1)$ which has a non-zero
	probability of being produced by $K'$ and any message $m$ we have
	\begin{equation}
		\tilde{A} \circ E'_{(k_0,k_1)}(m)
		=
		\left(
			\ketbra{\bz}
			\tensor
			E_{k_0}(m)
			\tensor
			\frac{\Id_\tsf{C}}{\dim \tsf{C}}
		\right)
		\tensor
		\left(
			\ketbra{\bo}
			\tensor
			\frac{\Id_\tsf{C}}{\dim \tsf{C}}
			\tensor
			E_{k_1}(m)
		\right).
	\end{equation}
	Both of these are valid ciphertexts for the scheme
	$\text{Double}(S)$ and it is clear that the decryption map
	$D'_{(k_0,k_1)}$ of $\text{Double}(S)$ will recover the correct
	plaintext $m$ in both cases due to how the plaintext selection qubit
	is fixed and the perfect correctness of the scheme $S$.
\end{proof}

\subsection{Tamper Evidence with a Simple Secret Sharing
Scheme}
\label{te:sc:gss}
\label{te:sc:ast}

We describe in this section a method to construct a tamper-evident
scheme by combining two tamper-evident schemes with a secret sharing
scheme (\eg: \cite[Sec.~13.3]{KL15}). This construction will be useful
to obtain various separations.

We conjecture that our construction works with \emph{any}
information theoretic 2-out-of-2, or even any $k$-out-of-$n$ secret
sharing scheme, but we restrict ourselves to considering the following
simple group-based secret sharing scheme similar to the one-time pad.
Consider an alphabet $\mc{M}$ equipped with a binary operation
$\ast : \mc{M} \times \mc{M}\to\mc{M}$ such that $(\mc{M}, \ast)$ forms
a group. The secret sharing scheme we use maps any element
$m \in \mc{M}$ to the couple~$(p, p \ast m)$ for a uniformly random
$p \in \mc{M}$. Clearly, neither $p$ nor $p \ast m$, independently,
depends on $m$ when~$p$ is sampled uniformly at random. This holds as
$p \mapsto p \ast m$ is a permutation for all $m \in \mc{M}$. Hence,
neither pieces of information alone is sufficient to obtain any
information on $m$. But, knowledge of both $p$ and $p \ast m$ is
clearly sufficient to recover $m$.

We combine this secret sharing scheme with any two AQECM schemes
$S=(K,E,D)$ and $S' = (K', E', D')$ which share the same messages
$\mc{M}$ to obtain a new AQECM we will denote $S \ast S'$. On input of a
message $m$ and a pair of keys $(k, k') \in \mc{K} \times \mc{K}'$,
the encryption procedure of this new scheme will sample a uniformly
random $p \in \mc{M}$ and output $E_{k}(p)\tensor E'_{k'}(p\ast m)$.
The decryption procedure of $S \ast S'$ will proceed as expected: both
ciphertexts will be decrypted using the decryption maps $D$ and $D'$,
respectively, and the appropriate group action will be applied to
recover the message $m$. However, the decryption map of
$S \ast S'$ will reject if and only if \emph{both} ciphertexts were
rejected by $D$. Otherwise, \ie~if at least one of $D$ or
$D'$ accepts, the decryption map of $S \ast S'$ will accept.

Assume that $S$ and $S'$ exhibit sufficiently good correctness and
security as tamper-evident schemes. Correctness of $S \ast S'$ follows
naturally from the correctness of $S$ and $S'$. The security of
$S \ast S'$ as a tamper-evident scheme essentially follows from
the fact that if no tampering is detected on at least one of the two
ciphertexts produced by this scheme, then an eavesdropper is highly
unlikely to have learned information on both shares of the secret
sharing scheme and hence is unlikely to have learned any information on
the message.

We formalize this discussion in the following. In particular, we will
require that the group operation be computed coherently. This will be a
useful property later.

\begin{definition}
\label{te:df:gss}
\label{te:df:ast}
	Let $S = (K, E, D)$ and $S' = (K', E', D')$ be two AQECM schemes as
	given in \cref{df:aqecm} which share the same message space
	$\mc{M}$. Let this message space be equipped with a group operation
	denoted $\ast$ and let $U_\ast \in \mc{U}(\tsf{M} \tensor \tsf{M})$
	be the unitary operator which coherently computes this operation,
	which is to
	say that~$U(\ket{a}\tensor\ket{b}) = \ket{a}\tensor\ket{a \ast b}$
	for all $a,b \in \mc{M}$.\footnote{%
		Such a group operation can be found for any alphabet $\mc{M}$
		by considering any bijection $\phi:\mc{M}\to\Z/\abs{\mc{M}}\Z$,
		where $\Z/\abs{\mc{M}}\Z$ is the additive group of integers
		modulo $\abs{\mc{M}}$. Simply define
		$m \ast m' = \phi^{-1}(\phi(m) + \phi(m'))$.}
	We define the AQECM scheme $S \ast S' = (K^\ast, E^\ast, D^\ast)$ as
	follows:
	\begin{itemize}
		\item
			The key generation channel $
				K^\ast:
				\C
				\to \mc{L}(\tsf{K} \tensor \tsf{K}')
			$ is simply the parallel application of the key generation
			channels of $S$ and $S'$, \ie~$K^\ast = K \tensor K'$.
		\item
			The encryption channel $
				E^\ast :
				\mc{L}(\tsf{K} \tensor \tsf{K}' \tensor \tsf{M})
				\to
				\mc{L}(\tsf{C} \tensor \tsf{C}')
			$ is given by
			\begin{equation}
				\rho
				\mapsto
				\E_{p \gets \mc{M}}
				\left(E \tensor E'\right)
				\circ
				\Swap_{\tsf{K}, \tsf{K}', \tsf{M}, \tsf{M}}^{(1,3,2,4)}
				\left(
					\left(
						I_{\tsf{K} \tensor \tsf{K}'}
						\tensor
						U_\ast
					\right)
					(p \tensor \rho)
					\left(
						I_{\tsf{K} \tensor \tsf{K}'}
						\tensor
						U_\ast
					\right)^\dag
				\right).
			\end{equation}
			In other words, $E^\ast$ samples uniformly at random an
			element $p \in \mc{M}$, coherently computes the result of
			the group action between the supplied message and $p$ and
			encrypts the resulting state on the message spaces
			with $E \tensor E'$ after having properly routed the keys.
			For a state of the expected form
			$\rho = k \tensor k' \tensor m$, the
			result of this channel is
			\begin{equation}
				E^\ast_{(k,k')}(m)
				=
				\E_{p \gets \mc{M}}
				E_k(p) \tensor E'_{k'}(p \ast m).
			\end{equation}
		\item
			The decryption channel $
				D^\ast
				:
				\mc{L}\left(
					\tsf{K} \tensor \tsf{K}'
					\tensor
					\tsf{C} \tensor \tsf{C}'
				\right)
				\to
				\mc{L}(\tsf{M} \tensor \tsf{F})
			$
			is given by
			\begin{equation}
				\rho
				\mapsto
				W
				\circ
				\left(D \tensor D'\right)
				\circ
				\Swap_{\tsf{K}, \tsf{K}', \tsf{C}, \tsf{C}'}^{(1,3,2,4)}
				(\rho)
			\end{equation}
			where $
				W
				:
				\mc{L}(
					\tsf{M} \tensor \tsf{F}
					\tensor
					\tsf{M} \tensor \tsf{F}
				)
				\to
				\mc{L}(\tsf{M} \tensor \tsf{F})
			$ is a channel which (1) measures both flag qubits produced
			by $D$ and $D'$ and outputs the logical ``or'' of the
			result, (2) coherently undoes the group operation on the
			message spaces, and (3) discards the final unnecessary
			message space supposed to hold the $p$ element sampled by
			the encryption channel. Formally, this is given by
			\begin{equation}
				\rho
				\mapsto
				\left(\Tr_\tsf{M} \tensor \Id_{\tsf{M} \tensor
				\tsf{F}}\right)
				\left(
				\sum_{b,b' \in \bs{}}
				\left(U_\ast^\dag\tensor\ketbra{b \lor b'}{b,b'}\right)
				\Swap_{\tsf{M},\tsf{F},\tsf{M},\tsf{F}}^{(1,3,2,4)}
					(\rho)
				\left(U_\ast\tensor\ketbra{b,b'}{b \lor b'}\right)
				\right).
			\end{equation}
	\end{itemize}
\end{definition}

The following theorem states that if $S$ and $S'$ are
$\epsilon$-correct and $\epsilon'$-correct, then
$S \ast S'$ is $(\epsilon + \epsilon')$-correct. This follows
from the fact that the probability that the decryption procedure of
$S \ast S'$ both produces the correct message and accepts is at
least the probability that both instances of the decryption procedure
from $S$ accept and produce the correct messages. As the keys are
independently sampled, this is at least
$(1 - \epsilon)(1-\epsilon') \geq 1 - (\epsilon + \epsilon')$. Note that
this is identical to the correctness of $S \tensor S'$. In particular,
we are disregarding the possibility that only one decryption procedure
accepts, or that these decryption maps both output wrong messages which,
by coincidence, still lead to the correct message being recovered after
application of the group operation.

\begin{theorem}
\label{te:th:gss-c}
\label{te:th:ast-c}
	Let $S$ and $S'$ be $\epsilon$- and $\epsilon'$-correct AQECMs,
	respectively, with the same message space $\mc{M}$ which
	admits a group operation denoted $\ast$. Then, $S \ast S'$ is
	$(\epsilon+\epsilon')$-correct.
\end{theorem}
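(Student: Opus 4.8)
The plan is to reduce correctness of $S \ast S'$ to the correctness of the underlying schemes $S$ and $S'$ by following the same template as the proof of \cref{th:aqecm-parallel}. First I would fix an arbitrary message $m \in \mc{M}$ and unfold the definition of $\overline{D^\ast}_{(k,k')} \circ E^\ast_{(k,k')}(m)$ using \cref{te:df:ast}. Recall that $E^\ast_{(k,k')}(m) = \E_{p \gets \mc{M}} E_k(p) \tensor E'_{k'}(p \ast m)$, and that $D^\ast$ applies $D \tensor D'$ (after routing keys via a $\Swap$), then the channel $W$ which takes the logical ``or'' of the two flags, coherently undoes the group operation, and discards the register holding $p$. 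The key structural observation is that $W$ uses $b \lor b'$, so the $\bo$-flag branch picked out by $\overline{D^\ast}$ is implied by \emph{either} individual flag being $\bo$; in particular it contains the branch where \emph{both} individual flags are $\bo$.

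The main step is then the inequality
\begin{equation}
	\bra{m} \overline{D^\ast}_{(k,k')} \circ E^\ast_{(k,k')}(m) \ket{m}
	\geq
	\E_{p \gets \mc{M}}
	\bra{p}\overline{D}_k \circ E_k(p) \ket{p}
	\cdot
	\bra{p \ast m}\overline{D'}_{k'} \circ E'_{k'}(p \ast m) \ket{p \ast m}.
\end{equation}
To justify this, I would argue that the event ``$D^\ast$ accepts and outputs $m$'' is implied by the event ``$D$ accepts and outputs $p$, and $D'$ accepts and outputs $p \ast m$'': in that case both flags are $\bo$ so $b \lor b' = \bo$, the coherent inverse $U_\ast^\dag$ maps $\ket{p}\tensor\ket{p \ast m}$ to $\ket{p}\tensor\ket{m}$, and discarding the $\tsf{M}$ register holding $p$ leaves $\ket{m}$. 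Since $\overline{D^\ast}_{(k,k')} \circ E^\ast_{(k,k')}(m)$ is a positive semidefinite operator (a subnormalized density operator), restricting to this sub-event can only decrease the relevant matrix element, giving the inequality. This is essentially the same ``positivity lets us drop branches'' manoeuvre used in the proof that discarding the flag preserves correctness and in \cref{th:te=>qm-s}.

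Having established this pointwise bound, I would take expectations over $k \gets K(\emptystring)$ and $k' \gets K'(\emptystring)$, which are independent. Since $p \mapsto p \ast m$ is a bijection of $\mc{M}$, for each fixed $p$ the factor $\E_{k} \bra{p}\overline{D}_k \circ E_k(p)\ket{p} \geq 1 - \epsilon$ by $\epsilon$-correctness of $S$, and likewise $\E_{k'} \bra{p \ast m}\overline{D'}_{k'} \circ E'_{k'}(p \ast m)\ket{p \ast m} \geq 1 - \epsilon'$ by $\epsilon'$-correctness of $S'$ — note both bounds hold \emph{uniformly in $p$}, which is exactly what lets the average over $p$ go through. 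Multiplying and using $(1-\epsilon)(1-\epsilon') \geq 1 - (\epsilon + \epsilon')$ yields $\E_{(k,k') \gets K^\ast(\emptystring)} \bra{m} \overline{D^\ast}_{(k,k')} \circ E^\ast_{(k,k')}(m) \ket{m} \geq 1 - (\epsilon + \epsilon')$, and since $m$ was arbitrary this is the claim.

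The only mildly delicate point — the ``main obstacle'', though it is more bookkeeping than genuine difficulty — is carefully tracking the registers through $W$: one must check that the $\Swap$ reordering, the coherent $U_\ast^\dag$, and the partial trace over the $p$-register compose so that the branch ``$D$ outputs $p$ and $D'$ outputs $p \ast m$'' really does land on $\ket{m}$ in the output message register, rather than on some permuted or shifted version. Once the register conventions from \cref{te:df:ast} are pinned down this is routine, and the rest of the argument is a direct mirror of \cref{th:aqecm-parallel}.
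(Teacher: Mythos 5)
Your proposal is correct and follows essentially the same route as the paper's proof: the central step is the identical pointwise inequality lower-bounding $\bra{m}\overline{D^\ast}_{(k,k')}\circ E^\ast_{(k,k')}(m)\ket{m}$ by the product $\E_{p}\bra{p}\overline{D}_k\circ E_k(p)\ket{p}\cdot\bra{p\ast m}\overline{D'}_{k'}\circ E'_{k'}(p\ast m)\ket{p\ast m}$, followed by independence of the two keys and $(1-\epsilon)(1-\epsilon')\geq 1-(\epsilon+\epsilon')$. Your observation that the correctness bounds hold uniformly in $p$ is exactly the point that makes the average over $p$ go through, just as in the paper.
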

\begin{proof}
	For all keys $(k, k') \in \mc{K} \times \mc{K}'$ and all messages
	$m \in \mc{M}$ we have by definition that
	\begin{equation}
	\begin{aligned}
		\bra{m}
		\overline{D^\ast}_{(k,k')} \circ E^\ast_{(k,k')}(m)
		\ket{m}
		&=
		\E_{p \gets \mc{M}}
		\bra{m}
		\overline{D^\ast}_{(k,k')}
		\left(E_{k}(p) \tensor E_{k'}(p \ast m)\right)
		\ket{m}
		\\&\geq
		\E_{p \gets \mc{M}}
		\bra{p, p \ast m}
		\overline{D}_{k} \circ E_{k}(p)
		\tensor
		\overline{D'}_{k'} \circ E'_{k'}(p \ast m)
		\ket{p, p \ast m}
		\\&=
		\E_{p \gets \mc{M}}
		\bra{p}
		\overline{D}_{k} \circ E_{k}(p)
		\ket{p}
		\cdot
		\bra{p \ast m}
		\overline{D'}_{k'} \circ E'_{k'}(p \ast m)
		\ket{p \ast m}
	\end{aligned}
	\end{equation}
	Now, recalling that $K^\ast = K \tensor K'$ so that both elements of
	the key pair are generated independently, we have that
	\begin{equation}
	\begin{aligned}
		&
		\E_{(k, k') \gets K^\ast(\emptystring)}
		\bra{m}
		\overline{D^\ast}_{(k_0,k_1)} \circ E^\ast_{(k_0,k_1)}(m)
		\ket{m}
		\\&\geq
		\E_{p \gets \mc{M}}
		\left(
		\E_{k \gets K(\emptystring)}
		\bra{p}
		\overline{D}_{k} \circ E_{k}(p)
		\ket{p}
		\right)
		\left(
		\E_{k' \gets K'(\emptystring)}
		\bra{p \ast m}
		\overline{D'}_{k'} \circ E'_{k'}(p \ast m)
		\ket{p \ast m}
		\right)
		\\&\geq
		\E_{p \gets \mc{M}}
		(1-\epsilon)
		(1-\epsilon')
		\\&\geq
		1 - (\epsilon + \epsilon')
	\end{aligned}
	\end{equation}
	which is the desired result.
\end{proof}

We now move on to proving the security of $S \ast S'$. We begin with a
technical lemma providing us with a useful decomposition of the map
$\Tr_\tsf{M} \circ \overline{D^\ast}_{(k, k')}$ obtained by application
of the inclusion-exclusion principal. Note that this channel computes
the probability that at least one of the two underlying ciphertexts is
accepted. This is equal
to the probability that the first is accepted, plus the probability that
the second is accepted, minus the probability that both were accepted.

\begin{lemma}
	\label{te:th:gss-t}
	In the notation of \cref{te:df:gss}, we have for
	every $(k, k') \in \mc{K} \times \mc{K}'$ that
	\begin{equation}
		\Tr_\tsf{M} \circ \overline{D'}_{(k,k')}
		=
		\left(\Tr_\tsf{M} \circ \overline{D}_{k}\right)
		\tensor
		\Tr_\tsf{C}
		+
		\Tr_\tsf{C}
		\tensor
		\left(\Tr_\tsf{M} \circ \overline{D}_{k'}\right)
		-
			\left(\Tr_\tsf{M} \circ \overline{D}_{k}\right)
		\tensor
		\left(\Tr_\tsf{M} \circ \overline{D}_{k'}\right)
		.
	\end{equation}
\end{lemma}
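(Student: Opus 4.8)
The plan is to unwind the definition of $D^\ast$ from \cref{te:df:ast} and carefully track what the ``flag'' register produces after the $W$ channel acts, then post-select on this flag being $\bo$ and trace out the message register. The key observation is that $W$ measures the two incoming flag qubits from $D$ and $D'$ and outputs their logical OR, and that conditioning on this output being $\bo$ (which is what $\overline{D^\ast}$ does) corresponds exactly to the event that \emph{at least one} of the two component flags was $\bo$. Since $\Tr_\tsf{M} \circ \overline{D^\ast}_{(k,k')}(\rho)$ is just the scalar giving the probability of this event (scaled by $\Tr\rho$), I expect to arrive at the inclusion-exclusion identity by pure bookkeeping on projectors.

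Concretely, I would first fix $(k,k') \in \mc{K} \times \mc{K}'$ and a state $\rho$ on $\tsf{C} \tensor \tsf{C}'$ and expand $D^\ast_{(k,k')}(\rho) = W \circ (D_k \tensor D'_{k'})(\rho)$, using that the $\Swap$ channels just route the keys correctly. Writing $D_k \tensor D'_{k'}(\rho)$ in block form with respect to the two flag qubits as $\sum_{a,b,a',b' \in \bs{}} \sigma_{a a', b b'} \tensor \ketbra{a}{a'} \tensor \ketbra{b}{b'}$, the channel $W$ measures both flag qubits (killing the off-diagonal flag terms), applies $U_\ast^\dag$ on the message spaces, discards the $p$-register, and relabels the surviving flag as $a \lor b$. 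Then $\overline{D^\ast}_{(k,k')}$ projects this relabelled flag onto $\bo$, which keeps precisely the three diagonal terms with $(a,b) \in \{(\bo,\bz),(\bz,\bo),(\bo,\bo)\}$. Applying $\Tr_\tsf{M}$ and using that $U_\ast$ is unitary (so the trace is unaffected by the coherent group operation and its inverse) and that the discarded $p$-register contributes a partial trace, each of these three surviving terms becomes a trace of the form $\Tr[(\mu(\bo) \tensor \text{something}) \rho]$ or similar, which I would then match against the right-hand side.

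The cleanest way to finish is to observe that $\Tr_\tsf{M} \circ \overline{D}_k = [\rho \mapsto \Tr(\overline{D}_k(\rho))]$ is the map sending a ciphertext to its ``accept probability'' under $D_k$, and likewise for $\overline{D}_{k'}$; call these $\mathsf{acc}_k$ and $\mathsf{acc}'_{k'}$. The three surviving terms then contribute, respectively, $\mathsf{acc}_k \tensor (\Tr_{\tsf{C}} - \mathsf{acc}'_{k'})$ for $(\bo,\bz)$, $(\Tr_\tsf{C} - \mathsf{acc}_k) \tensor \mathsf{acc}'_{k'}$ for $(\bz,\bo)$, and $\mathsf{acc}_k \tensor \mathsf{acc}'_{k'}$ for $(\bo,\bo)$ — here I use that $\Tr_\tsf{C}$ restricted to the relevant block sums the ``reject'' and ``accept'' contributions. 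Adding these three gives $\mathsf{acc}_k \tensor \Tr_\tsf{C} + \Tr_\tsf{C} \tensor \mathsf{acc}'_{k'} - \mathsf{acc}_k \tensor \mathsf{acc}'_{k'}$, which is exactly the claimed decomposition once we rewrite $\Tr_\tsf{M} \circ \overline{D}_k$ for $\mathsf{acc}_k$ and $\Tr_\tsf{M} \circ \overline{D}_{k'}$ for $\mathsf{acc}'_{k'}$.

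The main obstacle I anticipate is purely notational: keeping straight which message register holds $p$ versus the decoded share, and making sure the $U_\ast^\dag$ and the final $\Tr_\tsf{M}$ (discarding the $p$-slot) interact correctly — in particular verifying that after tracing out that register the coherent group operation leaves no residue, so that the flag-probability really does factor as a simple tensor of the two component accept-probability maps. I would handle this by writing $U_\ast \ket{p}\ket{x} = \ket{p}\ket{p \ast x}$ explicitly on basis states and checking that $\Tr_\tsf{M} \circ (\text{apply } U_\ast^\dag \text{, discard } p\text{-slot})$ equals the map that simply discards \emph{both} message slots, which is immediate from unitarity of $U_\ast$ and cyclicity of the trace. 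Once that reduction is in hand, everything else is a finite sum over the four flag-value pairs and the inclusion-exclusion pattern falls out mechanically.
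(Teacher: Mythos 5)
Your proposal is correct and follows essentially the same route as the paper's proof: unwind $D^\ast$, note that $W$'s measurement kills the off-diagonal flag blocks, use unitarity of $U_\ast$ and trace preservation of $D_k$ so that tracing out both message registers commutes past the group operation and yields $\Tr_\tsf{C}$, and then assemble the three surviving flag-pair terms into the inclusion-exclusion identity. The only difference is cosmetic book-keeping --- you sum the three exact event contributions (writing the reject factor as $\Tr_\tsf{C}$ minus the accept map) and simplify, whereas the paper groups the sum over $(b,b')\neq(\bz,\bz)$ directly as two marginals minus the double-counted $(\bo,\bo)$ term.
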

\begin{proof}
	By definition, for every key pair $(k, k') \in \mc{K} \times
	\mc{K}'$ we have that $\overline{D^\ast}_{(k,k')}(\rho)$ is given by
	\begin{equation}
			\left(
				\Tr_\tsf{M} \tensor \Id_{\tsf{M}}
			\right)
			\left(
			\sum_{\substack{b,b' \in \bs{}\\(b,b') \not= (\bz,\bz)}}
			\left(U_\ast^\dag \tensor \bra{b,b'}\right)
			\Swap_{\tsf{M},\tsf{F},\tsf{M},\tsf{F}}^{(1,3,2,4)}
			\circ (D_k \tensor D'_{k'})(\rho)
			\left(U_\ast\tensor \ket{b,b'}\right)
			\right)
	\end{equation}
	It follows that $\Tr_\tsf{M} \circ \overline{D^\ast}_{(k,k')}(\rho)$
	is given by
	\begin{equation}
			\Tr_{\tsf{M} \tensor \tsf{M}}
			\left(
			\sum_{\substack{b,b' \in \bs{}\\(b,b') \not= (\bz,\bz)}}
			\left(U_\ast^\dag \tensor \bra{b,b'}\right)
			\Swap_{\tsf{M},\tsf{F},\tsf{M},\tsf{F}}^{(1,3,2,4)}
			\circ (D_k \tensor D'_{k'})(\rho)
			\left(U_\ast\tensor \ket{b,b'}\right)
			\right).
	\end{equation}
	Tracing out the message spaces after undoing the group operation via
	conjugation by $U_\ast^\dag$ is equivalent to tracing them out
	before this conjugation. We can even take this trace prior to the
	$\Swap$ operation. Hence, the previous expression is equivalent to
	\begin{equation}
			\sum_{\substack{b,b' \in \bs{}\\(b,b') \not= (\bz,\bz)}}
			\bra{b,b'}
			\left(
				\left(
					\left(\Tr_\tsf{M} \tensor \Id_\tsf{F}\right)
					\circ
					D_k
				\right)
				\tensor
				\left(
					\left(\Tr_\tsf{M} \tensor \Id_\tsf{F}\right)
					\circ
					D'_{k'}
				\right)
			\right)
			(\rho)
			\ket{b,b'}
	\end{equation}
	which is, in turn, equivalent to
	\begin{equation}
		\left(
			\left(
				\Tr_\tsf{M} \circ \overline{D}_{k}
			\right)
			\tensor
			\Tr_{\tsf{C}'}
		+
			\Tr_{\tsf{C}}
			\tensor
			\left(
				\Tr_\tsf{M} \circ \overline{D'}_{k'}
			\right)
		-
			\left(
				\Tr_\tsf{M} \circ \overline{D}_{k}
			\right)
			\tensor
			\left(
				\Tr_\tsf{M} \circ \overline{D'}_{k'}
			\right)
		\right)
		(\rho)
	\end{equation}
	where the first summand counts the cases of
	$(b_0,b_1) \in \{(\bo,\bz),(\bo,\bo)\}$, the second the cases of
	$(b_0,b_1) \in \{(\bz,\bo),(\bo,\bo)\}$, and the third accounts for
	the fact that the case of $(b_0, b_1) = (\bo,\bo)$ would otherwise
	be double counted.
	Note that the $\Tr_\tsf{C}$ term arises as a
	result of the fact that $\sum_{b \in \bs{}}
	\bra{b} \left(\Tr_\tsf{M} \tensor \Id_{\tsf{F}}\right) D_k(\rho)
	\ket{b} = \left(\Tr_\tsf{M} \tensor \Tr_{\tsf{F}}\right) \circ
	D_k(\rho) = \Tr_\tsf{C}(\rho)$ where the first equality
	is obtained by a characterization of the trace and the second by the
	fact that $D_k$ is trace preserving. The $\Tr_{\tsf{C}'}$ arises
	similarly.
\end{proof}

With the previous lemma in hand, we can now prove the security of
$S \ast S'$.

\begin{theorem}
\label{te:th:ast-s}
	Let $S$ and $S'$ be $\delta$- and $\delta'$-tamper-evident AQECMs,
	respectively, with the same message space $\mc{M}$ which admits
	a group operation denoted $\ast$. Then, the AQECM scheme $S \ast S'$
	is $2\sqrt{\delta + \delta'}$-tamper evident.
\end{theorem}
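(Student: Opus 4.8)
The plan is to combine three ingredients: the expectation‑based reformulation of tamper evidence from \cref{tee:th:te-ns}, the inclusion--exclusion decomposition of $\Tr_\tsf{M}\circ\overline{D^\ast}_{(k,k')}$ from \cref{te:th:gss-t}, and a reparametrisation trick that exploits the fact that $(\mc{M},\ast)$ is a group. Morally this mirrors the proof of \cref{th:te-parallel}: we reduce any tamper attack on $S\ast S'$ to tamper attacks on the component schemes, using that the two keys are sampled independently since $K^\ast=K\tensor K'$.

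\textbf{Reduction and decomposition.} By \cref{tee:th:te-ns} it suffices to show that for every tamper attack $A:\mc{L}(\tsf{C}\tensor\tsf{C}')\to\mc{L}(\tsf{C}\tensor\tsf{C}'\tensor\tsf{A})$ against $S\ast S'$ and all messages $m,m'\in\mc{M}$,
\[
\E_{(k,k')\gets K^\ast(\emptystring)}\frac12\norm{\Big(\big(\Tr_\tsf{M}\circ\overline{D^\ast}_{(k,k')}\big)\tensor\Id_\tsf{A}\Big)\circ A\circ E^\ast_{(k,k')}(m-m')}_1\le 4(\delta+\delta'),
\]
since $\sqrt{4(\delta+\delta')}=2\sqrt{\delta+\delta'}$. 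Writing $\rho^\ast_{k,k',m}$ for the subnormalized operator on $\tsf{A}$ inside this norm (at a fixed $m$) and substituting the identity of \cref{te:th:gss-t}, I decompose $\rho^\ast_{k,k',m}=\rho^{(1)}_{k,k',m}+\rho^{(2)}_{k,k',m}-\rho^{(12)}_{k,k',m}$, where the three summands are obtained by replacing $\Tr_\tsf{M}\circ\overline{D^\ast}_{(k,k')}$ by $(\Tr_\tsf{M}\circ\overline{D}_k)\tensor\Tr_{\tsf{C}'}$, by $\Tr_\tsf{C}\tensor(\Tr_\tsf{M}\circ\overline{D'}_{k'})$, and by $(\Tr_\tsf{M}\circ\overline{D}_k)\tensor(\Tr_\tsf{M}\circ\overline{D'}_{k'})$; conceptually these are the adversary's state conditioned on ``the first ciphertext is accepted'', ``the second is accepted'', and ``both are accepted''. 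By the triangle inequality it is enough to bound the $(k,k')$‑expected trace distance of each of the three pieces between $m$ and $m'$.

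\textbf{Bounding the three pieces.} Recall $E^\ast_{(k,k')}(m)=\E_{p\gets\mc{M}}E_k(p)\tensor E'_{k'}(p\ast m)$. For the $\rho^{(1)}$ piece I would reindex the sum defining $\rho^{(1)}_{k,k',m'}$ by the bijection $p\mapsto p\ast m\ast m'^{-1}$ of $\mc{M}$ (right translation), so that in both $\rho^{(1)}_{k,k',m}$ and $\rho^{(1)}_{k,k',m'}$ the second ciphertext is $E'_{k'}(p\ast m)$ and only the first encoded message differs, being $p$ versus $p\ast m\ast m'^{-1}$. For fixed $p$ and $k'$, the map $\sigma\mapsto(\Id_\tsf{C}\tensor\Tr_{\tsf{C}'}\tensor\Id_\tsf{A})\circ A\big(\sigma\tensor E'_{k'}(p\ast m)\big)$ is a tamper attack against $S$, so $\delta$‑tamper‑evidence of $S$ and \cref{tee:th:te-ns} bound its $k$‑expected trace distance by $2\delta$; averaging over $p$ and $k'$ (the expectation factorises because $K^\ast=K\tensor K'$) gives $\E_{(k,k')}\tfrac12\norm{\rho^{(1)}_{k,k',m}-\rho^{(1)}_{k,k',m'}}_1\le 2\delta$. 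The $\rho^{(2)}$ piece is handled symmetrically with the roles of $S$ and $S'$ swapped; here no reindexing is needed since the first ciphertext is already $E_k(p)$ in both cases, and the relevant attacks against $S'$ are $\sigma\mapsto(\Tr_\tsf{C}\tensor\Id_{\tsf{C}'}\tensor\Id_\tsf{A})\circ A\big(E_k(p)\tensor\sigma\big)$, yielding the bound $2\delta'$. For the $\rho^{(12)}$ piece I again reindex by $p\mapsto p\ast m\ast m'^{-1}$, but since the second ciphertext is now decrypted rather than traced out I keep the $\tsf{C}'$ register inside the auxiliary space of the attack against $S$, i.e.\ use $\sigma\mapsto A\big(\sigma\tensor E'_{k'}(p\ast m)\big)$; $\delta$‑tamper‑evidence of $S$ bounds the resulting expected trace distance on $\tsf{C}'\tensor\tsf{A}$ by $2\delta$, and applying the completely positive trace‑non‑increasing map $(\Tr_\tsf{M}\circ\overline{D'}_{k'})\tensor\Id_\tsf{A}$ afterwards only decreases the trace distance of the (Hermitian) difference -- this is the same operator‑norm estimate $\norm{(\bra{\bo}\tensor I)X(\ket{\bo}\tensor I)}_1\le\norm{X}_1$ already used in the proof of \cref{th:te-parallel} -- giving $\E_{(k,k')}\tfrac12\norm{\rho^{(12)}_{k,k',m}-\rho^{(12)}_{k,k',m'}}_1\le 2\delta$ (one could equally bound it by $2\delta'$). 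Summing the three bounds yields $2\delta+2\delta'+2\delta=4\delta+2\delta'\le 4(\delta+\delta')$, completing the proof.

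\textbf{Main obstacle.} The delicate point is the $\rho^{(12)}$ term: unlike the other two, the accept‑projection $\Tr_\tsf{M}\circ\overline{D'}_{k'}$ applied to the second output is not trace preserving and hence cannot be folded into a legitimate tamper attack against $S$, so one must keep that register auxiliary and appeal to contractivity of the Schatten $1$‑norm under completely positive trace‑non‑increasing maps on Hermitian inputs, rather than merely under channels. A secondary source of care is that every reindexing is by right multiplication by $m\ast m'^{-1}$ and must be verified to be correct without assuming $\ast$ commutative, and that each of the constructed maps on $\tsf{C}$ (resp.\ $\tsf{C}'$) really is a channel -- which it is, being a composition of $A$ with preparation of a fixed state and partial traces.
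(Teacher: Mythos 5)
Your proposal is correct and follows the same skeleton as the paper's proof: reduce to an expectation bound via \cref{tee:th:te-ns}, substitute the inclusion--exclusion identity of \cref{te:th:gss-t}, split into three summands by the triangle inequality, pull the average over $p$ outside the norm, and handle the ``only the untouched share differs'' problem by a right-translation reindexing of $p$ (the paper uses $p \mapsto p \ast m^{-1}$ to make the \emph{second} ciphertext common; your $p \mapsto p \ast m \ast m'^{-1}$ makes it $E'_{k'}(p\ast m)$ --- both are valid without commutativity). The one place you genuinely depart from the paper is the cross term where both decoders accept: the paper treats $A^\ast$ as a tamper attack against $S \tensor S'$ and invokes its $(\delta+\delta')$-tamper evidence from \cref{th:te-parallel}, giving $2(\delta+\delta')$ for that summand, whereas you reduce to $S$ alone by keeping $\tsf{C}'$ inside the adversary's auxiliary register and then applying the trace-non-increasing accept-projection $\Tr_\tsf{M}\circ\overline{D'}_{k'}$ afterwards, using the operator-norm estimate $\norm{(I\tensor\bra{\bo})X(I\tensor\ket{\bo})}_1\le\norm{X}_1$ exactly as in the proof of \cref{th:te-parallel}. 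Your variant is sound, avoids a separate appeal to the parallel-composition theorem, and gives the marginally tighter intermediate bound $4\delta+2\delta'$ in place of the paper's $4(\delta+\delta')$ minus lower-order terms; after the square root both yield the stated $2\sqrt{\delta+\delta'}$.
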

\begin{proof}
	Fix two messages $m, \tilde{m} \in \mc{M}$ and a tamper attack $
		A^\ast :
		\mc{L}(\tsf{C} \tensor \tsf{C}')
		\to
		\mc{L}(\tsf{C} \tensor \tsf{C}' \tensor \tsf{A}^\ast)
	$ against the AQECM
	scheme~$S \ast S' = (K^\ast, E^\ast, D^\ast)$.
	For any two elements $p, m \in \mc{M}$ and any key
	$(k,k') \in \mc{K} \times \mc{K}' = \mc{K}^\ast$, we define the
	states
	\begin{equation}
		\rho^{p,m}_{k,k'}
		=
		A^\ast \left(E_{k}(p) \tensor E'_{k'}(p \ast m)\right)
	\end{equation}
	and
	\begin{equation}
		\rho^m_{k,k'}
		=
		A^\ast \circ E^\ast_{(k,k')}(m).
	\end{equation}
	Note that $\rho^{p,m}_{k,k'}$ is the state after the adversary's
	attack --- but before the application of the decryption map --- when
	the message $m$ is encrypted with the key $(k,k')$ and the secret
	sharing scheme picked $p$ as the additional element.
	Further note that
	$\rho^m_{k,k'} = \E_{p \gets \mc{M}} \rho^{p,m}_{k,k'}$ is the
	same state, but averaged over all choices of $p$.

	Recall that \cref{tee:th:te-ns} essentially restates the property of
	being tamper evident from a concentration inequality on
	\begin{equation}
	\frac{1}{2}
		\norm{
			\left(
				\left(
					\Tr_\tsf{M} \circ \overline{D^\ast}_{(k,k')}
				\right)
				\tensor
				\Id_{\tsf{A}^\ast}
			\right)
			\left(
				\rho^{m}_{k,k'} - \rho^{\tilde{m}}_{k,k'}
			\right)
		}_1
	\end{equation}
	over the sampling of the key $(k,k')$ to an expectation of this
	trace distance over the same sampling. This is done via Markov's
	inequality (\cref{th:markov}).
	Our goal for the remainder of the
	proof will be to give an upper bound for the expected value of this
	quantity over sampling of the key $(k,k')$.
	We note that working with the expectations will simplify our
	accounting of the random sampling of the $p$ value by the encryption
	channel $E^\ast$.

	Using the result and notation of \cref{te:th:gss-t}, linearity, and
	the fact that norms are subadditive, for any key
	$(k, k') \in \mc{K}^\ast$ and any two messages
	$m, \tilde{m} \in \mc{M}$, we have that
	\begin{equation}
	\label{te:eq:gcc-s}
	\begin{aligned}
		&
		\E_{(k,k') \gets K^\ast(\varepsilon)}
		\frac{1}{2}
		\norm{
			\left(
				\left(\Tr_\tsf{M}\circ\overline{D^\ast}_{(k,k')}\right)
		\tensor
		\Id_{\tsf{A}^\ast}
		\right)
		\left(
			\rho^{m}_{k,k'} - \rho^{\tilde{m}}_{k,k'}
		\right)
		}_1
		\\&\leq
		\E_{(k,k') \gets K^\ast(\varepsilon)}
		\E_{p \gets \mc{M}}
		\frac{1}{2}
		\norm{
			\left(
				\left(\Tr_\tsf{M} \circ \overline{D^\ast}_{(k,k')}\right)
		\tensor
		\Id_{\tsf{A}^\ast}
		\right)
		\left(
			\rho^{p,m}_{k,k'} - \rho^{p,\tilde{m}}_{k,k'}
		\right)
		}_1
		\\&\leq\phantom{+}
		\E_{(k,k') \gets K^\ast(\varepsilon)}
		\E_{p \gets \mc{M}}
		\frac{1}{2}
		\norm{\left(
			\left(\Tr_\tsf{M} \circ \overline{D}_{k}\right)
			\tensor
			\Tr_{\tsf{C}'}
			\tensor
			\Id_\tsf{A}
		\right)
		(\rho^{p,m}_{k,k'} - \rho^{p,\tilde{m}}_{k,k'})
		}_1
		\\&\hspace{2pt}\phantom{=}+ 
		\E_{(k,k') \gets K^\ast(\varepsilon)}
		\E_{p \gets \mc{M}}
		\frac{1}{2}
		\norm{\left(
			\Tr_\tsf{C}
			\tensor
			\left(\Tr_\tsf{M} \circ \overline{D'}_{k'}\right)
			\tensor
			\Id_\tsf{A}
		\right)
		(\rho^{p,m}_{k,k'} - \rho^{p,\tilde{m}}_{k,k'})
		}_1
		\\&\hspace{2pt}\phantom{=}+ 
		\E_{(k,k') \gets K^\ast(\varepsilon)}
		\E_{p \gets \mc{M}}
		\frac{1}{2}
		\norm{\left(
			\left(\Tr_\tsf{M} \circ \overline{D}_{k}\right)
			\tensor
			\left(\Tr_\tsf{M} \circ \overline{D'}_{k'}\right)
			\tensor
			\Id_\tsf{A}
		\right)
		(\rho^{p,m}_{k,k'} - \rho^{p,\tilde{m}}_{k,k'})
		}_1.
	\end{aligned}
	\end{equation}
	Looking at the last of these inequalities, we will give an
	upperbound on the first two summands by using the fact that $S$ and
	$S'$ are $\delta$- and $\delta'$-tamper-evident, respectively, and
	we will upperbound the third summand by using the fact that
	$S \tensor S'$ is $(\delta + \delta')$-tamper-evident by
	\cref{th:te-parallel}.

	We begin with the third summand.
	Let $S'' = S \tensor S' = (K'', E'', D'')$  and note the following
	facts. First, the channel $A^\ast$ is also a tamper attack against
	$S''$, second we have that
	$
		(\Tr_\tsf{M} \circ \overline{D}_{k})
		\tensor
		(\Tr_\tsf{M} \circ \overline{D'}_{k'})
		=
		\Tr_{\tsf{M} \tensor \tsf{M}} \circ \overline{D''}_{(k,k')}
	$, and third by construction we have that $K^\ast = K''$.
	These, with the fact that $S''$ is $(\delta+\delta')$-tamper
	evident, imply via \cref{tee:th:te-ns} that
	\begin{equation}
		\E_{(k,k) \gets K^\ast(\varepsilon)}
		\frac{1}{2}
		\norm{\left(
			\left(\Tr_\tsf{M} \circ \overline{D}_{k}\right)
			\tensor
			\left(\Tr_\tsf{M} \circ \overline{D}_{k'}\right)
			\tensor
			\Id_{\tsf{A}^\ast}
		\right)
		(\rho^{p,m}_{k,k'} - \rho^{p,\tilde{m}}_{k,k'})
		}_1
		\leq
		2(\delta + \delta') - (\delta + \delta')^2
	\end{equation}
	for all $p \in \mc{M}$.
	As this inequality holds for all $p \in \mc{M}$, we can also bound
	the expectation over the sampling of $p$:
	\begin{equation}
		\E_{(k,k') \gets K^\ast(\varepsilon)}
		\E_{p \gets \mc{M}}
		\frac{1}{2}
		\norm{\left(
			\left(\Tr_\tsf{M} \circ \overline{D}_{k}\right)
			\tensor
			\left(\Tr_\tsf{M} \circ \overline{D'}_{k'}\right)
			\tensor
			\Id_{\tsf{A}^\ast}
		\right)
		(\rho^{p,m}_{k,k'} - \rho^{p,\tilde{m}}_{k,k'})
		}_1
		\leq
		2(\delta + \delta') - (\delta + \delta')^2.
	\end{equation}

	We now move on to the first two summands. We will only explicitly
	treat the second as the bound for first second is obtained
	completely analogously. For every $p \in \mc{M}$ and $k \in \mc{K}$,
	let $A'_{k,p}:\mc{L}(\tsf{C}') \to \mc{L}(\tsf{C}' \tensor \tsf{A}^\ast)$
	be defined by
	\begin{equation}
		\rho
		\mapsto
		\left(
			\Tr_{\tsf{C}}
			\tensor
			\Id_{\tsf{C}' \tensor \tsf{A}^\ast}
		\right)
		\circ
		A^{\ast}\left(E_{k}(p) \tensor \rho \right).
	\end{equation}
	Note that $A'_{k, p}$ is a tamper attack against the underlying
	AQECM scheme $S'$ which is assumed to be $\delta'$-tamper-evident.
	Thus, for every $m, \tilde{m} \in \mc{M}$ \cref{tee:th:te-ns}
	implies that
	\begin{equation}
		\E_{k' \gets K'(\varepsilon)}
			\frac{1}{2}\norm{
				\left(
					\left(\Tr_\tsf{M} \circ \overline{D'}_{k'}\right)
					\tensor
					\Id_{\tsf{A}^\ast}
				\right)
				\circ
				A'_{k,p}
				\circ
				E'_{k'}(
				(m \ast p)
				-
				(\tilde{m} \ast p)
				)
				}_1
		\leq
		2\delta' - \delta'^2.
	\end{equation}
	On the other hand, by definition of $A'_{k,p}$, we have that
	\begin{equation}
	\begin{aligned}
		&
		\left(
			\left(\Tr_\tsf{M} \circ \overline{D'}_{k'}\right)
			\tensor
			\Id_{\tsf{A}^\ast}
		\right)
		\circ
		A'_{k,p}
		\left(
			E'_{k'}((p \ast m) - (p \ast \tilde{m}))
		\right)
		\\&=
		\left(
			\left(\Tr_\tsf{M} \circ \overline{D'}_{k'}\right)
			\tensor
			\Id_{\tsf{A}^\ast}
		\right)
		\circ
		\left(\Tr_{\tsf{C}} \tensor \Id_{\tsf{C}'\tensor\tsf{A}^\ast}\right)
		\circ
		A^\ast\left(
			E_{k}(p) \tensor E'_{k'}(p \ast m)
			-
			E_{k}(p) \tensor E'_{k'}(p \ast \tilde{m})
		\right)
		\\&=
		\left(
			\Tr_\tsf{C}
			\tensor
			\left(\Tr_\tsf{M} \circ \overline{D'}_{k'}\right)
			\tensor
			\Id_{\tsf{A}^\ast}
		\right)
		\left(\rho^{m,p}_{k,k'} - \rho^{\tilde{m},p}_{k,k'}\right).
	\end{aligned}
	\end{equation}
	Thus,
	\begin{equation}
		\E_{k' \gets K'(\varepsilon)}
			\frac{1}{2}\norm{
				\left(
					\Tr_\tsf{C}
					\tensor
					\left(\Tr_\tsf{M} \circ \overline{D'}_{k'}\right)
					\tensor
					\Id_{\tsf{A}^\ast}
				\right)
				\left(
				\rho^{m,p}_{k,k'}
				-
				\rho^{\tilde{m},p}_{k,k'}
				\right)
				}_1
		\leq
		2\delta' - \delta'^2.
	\end{equation}
	As this holds for all $k$ and all $p$, we can also bound the
	expectation over the sampling of both these values:
	\begin{equation}
		\E_{(k,k') \gets K^{\ast}(\varepsilon)}
		\E_{p \gets \mc{M}}
			\frac{1}{2}\norm{
				\left(
					\Tr_\tsf{C}
					\tensor
					\left(\Tr_\tsf{M} \circ \overline{D}_{k}\right)
					\tensor
					\Id_{\tsf{A}^\ast}
				\right)
				\left(
				\rho^{m,p}_{k,k'}
				-
				\rho^{\tilde{m},p}_{k,k'}
				\right)
				}_1
		\leq
		2\delta' - \delta'^2.
	\end{equation}
	Noting that $\E_{p \gets \mc{M}}E_k(p) \tensor
	E'_{k'}(p \ast m) = \E_{p \gets \mc{M}} E_k(p \ast m^{-1}) \tensor
	E'_{k'}(p)$ where $m^{-1}$ is the inverse of $m$ for $\ast$, a
	similar exercise for the first summand will yield
	\begin{equation}
		\E_{(k,k') \gets K^\ast(\varepsilon)}
		\E_{p \gets \mc{M}}
			\frac{1}{2}\norm{
				\left(
					\left(\Tr_\tsf{M} \circ \overline{D}_{k}\right)
					\tensor
					\Tr_{\tsf{C}'}
					\tensor
					\Id_{\tsf{A}^\ast}
				\right)
				\left(
				\rho^{m,p}_{k,k'}
				-
				\rho^{\tilde{m},p}_{k,k'}
				\right)
				}_1
		\leq
		2\delta - {\delta}^2.
	\end{equation}

	We conclude that
	\begin{equation}
	\begin{aligned}
		\E_{(k,k') \gets K^\ast(\varepsilon)}
		\frac{1}{2}
		\norm{
		\left(
				\left(\Tr_\tsf{M} \circ \overline{D^\ast}_{(k,k')}\right)
				\tensor
				\Id_{\tsf{A}^\ast}
			\right)
			(\rho^m_{k,k'} - \rho^{\tilde{m}}_{k,k'})
			}_1
			&\leq
			4(\delta + \delta') - (\delta + \delta')^2 - \delta^2 -
			{\delta'}^2
			\\&\leq 4(\delta + \delta')
	\end{aligned}
	\end{equation}
	where, for simplicity, we drop the higher order terms.
	A final invocation of \cref{tee:th:te-ns} with this bound yields
	that $S^\ast$ is $2\sqrt{\delta + \delta'}$-tamper evident.
\end{proof}

In the following theorem, we precisely state that ciphertexts produced
by $S \ast S'$ can be split into two parts, both of which will be
accepted by the honest decryption procedure. We merely sketch the proof
as the idea is quite simple: leave one unmodified encoded share of the
secret sharing scheme in each part.

\begin{theorem}
	For every $\delta > 0$ and every alphabet $\mc{M}$, there exists a
	perfectly correct $\delta$-tamper-evident AQECM scheme with messages
	$\mc{M}$ as well as a channel $\tilde{A}$ satisfying
	\begin{equation}
		\E_{k \gets K(\varepsilon)}
		\Tr\left[
		\left(\overline{D}_k \tensor \overline{D}_k\right)
		\circ \tilde{A} \circ E_k(m)\right] = 1
	\end{equation}
	for all messages $m \in \mc{M}$.
\end{theorem}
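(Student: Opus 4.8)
The plan is to instantiate the construction $S \ast S'$ of \cref{te:df:ast} with two copies of a single, very strongly tamper-evident scheme, and then to take $\tilde{A}$ to be the channel that ``splits'' a ciphertext by forwarding one of the two encoded secret shares to each of its outputs (padding the other register of each output with a fixed state so that it lies in the full ciphertext space). First I would apply \cref{te:th:te-exist} to obtain a perfectly correct $\frac{\delta^2}{8}$-tamper-evident AQECM scheme $S = (K, E, D)$ with messages $\mc{M}$; this is legitimate since $\frac{\delta^2}{8} > 0$. Fixing an arbitrary group operation $\ast$ on $\mc{M}$ as in \cref{te:df:ast}, I set $S^\ast = S \ast S = (K^\ast, E^\ast, D^\ast)$. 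By \cref{te:th:ast-c} this scheme is perfectly correct, and by \cref{te:th:ast-s} it is $2\sqrt{\frac{\delta^2}{8} + \frac{\delta^2}{8}} = \delta$-tamper-evident, so $S^\ast$ already witnesses the existence part of the claim, and the remaining task is to exhibit a suitable $\tilde{A}$.

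The ciphertext space of $S^\ast$ is $\tsf{C} \tensor \tsf{C}$, and for any key $(k_0,k_1)$ we have $E^\ast_{(k_0,k_1)}(m) = \E_{p \gets \mc{M}} E_{k_0}(p) \tensor E_{k_1}(p \ast m)$. I would define $\tilde{A}$ on $\mc{L}(\tsf{C}\tensor\tsf{C})$ by tracing out the second ciphertext register of its input and tensoring on a fixed padding state $\frac{I_\tsf{C}}{\dim\tsf{C}}$ to form the first output, and symmetrically tracing out the first ciphertext register and tensoring on $\frac{I_\tsf{C}}{\dim\tsf{C}}$ to form the second output, with $\Swap$ channels inserted only to place the registers in the order $D^\ast$ expects. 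Since channels are trace-preserving and $p \mapsto p \ast m$ permutes $\mc{M}$, applying this to $E^\ast_{(k_0,k_1)}(m)$ gives the product operator
\[
	\tilde{A} \circ E^\ast_{(k_0,k_1)}(m)
	=
	\left(\left(\E_{p \gets \mc{M}} E_{k_0}(p)\right)\tensor\frac{I_\tsf{C}}{\dim\tsf{C}}\right)
	\tensor
	\left(\frac{I_\tsf{C}}{\dim\tsf{C}}\tensor\left(\E_{q \gets \mc{M}} E_{k_1}(q)\right)\right).
\]

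To finish, I would apply $\overline{D^\ast}_{(k_0,k_1)} \tensor \overline{D^\ast}_{(k_0,k_1)}$ and take the trace. Because the operator above is a tensor product, the trace factors as a product of the two individual acceptance probabilities, so it suffices to show that each equals $1$ for every key $(k_0,k_1)$ with positive probability. The crucial observation is that perfect correctness of $S$ forces $\bra{m}\overline{D}_k \circ E_k(m)\ket{m} = 1$, and hence $\Tr[\overline{D}_k \circ E_k(m)] = 1$, for \emph{every} valid key $k$ and \emph{every} message $m$: each summand of the defining expectation is at most $1$, so all must equal $1$. Plugging $\sigma = \E_p E_{k_0}(p)$ (resp. $\tau = \E_q E_{k_1}(q)$) into the inclusion--exclusion formula for $\Tr_\tsf{M} \circ \overline{D^\ast}_{(k_0,k_1)}$ from \cref{te:th:gss-t}, the term $\Tr[\overline{D}_{k_0}(\sigma)] = 1$ (resp. $\Tr[\overline{D}_{k_1}(\tau)] = 1$) makes the corresponding acceptance probability collapse to exactly $1$, regardless of what happens to the padding state. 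Taking the expectation over $(k_0,k_1) \gets K^\ast(\emptystring)$ then yields $1$, and since this holds for every $m \in \mc{M}$ the proof is complete.

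The main obstacle is the requirement that the expression equal $1$ \emph{exactly}, not merely $1 - o(1)$: this forces the argument to be carried out pointwise in the key, so the factorization of the trace across $\tilde{A}$'s two outputs must be performed \emph{before} averaging over keys, and it is precisely perfect correctness --- upgraded to the statement that a perfectly correct AQECM accepts every honest encoding under every valid key with certainty --- that makes this possible. The remaining steps (verifying that $\tilde{A}$ is a channel, bookkeeping the $\Swap$s, and re-indexing the uniform average via the bijection $p \mapsto p \ast m$) are routine.
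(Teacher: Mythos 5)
Your overall strategy is the paper's own: instantiate \cref{te:th:te-exist} with a sufficiently strong perfectly correct scheme $S$, form $S \ast S$ via \cref{te:df:ast}, and take $\tilde{A}$ to be a ``split one encoded share into each output and pad'' channel, with acceptance of both halves following from perfect correctness and the OR-acceptance rule of $D^\ast$. Your parameter bookkeeping is in fact slightly more careful than the paper's: $\delta^2/8$ fed into \cref{te:th:ast-s} lands exactly on $\delta$, and your observation that perfect correctness forces $\Tr[\overline{D}_k \circ E_k(m)] = 1$ for \emph{every} key of positive probability (not merely on average) is exactly the pointwise upgrade the argument needs.

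There is, however, one genuine flaw in how you define $\tilde{A}$ and in the displayed identity that follows. As literally described --- ``trace out the second register to form the first output \emph{and} trace out the first register to form the second output'' --- the map is $\rho \mapsto (\Tr_2\rho \tensor \tfrac{I}{d}) \tensor (\tfrac{I}{d} \tensor \Tr_1\rho)$, which is quadratic in $\rho$ and therefore not a linear map, let alone a channel; correspondingly, no channel can output the product of independent averages $\bigl(\E_{p} E_{k_0}(p)\tensor\tfrac{I}{d}\bigr)\tensor\bigl(\tfrac{I}{d}\tensor\E_{q} E_{k_1}(q)\bigr)$ on input $\E_p E_{k_0}(p)\tensor E_{k_1}(p\ast m)$, since that would violate linearity. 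The legitimate channel --- the one your mention of $\Swap$s suggests and the one the paper uses, namely $\rho \mapsto \Swap^{(1,4,2,3)}(\rho \tensor \tfrac{I}{d} \tensor \tfrac{I}{d})$ --- instead produces the $p$-\emph{correlated} state $\E_p \bigl(E_{k_0}(p)\tensor\tfrac{I}{d}\bigr)\tensor\bigl(\tfrac{I}{d}\tensor E_{k_1}(p\ast m)\bigr)$, so your global factorization of the trace into a product of two acceptance probabilities does not apply as stated. The repair is immediate: by linearity of the trace, compute $\Tr\bigl[(\overline{D^\ast}_{(k_0,k_1)}\tensor\overline{D^\ast}_{(k_0,k_1)})(\cdot)\bigr]$ for each fixed $p$, where the state \emph{is} a product; your inclusion--exclusion computation via \cref{te:th:gss-t} then gives $1 + b - b = 1$ for each half and each $p$, and averaging over $p$ and over the key preserves the value $1$. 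With that correction your proof is complete and coincides with the paper's.
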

\begin{proof}
	By \cref{te:th:te-exist}, there exists a perfectly correct
	$\frac{1}{2}\delta^2$-tamper-evident AQECM scheme $S'$ with messages
	$\mc{M}$. Consider now $S = S' \ast S'$ which is a perfectly correct
	$\delta$-tamper-evident AQECM with messages $\mc{M}$. It then
	suffices to take the channel $\tilde{A} : \mc{L}(\tsf{C}) \to
	\mc{L}(\tsf{C} \tensor \tsf{C})$ defined by
	\begin{equation}
		\rho
		\mapsto
		\Swap_{\tsf{C}, \tsf{C}', \tsf{C}', \tsf{C}'}^{(1,4,2,3)}
		\left(
			\rho
			\tensor
			\frac{I_{\tsf{C}'}}{\dim \tsf{C}'}
			\tensor
			\frac{I_{\tsf{C}'}}{\dim \tsf{C}'}
		\right)
	\end{equation}
	where $\tsf{C}'$ is the ciphertext space of the underlying AQECM
	scheme $S'$. Thus, for any messages $m$ and key $k = (k_0, k_1)$, we
	have that
	\begin{equation}
	\begin{aligned}
		\tilde{A} \circ E_k(m)
		&=
		\tilde{A}\left(
			\E_{p \gets \mc{M}}
			E'_{k_0}(p)
			\tensor
			E'_{k_1}(p \ast m)
		\right)
		\\&=
		\E_{p \gets \mc{M}}
		\left(E'_{k_0}(p)\tensor\frac{I_{\tsf{C}'}}{\dim\tsf{C}}\right)
		\tensor
		\left(
			\frac{I_{\tsf{C}'}}{\dim\tsf{C}'} \tensor E'_{k_1}(p\ast m)
		\right).
	\end{aligned}
	\end{equation}
	The fact that the underlying scheme $S'$ is perfectly correct
	implies that both instances of the decryption map in
	$\overline{D}_k$ will accept with certainty for any
	$k = (k_0, k_1)$, $m$ and $p$.
\end{proof}

\subsection{One-Time Pad with a Tamper-Evident Encoded Key}
\label{te:sc:otp-te}

The goal of this subsection is to formally study an AQECM scheme
obtained by combining a classical one-time pad with a tamper-evident
scheme $S = (K, E, D)$. The idea is to encode a message
$m$ with a random one-time pad $p$ and then encrypt $p$ with the
tamper-evident scheme and make the resulting state part of the
ciphertext. For a key $k$ and a message $m$ taken from an alphabet
$\mc{M}$ equipped with a group action $\ast$, the ciphertexts produced
by this scheme are of the form
\begin{equation}
	\E_{p \gets \mc{M}} E_k(p) \tensor (p \ast m).
\end{equation}
Decryption proceeds as expected. With the key $k$, recover $p$ from the
first component using the decryption map $D_k$ from the AQECM scheme
$S$, after which $m$ can be recovered from the second component
$p \ast m$. Accept if and only if $D_k$ accepted.

As we will see, such a scheme will be a useful example to
demonstrate that tamper evidence does \emph{not} imply
certain properties.

Note that this scheme appears quite similar to the ones constructed 
in \cref{te:sc:ast}. In fact, one could describe
this scheme as simply $S \ast T$ where $T$ is a trivial AQECM scheme on
the same message space $\mc{M}$. We formally give such a trivial
scheme in the next definition.

\begin{definition}
\label{te:df:triv-r-aqecm}
	Let $\mc{M}$ be an alphabet. The \emph{always rejecting trivial
	AQECM scheme for~$\mc{M}$}, denoted $\text{Triv}^{\mc{M},\bz}$, is
	an AQECM scheme with messages $\mc{M}$, a single key
	$\mc{K} = \{\varepsilon\}$, ciphertext space $\tsf{C}=\tsf{M}$, and
	defined by the following three channels:
	\begin{itemize}
		\item
			The key generation channel $K^{\mc{M}, 0}$ is
			the identity on $\C^{\{\varepsilon\}}$, which is isomorphic
			to $\C$.
		\item
			The encryption channel $E^{\mc{M}, 0}$ is the identity on
			$\mc{L}(\tsf{M}) = \mc{L}(\tsf{C})$.
		\item
			The decryption channel $
				D^{\mc{M}, 0} :
				\mc{L}(\tsf{C})
				\to
				\mc{L}(\tsf{M} \tensor \tsf{F})
			$ is defined by $\rho \mapsto \rho \tensor \ketbra{\bz}$.
	\end{itemize}
	We neglect the key space in the domains of
	$E^{\mc{M}, 0}$ and $D^{\mc{M}, 0}$ as $\C \tensor \tsf{M} = \tsf{M}
	= \tsf{C} = \C \tensor \tsf{C}$ and will omit the $\varepsilon$
	subscript for simplicity when discussing the keyed version of these
	channels.
\end{definition}

By construction, $S^* = S \ast \text{Triv}^{\mc{M}, \bz}$ is precisely
the scheme we have described and that we wish to study here.

Note that by always rejecting, the AQECM scheme
$\text{Triv}^{\mc{M},\bz}$ is $0$-tamper-evident. However, this comes at
the cost of not achieving any non-trivial level of correctness as
correctness requires accepting certain ciphertexts.

\begin{lemma}
\label{te:th:triv-r-aqecm}
	For any alphabet $\mc{M}$, the scheme $\text{Triv}^{\mc{M},\bz}$
	is $0$-tamper evident.
\end{lemma}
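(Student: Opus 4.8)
The plan is to observe that the ``always rejecting'' behaviour of $\text{Triv}^{\mc{M},\bz}$ makes the operator $\overline{D}$ the zero map, which immediately collapses the quantity appearing in \cref{df:te-security} to zero for every tamper attack and every pair of messages.

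First I would unwind the definition of $\overline{D}$ for this scheme. By \cref{df:aqecm}, $\overline{D}(\rho) = \left(I_\tsf{M}\tensor\bra{\bo}_\tsf{F}\right)D^{\mc{M},\bz}(\rho)\left(I_\tsf{M}\tensor\ket{\bo}_\tsf{F}\right)$, and by \cref{te:df:triv-r-aqecm} we have $D^{\mc{M},\bz}(\rho) = \rho \tensor \ketbra{\bz}{\bz}$. Substituting and using $\braket{\bo}{\bz} = 0$ yields $\overline{D}(\rho) = \rho \cdot \bra{\bo}\ketbra{\bz}{\bz}\ket{\bo} = 0$ for all $\rho$, so $\overline{D}$, and hence also $\overline{D}_\varepsilon$ (the scheme has the single key $\varepsilon$), is the zero linear map. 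Consequently $\Tr_\tsf{M}\circ\overline{D}_\varepsilon$ is zero, and $\left(\left(\Tr_\tsf{M}\circ\overline{D}_\varepsilon\right)\tensor\Id_\tsf{A}\right)$ is zero as well, since tensoring the zero map with $\Id_\tsf{A}$ annihilates every simple tensor and hence, by linearity, everything.

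Next I would plug this into \cref{df:te-security} with $\delta = 0$. For any tamper attack $A : \mc{L}(\tsf{C}) \to \mc{L}(\tsf{C}\tensor\tsf{A})$ and any $m, m' \in \mc{M}$, the operator inside the trace distance is
\begin{equation}
	\left(\left(\Tr_\tsf{M}\circ\overline{D}_\varepsilon\right)\tensor\Id_\tsf{A}\right)\circ A \circ E^{\mc{M},\bz}_\varepsilon(m - m') = 0,
\end{equation}
so $\tfrac{1}{2}\norm{0}_1 = 0 \leq 0$. Because $K^{\mc{M},\bz}(\emptystring)$ produces the key $\varepsilon$ with probability $1$, the probability in \cref{eq:te-security} equals $1 \geq 1 - 0$, establishing that $\text{Triv}^{\mc{M},\bz}$ is $0$-tamper-evident.

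I do not anticipate any real obstacle here: the only point requiring a moment of care is confirming that $\overline{D}$ is genuinely the zero map rather than merely trace-decreasing, which follows directly from the orthogonality of $\ket{\bz}$ and $\ket{\bo}$ in the flag register; everything else is immediate from the definitions.
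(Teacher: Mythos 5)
Your proposal is correct and follows essentially the same route as the paper: both arguments reduce to the observation that the always-rejecting flag makes $\overline{D^{\mc{M},\bz}}$ the zero map, so the operator in \cref{df:te-security} vanishes identically and the trace distance is $0$ for every attack and every pair of messages. Your explicit verification via $\braket{\bo}{\bz}=0$ and the remark about the single deterministic key are just slightly more detailed renderings of the same steps.
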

\begin{proof}
	As $D^{\mc{M},\bz}$ always rejects, the map
	$\overline{D^{\mc{M}, \bz}} : \mc{L}(\tsf{C}) \to \mc{L}(\tsf{M})$
	takes any linear operator to the zero operator
	$\mathbf{0}_\tsf{M} \in \mc{L}(\tsf{M})$. Moreover, for any other
	linear operator $L \in \mc{L}(\tsf{A})$, it holds that
	$\mathbf{0}_\tsf{M} \tensor L = \mathbf{0}_\tsf{M} \tensor
	\mathbf{0}_\tsf{A}$. Hence, for
	any tamper attack $A$ and any messages~$m$ and $\tilde{m}$,
	\begin{equation}
		\frac{1}{2}
		\norm{
			\left(
				\left(
					\Tr_\tsf{M} \circ \overline{D^{\mc{M},0}}
				\right)
				\tensor
				\Id_\tsf{A}
			\right)
			\circ A \circ E(m - \tilde{m})
			}_1
		=
		\frac{1}{2}
		\norm{
			\mathbf{0}_{\tsf{A}} - \mathbf{0}_{\tsf{A}}
		}_1
		=
		0
	\end{equation}
	which yields the desired result.
\end{proof}

By \cref{te:th:ast-s} and the previous lemma, if $S$ is
$\delta$-tamper evident, then $S^\ast$ is
$2\sqrt{\delta}$-tamper evident.\footnote{%
	We conjecture that $S^\ast$ is actually $\delta$-tamper evident.}
However, since $\text{Triv}^{\mc{M}, \bz}$ is $1$-correct, we cannot use
\cref{te:th:ast-c} to show that the scheme $S^\ast$ exhibits non-trivial
correctness. Instead, we will directly prove that $S^\ast$ is as correct
as $S$. This is the content of the following theorem.

\begin{theorem}
\label{te:th:ast-otp}
	Let $S = (K, E, D)$ be an $\epsilon$-correct $\delta$-tamper-evident
	scheme with messages~$\mc{M}$ which admit a
	group operation denoted $\ast$. Then,
	$S \ast \text{Triv}^{\mc{M},0} = (K^\ast, E^\ast, D^\ast)$
	is~$2\sqrt{\delta}$-tamper-evident and $\epsilon$-correct.
\end{theorem}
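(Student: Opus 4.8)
The plan is to prove the two assertions separately. The tamper-evidence claim is immediate: by \cref{te:th:triv-r-aqecm}, the scheme $\text{Triv}^{\mc{M},0}$ is $0$-tamper-evident, and by hypothesis $S$ is $\delta$-tamper-evident, so \cref{te:th:ast-s} applied with $S' = \text{Triv}^{\mc{M},0}$ and $\delta' = 0$ yields that $S \ast \text{Triv}^{\mc{M},0}$ is $2\sqrt{\delta + 0} = 2\sqrt{\delta}$-tamper-evident. So the only real work is establishing $\epsilon$-correctness, which we cannot obtain from \cref{te:th:ast-c} since that would only give $(\epsilon + 1)$-correctness, a vacuous bound.

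For correctness, the plan is a direct computation analogous to the proof of \cref{te:th:ast-c}, but exploiting the special structure of $\text{Triv}^{\mc{M},0}$. Fix a message $m \in \mc{M}$ and a key $(k, \varepsilon) \in \mc{K} \times \{\varepsilon\}$. By the definition of $S \ast S'$ in \cref{te:df:ast}, we have $E^\ast_{(k,\varepsilon)}(m) = \E_{p \gets \mc{M}} E_k(p) \tensor E^{\mc{M},0}_\varepsilon(p \ast m) = \E_{p \gets \mc{M}} E_k(p) \tensor (p \ast m)$, since $E^{\mc{M},0}$ is the identity channel. The decryption map $D^\ast$ applies $D_k \tensor D^{\mc{M},0}$, takes the logical-or of the two flag qubits, and coherently undoes the group operation. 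Since $D^{\mc{M},0}(\sigma) = \sigma \tensor \ketbra{\bz}$ always produces a flag in state $\bz$, the logical-or of the flags equals the flag produced by $D_k$. Hence the $S^\ast$ decryption accepts precisely when $D_k$ accepts, and it recovers $m$ precisely when $D_k$ recovers $p$ (so that the second register holds $p \ast m$ and the group operation is inverted correctly). Tracking this through, for the expected value over $p$ and over $k \gets K(\varepsilon)$, the probability of accepting and recovering $m$ is at least $\E_{p \gets \mc{M}} \E_{k \gets K(\varepsilon)} \bra{p} \overline{D}_k \circ E_k(p) \ket{p}$, which is at least $1 - \epsilon$ by the $\epsilon$-correctness of $S$, uniformly in $p$. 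I expect the minor technical obstacle to be carefully verifying that the coherent computation of $U_\ast$ and its inverse in the definition of $D^\ast$ correctly propagates the classical identity ``$D_k$ outputs $p$ on the first register'' to ``$D^\ast$ outputs $m$'', and that the $\Tr_\tsf{M}$ discarding the auxiliary share in $W$ does not interfere; this is bookkeeping with the computational-basis elements of $\tsf{M} \tensor \tsf{M}$ and the isometry $U_\ast$ satisfying $U_\ast(\ket{a}\tensor\ket{b}) = \ket{a}\tensor\ket{a \ast b}$.

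Concretely, I would structure the proof as: (1) invoke \cref{te:th:triv-r-aqecm} and \cref{te:th:ast-s} for the tamper-evidence half in one sentence; (2) for correctness, expand $\overline{D^\ast}_{(k,\varepsilon)} \circ E^\ast_{(k,\varepsilon)}(m)$ using the definitions, substituting $E^{\mc{M},0} = \Id$ and $D^{\mc{M},0}(\sigma) = \sigma \tensor \ketbra{\bz}$; (3) observe that the flag-combining channel $W$ then outputs the $D_k$-flag and the group-inversion step maps the joint outcome $(p, p\ast m)$ back to $m$; (4) conclude $\bra{m} \overline{D^\ast}_{(k,\varepsilon)} \circ E^\ast_{(k,\varepsilon)}(m) \ket{m} \geq \E_{p \gets \mc{M}} \bra{p} \overline{D}_k \circ E_k(p) \ket{p}$ by the positivity argument used in the earlier correctness proofs (passing from a two-outcome event to a one-outcome event only decreases probability); (5) take the expectation over $k \gets K(\varepsilon)$, pull it inside past $\E_{p \gets \mc{M}}$, and apply $\epsilon$-correctness of $S$ for each fixed $p$. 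Since the group has a single key $\varepsilon$ on the $\text{Triv}$ side, there is no independence subtlety beyond what already appears in \cref{te:th:ast-c}. The main obstacle, as noted, is purely notational care with the coherent group operation; there is no conceptual difficulty.
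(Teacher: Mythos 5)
Your proposal is correct and follows essentially the same route as the paper: the tamper-evidence half is dispatched by combining \cref{te:th:triv-r-aqecm} with \cref{te:th:ast-s} at $\delta'=0$, and correctness is a direct computation showing that, because $\text{Triv}^{\mc{M},\bz}$ always rejects, $\overline{D^\ast}_{(k,\varepsilon)}$ accepts and outputs $m$ exactly when $\overline{D}_k$ accepts and outputs $p$, reducing the claim to the $\epsilon$-correctness of $S$. (The paper in fact records your ``at least'' step as an equality, but the inequality suffices.)
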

\begin{proof}
	That $S \ast \text{Triv}^{\mc{M}, 0}$ is
	$2\sqrt{\delta}$-tamper evident follows immediately from
	\cref{te:th:triv-r-aqecm} and \cref{te:th:ast-s}. The generic
	correctness result of \cref{te:th:ast-c} is not sufficient in this
	case as it only yields $(1 + \epsilon)$-correctness. However, direct
	computation yields
	\begin{equation}
	\begin{aligned}
		\E_{k \gets K^\ast(\varepsilon)}
		\bra{m} \overline{D^\ast_k} \circ E^\ast_k(m)\ket{m}
		&=
		\E_{k \gets K(\varepsilon)}
		\E_{p \gets \mc{M}}
		\bra{m} \overline{D^\ast_k}(E_k(p) \tensor (p \ast m)) \ket{m}
		\\&=
		\E_{k \gets K(\varepsilon)}
		\E_{p \gets \mc{M}}
		\bra{p} \overline{D_k} \circ E_k(p) \ket{p}
		\\&\geq
		\E_{p \gets \mc{M}} (1 - \epsilon)
		\\&\geq
		1 - \epsilon
	\end{aligned}
	\end{equation}
	where the first equality follows by definition of $K^\ast$,
	$E^\ast$, and $\text{Triv}^{\mc{M}, 0}$, while the second equality
	from the observation that $D^\ast_k$ will accept and produce the
	correct plaintext if and only if $D$ accepts the first ciphertext
	and outputs $p$.
\end{proof}

For the purposes of our next two examples, we make this construction
even more precise.
Consider a perfectly correct $\delta$-tamper-evident AQECM scheme $S$
whose messages are a single bit, which is to say that $\mc{M} = \bs{}$. Let
$S^\xor = S \xor \text{Triv}^{\mc{M}, 0} = (K^\xor, E^\xor, D^\xor)$ be
the scheme obtained from the construction of \cref{te:df:ast} where the
group operation is taken to be the exclusive-or $\xor$. By
\cref{te:th:ast-otp}, $S^\xor$ is a perfectly correct
$2\sqrt{\delta}$-tamper-evident AQECM scheme.

We now use $S^\xor$ to show seperations between
tamper evidence and two other notions: authentication and the encryption
of quantum states. Note
that by \cref{te:th:te-exist}, we can instantiate $S^\xor$ with a perfectly
correct $\delta'$-tamper-evident scheme $S$ for any $\delta' > 0$.
Hence, we can assume that $S^\xor$ is perfectly correct and
$\delta$-tamper evident for any $\delta > 0$.

\subsubsection{Tamper Evidence Does Not Imply Authentication or Robustness}
\label{te:sc:te-not-auth}

The notion of robustness for AQECM schemes was originally stated in
Broadbent and Islam's work on certified deletion \cite{BI20}, but it can
be understood as a restriction of Barnum \etal's notion of quantum
authentication \cite{BCG+02} from any arbitrary quantum states to only
classical messages. In a nutshell, an AQECM scheme is robust if it is
infeasible for an adversary to change a ciphertext in such a way that
it will still be accepted by the decryption map, but that a different
message will be recovered. We do not state a formal definition of this
property.

As Gottesman noted, every quantum authentication scheme is itself a
tamper-evident AQECM scheme \cite{Got03}. Trivially, these are also
robust AQECM schemes. It is also clear that there exist robust AQECMs
which are not tamper evident. For example, simply consider any classical
message authentication code which is information-theoretically secure
(\eg:~\cite[Sec.~4.6]{KL15}). As the encodings produced by these message
authentication codes are classical, it is infeasible to detect an
adversary making and keeping a copy of them.

The above naturally leads to the following question: Does
tamper evidence imply robustness? In the following, we sketch a proof
that the answer is ``no''. There are tamper-evident schemes which are
not robust.

Consider the AQECM scheme $S^\xor$ as previously described in this
section. Further consider the channel
$\tilde{A} : \mc{L}(\tsf{C} \tensor \C^{\bs{}}) \to \mc{L}(\tsf{C} \tensor \C^{\bs{}})$ defined by
\begin{equation}
	\rho
	\mapsto
	\left(I_\tsf{C} \tensor X\right)
		\rho
	\left(I_\tsf{C} \tensor X\right)
\end{equation}
where $X = \ketbra{\bz}{\bo} + \ketbra{\bo}{\bz}$ is the bit-flip
unitary operator. For any plaintext $b \in \bs{}$, we see that
\begin{equation}
\begin{aligned}
	\tilde{A} \circ E^\xor_{k}(m)
	&=
	\tilde{A} \left(\E_{p \gets \bs{}} E_k(p) \tensor (p \xor b)\right)
	\\&=
	\E_{p \gets \bs{}} E_k(p) \tensor (p \xor b \xor \bo)
	\\&=
	E^\xor_k(b \xor \bo).
\end{aligned}
\end{equation}
As we assume that $S$ is perfectly correct, $S^\xor$ is also perfectly
correct. Hence, for any key $k$ and message $b \in \bs{}$, we have
\begin{equation}
	\bra{b \xor \bo}
		\overline{D^\xor}_k \circ \tilde{A} \circ E^\xor_k(b)
	\ket{b \xor \bo}
	=
	1
\end{equation}
which is to say that an adversary using $\tilde{A}$ can change the
encoded plaintext undetected and with certainty.
It follows $S^\xor$ cannot offer any non-trivial level of robustness.

\subsubsection{Tamper Evidence Does Not Imply the Encryption of Quantum
States}
\label{te:sc:te-not-qenc}

While AQECM schemes generally only guarantee correctness and various
notions of security for \emph{classical} messages, their encryption and
decryption maps remain quantum channels. Hence, we can still inquire
about their behaviours with respect to arbitrary quantum states.

We claim without further proof that $S^\xor$ is also perfectly correct
with respect to any \emph{quantum state}, in the sense that
$D_k^\xor \circ E^\xor_k(\rho) = \rho \tensor \ketbra{\bo}$ for all keys
$k$ and quantum states $\rho \in \C^{\bs{}}$. This follows
directly from the details of the construction of $S^\xor$ from
\cref{te:df:ast}, in particular from the fact that the group operation
is computed coherently.

On the other hand, recall that $S^\xor$ is
$7\sqrt[4]{\delta}$-encrypting for classical messages due to
\cref{th:te=>enc} and the fact that it is
$2\sqrt{\delta}$-tamper evident. However, we claim that it offers no
non-trivial level of security as an encryption scheme for arbitrary
quantum states. This results from the fact that the
classical one-time-pad fails to change certain quantum states.

More precisely, going through the details of the construction in
\cref{te:df:ast}, we see that the unitary $U_\xor$ which coherently
computes the group operation is, here, simply the controlled-not
linear operator $\text{CNOT}$ which maps $\ket{b,b'}$ to
$\ket{b, b\xor b'}$ for any $b,b' \in \bs{}$.
Thus, for any key $k$ and state $\rho$, we have that
\begin{equation}
	E^\xor_k(\rho)
	=
	\E_{p \gets \bs{}}
	\left(E_k \tensor \Id_{\C^{\bs{}}}\right)
	\text{CNOT}
	\left(p \tensor \rho\right)
	\text{CNOT}^\dag
	=
	\E_{p \gets \bs{}}
	E_k(p) \tensor X^p \rho X^p
\end{equation}
where $X = \ketbra{\bz}{\bo} + \ketbra{\bo}{\bz}$ is once again the
bit-flip unitary operator.

Consider now the states $
	\ket{+}
	=
	\frac{1}{\sqrt{2}}\left(\ket{\bz} + \ket{\bo}\right)
$ and $
	\ket{-}
	=
	\frac{1}{\sqrt{2}}\left(\ket{\bz} - \ket{\bo}\right)
$. Note that both resulting density operators $\ketbra{+}$ and
$\ketbra{-}$ are invariant under conjugation by $X$. Hence, for any two
keys $k$ and $k'$, we have that
\begin{equation}
\begin{aligned}
	\frac{1}{2}
	\norm{
		E^\xor_k(\ketbra{+})
		-
		E^\xor_{k'}(\ketbra{-})
	}_1
	&=
	\frac{1}{2}
	\norm{
		\E_{p \gets \bs{}}E_k(p) \tensor \ketbra{+}
		-
		\E_{p \gets \bs{}}E_{k'}(p) \tensor \ketbra{-}
	}_1
	\\&\geq
	\frac{1}{2}
	\norm{
		\ketbra{+}
		-
		\ketbra{-}
	}_1
	\\&=
	1
\end{aligned}
\end{equation}
where the inequality follows by tracing out the ciphertext from $S$ and
the last equality from \cref{th:td-pure} and $\braket{+}{-} = 0$. In
other words, the encodings of $\ket{+}$ and $\ket{-}$ under $E^\xor_k$
for any keys remain perfectly distinguishable. Hence, this AQECM scheme
does not offer any non-trivial level of security as an encryption scheme
for arbitrary quantum states.

\newcommand{\etalchar}[1]{$^{#1}$}

\end{document}